\documentclass[acmsmall,]{acmart}

\acmJournal{PACMPL}
\acmVolume{1}
\acmNumber{CONF} 
\acmArticle{1}
\acmYear{2018}
\acmMonth{1}
\acmDOI{} 
\startPage{1}

\setcopyright{none}

\bibliographystyle{ACM-Reference-Format}
\citestyle{acmauthoryear}



\usepackage{graphicx}
\usepackage{subcaption}
\usepackage{etoolbox}


\usepackage{url}

\makeatletter
\AtBeginDocument{%
  \def\doi#1{\url{https://doi.org/#1}}}
\makeatother
%
\usepackage{wrapfig}

\usepackage[inline]{enumitem}
\newtoggle{full}
\settoggle{full}{true}

\usepackage{mathpartir}


\usepackage{tikz}
\usetikzlibrary{cd, calc, positioning, decorations.markings, decorations.pathmorphing, decorations.pathreplacing, shapes }

\usepackage{cancel}


\usepackage{xspace}
\usepackage{fancyvrb}


\definecolor{dkblue}{rgb}{0,0.1,0.5}
\definecolor{lightblue}{rgb}{0,0.5,0.5}
\definecolor{dkgreen}{rgb}{0,0.4,0}
\definecolor{dk2green}{rgb}{0.4,0,0}
\definecolor{dkviolet}{rgb}{0.6,0,0.8}
\definecolor{mantra}{rgb}{0.2,0.6,0.2}
\definecolor{gotcha}{rgb}{0.8,0.2,0}
\definecolor{ocre}{RGB}{243,102,25} 
\definecolor{dkolive}{RGB}{85, 107, 47}
\definecolor{pine}{RGB}{1, 121, 111}
\definecolor{DarkSlateBlue}{RGB}{72,61,139}
\definecolor{dkred}{RGB}{139, 0, 0}
\definecolor{lsgray}{RGB}{199,216,233}
\definecolor{gbgray}{RGB}{220,220,220}
\definecolor{lightyellow}{RGB}{255,255,224}
\definecolor{papayawhip}{RGB}{255,239,213}
\definecolor{lightyg}{RGB}{229, 255, 153}
\definecolor{cbblue}{RGB}{0,114,178}
\definecolor{cbgreen}{RGB}{0,158,115}
\definecolor{cbred}{RGB}{213,94,0}

\usepackage[nomargin,inline,draft]{fixme}
\fxsetup{theme=color,mode=multiuser}
\FXRegisterAuthor{FF}{aFF}{FF} 
\FXRegisterAuthor{NY}{aNY}{NY} 
\FXRegisterAuthor{DC}{aDC}{DC} 
\FXRegisterAuthor{LG}{aLG}{LG} 


\newcommand{\code}[1]{\texttt{\small #1}} 
\newcommand{\rulename}[1]{\withcolor\colorrules{\scriptsize\textsc{[#1]}\xspace}}

\newcommand{\SEP}{\ensuremath{~~\mathbf{|\!\!|}~~ }}


\ifdefined\NOCOLOR
  \newcommand{\withcolor}[2]{#2} 
\else
  \newcommand{\withcolor}[2]{\colorlet{currbkp}{.}\color{#1}{#2}\color{currbkp}}
\fi

\ifdefined\NO\COLOR
\else
  
\fi





\newcommand{\colorlp}{blue} 
\newcommand{\colorgt}{violet} 
\newcommand{\colorht}{stColor} 
\definecolor{stColor}{rgb}{0.1, 0.4, 0.1}
%

\newcommand{\colorrules}{DarkSlateBlue} 





\let\DefTirNameOld\DefTirName
\renewcommand{\DefTirName}[1]{\withcolor\colorrules{\DefTirNameOld{[{#1}]}}}



\newcommand{\hla}[1]{\colorbox{lightyellow}{#1}}
\newcommand{\hlb}[1]{\colorbox{papayawhip}{#1}}
\newcommand{\hlc}[1]{\colorbox{lightyg}{#1}}





\newcommand{\dte}[1]{\SortCol{#1}} 

\newcommand{\tunit}{\dte{\code{unit}}}
\newcommand{\tbool}{\dte{\code{bool}}}

\newcommand{\tnat}{\dte{\code{nat}}}
\newcommand{\tint}{\dte{\code{int}}}

\newcommand{\tS}{\dte{\code{S}}}
\newcommand{\tplus}[2]{\dte{#1\code{+}#2}}
\newcommand{\tpair}[2]{\dte{#1\code{*}#2}}



\newcommand{\roleFmt}[1]{\roleCol{\textbf{\texttt{\upshape #1}}}}
\newcommand{\roleCol}[1]{{\color{pine}#1}}

\definecolor{maroon}{RGB}{128,0,0}
\definecolor{labelColor}{RGB}{72,61,140}
\newcommand{\lblFmt}[1]{{{\color{labelColor}{#1}}}}
\newcommand{\elllbl}{\lblFmt \ell}

\definecolor{SortColor}{RGB}{72,61,139}
\newcommand{\SortCol}[1]{{\color{SortColor}{#1}}}

\newcommand{\p}{{\roleFmt{p}}}
\newcommand{\q}{{\roleFmt{q}}}
\newcommand{\pr}{{\roleFmt{r}}}
\newcommand{\ps}{{\roleFmt{s}}}
\newcommand{\pt}{{\roleFmt{t}}}
\newcommand{\pu}{{\roleFmt{u}}}
\newcommand{\ph}{{\roleFmt{h}}}
\newcommand{\pk}{{\roleFmt{k}}}
\newcommand{\pl}{{\roleFmt{l}}}
\newcommand{\pext}{{\roleFmt{ext}}}

\newcommand{\phtwo}{{\roleFmt{h$\color{pine}{_2}$}}}
\newcommand{\phthree}{{\roleFmt{h$\color{pine}{_3}$}}}
\newcommand{\pkone}{{\roleFmt{k$\color{pine}{_1}$}}}

\newcommand{\pkthree}{{\roleFmt{k$\color{pine}{_3}$}}}
\newcommand{\plone}{{\roleFmt{l$\color{pine}{_1}$}}}
\newcommand{\pltwo}{{\roleFmt{l$\color{pine}{_2}$}}}

\newcommand{\dlt}[1]{\withcolor{\colorlp}{#1}} 

\newcommand{\lT}{\dlt{ \ensuremath{L}}}
\newcommand{\lsend}[4]{\dlt{!{#1};\{\lblFmt{#2_i}(\SortCol{#3_i}). #4_i \}_{i \in I}}}

\newcommand{\lrecv}[4]{\dlt{?{#1};\{\lblFmt{#2_i}(\SortCol{#3_i}). #4_i \}_{i \in I}}}
  \newcommand{\lrecvj}[4]{\dlt{?{#1};\{\lblFmt{#2_j}(\SortCol{#3_j}). #4_j \}_{j \in J}}}
\newcommand{\lrcv}[1]{\dlt{?{#1};}}
\newcommand{\lsnd}[1]{\dlt{!{#1};}}
\newcommand{\lrec}[2]{\dlt{\mu #1 . #2}}
\newcommand{\lX}{\dlt{X}}
\newcommand{\lend}{\dlt{\mathtt{end}}}


\newcommand{\dgt}[1]{\withcolor{\colorgt}{#1}} 

\newcommand{\G}{\dgt{\ensuremath{G}}}
\newcommand{\gmu}{\dgt{\mu}}
\newcommand{\msg}[2]{\dgt{ \ensuremath{#1\to#2:}}}
\newcommand{\msgi}[5]{\dgt{ \ensuremath{#1\to#2:\{\lblFmt{#3_i}(\SortCol{{#4}_i}). #5_i \}_{i \in I}}}}

\newcommand{\gX}{\dgt{X}}
\newcommand{\grec}[2]{\dgt{\mu #1 . #2}}
\newcommand{\gend}{\dgt{\mathtt{end}}}

\newcommand{\guarded}{\code{guarded}}



\newcounter{sarrow}

\newcommand{\dht}[1]{\withcolor{\colorht}{#1}}

\newcommand{\HH}{ \dht{\ensuremath{\mathcal{H}}} }

\newcommand{\hT}{\dht{\ensuremath{H}}}
\newcommand{\hTb}{\dht{\ensuremath{H'}}}
\newcommand{\hTi}{\dht{\ensuremath{H_i}}}

\newcommand{\hTone}{\dht{\ensuremath{H_1}}}

\newcommand{\hTtwo}{\dht{\ensuremath{H_2}}}

\newcommand{\hTthree}{\dht{\ensuremath{H_3}}}
\newcommand{\hTonei}{\dht{\ensuremath{H^1_i}}}
\newcommand{\hTtwoi}{\dht{\ensuremath{H^2_i}}}
\newcommand{\hTonej}{\dht{\ensuremath{H^1_j}}}
\newcommand{\hTtwok}{\dht{\ensuremath{H^2_k}}}
\newcommand{\hTt}{\dht{\ensuremath{H_t}}}
\newcommand{\hTn}{\dht{\ensuremath{H_n}}}
\newcommand{\hTN}{\dht{\ensuremath{H_N}}}
\newcommand{\hTE}{\dht{\ensuremath{H^E}}}
\newcommand{\hTEone}{\dht{\ensuremath{H^{E}_1}}}
\newcommand{\hTEtwo}{\dht{\ensuremath{H^{E}_2}}}
\newcommand{\hTEi}{\dht{\ensuremath{H^E_i}}}
\newcommand{\hTEj}{\dht{\ensuremath{H^E_j}}}

\newcommand{\hTEb}{\dht{\ensuremath{{H^E}'}}}
\newcommand{\hTL}{\dht{\ensuremath{H^L}}}
\newcommand{\hTLi}{\dht{\ensuremath{H^L_i}}}
\newcommand{\hTLone}{\dht{\ensuremath{H^L_1}}}
\newcommand{\hTLoneb}{\dht{\ensuremath{{H^L_1}'}}}

\newcommand{\hTLtwo}{\dht{\ensuremath{H^L_2}}}
\newcommand{\hTLtwob}{\dht{\ensuremath{{H^L_2}'}}}

\newcommand{\ghT}{\dht{\ensuremath{H^\dagger}}}
\newcommand{\ghTb}{\dht{\ensuremath{{H^{\dagger}}'}}}
\newcommand{\ghTi}{\dht{\ensuremath{H^\dagger_i}}}
\newcommand{\ghTj}{\dht{\ensuremath{H^\dagger_j}}}

\newcommand{\ghTk}{\dht{\ensuremath{H^\dagger_k}}}
\newcommand{\ghTone}{\dht{\ensuremath{H^\dagger_1}}}
\newcommand{\ghTtwo}{\dht{\ensuremath{H^\dagger_2}}}

\newcommand{\ghTP}{\dht{\ensuremath{H^{\dagger P}}}}
\newcommand{\ghTPone}{\dht{\ensuremath{H^{\dagger P}_1}}}

\newcommand{\ghTPtwo}{\dht{\ensuremath{H^{\dagger P}_2}}}

\newcommand{\gG}{\dgt{\ensuremath{G^\dagger}}}
\newcommand{\Gpar}{\dgt{\ensuremath{G_{\sf par}}}}
\newcommand{\hTparone}{\dht{\ensuremath{H^1_{\sf par}}}}
\newcommand{\hTpartwo}{\dht{\ensuremath{H^2_{\sf par}}}}
\newcommand{\ghTpar}{\dgt{\ensuremath{G^\dagger_{\sf par}}}}

\newcommand{\hmsg}[2]{\dht{ \ensuremath{#1\to#2:}}}
\newcommand{\nhmsg}[2]{\dht{ \ensuremath{#1\to#2}}}
\newcommand{\hmsgi}[5]{\dht{\ensuremath{#1\to#2:\{\lblFmt{#3_i}(\SortCol{#4_i}).
      #5_i \}_{i \in I}}}}
\newcommand{\hmsgj}[5]{\dht{\ensuremath{#1\to#2:\{\lblFmt{#3_j}(\SortCol{#4_j}).
      #5_j \}_{j \in J}}}}
\newcommand{\hmsgk}[5]{\dht{\ensuremath{#1\to#2:\{\lblFmt{#3_k}(\SortCol{#4_k}).#5_k \}_{k \in K}}}}

\newcommand{\hX}{\dht{X}}
\newcommand{\hY}{\dht{Y}}

\newcommand{\hrec}[2]{\dht{\mu #1 . #2}}
\newcommand{\hsend}[5]{\dht{{#1}!{#2};\{\lblFmt{#3_i}(\SortCol{#4_i}). #5_i \}_{i \in I}}}
\newcommand{\hrecv}[5]{\dht{{#1}?{#2};\{\lblFmt{#3_i}(\SortCol{#4_i}). #5_i \}_{i \in I}}}
\newcommand{\hsendj}[5]{\dht{{#1}!{#2};\{\lblFmt{#3_j}(\SortCol{#4_j}). #5_j \}_{j \in J}}}
\newcommand{\hrecvj}[5]{\dht{{#1}?{#2};\{\lblFmt{#3_j}(\SortCol{#4_j}). #5_j \}_{j \in J}}}
\newcommand{\hsendk}[5]{\dht{{#1}!{#2};\{\lblFmt{#3_k}(\SortCol{#4_k}). #5_k \}_{k \in K}}}
\newcommand{\hrecvk}[5]{\dht{{#1}?{#2};\{\lblFmt{#3_k}(\SortCol{#4_k}). #5_k \}_{k \in K}}}
\newcommand{\hrcv}[2]{\dht{{#1}?{#2};}}
\newcommand{\nhrcv}[2]{\dht{{#1}?{#2}}}
\newcommand{\hsnd}[2]{\dht{{#1}!{#2};}}
\newcommand{\nhsnd}[2]{\dht{{#1}!{#2}}}
\newcommand{\hpar}[2]{\dht{#1 | #2}}

\newcommand{\hend}{\dht{\mathtt{end}}}
\newcommand{\hpart}[1]{{\textsf{ipart}}({#1})}
\newcommand{\hepart}[1]{\textsf{epart}({#1})}



\newcommand{\isglobal}[1]{\dgt{isGlobal(#1)}}
\newcommand{\islocal}[1]{\dlt{isLocal(#1)}}
\newcommand{\hdepth}[1]{\color{black}{\sf{depth}}(#1)}
\newcommand{\loc}[1]{{\color{black}{{\sf{loc}}_{#1}\;}}}
\newcommand{\merge}{\color{black}{\bigsqcup}}
\newcommand{\mrg}{\color{black}{\sqcup}}
\newcommand{\lmerge}{\color{black}{\bigsqcup^L}}
\newcommand{\lmrg}{\color{black}{\sqcup^L}}

\newcommand{\ulb}{{\color{black}{\sf unmL^{bin}}}}
\newcommand{\ul}{{\color{black}{\sf unmL}}}

\newcommand{\ulba}[3]{{\color{black}{{\sf unmL^{bin}}\;#1\;#2\;#3}}}
\newcommand{\ula}[2]{{\color{black}{{\sf unmL}\;#1\;#2}}}
\newcommand{\ulbEa}[3]{{\color{black}{{\sf unmL^{bin}_E}\;#1\;#2\;#3}}}

\newcommand{\ulpb}{{\color{black}{\sf unmLP^{bin}}}}
\newcommand{\ulp}{{\color{black}{\sf unmLP}}}
\newcommand{\ulpba}[4]{{\color{black}{{\sf unmLP^{bin}}\;#1\;#2\;#3\;#4}}}
\newcommand{\ulpa}[2]{{\color{black}{{\sf unmLP}\;#1\;#2}}}
\newcommand{\uplb}{{\color{black}{\sf unmPL^{bin}}}}
\newcommand{\upl}{{\color{black}{\sf unmPL}}}
\newcommand{\uplba}[4]{{\color{black}{{\sf unmPL^{bin}}\;#1\;#2\;#3\;#4}}}
\newcommand{\upla}[2]{{\color{black}{{\sf unmPL}\;#1\;#2}}}
\newcommand{\upb}{{\color{black}{\sf unmP^{bin}}}}
\newcommand{\up}{{\color{black}{\sf unmP}}}

\newcommand{\upba}[3]{{\color{black}{{\sf unmP^{bin}}\;#1\;#2\;#3}}}
\newcommand{\upa}[2]{{\color{black}{{\sf unmP}\;#1\;#2}}}

\newcommand{\upEa}[2]{ {\color{black}{{\sf unmP_E}\;#1\;#2} } }

\newcommand{\bbeoa}[2]{{\color{black}{{\mathcal{B}^1_E}\;#1\;#2}}}
\newcommand{\bbeo}{\color{black}{\mathcal{B}^1_E}}
\newcommand{\bboa}[3]{{\color{black}{{\mathcal{B}^1_{#1}}\;#2\;#3}}}
\newcommand{\bbo}{\color{black}{\mathcal{B}^1}}
\newcommand{\bba}[2]{{\color{black}{\mathcal{B}\;#1\;#2}}}
\newcommand{\bbafor}[3]{{\color{black}{\mathcal{B}_{#1}\;#2\;#3}}}
\newcommand{\bb}{\color{black}{\mathcal{B}}}

\newcommand{\hToneone}{\dht{\ensuremath{H^1_1}}}
\newcommand{\hTonetwo}{\dht{\ensuremath{H^1_2}}}
\newcommand{\hTtwoone}{\dht{\ensuremath{H^2_1}}}
\newcommand{\hTtwotwo}{\dht{\ensuremath{H^2_2}}}



\newcommand{\proj}[2]{ #1 \upharpoonright_{#2}}
\newcommand{\projb}[2]{ #1 {\color{black}{\upharpoonright}}_{\color{black}{#2}} }
\newcommand{\projt}[2]{{#1}{\upharpoonright}{#2}}

\newcommand{\product}{{\textit {\lblFmt {prod}}}}
\newcommand{\price}{{\textit {\lblFmt {price}}}}
\newcommand{\publish}{{\textit {\lblFmt {publish}}}}
\newcommand{\dir}{{\roleFmt{d}}}
\newcommand{\ad}{{\roleFmt{ad}}}
\newcommand{\fone}{\ensuremath{\roleFmt{f}_{\roleCol{1}}}}
\newcommand{\ftwo}{\ensuremath{\roleFmt{f}_{\roleCol{2}}}}
\newcommand{\sales}{{\roleFmt{s}}}
\newcommand{\web}{{\roleFmt{w}}}

\newcommand{\gTstr}{\dgt{\ensuremath{G_{\sf str}}}}
\newcommand{\gTsales}{\dgt{\ensuremath{G_{\sf sales}}}}
\newcommand{\gTfin}{\dgt{\ensuremath{G_{\sf fin}}}}

\newcommand{\hTstr}{\dht{\ensuremath{H_{\sf str}}}}
\newcommand{\hTsales}{\dht{\ensuremath{H_{\sf sales}}}}
\newcommand{\hTfin}{\dht{\ensuremath{H_{\sf fin}}}}
\newcommand{\hTstrb}{\dht{\ensuremath{H'_{\sf str}}}}
\newcommand{\hTsalesb}{\dht{\ensuremath{H'_{\sf sales}}}}
\newcommand{\hTfinb}{\dht{\ensuremath{H'_{\sf fin}}}}



\newcommand{\ok}{{\textit {\lblFmt {ok}}}}
\newcommand{\stp}{{\textit {\lblFmt {stop}}}}
\newcommand{\wait}{{\textit {\lblFmt {wait}}}}
\newcommand{\go}{{\textit {\lblFmt {go}}}}


\newcommand{\oa}{{\roleFmt{oa}}}
\newcommand{\res}{{\roleFmt{res}}}
\newcommand{\ua}{{\roleFmt{ua}}}
\newcommand{\ow}{{\roleFmt{ow}}}

\newcommand{\init}{{\textit {\lblFmt {init}}}}
\newcommand{\deny}{{\textit  {\lblFmt {deny}}}}
\newcommand{\auth}{{\textit {\lblFmt {auth}}}}
\newcommand{\close}{{\textit {\lblFmt {close}}}}
\newcommand{\release}{{\textit {\lblFmt {release}}}}
\newcommand{\login}{{\textit {\lblFmt {login}}}}
\newcommand{\authcode}{{\textit {\lblFmt {code}}}}
\newcommand{\exchange}{{\textit {\lblFmt {exchange}}}}
\newcommand{\accesstoken}{{\textit {\lblFmt {token}}}}
\newcommand{\pass}{{\textit {\lblFmt {pass}}}}
\newcommand{\request}{{\textit {\lblFmt {request}}}}
\newcommand{\revoke}{{\textit {\lblFmt {revoke}}}}
\newcommand{\response}{{\textit {\lblFmt {response}}}}

\newcommand{\appid}{\dte{\code{id}}}
\newcommand{\scope}{\dte{\code{scp}}}
\newcommand{\name}{\dte{\code{name}}}
\newcommand{\pwd}{\dte{\code{pwd}}}
\newcommand{\cod}{\dte{\code{code}}}
\newcommand{\secret}{\dte{\code{secret}}}
\newcommand{\tkn}{\dte{\code{token}}}
\newcommand{\data}{\dte{\code{data}}}


\newcommand{\hresa}{\dht{\hT_{{\sf res\_acc}}}}

\newcommand{\hToaone}{\dht{\ensuremath{H_{\sf auth}}}}
\newcommand{\hToatwo}{\dht{\ensuremath{H_{\sf res}}}}
\newcommand{\ghToa}{\dgt{\ensuremath{G^\dagger_{\sf oa}}}}
\newcommand{\hresab}{\dht{\hT'_{{\sf res\_acc}}}}
\newcommand{\gGoa}{\dgt{\ensuremath{G^\dagger_{\sf op}}}}


\newcommand{\da}{\dht{\alpha}}
\newcommand{\dm}[2]{\dht{\ensuremath{#1\to#2}}}
\newcommand{\dme}[2]{\dht{\ensuremath{#1\to^{e} #2}}}
\newcommand{\dpi}{\dht{\pi^{\textsf{in}}}}
\newcommand{\dpii}{\dht{\pi^{\textsf{in}}_i}}
\newcommand{\dr}[2]{\dht{{#1}?{#2}}}
\newcommand{\dre}[2]{\dht{{#1}?^{e}{#2}}}
\newcommand{\dpo}{\dht{\pi^{\textsf{out}}}}
\newcommand{\dpoi}{\dht{\pi^{\textsf{out}}_i}}
\newcommand{\ds}[2]{\dht{{#1}!{#2}}}
\newcommand{\dse}[2]{\dht{{#1}!^{e}{#2}}}

\newcommand{\ms}{\dht{\sigma}}
\newcommand{\dtos}{\dht{\to^{\star}}}
\newcommand{\dexs}{\dht{!^{\star}}}
\newcommand{\dqus}{\dht{?^{\star}}}
\newcommand{\dqusi}{\dht{?^{\star_i}}}
\newcommand{\dms}[2]{\dht{\ensuremath{#1\dtos#2}}}
\newcommand{\dss}[2]{\dht{{#1}\dexs{#2}}}
\newcommand{\drs}[2]{\dht{{#1}\dqus{#2}}}


\newcommand{\hgchoice}[2]{\dht{\ensuremath{\boxplus_{i\in I}\da^{\p}_i(\lblFmt{#1_i});#2_i}}}

\newcommand{\hchoicep}[2]{\dht{\ensuremath{\boxplus_{i\in I}\ms^{\p}_i(\lblFmt{#1_i});#2_i}}}
\newcommand{\hun}[2]{\dht{\ensuremath{\bigvee_{i\in I}\pi^{\textsf{out}}_i(\lblFmt{#1_i});#2_i}}}
\newcommand{\hint}[2]{\dht{\ensuremath{\bigwedge_{i\in I}\pi^{\textsf{in}}_i(\lblFmt{#1_i});#2_i}}}
\newcommand{\fdel}{\dht{\ensuremath{\circ\langle\!\langle\bullet}}}
\newcommand{\bdel}{\dht{\ensuremath{\bullet\rangle\!\rangle\circ}}}
\newcommand{\hfdel}[3]{\dht{\ensuremath{#1\fdel#2;#3}}}
\newcommand{\hbdel}[3]{\dht{\ensuremath{#1\bdel#2;#3}}}
\newcommand{\hafdel}[3]{\dht{\ensuremath{#1\fdel[#2];#3}}}
\newcommand{\hpbdel}[3]{\dht{\ensuremath{[#1]\bdel#2;#3}}}
\newcommand{\hpfdel}[3]{\dht{\ensuremath{[#1]\fdel#2;#3}}}
\newcommand{\habdel}[3]{\dht{\ensuremath{#1\bdel[#2];#3}}}

\newcommand{\projdone}[3]{ #1 {\color{black}{\upharpoonright^{1}}}_{\color{black}{(#2,#3)}} }
\newcommand{\projdtwo}[3]{ #1 {\color{black}{\upharpoonright^{2}}}_{\color{black}{(#2,#3)}} }
\newcommand{\projd}[3]{ #1 {\color{black}{\upharpoonright^{d}}}_{\color{black}{#2,#3}} }
\newcommand{\de}{\mathit{dE}}
\newcommand{\pqe}{\{(\p,\q)\}}
\newcommand{\pqp}{(\p,\q)}

\newcommand{\disjpair}[3]{\texttt{disj\_pair}\;(#1,#2)\;#3}


\newcommand{\pzero}{{\roleFmt{p$\color{pine}{_0}$}}}

\newcommand{\qone}{{\roleFmt{q$\color{pine}{_1}$}}}
\newcommand{\bank}{{\roleFmt{bank}}}
\newcommand{\seller}{{\roleFmt{seller}}}
\newcommand{\buyer}{{\roleFmt{customer}}}
\newcommand{\card}{{\textit {\lblFmt {card}}}}
\newcommand{\Gd}{\dgt{G_d}}
\newcommand{\Gec}{\dgt{\G_{ec}}}
\newcommand{\user}{{\roleFmt{user}}}
\newcommand{\website}{{\roleFmt{website}}}
\newcommand{\shopeu}{{\roleFmt{EUshop}}}
\newcommand{\shopuk}{{\roleFmt{UKshop}}}
\newcommand{\location}{{\textit {\lblFmt {location}}}}
\newcommand{\open}{{\textit {\lblFmt {open}}}}
\newcommand{\fronteu}{{\textit {\lblFmt {EUfrontpage}}}}
\newcommand{\frontuk}{{\textit {\lblFmt {UKfrontpage}}}}













%

\ifdefined\NOCOLOR
  \newcommand{\chcolor}[1]{black} 
\else
  \newcommand{\chcolor}[1]{#1} 
\fi













           \newcommand{\appref}[1]{\ref{#1}}

\iftoggle{full}
         {
           \newcommand{\inApp}{}
         }
         {
           \newcommand{\inApp}{ of \cite{fullversion}}
           }

\AtEndPreamble{%
\theoremstyle{acmdefinition}
\newtheorem{remark}[theorem]{Remark}}

\newcommand{\myparagraph}[1]{\textbf{\emph{#1}}\ }

\AtBeginDocument{%
  \providecommand\BibTeX{{%
    \normalfont B\kern-0.5em{\scshape i\kern-0.25em b}\kern-0.8em\TeX}}}

\begin{document}

\title[Hybrid Multiparty Session Types - Full Version]{Hybrid Multiparty Session Types - Full Version}         
\subtitle{Compositionality for Protocol Specification through Endpoint Projection}                     


\author{Lorenzo Gheri}
\orcid{0000-0002-3191-7722}             
\affiliation{
  \institution{University of Liverpool}            
  \country{United Kingdom}                    
}
\email{lorenzo.gheri@liverpool.ac.uk}          

\author{Nobuko Yoshida}
\orcid{0000-0002-3925-8557}             
\affiliation{
  \institution{University of Oxford}            
  \country{United Kingdom}                    
}
\email{nobuko.yoshida@cs.ox.ac.uk}          

\begin{abstract}
Multiparty session types (MPST) are a specification and verification
framework for distributed message-passing systems.
The communication protocol of the system is
specified as a \emph{global type}, from which a
collection of \emph{local types} (local process implementations)
is obtained by \emph{endpoint projection}. 
A global type is a single disciplining entity for the whole system, specified by \emph{one designer} that has full knowledge of the communication protocol. On the other hand, distributed systems are often 
described in terms of their \emph{components}: a different designer is in charge of providing a subprotocol for each component. 
The problem of modular specification of global protocols has been
addressed in the literature, but the state of the art focuses only on
dual input/output
compatibility. Our work overcomes this limitation. We propose the
first MPST theory of \emph{multiparty compositionality for distributed
protocol specification} that is semantics-preserving, 
allows the composition of two or more
components, 
and retains full MPST expressiveness.
We introduce \emph{hybrid types} for describing subprotocols interacting
with each other, define a novel \emph{compatibility relation},
explicitly describe an algorithm for composing
multiple subprotocols into a \emph{well-formed global type},
and prove that compositionality preserves 
projection,
thus retaining semantic guarantees, such as liveness and deadlock freedom.
Finally, we test our work against real-world case studies and we
smoothly extend our novel compatibility to MPST with delegation
and explicit connections.

\end{abstract}

\begin{CCSXML}
<ccs2012>
<concept>
<concept_id>10003752.10003753.10003761.10003763</concept_id>
<concept_desc>Theory of computation~Distributed computing models</concept_desc>
<concept_significance>500</concept_significance>
</concept>
<concept>
<concept_id>10003752.10003790.10011740</concept_id>
<concept_desc>Theory of computation~Type theory</concept_desc>
<concept_significance>500</concept_significance>
</concept>
</ccs2012>
\end{CCSXML}

\ccsdesc[500]{Theory of computation~Distributed computing models}
\ccsdesc[500]{Theory of computation~Type theory}

\keywords{multiparty session types, compositionality, protocol design, concurrency}  

\maketitle

\section{Introduction}
\label{sec:intro}




With the current growth in
scale and complexity of systems,
their \emph{design}
has become of central importance
for industry and
society in general.
Choreographies for interactions
among multiple participants,
or \emph{(communication) protocols},
arise naturally in
numerous fields:
authorisation standards
\cite{oauth2,kerberos},
the BPMN graphical
specification for business processes
\cite{bpmn}, or
smart contracts for financial transactions
\cite{ethereum}.

The literature on programming languages
offers a variety of formal frameworks for
protocol description \cite{HYC2016,BLT-CA,chor-prog},
aimed at the verification of behavioural
properties of distributed implementations
that comply with the
communication discipline
prescribed by the protocol.
Such theories focus on distributed
implementations of participants, but
rarely feature modularity
in the design of protocols, which
are instead seen as standalone,
monolithic entities.
Mostly, when modularity is considered,
it is either conceived in terms of nesting
\cite{nested,aspects} or it substantially
modifies protocol description,
by adding additional structure
\cite{CMTV:2018,SGTV:2020,compchor}.
To the best of our knowledge,
only in 
\cite{BLTDezani} and \cite{smdg:facs21}
the result of composition
is a well-formed protocol.

\emph{
This paper presents hybrid multiparty session types:
a novel, general theory 
that offers compositionality for
distributed protocol specification,
improves on the state of the art, and
is immediately compatible with existing
multiparty session types systems.
}

Multiparty session types (MPST)
\cite{HYC2016,CDPY2015,verygentle}
provide a typing discipline for message-passing
concurrency, ensuring deadlock freedom for two or more
distributed processes.
A \emph{global type} or \emph{protocol},
which describes an entire interaction scenario,
is projected into a collection of \emph{local types}
onto the respective participants (endpoint projection).
MPST cater for the safe
implementation of distributed processes:
as long as the process for each participant
is independently type-checked against
its local type, its communication behaviour is disciplined by the
semantics of the global type, and its execution does not get stuck.


Although alternatives to the top-down approach (i.e., %
endpoint projection from a global type) have been proposed %
\cite{SY2019,DenielouYoshida2013,LTY2015,LangeYoshidaCAV19}, %
the benefits of an explicit, concise design
of the communication protocol for the whole system
have been recognised
by the research community, since
the first appearance of MPST
\cite{Honda2008Multiparty},
until more recent times,
e.g., see
\cite{GHH2021,cledou_et_al:LIPIcs.ECOOP.2022.27}.
Furthermore, the top-down approach
has been extended, e.g., to
fault tolerance \cite{10.1145/3485501},
timed specification \cite{BYY2014},
refinements \cite{ZFHNY2020,gheri_et_al:LIPIcs.ECOOP.2022.8},
cost awareness \cite{10.1145/3428223},
exception handling \cite{lagaillardie_et_al:LIPIcs.ECOOP.2022.4}, or
explicit connections and delegation \cite{HY2017,CASTELLANI2020128}.

Concretely, the underlying assumption
to top-down MPST systems is that a single designer has %
\emph{full knowledge} of the communication protocol
and can %
give its formal specification
in terms of a global type. %
Distributed systems, however, %
are designed modularly, by multiple designers.
Recently, the literature has addressed the
problem of obtaining a single coherent global type
from independently specified subprotocols
(components of a protocol)
and some solutions have been offered:
\citet{BLTDezani} achieve
direct composition of \emph{two} global types, through a
dual compatibility relation that matches inputs and outputs,
based on gateways \cite{gateways1,gateways2,BLT-isola2020}.
\citet{smdg:facs21} describe a
dual methodology beyond gateways, but 
severely restrict
the syntax for global types.
In contrast to this approach,
our theory substitutes
dual compatibility,
based only on input/output
matching, with the notion of
\emph{compatibility through projection}.
Thus, we improve on the state of the art:
\begin{enumerate*}
\item we can compose more than two subprotocols
into a well-formed global type and
\item we retain the full expressiveness
of MPST (including recursion and parallel composition).
\end{enumerate*}
See \S\ref{sec:related} for a broader, in-detail discussion.
Moreover, metathoretical results
about the semantics
of traditional MPST systems
\cite{DenielouYoshida2013,HYC2016}
immediately translate to ours
(\emph{semantics preservation}): 
from distributed specifications
in terms of subprotocols,
our theory synthesises a
global protocol for the whole system;
we prove once and for all,
as a metatheoretical result,
that such global protocol is
a traditionally well-formed global type.


\textbf{\emph{Contributions.}\ } This paper develops
\emph{a theory of compositionality for
distributed protocol description in MPST systems} and
 introduces the following novel MPST concepts:
\begin{itemize}
\item \emph{hybrid types}, a generalisation of both global and local types, for the specification of communicating subprotocols
(Definition \ref{def:hybrid-types});
\item \emph{generalised projection}
onto sets of roles (Definition \ref{def:proj}),
which well-behaves with respect to set inclusion (Theorem \ref{thm:proj-comp});
\item \emph{localiser} (Definition \ref{def:loc}),
a novel operator that isolates, in a subprotocol,
the inter-component communication
from the intra-component one;
\item \emph{compatibility} based on projection and localiser
(Equation \ref{eq:compatibility}, \S\ref{subsec:comp1});
\item \emph{build-back}, an explicit algorithm to compose
\emph{two or more} subprotocols into a more general one
(Definitions \ref{def:bb1} and \ref{def:bb} and
Theorems \ref{thm:comp1} and \ref{thm:comp}).
\end{itemize}
To the best of our knowledge, our approach is the first that:
\begin{itemize}
\item enables the correct composition
of \emph{two or more}
subprotocols into a global type, while capturing
\emph{full MPST expressiveness}: branching, parallel
composition, and recursion (Corollary \ref{cor:dist-spec});
\item operates at a purely syntactic level, thus
retaining previously developed MPST semantics results
(\emph{semantics preservation}); 
correctness is guaranteed by compositionality
resulting in a traditionally
well-formed global type and
preserving endpoint projection
(Corollary \ref{cor:dist-spec});
\item provides a notion of compatibility that is
\emph{more expressive than dual input/output matching}
and hence suitable for extension to
more sophisticated MPST systems (Example \ref{ex:del}).
\end{itemize}
We discuss the applicability
and generality of our work,
through \emph{case studies}.
\begin{enumerate*}
\item We give a distributed specification of
the real-world protocol OAuth 2.0 \cite{oauth2},
which showcases modularity features of our theory
(\S\ref{subsec:oauth})
and leads to an optimisation 
(\S\ref{subsec:oauth2}, Corollary \ref{cor:dist-spec2}).
\item We extend our theory beyond traditional MPST, to 
delegation and explicit connections (\S\ref{subsec:del}).
\end{enumerate*}

\emph{Outline.\ }\S\ref{sec:overview}
gives an overview of 
our development,
with a simple, but realistic,
application scenario. \S\ref{sec:hybrid} and \S\ref{sec:comp}
are dedicated to our technical contributions. \S\ref{sec:eval}
tests the strengths of our theory with
case studies. 
\S\ref{sec:related} discusses in detail, with examples,
related work. 
\S\ref{sec:conclusion} concludes with
future work. 
Further detail for definitions and proofs can be found
in Appendix \ref{sec:appendix}\inApp.

\section{Overview of Our Development}
\label{sec:overview}
This work achieves \emph{distributed protocol specification} for MPST:
different \emph{(protocol) designers}
specify protocols
(naively as global types, Figure \ref{fig:naive})
for different components of the
communicating system;
then, these compose into a
single global type
for the whole system.
Composition must \emph{preserve (endpoint) projection} (indicated with $\upharpoonright$):
local types,
for the distributed
implementation of 
\emph{roles} (or \emph{participants}),
need to be obtained 
by projection of each separate
component, but, also,
they need to be projections of
the same global type (obtained by composition), if we want
semantic guarantees (e.g., deadlock freedom) to hold.
In other words, our protocol-compositionality
theory relies on multiparty compatibility,
guaranteed by a well-formed global type, and
on semantics proofs from previous work
(e.g., \cite{DenielouYoshida2013}).
This approach
makes our development
\emph{semantics-preserving}: 
it endows existing MPST systems
with 
distributed protocol specification. 



\begin{figure}
  \begin{center}
    \pgfdeclarelayer{background}
\pgfsetlayers{background,main}

\begin{tikzpicture}[commutative diagrams/every diagram]


	\node(G1)at (0.5,0) {$\dgt{G_1}$};
        \node(gdots)at (1.5,0){$\dots$};
	\node(Gn)at (2.5,0) {$\dgt{G_n}$};
	\node(L11)at (0,-1.5) {$\dlt{L_{11}}$};
        \node(l1dots)at (0.5,-1.5) {$\dots$};
	\node(L1k)at (1,-1.5) {$\dlt{L_{1k_1}}$};
        \node(ldots)at (1.5,-1.5){$\dots$};
        \node(Ln1)at (2,-1.5) {$\dlt{L_{n1}}$};
        \node(lndots)at (2.5,-1.5) {$\dots$};
	\node(Lnk)at (3,-1.5) {$\dlt{L_{nk_n}}$};
        \node(nol)at (3,0) {};
        \node(nor)at (5,0) {};
        \node(eq)at (3.75,-1.5) {${\bf \color{magenta}{=}}$};
        \node(G)at (5.75,0) {$\G$};
        \node(L11b)at (4.5,-1.5) {$\dlt{L_{11}}$};
        \node(l1dotsb)at (5,-1.5) {$\dots$};
	\node(L1kb)at (5.5,-1.5) {$\dlt{L_{1k_1}}$};
        \node(ldotsb)at (6.25,-1.5) {$\dots$};
        \node(Lnkb)at (7,-1.5) {$\dlt{L_{nk_n}}$};


	\path[commutative diagrams/.cd,every arrow]
	(G1) edge node[left] {$\upharpoonright$} (L11)
        (G1) edge node[right] {$\upharpoonright$} (L1k)
        (Gn) edge node[left] {$\upharpoonright$} (Ln1)
        (Gn) edge node[right] {$\upharpoonright$} (Lnk)
	(G) edge node[left,xshift=-1mm] {$\upharpoonright$} (L11b)
        (G) edge node[left] {$\upharpoonright$} (L1kb)
        (G) edge node[right,xshift=1mm] {$\upharpoonright$} (Lnkb)
        ;
        \path[commutative diagrams/.cd,every arrow, thick, magenta, font=\footnotesize]
        (nol) edge node[below,yshift=-1mm] {composition} (nor);
\end{tikzpicture}
  \end{center}
  \caption{Distributed Protocol Specification, Naively}
  \label{fig:naive}
\end{figure}


Traditionally, a global type is a ``closed''
standalone entity that describes a
one-component communication protocol: all
interactions among participants are
\emph{internal} to such component.
We consider instead the distributed
specification of a system, in terms of
multiple components
(disjoint sets of participants).
Each participant can send both internal messages,
within its component, or \emph{external}, to
other components.
Therefore, we ``open''
the syntax of global types,
so that it allows not only for
intra-component communication,
but also for
inter-component communication.
By extending the syntax of
global types with an interface for
inter-component communication, we obtain
\emph{hybrid types}.
The communication protocol of each component of
the system is specified as a hybrid type;
multiple components can be composed into
a well-formed global type thanks to
a novel notion of
\emph{compatibility, based on projection.}

In what follows: we consider a three-component system:
a company with three departments, for each of which, a different
\emph{(protocol) designer} is in charge of describing the communication protocol.
The departments, with respective (internal) roles, are the following:
\begin{itemize*}
\item[(a)] the \emph{strategy team}, the roles of which are the director $\dir$ of the company and the advertisement team $\ad$;
\item[(b)] the \emph{sales department}, with a salesman $\sales$ and the website administrator $\web$;
\item[(c)] the \emph{finance department}, with two employees, $\fone$ and $\ftwo$.
\end{itemize*}
%
We assume that internal roles of different components are \emph{distinct}.

\myparagraph{Global Types for Intra-Component Communication.\ }
When no inter-component communication happens,
each protocol designer gives
a global type for the \emph{internal} communication of their
department (Figure \ref{fig:co-global}). In $\gTstr$,
the global type for 
the strategy department, the director $\dir$ sends the
product ID to the responsible for advertisement $\ad$; then, $\dir$
gives an $\ok$ or asks $\ad$ to $\stp$. For the sales
department ($\gTsales$)
$\sales$ decides whether $\web$ can publish
some content on the company website. In the financial
department ($\gTfin$), $\fone$ sends the product ID to $\ftwo$ and gets back
either a $\price$ or a $\stp$.

\begin{figure}
  \begin{subfigure}[t]{\textwidth}
  \begin{small}
\[
\begin{array}{l}
  \gTstr = \msg \dir \ad \dgt{\product(\tnat). \msg \dir \ad\{\ok.\gend,\stp.\gend\}}\qquad\\
  \gTsales = \msg \sales \web
  \dgt{\{\publish.\gend,\stp.\gend\}}\\
  \gTfin = \msg \fone \ftwo \dgt{\product(\tnat).\msg \ftwo \fone \{\price(\tnat).\gend,\stp.\gend\}}
\end{array}
\]
  \end{small}
\caption{Global Types for Internal Communication}
  \label{fig:co-global}
  \end{subfigure}
    \begin{subfigure}[t]{\textwidth}
    \vspace*{2mm}
      \begin{small}
        \[
        \begin{array}{l}
          \hTstrb  = \hmsg \dir\ad\dht{\product(\tnat).\hla{\hsnd \dir \sales \product(\tnat)}.}
          \dht{\hlb{\hsnd \dir \fone \product(\tnat)}.}
           \dht{\hmsg \dir \ad \{\ok.\hend,\stp.\hend\}}
\\
          \hTsalesb  =  \dht{\hla{\hrcv \dir \sales \product(\tnat)}.
            \hmsg \sales \web \{\publish.\hend,\stp.\hend\}}\\
          \hTfinb   = \dht{\hlb{\hrcv \dir \fone \product(\tnat)}.
            \hmsg \fone \ftwo \product(\tnat).}
            \dht{\hmsg \ftwo \fone \{\price(\tnat).\hend,\stp.\hend\}}
        \end{array}
        \]
      \end{small}
      \caption{Hybrid Types for Basic Inter-Department Interactions}
\label{fig:co-hybrid1}
    \end{subfigure}
       \begin{subfigure}[t]{\textwidth}
       \vspace*{1mm}
      \begin{small}
        \[
  \begin{array}{l}
    \hTstr=\hmsg\dir\ad \dht{\product(\tnat).\hsnd \dir \sales \product(\tnat).\hsnd \dir \fone \product(\tnat).}
    \hrec \hX {\hrcv \fone \dir}
    \dht{\left\{
        \ok.\hmsg\dir\ad\go.\hend,
        \wait.\hmsg\dir\ad\wait.\hX
      \right\}}
  \\
  \hTsales = \hrcv \dir \sales \dht{\product(\tnat).
  \hrec \hX {\hrcv \fone\sales}}
  \dht{\left\{
      \price(\tnat).\hmsg \sales \web \publish.\hend,
      \wait.\hmsg \sales \web \wait.\hX
    \right\}}\\
      \hTfin = \hrcv \dir \fone \dht{\product(\tnat).
      \hmsg \fone \ftwo \product(\tnat).
      }
    \dht{\hrec \hX {\hmsg \ftwo \fone}\left\{
      \begin{array}{l}
        \price(\tnat).\hsnd\fone\dir\ok.\hsnd\fone\sales\price(\tnat).\hend ,\\
        \wait.\hsnd\fone\dir\wait.\hsnd\fone\sales\wait.\hX
      \end{array}
      \right\}}\\
  \end{array}
  \]
  \end{small}
  \subcaption{More Expressive Hybrid Types}
  \label{fig:co-hybrid2}
  \end{subfigure}
    \caption{Types for Interactions in the Company}
    \label{fig:company}
\end{figure}

\myparagraph{Hybrid Types for Inter-Component Interactions.}
The components of a distributed system are expected
to communicate 
with each other.
Therefore, we introduce a
\emph{hybrid syntax of global and local constructs}
(and we call \emph{hybrid types} the terms of this syntax):
to the global-type syntax (e.g., $\nhmsg \dir \ad$), 
for intra-component communication,
we add
local send and receive constructs
(e.g., $\nhsnd \dir \sales$ and $\nhrcv \dir \fone$),
as the interface for
inter-component communication.
In our example,
a first message is sent by $\dir$,
with a product ID $\product$,
externally,
to the other two departments (Figure \ref{fig:co-hybrid1}): 
$\nhsnd \dir \sales$ and $\nhsnd \dir \fone$. These are \emph{dually}
received by the sales team $\nhrcv \dir \sales$ and by the finance
team $\nhrcv \dir \fone$, respectively
(as highlighted in Figure~\ref{fig:co-hybrid1}). 

\begin{remark}[Generalising Global and Local Types]
\label{rem:gloc-1}
We observe that hybrid types are a
generalisation of both global and local types.
A global type is a ``closed'' hybrid type,
where only internal messages are exchanged.
The intuition for local types is more subtle:
a local type can be interpreted
as a basic, \emph{one-participant component}
of a communicating system,
which
communicates only externally, with
participants of other components.
E.g., the local type
$\dlt{\lrcv \p \lblFmt{\ell_1}.}\dlt{\lsnd\pr \lblFmt{\ell_2}.\lend}$,
for the participant $\q$,
can be written as the hybrid type
$\dht{\hrcv \p \q \lblFmt{\ell_1}.}\dht{\hsnd \q \pr \lblFmt{\ell_2}.\hend}$:
$\q$ is the only internal participant that first receives
from $\p$ and then sends to $\pr$
(both $\p$ and $\pr$ are of other components).
Being able to
express global and local types
as hybrid types is fundamental:
it makes our results
correct and compatible
with existing MPST theories
(see Remark \ref{rem:gloc-2} below, and
Corollary \ref{cor:dist-spec}
in \S\ref{sec:comp}).
\end{remark}

\myparagraph{Expressiveness and Compatibility.} We describe a
more expressive version of the protocols (Figure \ref{fig:co-hybrid2})
that combines inter-component messages with branching and recursion.
Figure \ref{fig:diag} shows the communication for each
component of the system, as described by the protocol
designer of each department.
We imagine that the price of the product $\product$ is decided
within the finance department: the finance expert $\ftwo$
either gives a $\price$ or asks all processes to $\wait$
in a recursive loop; then, the decision is communicated to
the other departments. Figure \ref{fig:diag-fin} shows
the execution of the protocol for the finance
department, where $\ftwo$ makes such choice.
Figure \ref{fig:co-hybrid2} shows the formal
specifications, as hybrid types, of the three protocols.
%
We observe that, to compose $\hTstr$, $\hTsales$, and $\hTfin$
(and, in general, to compose, more than two communicating protocols),
dual relations are not sufficient for compatibility
(for a broader discussion see \S\ref{sec:related}). %
%
Our proposal is 
to give separately the specification of a
\emph{communication discipline for inter-component
interactions only}:
intra-component interactions are left
to the designer of each respective component and 
some \emph{chief designer}
gives the description of one more protocol,
for global guidance of inter-component communication.
For our example, we collect all the interactions
between any two different departments in the
protocol in Figure \ref{fig:diag-ext}, and we formalise
it with a \emph{compatibility (global) type}:\\[1mm]
\begin{small}
\centerline{$
  \begin{array}{l}
    \dgt{\G^\dagger} :=\hmsg \dir \sales \dht{\product(\tnat).}
    \dht{\hmsg \dir \fone \product(\tnat).}
    \dht{\hrec \hX {\hmsg\fone\dir}}\dht{\left\{
\begin{array}{l}
      \ok.\hmsg \fone \sales \price(\tnat).\hend,\\
      \wait.\hmsg \fone \sales \wait.\hX
    \end{array}
    \right\}}
 \end{array}
$}
\end{small}

\emph{Compatibility} of subprotocols $\hTstr$, $\hTsales$, and $\hTfin$,
with $\dgt{\G^\dagger}$, is achieved by asking that the
\emph{(generalised) projection} $\projb{}{}$, of $\dgt{\G^\dagger}$, with
respect to the internal participants of each subprotocol,
is equal to the \emph{localisation} $\loc{}{}\!\!$, of that subprotocol,
where ``localising a protocol'' means isolating
its inter-component communication
(by retaining only its \emph{local}
constructs). 
E.g., we consider $\hTstr$ and its internal participants $\{\dir,\ad\}$. 
\begin{small}
\[
\projb {\dgt{\G^\dagger}} {\{\dir,\ad\}} = \loc {} \hTstr =
\hsnd \dir \sales \dht{\product(\tnat). \hsnd \dir \fone \product(\tnat).\hrec \hX {\hrcv\fone\dir}}
  \dht{\left\{
      \ok.\hend,
      \wait.\hX
    \right\}}
\]
\end{small}
Analogously we require that
$\projb {\dgt{\G^\dagger}} {\{\sales,\web\}} = \loc {} \hTsales$
and $\projb {\dgt{\G^\dagger}} {\{\fone,\ftwo\}} = \loc {} \hTstr$.

We observe that not only we have enriched the syntax of global
types with local constructs to get hybrid types, but also we have
\emph{generalised projection} to \emph{sets} of participants,
introduced a new operator (\emph{localiser}) to isolate
external communication, and, based on these, defined compatibility.

\begin{figure*}
  \begin{subfigure}{.32\textwidth}
    \begin{center}
      \begin{tikzpicture}[font=\footnotesize]

\node[draw,
  minimum width=0.8cm,
  minimum height=0.4cm,
  rounded corners] (d) at (0,0)
     {$\dir$};

\node[draw,
  minimum width=0.8cm,
  minimum height=0.4cm,
  rounded corners] (ad) at (1.2,0)
     {$\ad$};

\node[draw,
  minimum width=0.8cm,
  minimum height=0.4cm,
  dotted, thick,
  rounded corners] (s) at (2.4,0)
     {$\sales$};

\node[draw,
  minimum width=0.8cm,
  minimum height=0.4cm,
  dotted, thick,
  rounded corners] (f1) at (3.4,0)
     {$\fone$};

\draw[-](d) edge (0,-3.4);
\draw[-](ad) edge (1.2,-3.4);
\draw[-,dotted,thick](s) edge (2.4,-4.9);
\draw[-,dotted,thick](f1) edge (3.4,-4.9);
\draw[-](0,-3.8) edge (0,-4.9);
\draw[-](1.2,-3.8) edge (1.2,-4.9);
\draw[-{Stealth}] (0,-0.9) --node[above]{$\product$} (1.2,-0.9);
\draw[-{Triangle[open]},thick,cbred] (0,-1.5) --node[above,xshift=-5mm]{$\product$} (2.4,-1.5);
\draw[-{Triangle[open]},thick,cbred] (0,-2.1) --node[above]{$\product$} (3.4,-2.1);
\draw[-{Triangle[open]},thick,cbred] (3.4,-2.7) --node[above]{$\ok$} (0,-2.7);
\draw[-{stealth}] (0,-3.2) --node[above]{$\go$} (1.2,-3.2);
\draw[-{Triangle[open]},thick,cbred] (3.4,-4) --node[above]{$\wait$} (0,-4);
\draw[-{Stealth}] (0,-4.5) --node[above]{$\wait$} (1.2,-4.5);
\draw[decorate,decoration={brace},thick] (3.5,-2.7) -- (3.5,-4);
\draw[-,thick] (-0.4,-4.7) -- (-0.1,-4.7);
\draw[-,thick] (-0.4,-4.7) -- (-0.4,-2.4);
\draw[->,thick] (-0.4,-2.4) -- (-0.1,-2.4);
\draw[-,dashed,gray,thin] (0,-2.4) -- (3.4,-2.4);
\draw[-,dashed,gray,thin] (0,-4.7) -- (3.4,-4.7);

\end{tikzpicture}
    \end{center}
    \subcaption{Strategy Department}
    \label{fig:diag-str}
  \end{subfigure}
  \begin{subfigure}{.32\textwidth}
    \begin{center}
      \begin{tikzpicture}[font=\footnotesize]

\node[draw,
  minimum width=0.8cm,
  minimum height=0.4cm,
  rounded corners] (s) at (0,0)
     {$\sales$};

\node[draw,
  minimum width=0.8cm,
  minimum height=0.4cm,
  rounded corners] (w) at (1.2,0)
     {$\web$};

\node[draw,
  minimum width=0.8cm,
  minimum height=0.4cm,
  dotted, thick,
  rounded corners] (d) at (2.4,0)
     {$\dir$};

\node[draw,
  minimum width=0.8cm,
  minimum height=0.4cm,
  dotted, thick,
  rounded corners] (f1) at (3.4,0)
     {$\fone$};

\draw[-](s) edge (0,-3.1);
\draw[-](w) edge (1.2,-3.1);
\draw[-,dotted, thick](d) edge (2.4,-4.9);
\draw[-,dotted, thick](f1) edge (3.4,-4.9);
\draw[-](0,-3.5) edge (0,-4.9);
\draw[-](1.2,-3.5) edge (1.2,-4.9);
\draw[-{Triangle[open]},thick,cbred] (2.4,-1) --node[above,xshift=5mm]{$\product$} (0,-1);
\draw[-{Triangle[open]},thick,cbred] (3.4,-2.1) --node[above]{$\price$} (0,-2.1);
\draw[-{stealth}] (0,-2.8) --node[above]{$\publish$} (1.2,-2.8);
\draw[-{Triangle[open]},thick,cbred] (3.4,-3.8) --node[above]{$\wait$} (0,-3.8);
\draw[-{Stealth}] (0,-4.5) --node[above]{$\wait$} (1.2,-4.5);
\draw[decorate,decoration={brace},thick] (3.5,-2.1) -- (3.5,-3.8);
\draw[-,thick] (-0.4,-4.7) -- (-0.1,-4.7);
\draw[-,thick] (-0.4,-4.7) -- (-0.4,-1.6);
\draw[->,thick] (-0.4,-1.6) -- (-0.1,-1.6);
\draw[-,dashed,gray,thin] (0,-1.6) -- (3.4,-1.6);
\draw[-,dashed,gray,thin] (0,-4.7) -- (3.4,-4.7);

\end{tikzpicture}
    \end{center}
    \subcaption{Sales Department}
    \label{fig:diag-sales}
  \end{subfigure}
  \begin{subfigure}{.32\textwidth}
    \begin{center}
      \begin{tikzpicture}[font=\footnotesize]

\node[draw,
  minimum width=0.8cm,
  minimum height=0.4cm,
  rounded corners] (f1) at (0,0)
     {$\fone$};

\node[draw,
  minimum width=0.8cm,
  minimum height=0.4cm,
  rounded corners] (f2) at (1.2,0)
     {$\ftwo$};

\node[draw,
  minimum width=0.8cm,
  minimum height=0.4cm,
  dotted, thick,
  rounded corners] (d) at (2.4,0)
     {$\dir$};

\node[draw,
  minimum width=0.8cm,
  minimum height=0.4cm,
  dotted, thick,
  rounded corners] (s) at (3.4,0)
     {$\sales$};

\draw[-](f1) edge (0,-3);
\draw[-](f2) edge (1.2,-3);
\draw[-,dotted, thick](d) edge (2.4,-4.9);
\draw[-,dotted, thick](s) edge (3.4,-4.9);
\draw[-](0,-3.3) edge (0,-4.9);
\draw[-](1.2,-3.3) edge (1.2,-4.9);
\draw[-{Triangle[open]},thick,cbred] (2.4,-0.8) --node[above,xshift=5mm]{$\product$} (0,-0.8);
\draw[-{Stealth}] (0,-1.3) --node[above]{$\product$} (1.2,-1.3);
\draw[-{Stealth}] (1.2,-1.8) --node[above]{$\price$} (0,-1.8);
\draw[-{Triangle[open]},thick,cbred] (0,-2.3) --node[above,xshift=-5mm]{$\ok$} (2.4,-2.3);
\draw[-{Triangle[open]},thick,cbred] (0,-2.8) --node[above,xshift=2mm]{$\price$} (3.4,-2.8);
\draw[-{Stealth}] (1.2,-3.5) --node[above]{$\wait$} (0,-3.5);
\draw[-{Triangle[open]},thick,cbred] (0,-4) --node[above]{$\wait$} (3.4,-4);
\draw[-{Triangle[open]},thick,cbred](0,-4.5) --node[above]{$\wait$} (3.4,-4.5);
\draw[decorate,decoration={brace},thick] (1.3,-1.8) -- (1.3,-3.5);
\draw[-,thick] (-0.4,-4.7) -- (-0.1,-4.7);
\draw[-,thick] (-0.4,-4.7) -- (-0.4,-1.6);
\draw[->,thick] (-0.4,-1.6) -- (-0.1,-1.6);
\draw[-,dashed,gray,thin] (0,-1.6) -- (3.4,-1.6);
\draw[-,dashed,gray,thin] (0,-4.7) -- (3.4,-4.7);

\end{tikzpicture}
    \end{center}
    \subcaption{Finance Department}
    \label{fig:diag-fin}
  \end{subfigure}
  \begin{subfigure}[t]{\textwidth}
\vspace*{2mm}
\begin{center}
  \begin{tikzpicture}[font=\footnotesize]

\node[draw,
  minimum width=0.8cm,
  minimum height=0.4cm,
  rounded corners] (p) at (0,0)
     {$\p$};

\node[] (pe) at (1.6,0){internal participant};


\node[draw,
  minimum width=0.8cm,
  minimum height=0.4cm,
  dotted, thick,
  rounded corners] (q) at (3.8,0)
     {$\q$};

\node[] (qe) at (5.5,0){external participant};

\node[] (ie) at (1.6,-0.7){internal interaction};

\node[] (ee) at (5.5,-0.7){external interaction};

\node[] (ch) at (8.6,0){choice};

\node[] (rec) at (9.1,-0.7){loop (recursion)};

\draw[-{Stealth}] (-0.3,-0.7) -- (0.3,-0.7);
\draw[-{Triangle[open]},thick,cbred] (3.4,-0.7) -- (4,-0.7);
\draw[decorate,decoration={brace},thick] (7.8,0.2) -- (7.8,-0.2);
\draw[-,thick] (7.3,-1) -- (7.6,-1);
\draw[-,thick] (7.3,-1) -- (7.3,-0.4);
\draw[->,thick] (7.3,-0.4) -- (7.6,-0.4);
\draw[-,dashed,gray,thin] (7.7,-0.4) -- (8,-0.4);
\draw[-,dashed,gray,thin] (7.7,-1) -- (8,-1);



\end{tikzpicture}
 \end{center}
  \end{subfigure}
  \caption{Communication for Each Department in the Company}
  \label{fig:diag}
\end{figure*}

\myparagraph{Compositionality and Correctness.}
Our theory (\S\ref{sec:hybrid} and \S\ref{sec:comp})
provides an explicit function $\bb$ that \emph{builds back}
a single global type for
the communication in the company,
from the distributed specification above:
$\G=\bba {(\dgt{\G^\dagger})} {([\hTstr,\hTsales,\hTfin])}$.
It holds that
$\projb \G \dir = \projb \hTstr \dir$, $\projb \G \fone = \projb \hTfin \fone$,
and analogously for all participants:
\emph{projection is preserved}. A figure,
representing such $\G$ for our example, can be found in
Appendix \appref{sec:appendix-fig}\inApp.

More generally (see Figure \ref{fig:overview}),
from a compatibility type $\gG$ and
hybrid types $\hTi$ for each component
(with set of internal participants $E_i$),
such that they are compatible
$\proj\gG {E_i}=\loc{}\hTi$,
our theory synthesises a 
global type 
$\G=\bba {(\dgt{\G^\dagger})} {([\hTone,\dots,\dht{\hT_n}])}$
(Definition \ref{def:bb} and Theorem \ref{thm:comp}).
Correctness of our theory
is given by Corollary \ref{cor:dist-spec}.
Formally, this result guarantees that
\emph{the local types, projections of $\G$ on each participant,
are the same as if obtained by the respective subprotocol $\hTi$}:
$\projb \G {\p}= \projb \hTi {\p}$ if $\p$ is a participant
of the $i$-th subprotocol $\hTi$.
We have achieved \emph{distributed protocol specification}:
we can both
obtain local types for implementation
in a distributed fashion,
by projection of the
respective component $\hTi$
(no designer or programmer
needs the full knowledge
of $\G$),
and rest assured that all local types
(for all participants, in all components)
are projections of a
single, well-formed global type.
This makes our development
compatible with existing MPST theories,
with no need for developing new semantics
(\emph{semantics preservation}): 
a well-formed global type
projecting on all participants
gives traditional \emph{multiparty compatibility},
which, thanks to the semantics results in the literature
\cite{DenielouYoshida2013,CDPY2015,HYC2016},
leads to guarantees, such
as liveness and deadlock freedom.


\begin{remark}[Hybrid Types and Generalised Projection]
\label{rem:gloc-2}
With reference to Figure \ref{fig:overview},
Let us consider the set of participants $E_i$
of the generic $\hTi$, and $\p\in E_i$.
Generalised projection takes a hybrid type
and returns a hybrid type; since \emph{global and local
types are hybrid types} (Remark \ref{rem:gloc-1}),
e.g., we can project $\gG$ onto $E_i$
for compatibility ($\proj\gG {E_i}=\loc{}\hTi$), or $\G$ onto $E_i$ and
verify that it is equal to $\hTi$
(see Theorem \ref{thm:comp}, \S\ref{sec:comp}).
Most importantly, Theorem \ref{thm:proj-comp} in \S\ref{sec:comp}
guarantees that projection composes over set inclusion
$\projb \G {\p}=\projb {(\projb\G{E_i})} {\p} = \projb \hTi {\p}$:
by projecting $\hTi$ and $\G$ onto the participant $\p$,
we obtain the same local type for $\p$.
Namely, we can obtain local types
from the specific component $\hTi$
and then \emph{implement them in a distributed fashion},
but also 
they all are
projections of a well-formed global type $\G$.
\end{remark}

To summarise (see Figure \ref{fig:overview}),
our proposal for distributed protocol specification
is the following:
\begin{enumerate}
\item a different designer specifies, for each component of the system, a hybrid type $\hTi$;
\item a chief designer gives the compatibility type $\dgt{\G^\dagger}$
to discipline inter-component interactions;
\item compatibility is a simple equality check: 
$\projb {\dgt{\G^\dagger}} {E_i} = \loc {} \hTi$ ($E_i$ is the set of participants for $\hTi$).
\end{enumerate}
In return, as a metatheoretical result
proved once and for all by our theory,
the designers obtain
$\bba {(\dgt{\G^\dagger})} {([\hTone,\dots,\dht{\hT_n}])}$,
\emph{a global type for the whole communication},
for which \emph{projections are preserved}
(and, hence, MPST semantic guarantees hold). 

In \S\ref{sec:hybrid} and \S\ref{sec:comp} we detail
our compositionality theory,
including generalised projection, localiser,
compatibility, build-back, and correctness results.

\begin{figure*}
\begin{subfigure}{0.28\textwidth}
  \begin{center}
    \begin{tikzpicture}[font=\footnotesize]

\node[draw,
  minimum width=0.8cm,
  minimum height=0.4cm,
  rounded corners] (d) at (0,0)
     {$\dir$};

\node[draw,
  minimum width=0.8cm,
  minimum height=0.4cm,
  rounded corners] (s) at (1.2,0)
     {$\sales$};

\node[draw,
  minimum width=0.8cm,
  minimum height=0.4cm,
  rounded corners] (f1) at (2.4,0)
     {$\fone$};

\draw[-](d) edge (0,-2.5);
\draw[-](s) edge (1.2,-2.5);
\draw[-](f1) edge (2.4,-2.5);
\draw[-](0,-2.8) edge (0,-3.9);
\draw[-](1.2,-2.8) edge (1.2,-3.9);
\draw[-](2.4,-2.8) edge (2.4,-3.9);
\draw[-{Triangle[open]},thick,cbred] (0,-0.8) --node[above]{$\product$} (1.2,-0.8);
\draw[-{Triangle[open]},thick,cbred] (0,-1.3) --node[above,xshift=-5mm]{$\product$} (2.4,-1.3);
\draw[-{Triangle[open]},thick,cbred] (2.4,-1.8) --node[above,xshift=5mm]{$\ok$} (0,-1.8);
\draw[-{Triangle[open]},thick,cbred] (2.4,-2.3) --node[above]{$\price$} (1.2,-2.3);
\draw[-{Triangle[open]},thick,cbred] (2.4,-3) --node[above,xshift=5mm]{$\wait$} (0,-3);
\draw[-{Triangle[open]},thick,cbred] (2.4,-3.5) --node[above]{$\wait$} (1.2,-3.5);
\draw[decorate,decoration={brace},thick] (2.5,-1.8) -- (2.5,-3);
\draw[-,thick] (-0.4,-3.7) -- (-0.1,-3.7);
\draw[-,thick] (-0.4,-3.7) -- (-0.4,-1.6);
\draw[->,thick] (-0.4,-1.6) -- (-0.1,-1.6);
\draw[-,dashed,gray,thin] (0,-1.6) -- (2.4,-1.6);
\draw[-,dashed,gray,thin] (0,-3.7) -- (2.4,-3.7);

\end{tikzpicture}
  \end{center}
\subcaption{Inter-Component\\ Communication}
  \label{fig:diag-ext}
\end{subfigure}
\begin{subfigure}{0.68\textwidth}
  \begin{center}
\begin{small}
    \pgfdeclarelayer{background}
\pgfsetlayers{background,main}
\usetikzlibrary{decorations.pathreplacing}

\begin{tikzpicture}[commutative diagrams/every diagram]
        \node(gH)at (3,0) {$\gG$};
        \node(HL1)at (0.5,-1) {$\dht{\hT'_1}$};
        \node(HL2)at (2.1,-1) {$\dht{\hT'_2}$};
        \node(ghdots)at (3.3,-1){$\dots$};
        \node(HLn)at (4.5,-1) {$\dht{\hT'_n}$};
	\node(H1)at (0.5,-2) {$\hTone$};
	\node(H2)at (2.1,-2) {$\hTtwo$};
        \node(hdots)at (3.3,-2){$\dots$};
	\node(Hn)at (4.5,-2) {$\hTn$};
	\node(L11)at (0,-3) {$\dlt{L_{11}}$};
        \node(l1dots)at (0.5,-3) {$\dots$};
	\node(L1k)at (1,-3) {$\dlt{L_{1k_1}}$};
        \node(L21)at (1.6,-3) {$\dlt{L_{21}}$};
        \node(l2dots)at (2.1,-3) {$\dots$};
	\node(L2k)at (2.6,-3) {$\dlt{L_{2k_2}}$};
        \node(ldots)at (3.3,-3){$\dots$};
        \node(Ln1)at (4,-3) {$\dlt{L_{n1}}$};
        \node(lndots)at (4.5,-3) {$\dots$};
	\node(Lnk)at (5,-3) {$\dlt{L_{nk_n}}$};
        \node(noup)at (5,0.2){};
        \node(nodown)at (5,-2.2){};
        \node(nol)at (5.2,-1) {};
        \node(nor)at (7.5,-0.2) {};
        \node(eq)at (5.7,-3) {${\bf \color{magenta}{=}}$};
        \node(G)at (7.65,0) {$\G$};
        \node(L11b)at (6.4,-3) {$\dlt{L_{11}}$};
        \node(l1dotsb)at (6.9,-3) {$\dots$};
	\node(L1kb)at (7.4,-3) {$\dlt{L_{1k_1}}$};
        \node(L21b)at (7.9,-3) {$\dlt{L_{21}}$};
        \node(ldotsb)at (8.5,-3) {$\dots$};
        \node(Lnkb)at (9.1,-3) {$\dlt{L_{nk_n}}$};

	\path[commutative diagrams/.cd,every arrow]
        (gH) edge node[left,yshift=1.2mm] {$\upharpoonright$} (HL1)
        (gH) edge node[right] {$\upharpoonright$} (HL2)
        (gH) edge node[right,yshift=1.2mm] {$\upharpoonright$} (HLn)
        (H1) edge node[left] {$\loc{}{}$} (HL1)
        (H2) edge node[left] {$\loc{}{}$} (HL2)
        (Hn) edge node[left] {$\loc{}{}$} (HLn)
	(H1) edge node[left,yshift=1mm] {$\upharpoonright$} (L11)
        (H1) edge node[right,yshift=1mm] {$\upharpoonright$} (L1k)
        (H2) edge node[left,yshift=1mm] {$\upharpoonright$} (L21)
        (H2) edge node[right,yshift=1mm] {$\upharpoonright$} (L2k)
        (Hn) edge node[left,yshift=1mm] {$\upharpoonright$} (Ln1)
        (Hn) edge node[right,yshift=1mm] {$\upharpoonright$} (Lnk)
	(G) edge node[left,xshift=-1mm] {$\upharpoonright$} (L11b)
        (G) edge node[left] {$\upharpoonright$} (L1kb)
        (G) edge node[right] {$\upharpoonright$} (L21b)
        (G) edge node[right,xshift=1mm] {$\upharpoonright$} (Lnkb)
        ;
        \path[commutative diagrams/.cd,every arrow, thick, magenta, font=\footnotesize]
        (nol) edge node[above,yshift=2mm,xshift=-4mm] {composition} (nor);
\draw [decorate,decoration={brace}, thick, magenta](noup) -- (nodown) node [right] {};

\end{tikzpicture}
  \end{small}
\end{center}
  \subcaption{Distributed Protocol Specification, Compatibility through Projection}
  \label{fig:overview}
\end{subfigure}
\caption{Inter-Component Communication and
Compatibility via Projection}
\end{figure*}

\section{Hybrid Types for Protocol Specification}
\label{sec:hybrid}

\subsection{Background: Preliminaries of Multiparty Session Types}

We give a short summary of \emph{multiparty session types}
\cite{HYC2016,Scalas:2019,CDPY2015,verygentle};
specifically, our theory
is based on the formulation
in \cite{DenielouYoshida2013},
extended to parallel composition of global types.
The notation for our MPST system is standard
(directly adapted from \cite{zooid}).

Atoms of our syntax are: a set of \emph{roles} (or \emph{participants}), ranged over by $\p,\q,\dots$, 
a set of \emph{(type) variables}, ranged over by $\hX,\hY,\dots$; and a set of labels, ranged over by $\lblFmt{\ell_0},\lblFmt{\ell_1},\dots,\lblFmt{\ell_i},\lblFmt{\ell_j},\dots$.
\begin{definition}[Sorts, Global Types, and Local Types]
  \label{def:MPST}%
        {\em Sorts},
        {\em global types},  and
        {\em local types}, ranged over by $\tS$, $\G$, and $\lT$
        respectively, are inductive datatypes generated by:
        \begin{small}
          \[
          \begin{array}{l}
            \tS::= \tunit \SEP \tnat \SEP \tint \SEP\tbool \SEP \tplus \tS \tS \SEP \tpair \tS \tS\qquad
\G   ::=  \gend \SEP \gX \SEP\grec \gX \G\SEP\msgi \p\q {\elllbl}{\tS} {\G} \SEP \dgt{\G_1 \, | \, \G_2} \\
\lT  ::= \lend \SEP \lX \SEP \lrec \lX \lT \SEP \lsend \q {\lblFmt{\elllbl}}{\tS} {\lT}\SEP \lrecv \p {\lblFmt{\elllbl}}{\tS} {\lT}
          \end{array}
          \]
          \end{small}
where, 
 $\p\neq \q$, $I\neq \emptyset$, and
$\lblFmt{\ell_i} \neq \lblFmt{\ell_j}$ when $i \neq j,$ for all $i,j\in I$, in $\msgi \p\q {\elllbl}{\tS} {\G}$, $\lsend \q {\lblFmt{\elllbl}}{\tS} {\lT}$, and $\lrecv \p {\lblFmt{\ell}}{\tS} {\lT}$. 
\end{definition}

The \emph{global message}
$\msgi \p\q \elllbl {\tS} {\G}$ describes a protocol where %
participant $\p$ sends to $\q$ one message %
with label $\lblFmt{\ell_i}$ and a value of sort $\dte{\tS_i}$ as payload, for
some $i \in I$; %
then, depending on which $\lblFmt{\ell_i}$ was sent by $\p$, %
the protocol continues as $\G_i$. 
The type $\gend$ represents a \emph{terminated protocol}. %
\emph{Recursive protocol} is modelled as $\dgt{\gmu\gX.\G}$, %
where recursion \emph{variable} $\gX$ is bound.
The \emph{parallel}
construct $ \dgt{\G_1 \, | \, \G_2}$ 
describes a protocol composed
by two independent ones.
The participants of $\dgt{\G_1}$ and $\dgt{\G_2}$
are required to be disjoint:
no communication happens between
$\dgt{\G_1}$ and $\dgt{\G_2}$,
but only internally in each one of them
(for a broader discussion, %
see \S\ref{sec:parallel}).
The intuition for local types $\lend$, $\lX$ and $\dlt{\mu\lX.\lT}$
is the same as for global types.
The \emph{send type}
$\lsend \q \elllbl {\tS} {\lT}$ %
says that the participant implementing the type %
must choose a labelled message to send to $\q$; %
if the participant chooses the label $\lblFmt{\ell_i}$, 
it must include in the message to $\q$ a payload value of sort
$\tS_i$, %
and continue as prescribed by $\dlt{\lT_i}$. %
The \emph{receive
  type} 
$\lrecv \p \elllbl {\tS} {\lT}$ %
requires to wait to receive %
a value of sort $\tS_i$ (for some $i \in I$) %
from the participant $\p$, via a message with label $\lblFmt{\ell_i}$; %
then the process continues as prescribed by $\dlt{\lT_i}$.

We are interested in types that are (1)
\emph{guarded}---e.g.,
$\gmu\gX.\allowbreak \msg \p\q \dgt{\elllbl (\tnat). \gX}$ is a valid
global type, whereas $\gmu\gX\dgt{.}\gX$ is not---(detail in Appendix \appref{subsec:appendix1}\inApp, Definition \appref{def:guarded-a})
and (2)
\emph{closed}, i.e., all variables are bound by $\gmu\gX$. In messages, sends and receives, the payload type can be omitted (e.g., $\msg \p \q \dgt{\elllbl.\dots}$), when only a label 
is exchanged. We assume that global and local types are always guarded,
and that, in $\dgt{\G_1 \, | \, \G_2}$,
$\dgt{\G_1}$ and $\dgt{\G_2}$ are closed.

\emph{Projection} plays a central role in MPST theories: it connects the protocol discipline, provided by the global type, with the local types that separately describe the behaviour of each participant.
%
\begin{definition}[Projection for Global Types]  \label{def:proj-old}%
  The \emph{projection of a global type onto a role $\pr$} is a partial function defined by recursion on $\G$, whenever the recursive call is defined:
  \begin{small}
    \[
    \begin{array}{l}
      \rulename{proj-end} \quad \projt{\gend} \pr = \lend \quad \quad \quad
      \rulename{proj-var} \quad \projt{\gX}{\pr}=\lX
      \\
      \rulename{proj-rec}\quad \projt{(\grec \gX \G)}{\pr}=\lrec \lX {} (\projt{\G}{\pr})\ \text{if}\ \lrec \lX {} (\projt{\G}{\pr})\ \text{is guarded, else, if}\ (\grec \gX \G)\ \text{is closed,}\ \projt{(\grec \gX \G)}{\pr}=\lend
      \\
      \rulename{proj-par}\quad\projt {\dgt{\G_1 \, | \, \G_2}} \pr = \projt {\dgt{\G_i}} \pr\ \text{if}\ \pr\ \text{is a participant of}\ \dgt{\G_i}\text{,}\
      \lend\ \text{otherwise}
      \\
      \rulename{proj-send}\quad \projt{\msgi \p\q \elllbl {\tS} {\G}} \pr=\lsnd \q \dlt{\{\lblFmt{\ell_i} {(\tS_i)}.  ({\color{black}\projt{\G_{\dgt i}} \pr})\}_{i\in I} } \ \text{if $\pr=\p$}
      \\
  \rulename{proj-recv}
  \quad\projt{\msgi \p\q \elllbl {\tS} {\G}} \pr=\lrcv \p \dlt{\{\lblFmt{\ell_i} {(\tS_i)}.  ({\color{black}\projt{\G_{\dgt i}} \pr})\}_{i\in I} }\ \text{if $\pr=\q$}
     \\
     \rulename{proj-merge}\quad
     \projt{\msgi \p\q \elllbl {\tS} {\G}} \pr = \merge_{i\in I} (\projt {\G_{\dgt i}} \pr)\ \text{if}\ \pr\neq \p\ \text{and}\ \pr \neq \q
\end{array}
\]
\end{small}
  \noindent $\projt \G \pr$ is undefined if none of the above applies. \emph{Merging} ($\mrg$) is defined as a partial operator over two local types such that: $\lT \mrg \lT = \lT$ for every type, it delves inductively inside all constructors (e.g., {\small $\lsend \q \elllbl \tS \lT \mrg \lsend \q \elllbl \tS {\lT'} = \lsnd \q \dlt{\{\lblFmt{\ell_i} {(\tS_i)}.  (\lT_i}\mrg\dlt{\lT'_i)\}_{i\in I}}$}), and
  {\small $\lrecv \p \elllbl \tS \lT \mrg \lrecvj \p \elllbl \tS {\lT'} = 
  \lrcv \p \dlt{\{\lblFmt{\ell_k} {(\tS_k)}.  \lT_k}\mrg\dlt{\lT'_k\}_{k\in I\cap J} \cup \{\lblFmt{\ell_k} {(\tS_k)}.  \lT_k\}_{k\in I\backslash J} \cup \{\lblFmt{\ell_k} {(\tS_k)}.  \lT'_k\}_{k\in J\backslash I}}$}.
\end{definition}
We describe the clauses of Definition \ref{def:proj-old}.
$\rulename{proj-end}$, $\rulename{proj-var}$, and $\rulename{proj-rec}$ are standard.
$\rulename{proj-send}$ (resp. $\rulename{proj-recv}$) %
  states that a global type starting with a message %
  from $\pr$ to $\q$ (resp. from $\p$ to $\pr$) %
  projects onto a sending (resp. receiving) local type
  $\lsnd \q \dlt{\{\lblFmt{\ell_i} {(\tS_i).} {\color{black}\projt{\G_{\dgt i}} \pr}\}}$
  (resp.~$\lrcv \p \dlt{
    \{\lblFmt{\ell_i} {(\tS_i).} {\color{black}\projt{\G_{\dgt i}}\pr}\} }$), provided that the continuations
  $\projt{\G_{\dgt{i}}}\pr$ are also projections of the corresponding
  global-type continuations $\dgt{G_i}$. $\rulename{proj-merge}$ states that, if the projected global type
  starts with an interaction between $\p$ and $\q$, and if we are
  projecting it onto a third participant $\pr$, then the projection is defined (and we can skip the message $\msg \p \q$) if all the continuations project onto \emph{mergeable} types (according to the merge operator $\mrg$ defined above).
  $\rulename{proj-par}$ states that
  projecting a parallel type $\dgt{\G_1 \, | \, \G_2}$ on $\pr$ is the same as projecting $\dgt{\G_1}$ or $\dgt{\G_2}$ onto $\pr$, depending on whether $\pr$ is a participant of one or the other type. 

By projecting a global type $\G$ onto all participants
($\projb \G\pr=\dlt{\lT_\pr}$, for the generic role $\pr$), 
we obtain a collection of local types $\{\dlt{\lT_\pr}\}$,
where $\dlt{\lT_\pr}$
is the behavioural type
for 
$\pr$.
Existing MPST theories (e.g., \cite{DenielouYoshida2013})
guarantee that a session of well-typed
implementations of the participants
inherits semantic guarantees for its communication, from $\G$.
Our development 
in \S\ref{sec:comp} is
compatible with such
theories: result of composition is a
well-formed $\G$, 
thus, implementations of 
projected
local types 
benefit from well-established semantics
results from the literature.

\subsection{Hybrid Types}
\label{sec:hybridtypes}\label{subsec:hybrid}
To allow the specification of interacting subprotocols,
we enrich the syntax of global types with local constructs.
We thus obtain ``hybrid'' types,
which use global messages for intra-protocol communication
and local sends and receives as openings
for inter-protocol communication.

\begin{definition}[Hybrid Types] \label{def:hybrid-types}
\emph{Hybrid types} are defined inductively by:\\[1mm]
\centerline{$
  \begin{array}{rcl}
    \hT  & ::= & \hend \SEP \hX \SEP\hrec \hX \hT \SEP \dht{\hT_1 | \hT_2}
    \SEP \hsend \p \q {\ell} \tS \hT
    \SEP \hrecv \p \q \ell \tS \hT \SEP
    \hmsgi \p\q \ell {S} {\hT}
  \end{array}
$}\\[1mm]
where, in $\hmsgi \p\q {\ell}{\tS} {\hT}$, $\hsend \p \q {\lblFmt{\ell}}{\tS} {\hT}$,
and 
$\hrecv \p \q {\lblFmt{\ell}}{\tS} {\hT}$, $\p\neq \q$, $I\neq \emptyset,$ and $\lblFmt{\ell_i} \neq \lblFmt{\ell_j}$ when $i \neq j,$ for all $i,j\in I$. 
We indicate the datatype of hybrid types with the notation $\HH$.
\end{definition}
The intuition behind each construct
is the same as in
Definition
\ref{def:MPST}, but
we write $\hsend \p \q \ell \tS \hT$
in place of $\lsend \q \ell \tS \lT$,
and $\hrecv \p \q \ell \tS \hT$
in place of $\lrecv \p \ell \tS \lT$.
For local types, $\lsend \q \ell \tS \lT$
(resp. $\lrecv \p \ell \tS \lT$) describes
the communication of the participant $\p$
sending a message to $\q$
(resp. $\q$ receiving a message from $\p$),
and then continuing with
interactions \emph{all involving $\p$ (resp. $\q$) as a subject}.
Therefore, such subject can be left implicit.
For hybrid types, instead,
different (internal) subjects
interact both with internal and external participants.
For instance:\\[1mm]
\centerline{
  $
    ^{(a)}\hsnd \p \q\dht{\lblFmt{\ell_1}.}\ \
    ^{(b)}\hmsg \p \pr \dht{\lblFmt{\ell_2}.}\ \
    ^{(c)}\hrcv \q \pr\dht{\lblFmt{\ell_3}.}\ \
    ^{(d)}\hend
  $}
    \begin{itemize*}
    \item[$(a)$] $\p$ sends an external message to $\q$;
    \item[$(b)$] $\p$ exchanges a message internally with $\pr$;
    \item[$(c)$] $\pr$ receives an external message from $\q$; and
    \item[$(d)$] the protocol terminates.
    \end{itemize*}


\begin{definition}[Internal and External Participants] \label{def:participants}
  We define the sets of \emph{internal participants} and \emph{external participants} of a hybrid type $\hT$ by recursion:
  \begin{small}
  \[
  \hspace{-3mm}
  \begin{array}{ll}
  \begin{array}{l}
    \hpart \hend = \emptyset \quad \hpart \hX = \emptyset \\
    \hpart {\hrec \hX \hT} = \hpart \hT\\
      \hpart {\hpar \hTone \hTtwo} = \hpart {\hTone} \cup \hpart \hTtwo \\
      \hpart {\hsend \p \q \ell \tS \hT}= \{\p\} \cup \bigcup_{i\in I} \hpart {\hTi}\\
      \hpart {\hrecv \p \q \ell \tS \hT} = \{\q\} \cup \bigcup_{i\in I} \hpart {\hTi}\\
      \hpart {\hmsgi \p \q \ell \tS \hT} = \{\p,\q\} \cup \bigcup_{i\in I} \hpart {\hTi}
  \end{array}&
  \hspace*{-4mm}\begin{array}{l}
    \hepart \hend = \emptyset \quad \hepart \hX = \emptyset\\
    \hepart {\hrec \hX \hT} = \hepart \hT\\
      \hepart {\hpar \hTone \hTtwo} = \hepart {\hTone} \cup \hepart \hTtwo \\
       \hepart {\hsend \p \q \ell \tS \hT} = \{\q\} \cup \bigcup_{i\in I} \hepart {\hTi}\\
       \hepart {\hrecv \p \q \ell \tS \hT}= \{\p\} \cup \bigcup_{i\in I} \hepart {\hTi}\\
       \hepart {\hmsgi \p \q \ell \tS \hT} = \bigcup_{i\in I} \hepart {\hTi}
  \end{array}
  \end{array}
  \]
  \end{small}
\end{definition}

We define, for hybrid types, 
\emph{guardedness} and \emph{closedness},
as for global and local types (Definition \ref{def:MPST}).
We require that all 
hybrid types in this paper
are guarded
(detail in Appendix \appref{subsec:appendix1}\inApp,
Definition \appref{def:guarded-a}) and that, for all
$\hT$, $\hpart\hT\cap\hepart\hT=\emptyset$.
Also, we require well-formedness for parallel constructs: for
$\hpar \hTone \hTtwo$,
$\hTone$ and $\hTtwo$ must be closed and
$(\hpart\hTone \cup \hepart \hTone)\allowbreak\cap(\hpart \hTtwo \cup \hepart\hTtwo)=\emptyset$.
Namely, the parallel construct
describes
communication that happens independently,
within two separate groups of participants.
We 
express 
\emph{global and local types in terms of hybrid types}, 
with 
two predicates on $\HH$:
$\isglobal\hT$ holds iff $\hT$ is formed only by global constructs
(global type syntax in Definition~\ref{def:MPST}); and
$\islocal\hT$ holds iff $\hpart\hT$
contains at most one element (hence $\hT$ contains only local constructs,
see local types in Definition~\ref{def:MPST}).

\begin{example}\label{ex:pprot} 
  We 
  use as a recurring example
  the company from \S\ref{sec:overview}.
  A designer, $D_{\sf str}$, describes the protocol $\hTstr$
  for the strategic team, as in Figure \ref{fig:co-hybrid2}:\\
  \begin{small}
    \centerline{$
      \hTstr:=\hmsg\dir\ad \dht{\product(\tnat).\hsnd \dir \sales \product(\tnat).\hsnd \dir \fone \product(\tnat).}
    \hrec \hX {\hrcv \fone \dir}
    \dht{\left\{
      \begin{array}{l}
        \ok.\hmsg\dir\ad\go.\hend,\\
        \wait.\hmsg\dir\ad\wait.\hX
      \end{array}
      \right\}}
      $}
  \end{small}
First, $\dir$ sends internally a product ID to $\ad$, then a similar external message to $\sales$, of the sales department, and to $\fone$, of the finance department. $\dir$ waits in a recursive loop for $\fone$ to give the $\ok$. When this happens, $\dir$ internally communicates to $\ad$ that they can proceed with the product advertisement.
For $\hTstr$, 
the sets of internal and external participants are
$\hpart\hTstr = \{\dir,\ad\}$ and $\hepart\hTstr=\{\sales,\fone\}$.
We observe that $D_{\sf str}$ is not concerned with
the communication that happens internally
to the sales department or the financial one,
nor with the communication between these two.
Designers $D_{\sf sales}$ and $D_{\sf fin}$
independently give protocols
for the sales and financial departments
respectively (as in Figure \ref{fig:co-hybrid2}):\\[1mm]
\begin{small}
\centerline{$
\begin{array}{l}
  \hTsales=\hrcv \dir \sales \dht{\product(\tnat).}
  \hrec\hX{\hrcv\fone\sales}
\dht{\left\{
  \begin{array}{l}
    \price(\tnat).\hmsg\sales\web\publish(\tnat).\hend,\\
    \wait.\hmsg\sales\web\wait.\hX
  \end{array}
  \right\}}\\
  \hTfin=\hrcv \dir \fone \dht{\product(\tnat).\hmsg \fone \ftwo \product(\tnat).}
  \hrec\hX{\hmsg\ftwo\fone}
  \dht{\left\{
    \begin{array}{l}
      \price(\tnat).\hsnd \fone \dir\ok.\hsnd \fone\sales \price(\tnat).\hend,\\
      \wait.\hsnd \fone \dir\wait.\hsnd \fone\sales \wait.\hX
    \end{array}
    \right\}}
\end{array}
$}\\[1mm]
\end{small}
In the sales department, once $\dir$ has communicated the product, $\sales$ waits in a loop for the decision about the price from the financial department, then gives to the website administrator $\web$ the command to publish.
We have that $\hpart\hTsales = \{\sales,\web\}$ and $\hepart\hTsales=\{\dir,\fone\}$. The decision about the price of the product is taken by $\ftwo$, and communicated internally to the financial department with 
$\hmsg \ftwo \fone \dht{\dots}$; then $\fone$ communicates the decision to the other departments, which can continue with their internal communication. We have that $\hpart\hTfin = \{\fone,\ftwo\}$ and $\hepart\hTfin=\{\dir,\sales\}$.




In \S\ref{sec:comp},
we prove
the above types
\emph{compatible}
and \emph{compose} them into a single
global type.
\end{example}


\subsection{Projection and Localiser}

We introduce \emph{projection} and \emph{localiser} for hybrid types. These operators play are fundamental for defining compatibility and, ultimately, achieving compositionality.

\begin{definition}[Projection] \label{def:proj}
  The (generalised) projection of a hybrid type on the set of participants $E$, is a partial operator, $\proj \_ {E}:\HH\rightarrow\HH$, recursively defined by the following clauses (whenever the recursive call is defined):
  \begin{small}
  \[
  \begin{array}{ll}
    \begin{array}{l}\rulename{proj-end}  \quad \proj \hend {E} = \hend \end{array} &
    \begin{array}{l}\rulename{proj-var} \quad \proj \hX {E} = \hX \end{array}\\
  \begin{array}{l}
  \rulename{proj-rec} \\
  \projb {\hrec \hX \hT} {E} =  \hrec \hX {\dht{(\projb \hT {E})}}\ \text{if}\ \hrec \hX {\dht{(\projb \hT {E})}}\
  \text{is guarded,}\\
  \text{else, if}\ (\hrec \hX \hT)\ \text{is closed,}\ \projb{(\hrec \hX \hT)}{E}=\hend\\
  \end{array} &
  \begin{array}{l}
  \rulename{proj-par}\\
  \projb  {\hpar {\hT_1} {\hT_2}} {E} = \projb {\dht{\hT_i}} {E}
  \\
  \text{if $E \cap \hpart {\dht{\hT_{\mathit j}}} = \emptyset$, with $i,j\in\{1,2\}$ and $i\neq j$}
  \end{array}\\
  \begin{array}{l}
  \rulename{proj-send}\\
    \projb{\hsend \p \q \ell \tS \hT} {E} = \\
    \begin{cases}
      \hsnd \p \q  \dht{\{\lblFmt{\ell_i} (\tS). (\projb {\hT_i} {E})\}_{i\in I}}\ \text{if $\p\in E$ and $\q\notin E$}\\
      \merge_{i\in I} \dht{(\projb {\hT_i} {E})} \ \text{if $\p,\q\notin E$}
    \end{cases}
    \\[3mm]
  \rulename{proj-recv}\\
        \projb{\hrecv \p \q \ell \tS \hT} {E}  = \\
        \begin{cases}
          \hrcv \p \q \dht{\{\lblFmt{\ell_i} (\tS). (\projb {\hT_i} {E})\}_{i\in I}} \ \text{if $\p\notin E$ and $\q\in E$}\\
          \merge_{i\in I} \dht{(\projb {\hT_i} {E})} \ \text{if $\p,\q\notin E$}
        \end{cases}
  \end{array}
  &
  \begin{array}{l}
          \rulename{proj-msg}\\
  \projb {\hmsgi \p\q \ell {\tS} {\hT}} {E} =
  \\ \begin{cases}
      \hmsg \p \q \dht{\{\lblFmt{\ell_i} (\tS). (\projb {\hT_i} {E})\}_{i\in I}} \ \text{if $\p,\q\in E$} \\
          \hsnd \p \q  \dht{\{\lblFmt{\ell_i} (\tS). (\projb {\hT_i} {E})\}_{i\in I}} \ \text{if $\p\in E$ and $\q\notin E$}\\
          \hrcv \p \q  \dht{\{\lblFmt{\ell_i} (\tS). (\projb {\hT_i} {E})\}_{i\in I}} \ \text{if $\p\notin E$ and $\q\in E$}\\
          \dht{\merge_{i\in I} (\projb {\hT_i} {E})} \ \text{if $\p,\q\notin E$}
  \end{cases}\\[10mm]
  \text{and undefined otherwise.}
  \end{array}
    \end{array}
  \]
  \end{small}
 \emph{Merging} ($\mrg$) is defined as a partial commutative operator
 over two hybrid types such that: for all $\hT$, $\hT \mrg \hT = \hT$, it delves inductively inside all constructors (e.g., {\small $\hmsgi \p \q \elllbl \tS \hT \mrg\allowbreak \hmsgi \p \q \elllbl \tS {\hTb} \allowbreak = \hmsg \p \q \dht{\{\lblFmt{\ell_i} {(\tS_i)}.  (\hTi}\mrg\dht{\hTb_i)\}_{i\in I}}$}), and
{\small $
 \hrecv \p \q \ell \tS \hT \mrg \hrecvj \p \q \ell \tS {\hT'} = \allowbreak
  \hrcv \p \q
  \allowbreak \dht{
    \{\lblFmt{\ell_k} {(\tS_k)}.  \hT_k}\mrg\dht{\hT'_k\}_{k\in I\cap J}}
  \dht{\cup \{\lblFmt{\ell_k} {(\tS_k)}.  \hT_k\}_{k\in I\backslash J}
    \cup \{\lblFmt{\ell_k} {(\tS_k)}.  \hT'_k\}_{k\in J\backslash I}
  }
  $}
\end{definition}
With respect to Definition \ref{def:proj-old}, we now allow projection onto a \emph{set} of participants, and we introduce rules for projecting send and receive constructs. We highlight the differences below:
\begin{itemize}
\item $\rulename{proj-msg}$ defines $\hmsgi \p\q \ell {\tS} {\hT}$ to be projected onto $E$, if \emph{both $\p\in E$ and $\q\in E$}; in this case the structure of the global message $\hmsg \p \q$ is maintained in the projected type;
\item $\rulename{proj-send}$ defines projection when the sender $\p$ is in $E$ and $\q$ is not, and when both $\p,\q\notin E$;
\item $\rulename{proj-recv}$ defines projection when the receiver $\q$ is in $E$ and $\p$ is not, and when both $\p,\q\notin E$.
\end{itemize}

\begin{remark}\label{rem:proj}
  Projection is defined only onto
  sets of \emph{internal} participants; e.g.,
  $\hrecv \p \q \ell \tS \hT$ can be projected
  onto $\q$, but not onto $\p$;
  also, $\hpart {\projb \hT E} \subseteq \hpart \hT$.
  If we project a hybrid type onto a singleton,
  we obtain  
  a local type:
  $\islocal {\projb \hT {\{\p\}}}$.
  Furthermore, if $\isglobal\hT$, then
  $\projb \hT {\{\p\}}$ is exactly the traditional
  MPST projection of the
  global type $\hT$ onto $\p$,
  $\projb\hT\p$ (Definition \ref{def:proj-old}).
 \end{remark}

\begin{definition}[Localiser] \label{def:loc}
  The localiser of a hybrid type 
  is a partial operator, $\loc {} {\;\_}:\HH\rightarrow\HH$, recursively defined by the following clauses (whenever the recursive call is defined):
  \begin{small}
  \[
  \begin{array}{ll}
    \rulename{loc-end}&\loc{}\hend=\hend
    \quad\quad\quad\rulename{loc-var} \quad\quad  \loc{}\hX=\hX\quad \quad \quad
 \rulename{loc-par} \quad\quad\loc{} (\hpar {\hT_1} {\hT_2}) = \hpar {\loc {}\dht{\hT_1}} {\loc{} \dht{\hT_2}}
    \\
    \rulename{loc-rec}&\loc{}\hrec \hX \hT=\hrec \hX {(\loc{}\hT)}\ \text{if}\ \hrec \hX {(\dht{\loc {} \hT})}\
  \text{is guarded, 
      else, if}\ \hrec \hX \hT\ \text{is closed,}\ \loc {} {\hrec \hX \hT}=\hend\\


        \rulename{loc-send} &\loc{}(\hsend \p \q \ell \tS \hT)=
          \hsend \p \q \ell \tS {\loc{}\hT}\\
        \rulename{loc-recv} &\loc{}(\hrecv \p \q \ell S \hT)=
          \hrecv \p \q \ell S {\loc{}\hT}\\
        \rulename{loc-msg} &\loc{}(\hmsgi \p\q \ell {S} {\hT}) =
           \lmerge_{i\in I} {\loc{}(\dht{\hT_i})} \\
           \multicolumn{2}{l}{\text{and undefined otherwise.}}
      \end{array}
  \]
  \end{small}
    \emph{Merging for the localiser} ($\lmrg$) is a partial
    commutative operator over two hybrid types such that: for all $\hT$,
    $\hT \lmrg \hT = \hT$,
    it delves inductively inside all constructors (e.g., {\small $\hsend \p \q \elllbl \tS \hT \allowbreak\lmrg \hsend \p \q \elllbl \tS {\hTb} = \hsnd \p \q \dht{\{\lblFmt{\ell_i} {(\tS_i)}.  (\hT_i}\lmrg\dht{\hTb_i)\}_{i\in I}}$}), and
    {\small $
  \hsend \p \q \ell \tS \hT \lmrg \hsendj \p \q \ell \tS {\hT'} =\\
  \quad\quad\quad\quad
  \hsnd \p \q
  \dht{\{\lblFmt{\ell_k} {(\tS_k)}.  \hT_k}\lmrg\dht{\hT'_k\}_{k\in I\cap J}}
  \dht{\cup \{\lblFmt{\ell_k} {(\tS_k)}.  \hT_k\}_{k\in I\backslash J}
    \cup \{\lblFmt{\ell_k} {(\tS_k)}.  \hT'_k\}_{k\in J\backslash I}
  }
  $}
\end{definition}
The localiser is a forgetful operator that \emph{preserves local constructs} and discards global messages. 
$\rulename{loc-end}$, $\rulename{loc-var}$, $\rulename{loc-rec}$, and $\rulename{loc-par}$ preserve the non-message structure of the type, into its localisation. 
$\rulename{loc-send}$ and $\rulename{loc-recv}$ state that send construct (an internal participant $\p$ sends to an external participant $\q$) and receive construct (an internal participant $\q$ receives from an external participant $\p$) are to be maintained and their continuations $\hTi$ localised into $\loc {} \hTi$. $\rulename{loc-msg}$ is the central rule: each global message has to be skipped and its continuations need to be merged.

\begin{remark}
  The merge operator for the localiser, $\lmrg$,
  is \emph{dual to the merge for projection}, $\mrg$.
  To build the intuition behind this,
  let us consider the following hybrid type:
  $\hT = \hmsg\p\q \dht{\{\lblFmt{\ell_1}.
    \hsnd\p\pr\lblFmt{\ell_3}.\hend,}
  \allowbreak\dht{\lblFmt{\ell_2}.
    \hsnd\p\pr\lblFmt{\ell_4}.\hend\}}$.
  First, $\p$ chooses on which branch to take,
  by \emph{internally} sending either
  $\lblFmt{\ell_1}$ or $\lblFmt{\ell_2}$ to $\q$;
  then according to the chosen branch, $\p$
  itself sends a different \emph{external} message
  to $\pr$. When we localise the above
  type we obtain
  $\loc {} \hT =
  \hsnd\p\pr\dht{\{\lblFmt{\ell_3}.
    \hend,\lblFmt{\ell_4}.\hend\}}$,
  namely we have merged send
  constructs with different labels,
  $\nhsnd\p\pr\lblFmt{\ell_3}$ and
  $\nhsnd\p\pr\lblFmt{\ell_4}$.
  From the point of view of the external receiver $\pr$,
  it makes no difference whether
  such choice has been taken by $\p$
  at the time $\p$ sends to $\pr$ (with $\hsnd\p\pr$),
  or at a precedent stage of communication,
  internal to $\hT$ (with $\nhmsg\p\q$).
  This intuition is proven correct
  by the results from the next section,
  when we define
  a compatibility notion,
  based on localiser and projection,
  and we prove compositionality. 
\end{remark}

\section{Compositionality for Distributed Specification}
\label{sec:comp}




In \S\ref{sec:hybrid}, we have set definitions in place 
to compose subprotocols.
In particular,
following the overview of Figure \ref{fig:overview}, \S\ref{sec:overview}, what we need is:
\begin{itemize}
\item 
hybrid types $\hTone,\hTtwo,\dots,\hTN$ for the multiple components of the communicating system;
\item  a compatibility hybrid type $\ghT$
(we sometimes use the notation $\gG$, when $\ghT$ is a global type, namely when $\isglobal \ghT$)
that disciplines the 
inter-component communication; 
and
\item the property that $\ghT$ projects onto the localisations of $\hTone,\hTtwo,\dots,\hTN$ (compatibility).
\end{itemize}

In this section, we present our journey to
multiparty compositional specification in three steps:
\begin{enumerate}
\item we focus on a single hybrid type $\hTi$, for which
compatibility holds: $\projb \ghT {E_i} = \loc {} \hTi$;
we build a new type $\mathcal{B}^1_{E_i}\;{(\ghT)}\;{(\hTi)}$,
whose projection on $E_i$ coincides with $\hTi$,
and which contains the information
for external communication from $\ghT$ (Theorem \ref{thm:comp1});
\item we show how Step 1 is a base case for
composing multiple compatible protocols:
from $\bbo$, we recursively define
$\mathcal{B}\;{(\ghT)}\;{([\hTone,\dots,\hTN])}$,
which projects onto $\hTi$ \emph{for all} $i=1,\dots,N$
(compositionality, Theorem \ref{thm:comp});
\item we prove that projection composes over
the subset relation (Theorem \ref{thm:proj-comp});
this guarantees the applicability and correctness of our result:
if $\ghT$ is a \emph{global type},
we obtain a well-formed global type $\G=\bba {(\ghT)} {([\hTone,\dots,\hTN])}$
for the whole system,
the projections of which, onto every participant,
are the same as the projections of the subprotocols $\hTi$ (Corollary \ref{cor:dist-spec}).
\end{enumerate}

\subsection{Step 1: Building Back a Single Subprotocol}
\label{subsec:comp1}
Our first step towards compositionality
is also the most technical of the three. In this section we
present the main design choices, both in constructions
and in proofs, that make our theory sound.
For more details, we refer to Appendix \appref{subsec:appendix1}\inApp.

We are given $\ghT$, the compatibility type disciplining communication
happening among subprotocols, and with one of these subprotocols
$\hTE$, describing the communication
from the point of view of its internal participants,
contained in the set $E$.
The local constructs of $\hTE$ are \emph{compatible} with
what prescribed by $\ghT$ for communicating externally, formally:
\begin{equation}
\projb {\ghT} {E} = \loc{}\hTE
\tag{\textbf{C}}\label{eq:compatibility}
\end{equation}
The above notion is designed for the direct
composition of multiple subprotocols:
the hybrid type $\hTE$ for one component is
checked compatible,
not against other components, but
against $\ghT$, which gives global guidance
for inter-component communication.
This design choice differentiates our theory from previous
work, where compatibility is checked by
directly matching the inputs and outputs of two
separate components
(see \S\ref{sec:related} for further discussion).
With such compatible types, we build $\hT=\bbeoa {(\ghT)} {(\hTE)}$
that retains the information about external communication of $\ghT$
and about internal communication in the component $\hTE$.
\begin{figure}
\begin{small}
\[
    \begin{array}{l}
    \bbeoa {(\hend)} {(\hTE)} = \hTE \qquad\qquad\qquad\qquad\qquad \bbeoa {(\hX)} {(\hTE)} = \hTE\\
    \bbeoa {(\hrec \hX \ghTb)} {(\hTE)} =
    \begin{cases}
      \hpar {(\hrec \hX \ghTb)} {\hTE}\ \text{if $\hpart \ghTb \cap E = \emptyset$ and both $\hrec \hX \ghTb$ and $\hTE$ are closed}\\
      \hmsg \ps \pr \dht{\{\lblFmt{\ell_i}(\tS).(\bbeoa {(\hrec \hX \ghTb)} {(\hTEi)})\}_{i\in I}}
      \ \text{if $\hpart \ghTb \cap E = \emptyset$,}
      \\ \quad\text{$\hTE=\hmsg\ps\pr
      \dht{\{\lblFmt{\ell_i}(\dte{\tS_i})\hTEi\}_{i\in I}}$, }
      \text{and one of $\hrec \hX \ghTb$ and $\hTE$ is not closed}\\
      \hrec \hX {(\bbeoa {(\ghTb)} {(\hTEb)})}\ \text{if $\hpart \ghTb \cap E \neq \emptyset$ and $\hTE = \hrec \hX \hTEb$}
    \end{cases}\\
    \bbeoa {(\hpar \ghTone \ghTtwo)} {(\hTE)} =
    \begin{cases}
      \hpar {(\bbeoa {(\ghTone)} {(\hTE)})} \ghTtwo\ \text{if $\hpart \ghTtwo \cap E = \emptyset$}\\
      \hpar \ghTone {(\bbeoa {(\ghTtwo)} {(\hTE)})}\ \text{if  $\hpart \ghTtwo \cap E \neq \emptyset$ and $\hpart \ghTone \cap E = \emptyset$}\\
      \end{cases}\\
\hspace*{-3.5mm}\begin{array}{ll}
\begin{array}{l}
         \bbeoa {(\hsnd \p \q \dht{\{\lblFmt{\ell_i} (\dte{\tS_i}). \ghTi\}_{i\in I}})} {(\hTE)} =
    \\ \quad
    \begin{cases}
      \hsnd \p \q \dht{\{\lblFmt{\ell_i} (\dte{\tS_i}).(\bbeoa {(\ghTi)} {(\hTEi)})\}_{i\in I}}\
      \\ \quad \text{if $\hTE = \hsend \p \q \ell \tS \hTE$}\\
      \hmsg \ps \pr \dht{\{\lblFmt{m_j} (\dte{\tS'_j}).(\bbeoa {({\sf UL}_j)} {(\hTEj)})\}_{j\in J}}\
      \\ \quad \text{with ${\sf UL}_j = \pi_j(\ula \ghT {[\loc{} {\hTEj}]_{j\in J}})$,}
      \\ \quad \text{if $\hTE =
      \hmsg \ps \pr
      \dht{\{\lblFmt{m_j} (\dte{\tS'_j}) \hTEj\}_{j\in J}}$ and $\p\in E$}\\
      \hsnd \p \q \dht{\{\lblFmt{\ell_i} (\dte{\tS_i}).(\bbeoa {(\ghTi)} {({\sf UP}_i)})\}_{i\in I}}
      \\ \quad \text{with ${\sf UP}_i = \pi_i(\upa \hTE {[\projb {\ghTi} E]_{i\in I}})$, if $\p\notin E$}
    \end{cases}
    \end{array}
    &\hspace*{-6.5mm} 
\begin{array}{l}
     \bbeoa {(\hrcv \p \q \dht{\{\lblFmt{\ell_i} (\dte{\tS_i}). \ghTi\}_{i\in I}})} {(\hTE)} =
    \\ \quad
    \begin{cases}
      \hrcv \p \q \dht{\{\lblFmt{\ell_i} (\dte{\tS_i}).(\bbeoa {(\ghTi)} {(\hTEi)})\}_{i\in I}}\
      \\ \quad \text{if $\hTE = \hrecv \p \q \ell \tS \hTE$}\\
      \hmsg \ps \pr \dht{\{\lblFmt{m_j} (\dte{\tS'_j}).(\bbeoa {({\sf UL}_j)} {(\hTEj)})\}_{j\in J}}\
      \\ \quad \text{with ${\sf UL}_j = \pi_j(\ula \ghT {[\loc{} {\hTEj}]_{j\in J}})$,}
      \\ \quad \text{if $\hTE =
      \hmsg \ps \pr
      \dht{\{\lblFmt{m_j} (\dte{\tS'_j}) \hTEj\}_{j\in J}}$ and $\q\in E$}\\
      \hrcv \p \q \dht{\{\lblFmt{\ell_i} (\dte{\tS_i}).(\bbeoa {(\ghTi)} {({\sf UP}_i)})\}_{i\in I}}\
      \\ \quad \text{with ${\sf UP}_i = \pi_i(\upa \hTE {[\projb {\ghTi} E]_{i\in I}})$, if $\q\notin E$}
    \end{cases}
    \end{array}
    \end{array}\\
    \bbeoa {(\hmsg \p \q \dht{\{\lblFmt{\ell_i} (\dte{\tS_i}). \ghTi\}_{i\in I}})} {(\hTE)} =
    \begin{cases}
      \hmsg \p \q \dht{\{\lblFmt{\ell_i} (\dte{\tS_i}).(\bbeoa {(\ghTi)} {(\hTEi)})\}_{i\in I}}
      \\ \quad \text{if $\hTE = \hsnd \p \q \dht{\{\lblFmt{\ell_i} (\dte{\tS_i}). \hTEi\}_{i\in I}}$ or $\hTE = \hrcv \p \q \dht{\{\lblFmt{\ell_i} (\dte{\tS_i}). \hTEi\}_{i\in I}}$}\\
      \hmsg \ps \pr \dht{\{\lblFmt{m_j} (\dte{\tS'_j}).(\bbeoa {({\sf UL}_j)} {(\hTEj)})\}_{j\in J}}\
      \\ \quad \text{with ${\sf UL}_j = \pi_j(\ula \ghT {[\loc{} {\hTEj}]_{j\in J}})$,}
      \\ \quad \text{if $\hTE = \hmsg \ps \pr
      \dht{\{\lblFmt{m_j} (\dte{\tS'_j}). \hTEj\}_{j\in J}}$ and $\p\in E$ or $\q\in E$}\\
      \hmsg \p \q \dht{\{\lblFmt{\ell_i} (\dte{\tS_i}).(\bbeoa {(\ghTi)} {({\sf UP}_i)})\}_{i\in I}}\
      \\ \quad \text{with ${\sf UP}_i = \pi_i(\upa \hTE {[\projb {\ghTi} E]_{i\in I}})$, if $\{\p,\q\}\cap E = \emptyset$}
    \end{cases}
    \end{array}
\]
\end{small}
\caption{Build-Back of a Single Component: Equations}\label{fig:bb1-equations}
\end{figure}

\begin{definition}[Build-Back of a Single Component]\label{def:bb1}
Given a set of participants $E$, we define the
\emph{build-back of a single component} as the partial recursive
function $\bbeoa {(\ghT)} {(\hTE)}$. The recursive equations are
given in Figure \ref{fig:bb1-equations}; if none of those apply or if
$\hpart \hTE\nsubseteq E$, $\bbeoa {(\ghT)} {(\hTE)}$ is undefined.
\end{definition}

The rest of this subsection is dedicated to discussing the intuition behind the function $\bbeo$ and its correctness (Theorem \ref{thm:comp1}). First, let us consider the following equations from Definition \ref{def:bb1}: 
\[
\begin{small}
\begin{array}{l}
         \bbeoa {(\hmsgi \p \q \ell \tS \ghT)} {(\hsend \p \q \ell \tS \hTE)} = \hmsg \p \q \dht{\{\lblFmt{\ell_i} (\dte{\tS_i}).(\bbeoa {(\ghTi)} {(\hTEi)}\}_{i\in I}}\\
         \bbeoa {(\hmsgi \p \q \ell \tS \ghT)} {(\hrecv \p \q \ell \tS \hTE)} = \hmsg \p \q \dht{\{\lblFmt{\ell_i} (\dte{\tS_i}).(\bbeoa {(\ghTi)} {(\hTEi)}\}_{i\in I}}\\

\end{array}
\end{small}
\]
The two equations above show how our compatibility (Equation \ref{eq:compatibility})
comes into play when building back a more general type. E.g.,
when the projection of
$\ghT=\hmsgi \p \q \ell \tS \ghT$ onto $E$ is equal to the localisation
of a send type $\hTE=\hsend \p \q \ell \tS \hTE$ (with $\p\in E$) then
their build-back is
$\hmsg \p \q \dht{\{\lblFmt{\ell_i} (\dte{\tS_i}).(\bbeoa {(\ghTi)} {(\hTEi)}\}_{i\in I}}$.
The case where $\hTE=\hrecv \p \q \ell \tS \hTE$ is analogous.


The next example shows
how the build-back
retains the information both
\begin{enumerate*}
\item internal to the component of $\hTE$ and
\item about the inter-component communication of the system, given by $\ghT$.
\end{enumerate*}

\begin{example}[Build-Back, Intuition]
We are given 
$\hTE=\dht{\hmsg\ps\pr\lblFmt{\ell_1}.\hsnd\ps\p\lblFmt{\ell_2}.\hend}$,
with internal participants $E=\{\ps,\pr\}$,
$\ghT=\dht{\hmsg\p\q\lblFmt{\ell_0}.\hmsg\ps\p\lblFmt{\ell_2}.\hend}$,
for compatibility, describing inter-component communication in the system:
compatibility \ref{eq:compatibility} holds.
Following the equations in Figure \ref{fig:bb1-equations},
we first \emph{build back} the prefix $\dht{\hmsg\p\q\lblFmt{\ell_0}}$,
then recursively the prefix $\dht{\hmsg\ps\pr\lblFmt{\ell_1}}$, and ultimately
we exploit compatibility and $\dht{\hsnd\ps\p\lblFmt{\ell_2}}$ gets absorbed into
$\dht{\hmsg\ps\p\lblFmt{\ell_2}}$. Namely, 
$\bbeoa {(\ghT)} {(\hTE)}=\dht{\hmsg\p\q\lblFmt{\ell_0}.\hmsg\ps\pr\lblFmt{\ell_1}.\hmsg\ps\p\lblFmt{\ell_2}.}\allowbreak\hend$.
We observe that 
$\bbeoa {(\ghT)} {(\hTE)}$ contains \emph{all the interactions},
both intra-component (in $\hTE$) and
inter-component (in $\ghT$); in other words,
$\bbeoa {(\ghT)} {(\hTE)}$ carries the information
both in $\hTE$ and in $\ghT$.
This property of the build-back is formalised by conclusions $(1)$ and $(2)$
of Theorem \ref{thm:comp1}.
\end{example}
Some detail 
from the previous example is 
hidden in
the auxiliary ``unmerge'' functions $\up$ and $\ul$,
for projection and localiser respectively.
They reproduce in $\bbeoa {(\ghT)} {(\hTE)}$
a branching structure that is faithful to the branching both in
$\ghT$ and in $\hTE$, where such branching may have been
merged when projecting $\ghT$ on $E=\{\ps,\pr\}$
(with $\mrg$, see Definition \ref{def:proj})
or when localising $\hTE$
(with $\lmrg$, see Definition \ref{def:loc}).
We present the unmerge mechanism
with the next example, while, for
formal details,
we refer the interested reader
to Appendix \appref{subsec:appendix1}\inApp{}.
\begin{example}[Unmerge]
We focus on the merge for the localiser $\lmrg$; the case
for projection is analogous.
We are given $\hTE=\dht{\hmsg\ps\pr}
\dht{\{\lblFmt{\ell_{11}}.\hsnd\ps\p\lblFmt{\ell_{21}}.\hend,}
\dht{\lblFmt{\ell_{12}}.\hsnd\ps\p\lblFmt{\ell_{22}}.\hend\}}$
and
$\ghT=\dht{\hmsg\p\q\lblFmt{\ell_0}.\hmsg\ps\p}\allowbreak
\dht{\{\lblFmt{\ell_{21}}.\hend,}\allowbreak\dht{\lblFmt{\ell_{22}}.\hend\}}$
In this case,
$\projb \ghT E = \loc{} \hTE = \dht{\hsnd\ps\p\{\lblFmt{\ell_{21}}.\hend,\lblFmt{\ell_{22}}.\hend\}}$.
In particular, when computing $\loc{} \hTE$,
a merge of branches happens:
$\loc{} \hTE=\dht{\hsnd\ps\p\{\lblFmt{\ell_{21}}.\hend\}}\lmrg\dht{\hsnd\ps\p\{\lblFmt{\ell_{22}}.\hend\}}$.
When building back, we need to unmerge and reproduce the original
branching from $\hTE$. In particular
\begin{small}
\[
\projb {\dht{\hmsg\ps\p\{\lblFmt{\ell_{21}}.\hend,\lblFmt{\ell_{22}}.\hend\}}} E=
(\dht{\hsnd\ps\p\{\lblFmt{\ell_{21}}.\hend\}})\lmrg(\dht{\hsnd\ps\p\{\lblFmt{\ell_{22}}.\hend\}})
\]
\end{small}
Under this hypothesis, $\ul$
returns
suitable branches for the build-back:
\begin{small}
\[
\ul\;(\dht{\hmsg\ps\p\{\lblFmt{\ell_{21}}.\hend,\lblFmt{\ell_{22}}.\hend\}})\;[\dht{\hsnd\ps\p\{\lblFmt{\ell_{21}}.\hend\}},\dht{\hsnd\ps\p\{\lblFmt{\ell_{22}}.\hend\}}]
=[\ \dht{\hmsg\ps\p\{\lblFmt{\ell_{21}}.\hend\}}\ ,\ \dht{\hmsg\ps\p\{\lblFmt{\ell_{22}}.\hend\}}\ ]
\]
\end{small}
Then, by following the build-back algorithm we obtain:
\[
\begin{small}
\bbeoa {(\ghT)} {(\hTE)}=\dht{\hmsg\p\q\lblFmt{\ell_0}.\hmsg\ps\pr}
\dht{\left\{
\dht{\lblFmt{\ell_{11}}.\hmsg\ps\p\lblFmt{\ell_{21}}.\hend,}
\dht{\lblFmt{\ell_{12}}.\hmsg\ps\p\lblFmt{\ell_{22}}.\hend}
\right\}}
\end{small}
\]
We observe that, above,
the output of $\ul$, with arguments $\ghTb=\dht{\hmsg\ps\p\{\lblFmt{\ell_{21}}.\hend,\lblFmt{\ell_{22}}.\hend\}}$
$[\hTLone,\hTLtwo]=[\dht{\hsnd\ps\p\{\lblFmt{\ell_{21}}.\hend\}},\dht{\hsnd\ps\p\{\lblFmt{\ell_{21}}.\hend\}}]$,
is a list of two types
$[\ghTone,\ghTtwo]=\allowbreak[\dht{\hmsg\ps\p\{\lblFmt{\ell_{21}}.\hend\}},\allowbreak\dht{\hmsg\ps\p\{\lblFmt{\ell_{21}}.\hend\}}]$;
for these, in particular, the following properties hold:
\begin{itemize*}
\item[$(a)$] $\projb \ghTi E = \hTLi$, for $i=1,2$
\item[$(b)$] $\projb \ghTb {E'}= \projb \ghTone {E'} \lmrg \projb \ghTtwo {E'}$, for any $E'$, set of participants such that $E\cap E'=\emptyset$.
\end{itemize*}
The nesting of branching for general $\ghT$ and $\hTE$
may be intricate and tedious; the auxiliary functions
$\up$ and $\ul$ take care of the detail
(see Appendix \appref{subsec:appendix1}\inApp),
in a way that properties $(a)$ and $(b)$ as above
hold for $\ul$, and similar ones for $\up$
(Lemmas \appref{lem:unml} and \appref{lem:unmp},
Appendix \ref{subsec:appendix1}, \inApp).
Generally, such properties ensure that both conclusions $(1)$ and $(2)$
(essential for composing multiple
subprotocols), of Theorem \ref{thm:comp1}, hold. 
\end{example}

Finally, we can state our first compositionality result,
which certifies the definition of $\bbo$.

\begin{theorem}[Building Back a Single Component]\label{thm:comp1} 
  We fix a set of participants $E$, and we are given hybrid types $\ghT$ and $\hTE$, such that:
  \begin{itemize*}
  \item[\rm (a)] $\hpart \hTE \subseteq E$,
  \item[\rm (b)] $\hepart \hTE \cap E = \emptyset$, and
  \item[\rm (c)] $\projb {\ghT} {E} = \loc{} \hTE$, (compatibility \ref{eq:compatibility}).
  \end{itemize*}
  We set $\hT = \bbeoa {(\ghT)} {(\hTE)}$ and we have:
  \begin{enumerate}
  \item $\projb {\hT} {E} = \hTE$ and
  \item for all $E'$, such that $E'\cap E = \emptyset$, $\projb \hT {E'} = \projb \ghT {E'}$.
  \end{enumerate}
   Moreover if $\isglobal{\ghT}$ then $\isglobal{\hT}$.
\end{theorem}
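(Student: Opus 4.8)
The plan is to argue by well-founded induction following the recursive definition of $\bbeo$ (Definition~\ref{def:bb1}, Figure~\ref{fig:bb1-equations}), case-splitting on the head construct of the compatibility type $\ghT$ and sub-casing on $\hTE$ exactly as the equations do, proving conclusions $(1)$ and $(2)$ simultaneously and carrying hypotheses $(a)$--$(c)$ as an invariant into every recursive call. The one auxiliary fact I would isolate first is an \emph{identity-projection lemma}: if $\hpart \hTE \subseteq E$ and $\hepart \hTE \cap E = \emptyset$, then $\projb \hTE E = \hTE$, since under these conditions \rulename{proj-msg}, \rulename{proj-send}, and \rulename{proj-recv} all keep their leading construct. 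This discharges conclusion $(1)$ in the two base cases $\ghT \in \{\hend,\hX\}$, where $\hT = \hTE$. For conclusion $(2)$ in those cases I would use $(c)$: $\loc{}\hTE = \projb \ghT E$ is $\hend$ or $\hX$, which by Definition~\ref{def:loc} forces $\hTE$ to contain no local send/receive, hence $\isglobal \hTE$; projecting such a purely-message $\hTE$ onto $E'$ with $E'\cap E = \emptyset$ skips every message (all subjects lie in $E$) and returns $\hend$ or $\hX$, matching $\projb \ghT {E'}$.

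The recursion and parallel cases for $\ghT$ are then direct. When the head of $\ghT$ does not meet $E$ we emit a $\hpar$ or $\hrec$ wrapper and evaluate $\projb \hT E$ and $\projb \hT {E'}$ with \rulename{proj-par}/\rulename{proj-rec}, routing each projection to the correct side via $\hpart \hTE \subseteq E$ and $E'\cap E=\emptyset$ (e.g.\ for $\hT = \hpar {(\hrec \hX \ghTb)} \hTE$ one gets $\projb \hT E = \projb \hTE E = \hTE$ by the identity lemma, and $\projb \hT {E'} = \projb{(\hrec \hX \ghTb)}{E'} = \projb \ghT {E'}$); when it does meet $E$ we re-fold the recursion around the inductive call, inheriting the guardedness side-condition. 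The \emph{aligned} head-cases for $\hmsgi \p\q \ell \tS \ghT$, $\hsnd \p \q$, $\hrcv \p \q$ are equally routine: when $\ghT$ and $\hTE$ share labels (e.g.\ $\ghT = \hmsgi \p\q \ell \tS \ghT$ against $\hTE = \hsend \p \q \ell \tS \hTE$, with $\p\in E$, $\q\notin E$) the build-back emits $\hmsg \p \q \dht{\{\lblFmt{\ell_i}(\tS_i).\bbeoa{(\ghTi)}{(\hTEi)}\}_{i\in I}}$, and I would unfold the projections by the matching clause of \rulename{proj-msg}, rewrite continuations by the induction hypothesis ($\projb{\bbeoa{(\ghTi)}{(\hTEi)}}{E}=\hTEi$ and $\projb{\bbeoa{(\ghTi)}{(\hTEi)}}{E'}=\projb{\ghTi}{E'}$), and reassemble to $\hTE$ and $\projb \ghT {E'}$; the label alignment needed to pick this case is itself guaranteed by $(c)$.

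The genuinely delicate cases are the two where the heads do \emph{not} align: peeling an internal message $\hmsg \ps \pr$ off $\hTE$ (the $\ul$ clause) and peeling a construct off $\ghT$ that misses $E$ (the $\up$ clause). Here the branch structure on one side has been collapsed --- by a $\lmrg$ inside $\loc{}\hTE$, or by a $\mrg$ inside $\projb \ghT E$ --- so the build-back re-expands it with the unmerge operators, recursing on $\bbeoa{({\sf UL}_j)}{(\hTEj)}$ or $\bbeoa{(\ghTi)}{({\sf UP}_i)}$. I would invoke the unmerge lemmas (Lemmas~\appref{lem:unml} and \appref{lem:unmp}) in the form of properties $(a)$ and $(b)$ of the Unmerge example: property $(a)$ supplies the compatibility invariant $\projb{{\sf UL}_j}{E}=\loc{}\hTEj$ (dually for $\up$) that lets the induction hypothesis fire, and property $(b)$ states that the unmerged branches recombine under merge into the projection of the un-split type. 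The \textbf{main obstacle} is exactly this recombination bookkeeping: after projecting the reconstructed $\hT$, the merges produced by \rulename{proj-msg}/\rulename{proj-send}/\rulename{proj-recv} (which use $\mrg$) must fuse the unmerged branches back precisely into $\hTE$ for $(1)$ and into $\projb \ghT {E'}$ for $(2)$, which requires tracking which of $\mrg$ and $\lmrg$ is produced at each step and exploiting their duality (Remark after Definition~\ref{def:loc}). Well-definedness of the recursion in these cases rests on the bound that each ${\sf UL}_j$ (resp.\ ${\sf UP}_i$) is no larger than $\ghT$ (resp.\ $\hTE$) while the other argument strictly shrinks, so that $|\ghT|+|\hTE|$ decreases.

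Finally the $\isglobal$ rider rides on the same induction: if $\isglobal \ghT$, the $\hsnd$/$\hrcv$ head-cases for $\ghT$ cannot fire, and every remaining clause emits only $\hmsg$, $\hrec$, $\hpar$, $\hend$, $\hX$ --- crucially, in the aligned case the local head $\hsnd \p \q$/$\hrcv \p \q$ of $\hTE$ is \emph{absorbed} into a global $\hmsg \p \q$. Together with the base-case observation $\isglobal \hTE$ and the induction hypothesis on all continuations, this yields $\isglobal \hT$.
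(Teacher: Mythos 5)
Your proposal is correct and takes essentially the same route as the paper's own proof (Theorem~\ref{thm:comp1-a} in Appendix~\ref{subsec:appendix1}): the same induction on $\hdepth{\ghT}+\hdepth{\hTE}$, the same case analysis mirroring the defining equations of $\bbeo$, and the same key ingredients---identity of projection on internal participants (Lemma~\ref{lem:proj-fix-a}) for the aligned/base cases, and the unmerge lemmas (Lemmas~\ref{lem:unml} and~\ref{lem:unmp}) exactly where the heads of $\ghT$ and $\hTE$ fail to align. Your base-case argument states directly what the paper factors into its closedness lemmas (Lemmas~\ref{lem:loc-clo-a} and~\ref{lem:proj-end-a}), but the content is the same.
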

The proof of Theorem \ref{thm:comp1} proceeds by induction on
$\hdepth{\ghT}+\hdepth{\hTE}$.
Its inductive structure follows the
defining equations of $\bbo$ (Figure \ref{fig:bb1-equations})
and it is non-trivial; the full detail
can be found in Appendix
\appref{subsec:appendix1}\inApp{} (Theorem \appref{thm:comp1-a}),
together with the auxiliary lemmas for merging. Theorem \ref{thm:comp1}
ensures that the result of building back (backwards,
with respect to the usual direction of projection)
$\hT=\bbeoa {(\ghT)} {(\hTE)}$
contains both (1) the information for the internal communication
in $\hTE$ (i.e., $\projb \hT E = \hTE$) and (2) the information for
the external communication prescribed by $\ghT$ (i.e., for all $E'$,
such that $E'\cap E =\emptyset$, $\projb \hT {E'} = \projb \ghT {E'}$).
We describe the algorithm of $\bbo$ and the proof outline
of Theorem \ref{thm:comp1} below, via example.

\begin{example}[Definition \ref{def:bb1} and Theorem \ref{thm:comp1}] \label{ex:pprot-bb1} From Example \ref{ex:pprot}, we consider the subprotocol for the strategy department.
  \begin{small}
\[
    \hTstr=
    \hmsg\dir\ad \dht{\product(\tnat).\hsnd \dir \sales \product(\tnat).\hsnd \dir \fone \product(\tnat).}
    \hrec \hX {\hrcv \fone \dir}
    \dht{\left\{
        \ok.\hmsg\dir\ad\go.\hend,
        \wait.\hmsg\dir\ad\wait.\hX
      \right\}}
\]
  \end{small}
The following protocol, described by the chief designer of the company $D$, coordinates the communication among the three departments (and ignores their internal one). 
\begin{small}
\[
\gG = 
\hmsg \dir \sales \dht{\product(\tnat).}\hmsg \dir \fone \dht{\product(\tnat).}
\hrec \hX {\hmsg \fone \dir}
\dht{\left\{
    \ok.\hmsg\fone\sales\price(\tnat).\hend,
    \wait.\hmsg\fone\sales\wait.\hX
  \right\}}
\]
\end{small}
We observe that $\hpart\hTstr = \{\dir,\ad\}$, and that
compatibility \ref{eq:compatibility} holds:
  \begin{small}
  \[
\loc {} \hTstr  = \projb \gG {\{\dir,\ad\}} =
\hsnd \dir \sales \dht{\product(\tnat).\hsnd \dir \fone \product(\tnat).} \hrec \hX {\hrcv \fone \dir}
    \dht{\left\{
        \ok.\hend, 
        \wait.\hX
      \right\}}
\]
\end{small}
%
To obtain $\ghTone= \bboa {\{\dir,\ad\}} {(\gG)} {(\hTstr)}$,  we first build back the internal global prefix in $\hTstr$:\\
\begin{small}
\centerline{$
\ghTone = \hmsg\dir\ad \dht{\product(\tnat).\ghTone'}
$}\\
\end{small}
We then proceed by induction, namely $\dht{\ghTone'}$ is built by composing $\ghT$ and the smaller hybrid type obtained from $\hTstr$, by removing this first prefix: $\hsnd \dir \sales \dht{\product(\tnat).\hsnd \dir \fone \product(\tnat).\dots}$. We observe that the first two send constructs correspond to the projection of the two initial messages of $\ghT$ (this is guaranteed by the compatibility condition \ref{eq:compatibility}); we take:\\
\begin{small}
\centerline{$
\ghTone =
\hmsg\dir\ad
\dht{\product(\tnat).
  \hmsg \dir \sales \product(\tnat).
  \hmsg \dir \fone \product(\tnat).
  \dht{\ghTone''}}
$}\\
\end{small}
To obtain $\dht{\ghTone''}$ we observe that the compatibility condition takes care of the recursive construct $\hrec \hX \dots$ and of the first message $\hrcv \fone \dir\dht{\dots}$. After that, in each branch, we need to add first the internal message in $\hTstr$ and then the external messages given by $\gG$. We obtain 
\begin{small}
\[
\dht{\ghTone''}=\hrec \hX {\hmsg \fone \dir}
\dht{\left\{
    \ok.\hmsg\dir\ad\go.\hmsg\fone\sales\price(\tnat).\hend, 
    \wait.\hmsg\dir\ad\wait.\hmsg\fone\sales\wait.\hX
  \right\}}
\]
\end{small}
and, ultimately,\\
\begin{small}
\centerline{$
\begin{array}{ll}
\ghTone=&
\hmsg\dir\ad \dht{\product(\tnat).\hmsg \dir \sales \product(\tnat).\hmsg \dir \fone \product(\tnat).}
\\ &
\qquad\qquad\qquad\qquad\qquad\qquad
\hrec \hX {\hmsg \fone \dir}
\dht{\left\{
  \begin{array}{l}
    \ok.\hmsg\dir\ad\go.\hmsg\fone\sales\price(\tnat).\hend,\\
    \wait.\hmsg\dir\ad\wait.\hmsg\fone\sales\wait.\hX
  \end{array}
  \right\}}
\end{array}
$}
\end{small}
Indeed, $\ghTone$ contains all interactions from both $\gG$ and $\hTstr$. The recursive definition of $\bbo$ (Definition \ref{def:bb1}) and the inductive proof of Theorem \ref{thm:comp1} follow the procedure presented in this example (Theorem \ref{thm:comp1-a}, Appendix \appref{subsec:appendix1}\inApp).
\end{example}

\subsection{Step 2: Multiparty Compositionality}
\begin{figure*}
  \begin{center}
\small
    \pgfdeclarelayer{background}
\pgfsetlayers{background,main}

\begin{tikzpicture}[commutative diagrams/every diagram]

        \node(gH)at (2.25,0) {$\ghT$};
        \node(HL1)at (0,-1.5) {$\dht{\hT'_1}$};
        \node(HL2)at (1.5,-1.5) {$\dht{\hT'_2}$};
        \node(HLN)at (3.5,-1.5) {$\dht{\hT'_N}$};
	\node(H1)at (1,-1.75) {$\hTone$};
	\node(H2)at (2.5,-1.75) {$\hTtwo$};
        \node(hdots)at (3,-1.6){$\dots$};
	\node(HN)at (4.5,-1.75) {$\hTN$};
        \node(nol1)at (3.5,-1) {};
        \node(nor1)at (5.5,-1) {};
        \node(gHb)at (7,0) {$\bboa {E_1} {(\ghT)} {(\hTone)}$};
        \node(HL2b)at (6.25,-1.5) {$\dht{\hT'_2}$};
        \node(HLNb)at (8.25,-1.5) {$\dht{\hT'_N}$};
	\node(H1b)at (5.5,-1.75) {$\hTone$};
	\node(H2b)at (7.25,-1.75) {$\hTtwo$};
        \node(hdotsb)at (7.75,-1.6){$\dots$};
	\node(HNb)at (9.25,-1.75) {$\hTN$};
        \node(nol2)at (9.25,-1) {};
        \node(nor2)at (10.75,-1) {};
        \node(nol3)at (8.25,-1) {};
        \node(nor3)at (9.75,-1) {};
        \node(gHf)at (12,-0) {$\bba {(\ghT)} {([\hTone,\dots,\dht{\hT_N}])}$};
	\node(H1f)at (10.75,-1.75) {$\hTone$};
	\node(H2f)at (11.5,-1.75) {$\hTtwo$};
        \node(hdotsf)at (12.25,-1.6){$\dots$};
	\node(HNf)at (13.5,-1.75) {$\hTN$};

	\path[commutative diagrams/.cd,every arrow,font=\footnotesize]
        (gH) edge node[left,yshift=1.2mm] {$\upharpoonright$} (HL1)
        (gH) edge node[left] {$\upharpoonright$} (HL2)
        (gH) edge node[right,yshift=1.2mm] {$\upharpoonright$} (HLN)
        (H1) edge node[above,xshift=3mm] {$\loc{}{}$} (HL1)
        (H2) edge node[above,xshift=1mm] {$\loc{}{}$} (HL2)
        (HN) edge node[above,xshift=1mm] {$\loc{}{}$} (HLN)
        (gHb) edge node[right] {$\upharpoonright$} (HL2b)
        (gHb) edge node[right,yshift=1.2mm] {$\upharpoonright$} (HLNb)
        (H2b) edge node[above,xshift=1mm] {$\loc{}{}$} (HL2b)
        (HNb) edge node[above,xshift=1mm] {$\loc{}{}$} (HLNb)
        ;

        \path[commutative diagrams/.cd,every arrow, magenta]
        (gHb) edge node[black, left] {$\upharpoonright$} (H1b)
        (gHf) edge node[black, left,yshift=1.2mm] {$\upharpoonright$} (H1f)
        (gHf) edge node[black, left] {$\upharpoonright$} (H2f)
        (gHf) edge node[black, right,yshift=1.2mm] {$\upharpoonright$} (HNf)
        ;

        \path[commutative diagrams/.cd,every arrow, magenta, thick, font=\scriptsize]
        (nol1) edge node[above,yshift=3mm] {Theorem \ref{thm:comp1}} (nor1)
        (nol2) edge node[dashed,above,xshift=-4mm,yshift=3mm] {Theorem \ref{thm:comp}} (nor2)
        ;
        \path[commutative diagrams/.cd, dashed, magenta, thick, font=\scriptsize]
        (nol3) edge node[above,yshift=3mm] {} (nor3);
\end{tikzpicture}
  \end{center}
  \caption{Composing Multiple Protocols: Theorem
      \ref{thm:comp1} as the base case, Theorem
      \ref{thm:comp} as the inductive step.}
  \label{fig:comp}
\end{figure*}

$\bbeoa {(\ghT)} {(\hTE)}$ composes the subprotocol $\hTE$
with the compatibility type $\ghT$.
Here, we 
iterate this process
for an arbitrary number of subprotocols $\hTone,\dots,\hTN$,
whenever
compatible with respect to
$\ghT$ 
(Equation \ref{eq:compatibility}):
we achieve full
\emph{multiparty compositionality} of subprotocols.
The overview is given in Figure~\ref{fig:comp}.

\begin{definition}[Build-Back]\label{def:bb}
Given a list of (disjoint) sets of participants $L=[E_1,\dots,E_N]$, we define the partial recursive function $\bbafor L {(\ghT)} {([\hTone,\dots,\hTN])}$ as follows:
\[
\begin{small}
\begin{array}{l}
\bbafor {[E_1]} {(\ghT)} {([\hTone])} = \bboa {E_1} {(\ghT)} {(\hTone)}\\
\bbafor {[E_1,\dots,E_{n+1}]} {(\ghT)}  {([\hTone,\dots,\dht{\hT_{n+1}}])} =
\bbafor {[E_2,\dots,E_{n+1}]} {(\bboa {E_1}\ghT{\hTone})} {([\hTtwo,\dots,\dht{\hT_{n+1}}])}
\end{array}
\end{small}
\]
\end{definition}

From now on, we 
leave implicit the first list argument (of sets of participants): we write $\bba {(\ghT)}  {([\hTone,\dots,\hTN])}$ for $\bbafor {L} {(\ghT)}  {([\hTone,\dots,\hTN])}$.

\begin{theorem}[Compositionality for Multiple Protocols]\label{thm:comp} 
  We are given $E_1,\dots,E_N$ sets of roles, and the hybrid types $\hTone,\hTtwo,\dots,\hTN$, and $\ghT$, such that:
  \begin{itemize*}
  \item[(a)] $E_i\cap E_j=\emptyset$ for all $i\neq j$,
  \item[(b)] $\hpart \hTi \subseteq E_i$ for all $i$,
  \item[(c)] $\hepart \hTi \cap E_i = \emptyset$ for all $i$, and
  \item[(d)] $\projb {\ghT} {E_i} = \loc{} \hTi$ (compatibility \ref{eq:compatibility}).
  \end{itemize*}
  We set $\hT=\bba {(\ghT)} {([\hTone,\dots,\hTN])}$ and we have that, for all $i$, $\projb {\hT} {E_i} = \hTi$. Moreover if $\isglobal{\ghT}$ then $\isglobal{\hT}$.
\end{theorem}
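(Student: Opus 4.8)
The plan is to prove the statement by induction on the number of components $N$, with Theorem~\ref{thm:comp1} supplying the base case and driving the inductive step, exactly as depicted in Figure~\ref{fig:comp}. The one design decision I would make up front is to \emph{strengthen} the statement carried through the induction: besides the stated conclusion $\projb{\hT}{E_i}=\hTi$ for all $i$, I would also carry the disjoint-preservation clause inherited from conclusion~(2) of Theorem~\ref{thm:comp1}, namely that $\projb{\hT}{E'}=\projb{\ghT}{E'}$ for every $E'$ with $E'\cap(E_1\cup\dots\cup E_N)=\emptyset$. Without this extra clause the induction does not close, for the reason explained below.

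For the base case $N=1$, Definition~\ref{def:bb} gives $\hT=\bboa{E_1}{(\ghT)}{(\hTone)}$, and hypotheses (b), (c), (d) at $i=1$ are precisely the premises of Theorem~\ref{thm:comp1} with $E=E_1$; its conclusions~(1) and~(2) and the ``moreover'' clause give the three parts of the strengthened statement directly. For the inductive step, I set $\ghTb=\bboa{E_1}{(\ghT)}{(\hTone)}$, so that by Definition~\ref{def:bb} one has $\hT=\bba{(\ghTb)}{([\hTtwo,\dots,\dht{\hT_{n+1}}])}$. Applying Theorem~\ref{thm:comp1} to $\ghT,\hTone,E_1$ yields (1')~$\projb{\ghTb}{E_1}=\hTone$, (2')~$\projb{\ghTb}{E'}=\projb{\ghT}{E'}$ whenever $E'\cap E_1=\emptyset$, and $\isglobal\ghT\Rightarrow\isglobal\ghTb$. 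I would then check that $\ghTb$ and the shorter list $[\hTtwo,\dots,\dht{\hT_{n+1}}]$ satisfy the four premises of the inductive hypothesis: (a), (b), (c) are just the restrictions of the original hypotheses to $i\ge2$, while the compatibility premise (d), $\projb{\ghTb}{E_i}=\loc{}\hTi$, follows by instantiating (2') at $E'=E_i$ (legitimate since $E_i\cap E_1=\emptyset$ by (a)) and then invoking the original (d): $\projb{\ghTb}{E_i}=\projb{\ghT}{E_i}=\loc{}\hTi$.

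The inductive hypothesis, in its strengthened form, then delivers $\projb{\hT}{E_i}=\hTi$ for $i=2,\dots,n+1$, together with (2'')~$\projb{\hT}{E''}=\projb{\ghTb}{E''}$ for every $E''$ disjoint from $E_2\cup\dots\cup E_{n+1}$, and $\isglobal\ghTb\Rightarrow\isglobal\hT$. It remains to recover the two components of the strengthened conclusion for the full list. For the index $i=1$, since $E_1$ is disjoint from $E_2\cup\dots\cup E_{n+1}$ by (a), clause~(2'') gives $\projb{\hT}{E_1}=\projb{\ghTb}{E_1}$, which equals $\hTone$ by (1'); this completes $\projb{\hT}{E_i}=\hTi$ for all $i$. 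For the disjoint clause, any $E'$ disjoint from $E_1\cup\dots\cup E_{n+1}$ is disjoint from both $E_1$ and $E_2\cup\dots\cup E_{n+1}$, so chaining (2'') and (2') gives $\projb{\hT}{E'}=\projb{\ghTb}{E'}=\projb{\ghT}{E'}$; and global preservation follows by composing the two implications.

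I expect the main obstacle to be precisely the point that forces the strengthening: conclusion~(1) of Theorem~\ref{thm:comp1} secures $\projb{\ghTb}{E_1}=\hTone$ only for the \emph{intermediate} type $\ghTb$, whereas the build-backs performed afterwards on $E_2,\dots,E_{n+1}$ could a priori disturb the projection onto $E_1$. The disjoint-preservation clause~(2) is exactly what rules this out, and carrying it as part of the induction (rather than proving only the (1)-style conclusion) is the crux. The remaining care is bookkeeping: at each use of (2')/(2'') one must confirm that the relevant set is disjoint from the set just built back, which is guaranteed by the global disjointness hypothesis~(a).
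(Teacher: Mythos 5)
Your proposal is correct and follows essentially the same route as the paper's proof: induction on $N$, strengthening the thesis with the disjoint-preservation clause (the paper calls it $(\textit{dsj})$), using Theorem~\ref{thm:comp1} as the base case and for the intermediate type $\ghTb=\bboa{E_1}{(\ghT)}{(\hTone)}$, and recovering $\projb{\hT}{E_1}=\hTone$ by chaining the disjoint clause with conclusion~(1) of Theorem~\ref{thm:comp1}. If anything, your write-up is slightly more careful than the paper's, since you explicitly verify the compatibility premise $\projb{\ghTb}{E_i}=\loc{}\hTi$ before invoking the induction hypothesis, a step the paper leaves implicit.
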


\begin{proof}
  By induction on $N$. We add to the thesis: $(\textit{dsj})$ for all $E'$, $E'\cap\bigcup_{i=1,\dots,N}E_i=\emptyset$, $\projb\hT {E'} = \projb{\ghT}{E'}$, since we 
  need $(\textit{dsj})$ within the induction hypothesis.
  The case $N=1$ is Theorem \ref{thm:comp1}. For $N=n+1$, we set
  $\ghTb = \bboa {E_1}{(\ghT)}{(\hTone)}$ and we apply the induction hypothesis
  to $\hT=\bbafor {[E_2,\dots,E_{n+1}]} {(\ghTb)} {([\hTtwo,\dots,\dht{\hT_{n+1}}])}$: we obtain that for $i=2,\dots,n$, $\projb \hTb {E_i} = \hTi$. For $E_1$, since $E_1\cap\bigcup_{i=2,\dots,n+1}E_i=\emptyset$, thanks to $(\textit{dsj})$, we have $\projb \hT {E_1} = \projb \ghTb {E_1} = \hTone$ (by Theorem \ref{thm:comp1}). For $E'$, $E'\cap\bigcup_{i=1,\dots,n+1}E_i=\emptyset$, we have that $E'\cap\bigcup_{i=2,\dots,n+1}E_i=\emptyset$, and hence for $(\textit{dsj})$, $\projb \hT {E'}=\projb \hTb {E'}$. We conclude by observing that $E'\cap E_1=\emptyset$ and thus, by Theorem \ref{thm:comp1}, $\projb\hTb {E'} = \projb\ghT {E'}$.
\end{proof}


\begin{example}[Theorem \ref{thm:comp}]\label{ex:pprot-bb} 
In Example \ref{ex:pprot-bb1}, we have seen how to build back $\gG$ and $\hTstr$ into $\ghTone$, a new hybrid type containing the information both for the inter-protocol communication (from $\gG$) and for the communication inside the strategy department (from $\hTstr$). We observe that compatibility \ref{eq:compatibility} holds not only for $\hTstr$, but also for $\hTsales$ and $\hTfin$, namely:
  \begin{small}
  \[
    \begin{array}{l}
      \loc {} \hTstr  = \projb \gG {\{\dir,\ad\}} =
      \hsnd \dir \sales \dht{\product(\tnat).\hsnd \dir \fone \product(\tnat).} \hrec \hX {\hrcv \fone \dir}
      \dht{\left\{
          \ok.\hend,
          \wait.\hX
        \right\}}\\
      \loc {} \hTsales  = \projb \gG {\{\sales,\web\}}=
\hrcv \dir \sales \dht{\product(\tnat).} \hrec \hX {\hrcv \fone \sales}
    \dht{\left\{
        \price(\tnat).\hend,
        \wait.\hX
      \right\}}\\
\loc {} \hTfin  = \projb \gG {\{\fone,\ftwo\}}=
\hrcv \dir \fone \dht{\product(\tnat).} \hrec \hX {\hsnd \fone \dir}
    \dht{\left\{
        \ok.\hsnd\fone\sales\price(\tnat).\hend,
        \wait.\hsnd\fone\sales\wait.\hX
      \right\}}\\
    \end{array}
  \]
\end{small}
The hypothesis of Theorem \ref{thm:comp} holds and thus we can build $\G=\bba{(\gG)}{([\hTstr,\hTsales,\hTfin])}$, such that $\projb \G {\{\dir,\ad\}} = \hTstr$, $\projb \G {\{\sales,\web\}} = \hTsales$, and $\projb \G {\{\fone,\ftwo\}} = \hTfin$. To make the construction of $\G$ explicit, we follow the inductive proof structure of Theorem \ref{thm:comp}. The base case is taken care of in Example \ref{ex:pprot-bb1}, where we apply Theorem \ref{thm:comp1} and build $\ghTone$, by composition of $\gG$ and $\hTstr$, we obtain :\\
  \begin{small}
\centerline{$
\begin{array}{l}
\ghTone=
\hmsg\dir\ad \dht{\product(\tnat).\hmsg \dir \sales \product(\tnat).\hmsg \dir \fone \product(\tnat).}
\\ \qquad\qquad \qquad\qquad \qquad\qquad
\hrec \hX {\hmsg \fone \dir}
\dht{\left\{
  \begin{array}{l}
    \ok.\hmsg\dir\ad\go.\hmsg\fone\sales\price(\tnat).\hend,\\
    \wait.\hmsg\dir\ad\wait.\hmsg\fone\sales\wait.\hX
  \end{array}
  \right\}}
\end{array}
$}\\
\end{small}
We observe that $\projb \ghTone {\{\sales,\web\}} = \projb \gG {\{\sales,\web\}}$, hence, since \ref{eq:compatibility} still holds, we can apply again the build-back procedure and obtain:\\
  \begin{small}
\centerline{$
\begin{array}{l}
\ghTtwo=
\hmsg\dir\ad \dht{\product(\tnat).\hmsg \dir \sales \product(\tnat).}
\dht{\hmsg \dir \fone \product(\tnat).}
\\ \qquad \qquad \qquad
\hrec \hX {\hmsg \fone \dir}
\dht{\left\{
  \begin{array}{l}
    \ok.\hmsg\dir\ad\go.\hmsg\fone\sales\price(\tnat).\hmsg\sales\web\publish(\tnat).\hend,\\
    \wait.\hmsg\dir\ad\wait.\hmsg\fone\sales\wait.\hmsg\sales\web\wait.\hX
  \end{array}
  \right\}}
\end{array}
$}
  \end{small}
$\ghTtwo$ collects
all the interactions from $\gG$, $\hTstr$, and $\hTsales$.
To obtain a type $\G$ that also includes
the internal interactions of $\hTfin$,
we perform one more
induction step,
building back from $\ghTtwo$ and $\hTfin$. 
\begin{small}
\[
\begin{array}{l}
\G=
\hmsg\dir\ad \dht{\product(\tnat).\hmsg \dir \sales \product(\tnat).\hmsg \dir \fone \product(\tnat).}\dht{\hmsg \fone\ftwo \product(\tnat).}
\\ \qquad 
\hrec \hX {\hmsg \ftwo\fone}
\dht{\left\{
  \begin{array}{l}
    \begin{array}{l}
      \price(\tnat).\hmsg \fone \dir\ok.\hmsg\dir\ad\go.
      \hmsg\fone\sales\price(\tnat).\hmsg\sales\web\publish(\tnat).\hend,
      \end{array}\\
    \wait.\hmsg \fone \dir\wait.\hmsg\dir\ad\wait.\hmsg\fone\sales\wait.\hmsg\sales\web\wait.\hX
  \end{array}
  \right\}}
\end{array}
\]
\end{small}
A graphical representation of $\G$ can be found in Appendix \ref{appendix-fig}\inApp.
\end{example}

Theorem \ref{thm:comp} gives
a technique for composing
multiple subprotocols into a more general one.
The next, conclusive step
proves
that compositionality
well-behaves with respect to
MPST projection.

\subsection{Step 3: Compositionality through Projection}\label{sec:comp-proj}

With Definition \ref{def:proj}, we have generalised the MPST projection to hybrid types. We prove that generalised projection well-behaves with respect to set inclusion.

\begin{theorem}[Projection Composes over Set Inclusion]
  \label{thm:proj-comp}
  Given $\hT$, $E_1$ and $E_2$, $E_2\subseteq E_1$, if $\projb \hT {E_1}$ is defined, then $\projb {(\projb \hT {E_1})} {E_2} = \projb \hT {E_2}\ $.
\end{theorem}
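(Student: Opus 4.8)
The plan is to prove the statement by induction on the structure of $\hT$ (equivalently, on $\hdepth{\hT}$, so that the recursion construct is treated uniformly), with the induction hypothesis asserting the claim for every strict subterm and for every inclusion of sets. At each step I would do a case analysis on the top-level constructor of $\hT$, and within the \rulename{proj-msg}, \rulename{proj-send}, \rulename{proj-recv} cases a sub-analysis on whether each of $\p,\q$ lies in $E_1$ and in $E_2$. The inclusion $E_2\subseteq E_1$ is exactly what makes the case split coherent: it rules out the incompatible patterns (e.g.\ $\p\notin E_1$ but $\p\in E_2$), so the membership pattern used by $\projb{\hT}{E_1}$ always refines compatibly to the one used by $\projb{\hT}{E_2}$. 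Together with the fact (Remark \ref{rem:proj}) that projection is taken only onto internal participants, an external endpoint that is absent from $E_1$ is automatically absent from $E_2$. The single non-routine ingredient is an auxiliary lemma, proved first, stating that projection distributes over merge.

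\textbf{The constructor cases.} The base cases $\hend$ and $\hX$ are immediate. For $\hpar{\hTone}{\hTtwo}$ I would use that the side condition of \rulename{proj-par}, namely $E\cap\hpart{\hTtwo}=\emptyset$, is monotone in $E$: the branch selected at $E_1$ stays selected at $E_2\subseteq E_1$, and the IH closes the case. The representative case is \rulename{proj-msg} with $\hT=\hmsg{\p}{\q}\dots$. If $\p,\q\in E_1$ then $\projb{\hT}{E_1}$ is again a message, and re-projecting onto $E_2$ reproduces exactly the four-way split of the direct projection (keep $\hmsg{\p}{\q}$; emit $\hsnd{\p}{\q}$; emit $\hrcv{\p}{\q}$; or merge), each branch closed by the IH on continuations. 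If exactly one of $\p,\q$ lies in $E_1$, then $\projb{\hT}{E_1}$ is a send (resp.\ receive) and, since the other endpoint is already outside $E_2$, re-projecting matches $\projb{\hT}{E_2}$ again by the IH. The only case needing more than the IH is $\p,\q\notin E_1$ (hence $\p,\q\notin E_2$): here $\projb{\hT}{E_1}=\merge_{i\in I}\projb{\hTi}{E_1}$ is \emph{already} a merge, and I must show $\projb{(\merge_{i\in I}\projb{\hTi}{E_1})}{E_2}=\merge_{i\in I}\projb{\hTi}{E_2}$, which is where the merge lemma enters, combined with the IH $\projb{(\projb{\hTi}{E_1})}{E_2}=\projb{\hTi}{E_2}$. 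The \rulename{proj-send} and \rulename{proj-recv} cases are identical in spirit, the single merge-invoking sub-case being $\p\notin E_1$.

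\textbf{The recursion case and guardedness.} For $\hT=\hrec{\hX}{\hTb}$ the clause is governed by a guardedness test ($\hrec{\hX}{(\projb{\hTb}{E})}$ must be guarded, otherwise a closed recursion collapses to $\hend$). A priori this test could differ between the iterated and the direct route, but the IH resolves it: since $\projb{(\projb{\hTb}{E_1})}{E_2}=\projb{\hTb}{E_2}$, the body produced by the two routes is literally the same hybrid type, so it is guarded on one side iff on the other, and the two clauses select the same result. I would carry definedness through the whole induction by phrasing the statement so that whenever one side is defined the other is too; the case analysis establishes this pointwise.

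\textbf{Main obstacle: projection distributes over merge.} The crux is the auxiliary lemma $\projb{(\hT\mrg\hTb)}{E}=\projb{\hT}{E}\mrg\projb{\hTb}{E}$ (whenever defined), proved by induction on the merged structure. The cases where $\mrg$ simply delves through matching constructors are routine. The delicate one is merging two receives sharing the prefix $\hrcv{\p}{\q}$ but with differing label sets $I$ and $J$: when $\p,\q\notin E$, projecting the combined receive collapses it into a single merge over its continuations, $\merge_{k\in I\cup J}\projb{C_k}{E}$, which I must reconcile with $(\merge_{i\in I}\projb{\hTi}{E})\mrg(\merge_{j\in J}\projb{\hTb}{E})$. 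Equating these requires the IH on the shared labels together with $\mrg$ being commutative (as stated), associative, and idempotent, so that the sub-merges over $I\cap J$, $I\setminus J$ and $J\setminus I$ regroup correctly. Establishing this lemma, and in particular verifying these algebraic properties of $\mrg$, is the main technical burden; the remainder of the theorem is a disciplined but essentially mechanical traversal of the projection clauses.
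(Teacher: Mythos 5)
Your overall strategy coincides with the paper's: structural induction on $\hT$, the same four-way membership split in the \rulename{proj-msg}/\rulename{proj-send}/\rulename{proj-recv} cases, and the auxiliary lemma that projection distributes over merge (the paper's Lemma \ref{lem:mer-proj-step}), invoked in exactly the same place — when both endpoints lie outside $E_1$, so that $\projb \hT {E_1}$ is itself a merge and the identity $\projb {(\merge_i \projb \hTi {E_1})}{E_2} = \merge_i \projb \hTi {E_2}$ is needed. Your account of that lemma's delicate receive case, with the regrouping over $I\cap J$, $I\setminus J$, $J\setminus I$ via commutativity and associativity of $\mrg$, is also how the paper proves it.

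The one step that does not go through as you state it is the recursion case. Your argument — "the body produced by the two routes is literally the same hybrid type, so it is guarded on one side iff on the other" — presupposes that the $E_1$-projection actually produced a body, i.e.\ that the guardedness test at $E_1$ succeeded and $\projb \hT {E_1} = \hrec \hX {(\projb \hTb {E_1})}$. It does cover that case, and also the asymmetric case where the test succeeds at $E_1$ but fails at $E_2$ (both routes then return $\hend$, provided one also notes that closedness is preserved by projection, the paper's Lemma \ref{lem:proj-clo-a}, so the two "else, if closed" side conditions agree — a point you leave implicit). But when the test already fails at $E_1$, the projection collapses to $\projb \hT {E_1} = \hend$, the iterated route produces no body at all, and your IH comparison has nothing to latch onto; you must still show that the direct route $\projb \hT {E_2}$ also collapses to $\hend$. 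What is needed is a monotonicity fact: if $\hrec \hX {(\projb \hTb {E_1})}$ is unguarded, then so is $\hrec \hX {(\projb \hTb {E_2})}$ for $E_2 \subseteq E_1$. The paper obtains this by a separate structural induction, together with its Lemmas \ref{lem:proj-guar-a}, \ref{lem:proj-end-a}, and \ref{lem:proj-var-a}. Alternatively, you can recover it from your own IH with one extra observation: in this sub-case $\projb \hTb {E_1}$ is a pure-recursion/variable form, projection fixes such forms, hence $\projb \hTb {E_2} = \projb {(\projb \hTb {E_1})} {E_2} = \projb \hTb {E_1}$ is again pure-recursion and unguarded when wrapped. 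Either way, this sub-case needs an argument you have not supplied; once it is added, your proof is the paper's proof.
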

\begin{proof}
By structural induction on $\hT$ (see Appendix \appref{subsec:appendix2}\inApp).
\end{proof}

Theorem \ref{thm:proj-comp} is the last fundamental
ingredient to achieve
\emph{distributed protocol specification}.

\begin{corollary}[Distributed Protocol Specification]\label{cor:dist-spec}
  Given $E_1,\dots,E_N$ disjoint sets of participants, a global type $\gG$, and $\hTone,\dots,\hTN$ hybrid types, such that:
  \begin{itemize*}
  \item[\rm (a)] $\hpart \hTi = E_i$ for all $i$,
  \item[\rm (b)] $\hepart \hTi \cap E_i = \emptyset$ for all $i$,
  \item[\rm (c)] $\hpart \gG \subseteq \bigcup_{i=1,\dots,N} E_i$, and
  \item[\rm (d)] $\projb {\gG} {E_i} = \loc{} \hTi$ (compatibility \ref{eq:compatibility});
  \end{itemize*}
  there exists $\G$ such that, for all $i$, for all $\pr\in E_i$, $\projb {\G} {\{\pr\}} = \projb {\hTi} {\{\pr\}}$.
\end{corollary}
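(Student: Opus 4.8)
The plan is to take the desired global type to be the build-back $\G = \bba {(\gG)} {([\hTone,\dots,\hTN])}$ and then reduce the singleton projections to the set projections that Theorem \ref{thm:comp} already computes. First I would check that hypotheses (a)--(d) of the corollary entail the hypotheses of Theorem \ref{thm:comp} with the compatibility type instantiated as $\ghT = \gG$: the disjointness of $E_1,\dots,E_N$ supplies hypothesis (a) of the theorem; the corollary's (a), $\hpart \hTi = E_i$, gives the inclusion $\hpart \hTi \subseteq E_i$ needed for (b); and the corollary's (b) and (d) are literally (c) and (d) of the theorem. Applying Theorem \ref{thm:comp} then yields $\projb {\G} {E_i} = \hTi$ for every $i$, and, since $\isglobal \gG$, it also yields $\isglobal \G$, so $\G$ is a genuine global type. (Hypothesis (c) of the corollary, $\hpart \gG \subseteq \bigcup_i E_i$, is what makes this intended reading coherent: it guarantees that $\G$ introduces no participants outside the components, so that its participant set is exactly $\bigcup_i E_i$ and $\G$ is a protocol for the whole system.)

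The second step descends from sets to singletons. I would fix $i$ and $\pr \in E_i$, so that $\{\pr\} \subseteq E_i$. Because $\projb {\G} {E_i}$ is defined---it equals $\hTi$---Theorem \ref{thm:proj-comp} applies to the pair $\{\pr\} \subseteq E_i$ and gives $\projb {(\projb \G {E_i})} {\{\pr\}} = \projb \G {\{\pr\}}$. Substituting $\projb \G {E_i} = \hTi$ on the left yields exactly $\projb {\hTi} {\{\pr\}} = \projb \G {\{\pr\}}$, which is the claim. Finally, by Remark \ref{rem:proj}, since $\isglobal \G$ the singleton projection $\projb \G {\{\pr\}}$ coincides with the traditional MPST projection of $\G$ onto $\pr$, so these common values are precisely the local types used for distributed implementation.

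I do not expect a genuine obstacle in this argument: the corollary is a clean composition of two already-established results. All the real difficulty has been absorbed into Theorem \ref{thm:comp1} (the single-component build-back, by far the most delicate construction, with its unmerge bookkeeping) and its iteration in Theorem \ref{thm:comp}, together with the structural induction behind Theorem \ref{thm:proj-comp}. The only points demanding any care are the routine translation of hypotheses noted above and the observation that $\projb \G {E_i}$ is defined, which is exactly the side condition licensing the use of Theorem \ref{thm:proj-comp}.
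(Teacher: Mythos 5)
Your proposal is correct and follows essentially the same route as the paper's own proof: set $\G=\bba {(\gG)} {([\hTone,\dots,\hTN])}$, invoke Theorem \ref{thm:comp} to get $\projb \G {E_i} = \hTi$ and $\isglobal\G$, then apply Theorem \ref{thm:proj-comp} with $\{\pr\}\subseteq E_i$ and conclude via Remark \ref{rem:proj}. Your additional explicit verification that the corollary's hypotheses instantiate those of Theorem \ref{thm:comp} is just a more detailed spelling-out of what the paper leaves implicit.
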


\begin{proof} We set $\G=\bba {(\gG)} {([\hTone,\dots,\hTN])}$; by Theorem \ref{thm:comp}, $\G$ is global and such that $\projb \G {E_i} = \hTi$. Then, we apply Theorem \ref{thm:proj-comp} and we obtain, for $\pr \in E_i$, $\projb \G {\{\pr\}} = \projb {\projb \G {E_i}} {\{\pr\}} =  \projb {\hTi} {\{\pr\}}$. We observe that $\projb \G {\{\pr\}}$ is a local type, for all $\pr$, (Remark \ref{rem:proj}).
\end{proof}

\begin{example} \label{ex:dist-ver}
In Examples \ref{ex:pprot-bb1} and \ref{ex:pprot-bb},
the protocol designer for each department
has given their hybrid type, $\hTstr$, $\hTsales$, and $\hTfin$,
disciplining internal communication (with messages $\nhmsg \p \q$)
and specifying the 
communication with
other departments
(with sends/receives, $\nhsnd \p \q$/$\nhrcv \p \q$).
Compatibility (Equation \ref{eq:compatibility})
has been verified against 
$\gG$, as described by the chief designer $D$.
In (Example \ref{ex:pprot-bb}) we build back
$\G$ (we observe that is $\isglobal\gG$ implies $\isglobal\G$).
Corollary \ref{cor:dist-spec} holds and projections of
components $\hTstr$, $\hTsales$, and $\hTfin$
onto single participants are local types,
also projections of $\G$ (Theorem \ref{thm:proj-comp}),
the global type disciplining
the whole system.
For instance, if we
wanted to get the local type for 
$\dir$, traditionally,
we would do so by projecting
$\G$. With our distributed protocol specification, it
is enough to project $\hTstr$ onto $\dir$.
\begin{small}
\[
\begin{array}{l}
  \projb \G {\{\dir\}} = \projb {(\projb \G {\{\dir,\ad\}})}{\{\dir\}} = \projb \hTstr {\{\dir\}} =
  \\[1mm] \qquad
    \hsnd\dir\ad \dht{\product(\tnat).\hsnd \dir \sales \product(\tnat).\hsnd \dir \fone \product(\tnat).}
    \hrec \hX {\hrcv \fone \dir}
    \dht{\left\{
        \ok.\hsnd\dir\ad\go.\hend,
        \wait.\hsnd\dir\ad\wait.\hX
      \right\}}
\end{array}
\]
\end{small}
If instead 
want to implement 
processes for $\fone$ and $\ftwo$,
we can obtain the local types 
from $\hTfin$.
  \begin{small}
\[
\begin{array}{l}
  \projb{\G}{\{\fone\}}=\projb\hTfin{\{\fone\}}=
  \hrcv \dir \fone \dht{\product(\tnat).\hsnd \fone \ftwo \product(\tnat).}
  \hrec\hX{\hrcv\ftwo\fone}
  \dht{\left\{\hspace*{-1mm}
    \begin{array}{l}
      \price(\tnat).\hsnd \fone \dir\ok.\hsnd \fone\sales \price(\tnat).\hend,\\
      \wait.\hsnd \fone \dir\wait.\hsnd \fone\sales \wait.\hX
    \end{array}
    \hspace*{-1mm}\right\}}\\
  \projb{\G}{\{\ftwo\}}\hspace*{-1mm}=\projb\hTfin{\{\ftwo\}}\hspace*{-1mm}=
  \dht{\hrcv \fone \ftwo \product(\tnat).}
  \hrec\hX{\hsnd\ftwo\fone}
  \dht{\left\{
      \price(\tnat).\hend,
      \wait.\hX
    \right\}}
\end{array}
\]
\end{small}
We can proceed analogously
for all participants.
%
%
\end{example}

\begin{remark}[Applicability and Preservation of Semantics]
\label{rem:app}
  Example \ref{ex:dist-ver} displays the essence of
  of our 
  theory,
  formally captured by Corollary \ref{cor:dist-spec}:
  for a 
  system specified in a distributed way,
  with components $\hTone,\dots,\hTN$ and
  compatibility type $\gG$,
  there is
  \emph{no need for an explicit description of $\G$}. 
 After compatibility checks (Equation \ref{eq:compatibility}),
  our theory builds back 
  a well-formed global type $\G$ for the whole system, and
  the session of local types,
  projections of $\G$, can be obtained,
  in a distributed fashion, by directly
  projecting subprotocols,
  since, for $\p\in\hpart\hTi$,
  $\projb\hTi{\{\p\}}=\projb\G {\{\p\}}$.

At the design stage,
each designer $D_i$
gives the subprotocol $\hTi$ and,
with the simple equality check \ref{eq:compatibility},
they make sure that their protocol is compatible
with $\gG$ (described by the chief designer $D$).
At the type-checking/implementation stage,
the designer $D_i$ \emph{independently}
obtains local types for
well-behaved implementations
directly from their specification $\hTi$.
$D_i$ is never concerned
with the communication happening
internally to, or among, other components.
What guarantees global well-behaviour
is the existence of $\G$, proved
once and for all by our theory;
no designer or programmer needs
its explicit description.
We have achieved
\emph{distributed protocol specification}.

Desirable MPST semantic guarantees,
such as liveness and deadlock freedom,
are preserved by our theory, thanks
to its \emph{semantics preservation}. 
Our compositionality-through-projection
technique can explicitly build back
a protocol as a global type that is
traditionally well-formed
(see, e.g., \cite{HYC2016,DenielouYoshida2013}).
Thus, our theory brings modularity to
the protocol design phase,
but, after such distributed specification,
the result is a
traditional MPST system,
with a \emph{single} global type that
projects on local types for
\emph{all} participants,
which benefits from existing
semantics results
from the MPST literature.


\end{remark}

\begin{remark}[On Hybrid Types]
\label{rem:on-hybrid}
Central to this work
is Definition \ref{def:hybrid-types}
of hybrid types.
Our theory
shows how
``open'' subprotocols, interacting
with other components of the system,
can be specified as \emph{hybrid types},
safely composed into a \emph{global type},
and projected onto \emph{local types}.

The syntax of hybrid types is
simply the combination of the
syntaxes of global and local types
and, through the predicates
$\isglobal{\hT}$ and $\islocal{\hT}$,
we can 
isolate
global and local types
respectively,
from the rest of hybrid types.
This choice
makes our development compatible
with existing MPST systems and
is key to semantics preservation:
in our compositionality
theory, well-formed global types
guarantee semantics properties
and local types are used for
participant implementation,
exactly as in traditional MPST
(see Corollary \ref{cor:dist-spec}
and Remark \ref{rem:app}).

At the same time, dealing with
a single syntax (hybrid types)
simplifies our theory
significantly. This paper proposes
an approach to protocol compositionality
that heavily relies on projection.
Traditional MPST projection
operates on global types and returns
local types for implementation.
Our generalised projection, instead,
takes a hybrid type 
and returns a hybrid type,
but, since global and local types
are hybrid types, our projection
maintains and extends
the functionalities of the traditional
operator. The main gain is flexibility:
a function with the same domain as its
codomain can be composed with itself
and this is central in our proofs
(see Theorem \ref{thm:proj-comp}
and its role in the proof of Corollary \ref{cor:dist-spec}).
In other words, instead of working with multiple
operators (which would have very similar definitions)
and proving them compatible, we
rely on a single one: generalised projection.
In particular, we apply projection
in the following key steps of our development:
\begin{itemize}
\item projecting the component $\hTE$ of a system
onto one of its internal participants $\p$, gives---as
it is customary in the literature---the local type
$\dlt{\lT_{\p}}$ for implementing $\p$;
\item projecting the type $\gG$ is necessary for compatibility \ref{eq:compatibility}:
$\projb {\gG} {E} = \loc{}\hTE$; and
\item the preservation of projection (Theorem \ref{thm:proj-comp})
ensures
applicability and correctness:
local types for implementation, obtained
by separately projecting the distributed components, are
also projections
of a single, well-formed global type for the whole system
(equation $\projb {\G} {\{\pr\}} = \projb {\hTi} {\{\pr\}}$,
Corollary \ref{cor:dist-spec}).
\end{itemize}

\end{remark}

\section{Case Studies}
\label{sec:eval}

In this section, we evaluate our development
with case studies. 
In \S\ref{sec:parallel}, we discuss the role
of the \emph{parallel construct}
$\hpar \hTone \hTtwo$
in our protocol composition.
Then, 
we consider the
industry-standard protocol for authorisation,
OAuth 2.0, \cite{oauth2,horne:2020}:
in \S\ref{subsec:oauth}, we observe the \emph{modularity}
benefits of our theory, and, in \S\ref{subsec:oauth2},
we reach an \emph{optimisation} for it.
In \S\ref{subsec:del}, we show
how hybrid types
can be smoothly extended to 
feature
\emph{delegation and explicit connections}. 

\subsection{On the Parallel Construct and Compositionality}\label{sec:parallel}


Our work enriches the type system
of MPST with compositionality
at the protocol-description level,
while retaining the traditional syntax
of local types. Consequently,
the targeted process language is
standard \cite{zooid,GHH2021}.
In particular, local types and
processes are \emph{single-threaded}.
On the other hand,
to allow the description of
protocols where two different components
execute independently, without
exchanging messages with each other,
we have added the
parallel construct
to the syntax of hybrid
(and global) types.

The
\emph{parallel construct} (or \emph{parallel composition})
for global types
appears in
the first presentation of MPST
\cite{Honda2008Multiparty},
but dismissed in subsequent literature
\cite{Bettini2008Global,CDPY2015,zooid}.
For achieving compositionality,
without requiring further well-formedness restrictions on global types, %
we need to explicitly add parallel composition
$\hpar\hTone\hTtwo$.
Let us consider the following distributed specification: 
\\[1mm]
\begin{small}
\centerline{$
\ghTpar = \hmsg \p \pr \dht{\elllbl.\hend}\qquad
  \hTparone = \hsnd\p\pr\dht{\elllbl.\hrec \hX {\hmsg\p\q\{\lblFmt{\ell_1}.\hX,\lblFmt{\ell_2}.\hend\}}}\qquad
  \hTpartwo =\hrcv\p\pr\dht{\elllbl.\hrec \hY {\hmsg\pr\ps\{\lblFmt{\ell_3}.\hY,\lblFmt{\ell_4}.\hend\}}}
$}
\end{small}
Compatibility holds:
$\projb \ghTpar {\{\p,\q\}} = \hsnd \p \pr \dht{\elllbl.\hend} \allowbreak= \loc {} \hTparone$ and
$\projb \ghTpar {\{\pr,\ps\}} \allowbreak= \hrcv \p \pr \dht{\elllbl.\hend} = \loc {} \hTparone$.
Our theory guarantees the existence of a well-formed global type
for the whole system, which entails a deadlock-free session:
$\Gpar=\hmsg\p\pr\dht{\elllbl.}
\dht{\hpar{\ \big(\hrec \hX {\hmsg\p\q\{\lblFmt{\ell_1}.\hX,\lblFmt{\ell_2}.\hend\}}\big)\ } {\ \big(\hrec \hY {\hmsg\pr\ps\{\lblFmt{\ell_3}.\hY,\lblFmt{\ell_4}.\hend\}}\big)\ } }$.
Without 
parallel composition,
it is not clear how
to compose
$\hTparone$ and $\hTpartwo$,
even if they are compatible
with respect to
$\ghTpar$.
\footnote{For more details on the role of parallel composition in the inductive proof of Theorem \ref{thm:comp1}, see the full proof in Appendix \appref{subsec:appendix1}\inApp, Theorem \appref{thm:comp1-a}} %
Indeed, recent work \cite{GHH2021} has
drawn attention to the role of the parallel
construct in traditional MPST: by exploiting
a similar example to the $\Gpar$ above, the authors
show that, if the syntax does
not include 
parallel composition,
there are non-deadlocked sessions
that do not have a global type.

\subsection{Distributed Specification for OAuth 2.0}
\label{subsec:oauth}

We consider the industry-standard protocol for authorisation, OAuth 2.0, \cite{oauth2,horne:2020}. In such protocol, the owner of a resource gives approval (through the OAuth server) for an external application to access some resource; the OAuth server ensures, by means of tokens, that the sensitive data of the owner are not shared with the external application.


\begin{figure}
\begin{small}
\[\begin{array}{l}
  \hToaone = \hrcv \ua \oa \dht{\{\init (\appid,\scope).}\hmsg \oa \ow \dht{\{\login(\appid,\scope).}\hmsg \ow \oa\\
  \dht{\left\{
  \begin{array}{l}
    \deny. \hsnd \oa \ua \dht{\close.\hsnd \oa \res \{\release.\hend\}}, \\
    \auth (\name,\pwd).
    \hsnd \oa \ua 
\dht{\left\{
    \begin{array}{l}
      \dht{\close.\hsnd \oa \res \{\release.\hend \}}, \\
      \dht{\authcode(\cod).}
      \hrcv \ua \oa 
      \dht{\{\exchange(\appid,\secret,\cod). \hsnd \oa \ua}\\
      \dht{\left\{
      \dht{\close.\hsnd \oa \res \{\release.\hend \}}, 
      \dht{\accesstoken(\tkn).\hsnd \oa \res \{\pass.\hend\} }
    \right\}}
  \dht{\}}
    \end{array}
    \right\}}
  \end{array}
  \right\}}
  \dht{\}\}}\\
  \hToatwo = \hsnd \ua \oa \dht{\{\init(\appid,\scope).\hrcv \oa \ua}
  \dht{\left\{
  \begin{array}{l}
      \dht{\close.\hrcv \oa \res \{\release.\hend \}}, \\
      \dht{\authcode(\cod).}\hsnd \ua \oa \dht{\{\exchange(\appid,\secret,\cod). \hrcv \oa \ua}\\
  \dht{\left\{
    \begin{array}{l}
      \dht{\close.\hrcv \oa \res \{\release.\hend \}}, \\
      \dht{\accesstoken(\tkn).\hrcv \oa \res \{\pass.\hresa\} }
    \end{array}
    \right\}}
  \dht{\}}
    \end{array}
    \right\}}
  \dht{\}}\\
  \hresa = \hrec \hX
       {\hmsg \ua \res \dht{ \{\request(\tkn).}}
       {\hmsg \res \ua
         \dht{\left\{
             \revoke.\hend,
             \response(\data).\hX
         \right\}}
         \dht{\}}}\\
         \ghToa = \hmsg \ua \oa \dht{\{\init(\appid,\scope).}\hmsg \oa \ua\\
  \dht{\left\{
    \begin{array}{l}
      \dht{\close.\hmsg \oa \res \{\release.\hend \}},\\
      \dht{\authcode(\cod).\hmsg \ua \oa \{\exchange(\appid,\secret,\cod). \hmsg \oa \ua}\\
      \dht{\left\{
      \dht{\close.\hmsg \oa \res \{\release.\hend \}}, 
      \dht{\accesstoken(\tkn).\hmsg \oa \res \{\pass.\hend\} }
    \right\}}\dht{\}}
    \end{array}
  \right\}}
  \dht{\}}
    \end{array}
\]
\end{small}
      \caption{Distributed Specification for OAuth 2.0}\label{fig:oauth}
\end{figure}

We present a specification for the OAuth 2.0 protocol in two components (Figure \ref{fig:oauth}): a first designer $D_{\sf auth}$ gives the hybrid type $\hToaone$ for the \emph{OAuth server} $\oa$ and the resource \emph{owner} $\ow$, while a second $D_{\sf res}$ is in charge of the interactions involving an \emph{untrusted app} $\ua$ and the \emph{resource service} $\res$, $\hToatwo$. Separately,
a chief designer
describes the compatibility type $\ghToa$. Compatibility holds,
$\projb \ghToa {\{\oa,\ow\}} = \loc {} \hToaone$ and $\projb \ghToa {\{\ua,\res\}} = \loc {} \hToatwo$, and Corollary \ref{cor:dist-spec} guarantees the existence of a well-formed global type for the whole protocol.

Let us focus on the component $\hToatwo$,
and in particular on its subcomponent $\hresa$, which 
contains exclusively internal interactions:
once authorisation is granted, the
untrusted app communicates
directly with the resource service.
We observe then that $\hresa$
can be specified \emph{modularly}:
the designer $D_{\sf res}$ could, at a later stage,
\emph{specify a different protocol}, $\hresab$,
for the interaction between $\ua$ and $\res$,
\emph{without affecting compatibility}. 
We recognise one extra benefit of our theory
in \emph{modular specification for intra-component communication}:
a designer ($D_{\sf res}$) can modify the specification
of their protocol ($\hToatwo$) over time,
in its internal interactions ($\hresa, \hresab$),
as long as its external communication 
($\loc {} \hToatwo$)---and hence compatibility
with respect to the prescription of the chief designer
($\ghToa$)---is preserved.

\subsection{Optimisation of OAuth 2.0 Specification}
\label{subsec:oauth2}

By inspecting the types in Figure \ref{fig:oauth},
we notice that all inter-component interactions 
(between $\hToaone$ and $\hToatwo$) 
go through the participant $\oa$
and, hence, they are all
documented by $\ghToa$ , but also in $\hToaone$.
Here, describing explicitly $\ghToa$ 
looks redundant 
and we ask ourselves whether a more
efficient specification is possible.
It turns out
that we can omit the specification of $\ghToa$ for
the OAuth 2.0 protocol and
\emph{optimise distributed specification} (Corollary \ref{cor:dist-spec})
in general. 

\begin{corollary}[Distributed MPST Specification, Optimisation]\label{cor:dist-spec2}
  Given $E, E_1,\dots,E_N$ disjoint sets of participants, a global type $\gG$,
  and $\hTone,\dots,\hTN$ hybrid types, such that:
  \begin{itemize*}
  \item[(a)] $\hpart \hTi\allowbreak = E_i$, for all $i=1,\dots,N$,
  \item[(b)] $\hepart \hTi \cap E_i\allowbreak = \emptyset$, for all $i$,
  \item[(c)] $\hpart \gG \allowbreak\subseteq \allowbreak(\bigcup_{i=1,\dots,N} E_i)\cup E$, and
  \item[(d)] $\projb {\gG} {E_i} = \loc{} \hTi$;
  \end{itemize*}
  there exists $\G$ such that, for all $i=1,\dots,N$, for all $\pr\in E_i$, $\projb {\G} {\pr} = \projb {\hTi} {\{\pr\}}$, and for all $\pr\in E$,  $\projb {\G} {\pr} = \projb {\gG} {\{\pr\}}$.
\end{corollary}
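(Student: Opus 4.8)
The plan is to reduce the statement to Theorem~\ref{thm:comp} followed by Theorem~\ref{thm:proj-comp}, exactly as in the proof of Corollary~\ref{cor:dist-spec}, by treating each extra participant of $E$ as its own \emph{singleton} subprotocol. Concretely, for every $\pr\in E$ I would adjoin to the family the component $\projb \gG {\{\pr\}}$, with associated participant set $\{\pr\}$. By Remark~\ref{rem:proj}, $\projb \gG {\{\pr\}}$ is a \emph{local} type: $\islocal{\projb \gG {\{\pr\}}}$ holds and it contains no global-message construct. Hence the localiser acts as the identity on it, $\loc{}{(\projb \gG {\{\pr\}})}=\projb \gG {\{\pr\}}$, because $\rulename{loc-msg}$ never fires; this is precisely compatibility (Equation~\ref{eq:compatibility}) for the singleton component against $\gG$.

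Next I would verify that the augmented family, consisting of $\hTone,\dots,\hTN$ followed by the singleton components $\{\projb \gG {\{\pr\}}\}_{\pr\in E}$, with sets $E_1,\dots,E_N$ followed by $\{\{\pr\}\}_{\pr\in E}$, satisfies the hypotheses of Theorem~\ref{thm:comp}. Pairwise disjointness is immediate from the disjointness of $E,E_1,\dots,E_N$ and the distinctness of the elements of $E$. The inclusion $\hpart{\projb \gG {\{\pr\}}}\subseteq\{\pr\}$ and the disjointness $\hepart{\projb \gG {\{\pr\}}}\cap\{\pr\}=\emptyset$ hold because generalised projection onto $\{\pr\}$ never records $\pr$ as an external participant (Definition~\ref{def:proj}); for the original components these are hypotheses (a) and (b). Finally, compatibility (d) holds for the original components by assumption and for each singleton by the identity established in the previous paragraph. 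Note that I appeal to Theorem~\ref{thm:comp}, which only requires $\hpart\hTi\subseteq E_i$, rather than to Corollary~\ref{cor:dist-spec}; this conveniently accommodates any $\pr\in E$ with $\pr\notin\hpart\gG$, for which $\projb \gG {\{\pr\}}=\hend$ and $\hpart{\hend}=\emptyset$.

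Applying Theorem~\ref{thm:comp} to the augmented family then produces $\G$, obtained by building $\gG$ back along the list $[\hTone,\dots,\hTN]$ extended with the singleton components; moreover $\isglobal\gG$ gives $\isglobal\G$, and the theorem yields $\projb \G {E_i}=\hTi$ for every $i$ and $\projb \G {\{\pr\}}=\projb \gG {\{\pr\}}$ for every $\pr\in E$. The latter equalities are already the second conclusion of the corollary, identifying $\projb \G \pr$ with $\projb \G {\{\pr\}}$ as in Remark~\ref{rem:proj}. For the first conclusion, for $\pr\in E_i$ I would invoke Theorem~\ref{thm:proj-comp}: since $\{\pr\}\subseteq E_i$ and $\projb \G {E_i}$ is defined, $\projb \G {\{\pr\}} = \projb {(\projb \G {E_i})} {\{\pr\}} = \projb \hTi {\{\pr\}}$. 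Hypothesis (c), $\hpart\gG\subseteq(\bigcup_{i} E_i)\cup E$, plays the role of guaranteeing that the $E_i$ together with the singleton decomposition of $E$ cover all participants of $\gG$, so that $\G$ is genuinely a complete, well-formed global type for the whole system.

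The only content-bearing step is the compatibility check for the singleton components, and it is where the design of the construction pays off: it is essential that projecting a \emph{global} type onto a \emph{single} role yields a local type, so that localisation leaves it unchanged and $\projb \gG {\{\pr\}}=\loc{}{(\projb \gG {\{\pr\}})}$ holds definitionally, with no extra hypothesis relating the participants of $E$ to one another. Everything else is disjointness bookkeeping and a direct appeal to the already-established Theorems~\ref{thm:comp} and~\ref{thm:proj-comp}.
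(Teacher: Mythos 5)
Your proof is essentially correct, but it takes a genuinely different route from the paper's. The paper proves Corollary~\ref{cor:dist-spec2} ``analogously to Corollary~\ref{cor:dist-spec}'': it sets $\G=\bba{(\gG)}{([\hTone,\dots,\hTN])}$ over the \emph{original} $N$ components only, gets $\projb{\G}{E_i}=\hTi$ from Theorem~\ref{thm:comp} and the first conclusion via Theorem~\ref{thm:proj-comp}, and obtains the second conclusion ($\projb{\G}{\{\pr\}}=\projb{\gG}{\{\pr\}}$ for $\pr\in E$) from the fact that build-back preserves projections onto sets disjoint from $\bigcup_i E_i$ --- i.e., conclusion $(2)$ of Theorem~\ref{thm:comp1}, carried through the induction as the auxiliary statement $(\textit{dsj})$ \emph{inside} the proof of Theorem~\ref{thm:comp}. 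Since $(\textit{dsj})$ is not part of the statement of Theorem~\ref{thm:comp}, the paper's route implicitly re-opens that proof. Your route avoids this: by adjoining, for each $\pr\in E$, the singleton component $\projb{\gG}{\{\pr\}}$ --- for which compatibility~\ref{eq:compatibility} holds definitionally, because projection onto a singleton is a local type (Remark~\ref{rem:proj}) on which the localiser is the identity --- you derive both conclusions from the \emph{stated} conclusions of Theorem~\ref{thm:comp} (plus Theorem~\ref{thm:proj-comp}), used as a black box. That is a tidy, self-contained reduction; the price is composing $N+|E|$ components instead of $N$, and the proviso below.

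The proviso: your construction needs $\projb{\gG}{\{\pr\}}$ to be \emph{defined} for every $\pr\in E$, otherwise the singleton component cannot be formed. This is not implied by hypotheses (a)--(d). For instance, take $\gG=\hmsg{\pzero}{\qzero}\dht{\{\lblFmt{\ell_1}.\hmsg{\pr}{\pzero}\lblFmt{\ell_3}.\hend,\;\lblFmt{\ell_2}.\hmsg{\pr}{\pzero}\lblFmt{\ell_4}.\hend\}}$ with $\pzero,\qzero,\pr\in E$: every $\projb{\gG}{E_i}$ merges the whole choice away to $\hend$, so (d) is satisfiable by components with no external communication, yet $\projb{\gG}{\{\pr\}}$ requires merging $\hsnd{\pr}{\pzero}\dht{\lblFmt{\ell_3}.\hend}$ with $\hsnd{\pr}{\pzero}\dht{\lblFmt{\ell_4}.\hend}$, and sends with distinct labels do not merge, so it is undefined. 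Under the paper's (implicitly Kleene) reading of equalities between partial operators, the $(\textit{dsj})$-based argument still yields $\projb{\G}{\{\pr\}}=\projb{\gG}{\{\pr\}}$ (both undefined), whereas your argument gives no information about such $\pr$. Since the corollary is only useful when these projections are defined --- they are the local types implementing the participants of $E$ --- this amounts to a mild implicit strengthening of the hypotheses rather than a flaw in substance; but it is the one point where your route proves strictly less than the paper's.
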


\begin{figure}
\begin{small}
$
\begin{array}{l}
  \gGoa = \hmsg \ua \oa \dht{\{\init(\appid,\scope).}\hmsg \oa \ow \dht{\{\login(\appid,\scope).}\hmsg \ow \oa\\
  \dht{\left\{
  \begin{array}{l}
    \deny. \hmsg \oa \ua \dht{\close.\hmsg \oa \res \{\release.\hend\}}, \\
    \auth (\name,\pwd).
    \hsnd \oa \ua 
    \dht{\left\{
    \begin{array}{l}
      \dht{\close.\hmsg \oa \res \{\release.\hend \}}, \\
      \dht{\authcode(\cod).}
      \hmsg \ua \oa 
      \dht{\{\exchange(\appid,\secret,\code). \hmsg \oa \ua}\\
      \dht{\left\{
      \dht{\close.\hmsg \oa \res \{\release.\hend \}}, 
      \dht{\accesstoken(\tkn).\hmsg \oa \res \{\pass.\hend\} }
    \right\}}
  \dht{\}}
    \end{array}
    \right\}}
  \end{array}
  \right\}}
  \dht{\}\}}\\
  \text{$\hToatwo$ (and $\hresa$) the same as in Figure \ref{fig:oauth}}
\end{array}
$
\end{small}
\caption{Distributed for OAuth 2.0, Optimised}
\label{fig:oauth-2}
\end{figure}

The proof of the above corollary is analogous to the proof of Corollary \ref{cor:dist-spec} and entails the same semantic guarantees, since it leads to the existence of a global type for the whole system. 
However, in Corollary \ref{cor:dist-spec2}, $\gG$ plays a twofold role:
\begin{itemize*}
\item[(a)] it is the hybrid type, communication subprotocol for the component
with set of participants $E$, and
\item[(b)] it is the compatibility protocol, carrying all the inter-component
interactions of the system.
\end{itemize*}
For OAuth 2.0,
by exploiting Corollary \ref{cor:dist-spec2},
we obtain a more efficient distributed
specification 
(Figure \ref{fig:oauth-2}).
Only two hybrid types, one for each component,
are specified, provided that
one contains also the compatibility information.
We observe, with notations from Figures \ref{fig:oauth}
and \ref{fig:oauth-2} that
$\projb \hToaone {\{\p\}} = \projb \gGoa {\{\p\}}$, for $\p\in \{\oa, \ow\}$. 
Hence, for all participants,
in both specifications we obtain the same local
types for implementation. However, 
in the optimised case of
Figure \ref{fig:oauth-2}, for compatibility, we only need to check one equality: $\projb \gGoa {\{\ua,\res\}} = \loc {} \hToatwo$.

\begin{remark}[Local Types from Partial Protocols]
  We observe that Corollaries \ref{cor:dist-spec}
  and \ref{cor:dist-spec2} prescribe differently
  how to obtain local types for implementation.
In Corollary \ref{cor:dist-spec}, we
project each $\hTi$ onto its internal participants (in $E_i$)
to get the right local types. 
  $\gG$ has the sole role of disciplining
  inter-component communication for compatibility,
  hence all its projections onto single
  participants (even if well-defined) are \emph{not} meant
  for implementation. In Corollary \ref{cor:dist-spec2},
  instead, the global type $\gG$ not
  only provides for compatibility, but also
  describes the internal communication
  of 
  the component that is concerned
  with the participants in $E$.
  Thus, if $\pr\in E_i$, the respective
  local type is obtained by $\projb \hTi {\{\pr\}}$,
  while, if $\pr\in E$, the local type for implementing
  $\pr$ is $\projb \gG {\{\pr\}}$.
\end{remark}



\subsection{An Extension to Delegation and Explicit Connections}
\label{subsec:del}

The MPST literature offers a
variety of formalisms that enrich
the type system with expressive
features
\cite{10.1145/3485501,BYY2014,ZFHNY2020,10.1145/3428223,lagaillardie_et_al:LIPIcs.ECOOP.2022.4}, while
maintaining the central mechanism
of projecting global types onto
local types for distributed
implementation. This
suggests that our
compositional methodology
is general enough to capture
more sophisticated formalisms
than core MPST.
In this section,
we support this intuition
with a case study:
we extend
hybrid types to include
\emph{delegation}
and \emph{explicit connections}.
At the end of the section
(Example \ref{ex:del}),
we show that this suitability
for extensions is a
prerogative of
our compositionality-through-projection,
differently from
other compositionality approaches
based on input/output matching
\cite{BLTDezani,smdg:facs21}.

Both delegation and explicit connections
are relevant and practical features,
which have been extensively studied by
the literature on concurrency.
Delegation---the mechanism
in which a participant appoints a different
participant to act on their behalf---first
appears in the context
of object-oriented concurrency
\cite{self,del-book,10.1145/83880.84528}
and, naturally, it has been implemented in
mainstream object-oriented languages
\cite{imai_et_al:LIPIcs:2020:13166,HU07TYPE-SAFE,scalas_et_al:LIPIcs:2016:6115}.
Over the years,
the session-type literature has investigated the
verification of concurrency
in the presence of delegation
\cite{Honda:1998,Honda2008Multiparty,Bettini2008Global,SDHY2017}.
In particular, we take the approach
of \citet{CASTELLANI2020128},
who treat delegation as internal
to the session, in contrast with
channel-passing delegation,
which requires the interleaving of sessions.
Thus, the authors can model (internal)
delegation simply by adding specific
constructs to the syntax of global types.
Moreover, global types from this paper
benefit from the flexibility of
explicit connections:
some participant may or may not
take part in the communication,
depending on the
choice made by some other participant
at a previous stage of execution.
Explicit connections are common
in the design of real-world protocols
\cite{oauth2,kerberos} and
have been 
significantly addressed by
the literature
\cite{
HY2017,DBLP:journals/acta/CastellaniDG19,
harvey_et_al:LIPIcs.ECOOP.2021.10,gheri_et_al:LIPIcs.ECOOP.2022.8}.
Specifically, our
approach 
is immediately
compatible with
the type system
and the semantics in
\cite{CASTELLANI2020128}
(from which we take
most notation):
we can compose subprotocols into
a well-formed and
\emph{well-delegated} global type,
projecting on
local session types
for the whole communicating system.
Thus, 
no new semantics is needed,
but MPST semantics guarantees
(subject reduction, session fidelity, and progress)
hold.
%

For this case study, we focus on
our novel notion of
compatibility through projection
(Equation \ref{eq:compatibility}).
Developing the full theory
goes beyond the scope of this paper:
the structure of the proofs
would be exactly the same as
in our core theory
(Section \ref{sec:comp}).
Instead, we
\begin{itemize*}
\item[$(a)$] extend hybrid types to delegation and explicit connections,
\item[$(b)$] generalise projection,
\item[$(c)$] define the localiser, and 
\item[$(d)$] state compatibility.
\end{itemize*}



\begin{definition}[Hybrid Types with Delegation]\label{def:ht-del}
We define a set of \emph{prefixes} for global and local messages;
when $\p$ establishes an \emph{explicit connection} to $\q$,
we use the superscript $^e$.
\[
\begin{array}{llll}
  \da ::= \dm \p \q \SEP \dme \p \q &
  \dpo ::= \ds \p \q \SEP \dse \p \q &
  \dpi ::= \dr \p \q \SEP \dre \p \q
  &\ms::=\da\SEP\dpo
\end{array}
\]
\emph{Hybrid types with delegation} are defined inductively by:
\[
    \begin{array}{ll}
      \hT ::= & \hend \SEP \hX \SEP\hrec \hX \hT \SEP \dht{\hT_1 | \hT_2}
      \SEP \hchoicep \ell \hT \SEP
      \hint \ell \hT \SEP\\
      & \hfdel \p \q \hT \SEP \hbdel \q \p \hT \SEP
      \hafdel \p \q \hT \SEP \hpbdel \q \p \hT \SEP
      \hpfdel \p \q \hT \SEP \habdel \q \p \hT
\end{array}
\]

\end{definition}
Without loss of generality, we omit payload types
(sorts) from the syntax above: formally,
participants only exchange labels $\elllbl$.
For degenerate branchings (with a single branch, where no actual choice happens)
we omit the operators $\dht{\boxplus}$ and $\dht{\bigwedge}$
and we simply write the message (starting with a prefix $\da,\dpo,\dpi$).

Hybrid types with delegation (Definition \ref{def:ht-del})
endow with local send/receive constructs
the global types from \cite{CASTELLANI2020128}.
In particular, \emph{global choices}
and \emph{union types} from \cite{CASTELLANI2020128}
are particular cases of the above syntax.
The global choice, $\hgchoice\ell\hT$, is obtained
by asking that, in $\hchoicep \ell \hT$,
all $\dht{\ms^{\p}_i}$ are global messages,
$\dht{\da^{\p}_i}$.
Local union (send) types, 
$\hun \ell \hT$, are now written as $\hchoicep\ell\hT$,
where all $\dht{\ms^{\p}_i}$ are send constructs $\dpoi$.
In what follows we discuss how Definition \ref{def:ht-del}
is a direct extension of Definition \ref{def:hybrid-types}.

We observe that branching is more permissive than in
Definition \ref{def:hybrid-types}.
\begin{itemize}
\item 
In a single choice $\hchoicep \ell \hT$ the sender $\p$ is unique, but
receivers may be different, internal or external:
in choices we allow the mixing of global messages $\da$ and send constructs $\dpo$.
\item Intersection types $\hint\ell\hT$ combine receive constructs $\dpi$, possibly
with different senders (and receivers).
\end{itemize}

\begin{remark}[Generalising Hybrid Types to Delegation]
The syntax of hybrid types with delegation
(Definition \ref{def:ht-del}) is a generalisation
of the core syntax
of hybrid types (Definition \ref{def:hybrid-types}).
\footnote{In order to make the notation lighter,
we ignore sorts in both syntaxes:
only labels $\elllbl$ are sent.}
For each construct in
Definition \ref{def:hybrid-types},
we show how it can be expressed in
the formalism of \ref{def:ht-del}.
\begin{itemize}
\item $\hend$, $\hX$, $\hrec \hX \hT$, and $\hpar \hTone \hTtwo$
are preserved.
\item $\dht{{\p}!{\q};\{\lblFmt{\ell_i}. \hT_i \}_{i \in I}}$
can be written as $\hchoicep \ell \hT$, with
$\dht{\ms^{\p}_i}=\ds \p \q$ for all $i\in I$.
\item $\dht{{\p}?{\q};\{\lblFmt{\ell_i}. \hT_i \}_{i \in I}}$
can be written as $\hint \ell \hT$, with
$\dpii=\dr \p \q$ for all $i\in I$.
\item $\dht{{\p}\to{\q};\{\lblFmt{\ell_i}. \hT_i \}_{i \in I}}$
can be written as $\hchoicep \ell \hT$, with
$\dht{\ms^{\p}_i}=\dm \p \q$ for all $i\in I$.
\end{itemize}
\end{remark}

The global construct $\hfdel\p\q\hT$ models a \emph{forward delegation}
where $\p$ delegates their behaviour to $\q$; then,
such behaviour is given back with the global construct
for \emph{backward delegation}, $\hbdel\q\p\hT$.
The notation for local constructs is analogous:
\emph{active forward delegation} $\hafdel\p\q\hT$,
\emph{passive backward delegation} $\hpbdel\q\p\hT$,
\emph{passive forward delegation} $\hpfdel\p\q\hT$, and
\emph{active backward delegation} $\habdel\q\p\hT$.
As 
in our core theory
(see Definition \ref{def:hybrid-types}),
local constructs carry
both internal participants
and external ones (in square brackets):
e.g., in $\hafdel\p\q\hT$,
$\p$ (internal) delegates their behaviour to
the participant $\q$ of a different component (external).

\begin{example}[Global Types with Delegation and Explicit Connections]
\label{ex:d1}
Here, we consider two simple examples
that display in isolation the features
of delegation and explicit connections.
Let us consider the following global protocols,
written with the syntax of
hybrid types with delegation (Definition \ref{def:ht-del}).
\[
\begin{small}
\Gd=\dht{
\dm \buyer \seller (\ok);
\hfdel\seller \bank {\dm \buyer \seller} (\card);
\hbdel\bank \seller \hend
}
\end{small}
\]
In $\Gd$,
after the $\buyer$ has given the $\ok$,
the $\seller$ \emph{delegates} to the $\bank$ their role
in the communication
($\seller \fdel\bank$).
Namely, when the $\buyer$ sends their $\card$ number to
the $\seller$, they are in fact sharing that information
with the $\bank$. With the construct $\bank\bdel \seller$,
the $\bank$ delegates back their role to the $\seller$.
\[
\begin{small}
\Gec=
\dht{
\dm \user\website (\location);
\boxplus
\left\{
\begin{array}{l}
\dme \website\shopeu (\open);
\dm \shopeu\user (\fronteu);\hend
\\
\dme\website\shopuk (\open);
\dm \shopuk\user (\frontuk);\hend
\end{array}
\right\}
}
\end{small}
\]
In $\Gec$, after the $\website$ receives
the $\user$'s location, it decides whether
to \emph{(explicitly) connect} them to the $\shopeu$ or to the
$\shopuk$. Then the $\shopeu$
(resp. the $\shopuk$)
sends the $\fronteu$
(resp. the $\frontuk$)
to the $\user$. Here in the first
branch (resp. the second) of the choice,
there is an explicit connection $\dme \website\shopeu$
(resp. $\dme \website\shopuk$) to
the participant
$\shopeu$ (resp. $\shopuk$),
which does not appear in the other branch.

\end{example}


We adopt the definition of \emph{well-delegated}
type from \cite{CASTELLANI2020128}
(Definition $4.11$); in particular
a forward delegation (e.g., $\hfdel \p \q {}$)
is always followed by a corresponding
backwards one (e.g., $\hbdel \p \q {}$);
also, choices must not appear between
corresponding forward and backward delegation.




We define \emph{projection} and \emph{localiser}
for hybrid types with delegation, by extending
Definitions \ref{def:proj} and \ref{def:loc},
\S\ref{subsec:hybrid}.
The main differences are listed below.
\begin{itemize}
\item Projection behaves on \emph{delegation constructs}
following the same intuition as for messages. E.g.,
$\projb {(\hfdel \p \q \hT)} {E} = \hfdel \p \q {(\projb\hT E)}$
if $\{\p,\q\}\subseteq E$, or
$\projb {(\hfdel \p \q \hT)} {E} = \hafdel \p \q {(\projd \hT {\{(\p,\q)\}} E)}$
if $\p\in E$ and $\q\notin E$,
where projection needs to keep track of $\p$ delegating to $\q$,
as indicated by the notation
$\projd \hT {\{(\p,\q)\}} E$.
\item Branches with explicit connections of a participant $\ps$ can be merged with branches where $\ps$ does not appear; e.g., if $\dht{\hT= 
\boxplus
\{
\dm \p \q(\lblFmt{\ell});\hend,}\allowbreak
\dht{\dme \p \ps(\lblFmt{\ell^e});\dm\ps\q(\lblFmt{\ell'});\hend
\}}
$, then
$\projb{\hT} {\{\ps\}}= \dht{\dre \p\ps(\lblFmt{\ell^e})
;\ds\ps\q(\lblFmt{\ell'});\hend}$.
\item We allow merging of receive constructs with different senders; e.g., with $\hT$ as in the
bullet point above, $\projb{\hT} {\{\q\}}= \dht{\bigwedge\{
\dre  \p \q(\lblFmt{\ell});\hend, \dre \ps\q(\lblFmt{\ell'});\hend\}}$
\item As in our core theory \S\ref{subsec:hybrid}, the localiser skips global constructs and retains local ones; e.g., $\loc{}{\hfdel\p\q\hT}=\loc{}\hT$ and $\loc{}{\hafdel\p\q\hT}=\hafdel\p\q{(\loc{}\hT)}$.
\end{itemize}
The partial function
$\projd \_ \de E$
generalises the delegation projection functions
$\projdone \_ \p \q$ and $\projdtwo\_\p\q$ from \cite{CASTELLANI2020128}
(Figure $6$); this function
is of a sequential nature: it is not defined
for branching, recursion, and parallel constructs.
Full definitions of projection $\projb{}{}$,
the auxiliary delegation projection $\upharpoonright^{d}$,
and localiser $\loc{}{}\!\!$ 
are in Appendix \appref{sec:del-proj-a}\inApp\
and compatible with 
\cite{CASTELLANI2020128}.

\begin{example}[Projection with Delegation and Explicit Connections]
\label{ex:d2}
Let us consider $\Gd$ and $\Gec$ from Example \ref{ex:d1}
and, in particular, their projections onto single participants

The intuition behind the delegation constructs in $\Gd$
is well displayed by its projections.
\[
\begin{small}
\begin{array}{l}
\projb \Gd {\{\seller\}} =
\dht{\ds \seller\buyer(\ok);
\hafdel\seller\bank
{\hpbdel\bank\seller\hend}}\\
\projb \Gd {\{\buyer\}} =
\dht{\dr \seller\buyer(\ok);\ds \buyer\seller(\card);\hend}\\
\projb \Gd {\{\bank\}} =
\dht{
\hpfdel\seller\bank{\dr\buyer\bank}(\card);
\habdel\bank\seller\hend}
\end{array}
\end{small}
\]
After \emph{actively}
delegating their role to the $\bank$
($\dht{\seller\fdel[\bank]}$)
and before \emph{passively}
being delegated their role back
($\dht{[\bank]\bdel\seller}$),
the $\seller$ is not involved
in any communication ($\projb \Gd {\seller}$).
At the same time, the $\bank$
plays the opposite role: is passive
in receiving the delegation from
$\seller$ ($\dht{[\seller]\fdel \bank}$)
and active in delegating the role back
($\bank\bdel[\seller]$).
Furthermore, we observe that,
while the $\bank$
knows that they are receiving
the $\card$ number from the $\buyer$,
the $\buyer$ acts as if they are
sending it directly to the
$\seller$:
as expected, the $\buyer$
is not involved in
the delegation process
and hence ignores it.

When projecting $\Gec$ onto its participants,
we obtain the following types.
\[
\begin{array}{l}
\projb \Gec {\{\user\}} =
\dht{
\ds \user\website (\location);
\bigwedge
\left\{
\begin{array}{l}
\dr\shopeu\user (\fronteu);\hend
\\
\dr \shopuk\user (\frontuk);\hend
\end{array}
\right\}
}\\
\projb \Gec {\{\website\}}=
\dht{
\dr \user\website (\location);
\boxplus
\left\{
\begin{array}{l}
\dse \website\shopeu (\open);
\hend
\\
\dse\website\shopuk (\open);
\hend
\end{array}
\right\}
}
\\
\projb \Gec {\{\shopeu\}}=
\dht{
\dre \website\shopeu (\open);
\ds \shopeu\user (\fronteu);\hend
}
\\
\projb \Gec {\{\shopuk\}}=
\dht{
\dre \website\shopuk (\open);
\ds \shopuk\user (\frontuk);\hend
}
\end{array}
\]
We observe that
the participants $\shopeu$
and $\shopuk$ are concerned
only with the interactions
that happen after their
explicit connection
($\dre \website\shopeu$ and
$\dre \website\shopuk$ respectively)
and not with the communication
in the other branch, where
they are not connected.
\end{example}


We now define \emph{compatibility}.
In the usual multi-component scenario,
we give hybrid types
(with delegation and explicit connections)
$\hTone,\hTtwo,\dots,\hTN$, for each component,
and a compatibility type
$\gG$. 
As in \S\ref{sec:comp},
we consider the generic $\hTE\in\{\hTone,\hTtwo,\dots,\hTN\}$
(with $E$ being the set of its internal participants)
in isolation
and we state compatibility: 
\begin{equation}
\projb {\gG} {E} = \loc{}\hTE
\tag{\textbf{D}}\label{eq:compatibility-d}
\end{equation}
Equation \ref{eq:compatibility-d} is exactly the same
as Equation \ref{eq:compatibility}. This captures the 
\emph{generality} of our technique:
given an existing top-down MPST system, 
first, we extend its syntax to hybrid types and
generalise projection to sets of participants;
then we isolate the inter-component communication of each protocol
with the localiser; and, finally, we can state compatibility,
prove compositionality, and achieve distributed protocol
specification.
Such general design 
gives a clear advantage to our theory, with respect
to the dual approach from 
previous work \cite{BLTDezani,smdg:facs21}.

\begin{example}[Compatibility Through Projection VS Input/Output Matching]
\label{ex:comp-io}\label{ex:del}
We consider the following three-component system $\hTone,\hTtwo,\hTthree$,
with $\gG$ for compatibility.
\begin{small}
\[
\hspace*{-1mm}\begin{array}{ll}
\gG= \dht{\boxplus}
\dht{\left\{
\dm \p \q(\lblFmt{\ell});\hend,
\dme \p \ps(\lblFmt{\ell^e});\dm\ps\q(\lblFmt{\ell'});\hend
\right\}}
&\hspace*{-1mm}
\hTtwo=\dht{\dm\q\qone(\lblFmt{\ell_1});\bigwedge
\left\{
\dr \p \q(\lblFmt{\ell});\hend,
\dr \ps \q(\lblFmt{\ell'});\hend
\right\}}
\\
\hTone=\dht{\boxplus}
\dht{\left\{
\dm \p \pzero (\lblFmt{\ell_0}); \ds \p \q(\lblFmt{\ell});\hend,
\dse \p \ps(\lblFmt{\ell^e});\dm \p \pzero (\lblFmt{\ell_0});\hend
\right\}}
&\hspace*{-1mm}
\hTthree=\dre \p \ps(\lblFmt{\ell^e});\dme \ps\pr(\lblFmt{\ell^e_0});
\ds\ps\q(\lblFmt{\ell'});\hend
\end{array}
\]
\end{small}
In $\hTone$, $\p$ makes a choice: on the first branch, first it sends
an internal message and then an external one to $\q$ in $\hTtwo$;
on the other branch $\p$ first makes an explicit connection to $\ps$
in $\hTthree$ and then sends an internal message.
The component
$\hTtwo$ (after an internal interaction) is waiting to
receive either from $\p$ in $\hTone$ or from $\ps$ in $\hTtwo$.
The third component $\hTthree$ is concerned only with
the second branch, in case it is chosen by $\p$.
All types
are compatible with respect to $\gG$:
$\projb \gG {E_i}=\loc{}\hTi$, with $E_1=\{\p,\pzero\}$, $E_2=\{\q,\qone\}$,
and $E_3=\{\ps,\pr\}$. Indeed, we can build a well-formed
global type for the whole system:
\begin{small}
\[
\G=\dm\q\qone(\lblFmt{\ell_1});\boxplus
\dht{\left\{
\dm \p \pzero (\lblFmt{\ell_0});\dm \p \q(\lblFmt{\ell});\hend,
\dme \p \ps(\lblFmt{\ell^e});\dme \ps\pr(\lblFmt{\ell^e_0});\dm\ps\q(\lblFmt{\ell'});\dm \p \pzero (\lblFmt{\ell_0});\hend
\right\}}
\]
\end{small}
We focus on $\hTtwo$ and we observe that,
in the intersection of inputs, $\q$
is waiting to receive from $\p$ in $\hTone$,
on the first branch, and from $\ps$ in $\hTthree$,
on the second.
Therefore, a match for inputs in $\hTtwo$ cannot
happen solely with outputs of $\hTone$, nor
solely with outputs of $\hTthree$.
In this case, dual compatibility relations
would fail, while, with our approach based
on projection, compatibility
can be simply stated with respect to the
global guidance of $\gG$, through
Equation \ref{eq:compatibility-d}.
\end{example}


\section{Related Work}
\label{sec:related}
Our work achieves \emph{protocol compositionality}
in MPST top-down systems, namely those
systems \emph{where the communication protocol
is explicitly described} as a global type
and, subsequently, from the projection of it,
local types are obtained for implementation.
We organise this section as a progressive discussion
of related work, with respect to our paper,
from more distant to closer.


\myparagraph{MPST Alternatives to the Top-Down Approach.}
Since their first appearance \cite{Honda2008Multiparty},
MPST have evolved into a variety of
frameworks
for the specification and
the verification of concurrency,
often beyond the original
top-down approach.
The works of
\citet{10.1007/978-3-642-32940-1_17}
and \citet{DenielouYoshida2013}
explicitly give algorithms
for the synthesis
of global types from 
communicating finite-state machines, while
\citet{LTY2015} propose
a similar method to build
graphical choreographies,
expressed as global graphs.
\citet{SY2019} 
develop a framework
where global types are not necessary,
relying instead on
model- and type-checking techniques
for verifying
safety properties of
collections of local types.
The advantage of such approaches 
is that they offer analysis
for pre-existent systems.
However, before the need to
hierarchically design
a new communicating system,
the top-down approach
enables a
high-level specification of the system
that guarantees safe interactions
of distributed implementations.
The top-down approach
has seen a variety of tools and implementations, e.g.,
\cite{10.1007/978-3-642-19056-8_4,YHNN13,FeatherweightScribble,nuscr,gheri_et_al:LIPIcs.ECOOP.2022.8,zooid,CFMPITP:2021,10.1145/2429069.2429101}
and it has been investigated beyond MPST
\cite{BLT-CA,chor-prog}.
Recent research
(e.g., \cite{GHH2021,lagaillardie_et_al:LIPIcs.ECOOP.2022.4,cledou_et_al:LIPIcs.ECOOP.2022.27,10.1145/3547638})
has kept exploring the
possibilities offered
by an explicit design of systems
through protocol specification.
Our work adds
compositionality
to top-down protocol
specification.


\myparagraph{Protocol Flexibility and Modularity beyond MPST.}
\citet{10.1007/978-3-642-00590-9_21,10.1007/978-3-662-43376-8_10}
do not
address protocol compositionality,
but, in defining conversation types,
the authors combine global and local constructs
in the same syntax, for
a flexible specification: 
messages can be scheduled,
while participants can be at first
left unspecified, thus
allowing for interleaving of sessions.
In the context of reactive programming,
\citet{CMTV:2018,SGTV:2020}
propose a technique for modular design:
a communicating system is specified in terms of components,
each (\emph{composite}) component contains a
choreography (or, \emph{protocol}) and for each role
in the protocol a new implementation (component) is specified.
Intuitively speaking, each component comes with
an input/output interface allowed by the protocol,
so to keep track of data-flow dependencies.
\citet{compchor} develop a compositional
technique for choreographic
programming: the specification of partial choreographies
is allowed, i.e., the implementation of some of the
roles 
can be left unspecified.
These roles can be implemented by a different choreography
at a later time; 
\emph{compatibility} is achieved through an additional typing
relation on choreographies,
which relies both on global and local types.
All the above work modifies the essence of
the protocol structure. Our theory differs
from it, first, because
our compatibility condition
(Equation \ref{eq:compatibility})
relies on projection,
instead of dual input/output matching.
Then, compositionality based on
hybrid types retains the
simplicity of traditional MPST
protocol structure:
the result of composition
is a well-formed global type
and no new semantics need to be developed.
Thanks to such semantics preservation 
and our novel compatibility,
our techniques are general enough
to be applied to traditional MPST
(\S\ref{sec:hybrid} and \S\ref{sec:comp}),
as well as to 
more expressive specifications (\S\ref{subsec:del}).

\myparagraph{Modular Global Types through Nesting.}
Global types are choreographic objects
and, hence, originally \cite{Honda2008Multiparty}
intended as 
standalone entities.
Given their monolithic nature, techniques
for making them more flexible have been
proposed very early on; e.g., \citet{nested} describe 
a methodology for nesting global types,
via calls to a subprotocol from a parent protocol.
\citet{aspects} propose an alternative
approach to nesting protocols,
by extending the syntax of
global and local types with
\emph{aspects} \cite{AOP}. 
\citet{DBLP:journals/fmsd/DemangeonHHNY15}
also explore nesting techniques for global types,
and apply these to extend the Scribble
protocol description language
\cite{10.1007/978-3-642-19056-8_4,FeatherweightScribble,nuscr}
with interruptible interactions.
Our work does not establish a parent/offspring
relation, 
but it explores 
direct composition of
subprotocols 
treated as peers.

\myparagraph{State of The Art of Direct Composition of MPST Protocols.}
To the best of
our knowledge, only \citet{BLTDezani} and \citet{smdg:facs21}
investigate the direct composition 
of protocols
for MPST, so that
\emph{the result of composition is a well-formed global type}.
Our development differs from this work, in primis, because of its
semantics preservation: 
in our theory, global types are exactly as in
traditional MPST theories (e.g., \cite{DenielouYoshida2013});
thus, our compositionality is
immediately compatible with those
and benefits from their semantics results.
Furthermore, our theory
can compose \emph{more than two} protocols
(missing in \cite{BLTDezani}) and captures the full
expressiveness of MPST, including \emph{parallel composition
and recursion} (missing in \cite{smdg:facs21}).
Ultimately, our compatibility 
(Equation \ref{eq:compatibility})
\emph{overcomes the limitations of dual input/output matching},
on which both \citet{BLTDezani} and \citet{smdg:facs21} rely:
our theory is general enough
to be applied to more sophisticated
MPST systems
(\S\ref{subsec:del}, Example \ref{ex:comp-io}).
Below we expand on the significance of our contribution,
with respect to this related work, with detailed examples.

In \cite{BLTDezani}, the composition of
\emph{two} global types
is achieved with
\emph{gateways}
\cite{gateways1,gateways2,BLT-isola2020}: two participants,
one for each subprotocol (global type), are selected as
forwarders (gateways) for communicating with the other;
if the subprotocols
are compatible, with respect to the
gateway choice, they can be composed
into a more general global type.
The central difference between our design and \cite{BLTDezani}
is the ``interface'' through which subprotocols
communicate. 
%
Instead of gateways, hybrid types use local
constructs for inter-component interactions and,
instead of a dual input/output matching, they
rely on 
projection for compatibility.
Thus, we
can safely combine two or more protocols at once.
%
Concretely, we consider the
following example involving three subprotocols,
described
as global types with gateways.
\\
\begin{small}
\centerline{$
\begin{array}{c}
  \dgt{\G_1} = \msg {\p} {\phtwo} \dgt{\elllbl. \msg {\p} \q \elllbl. \msg {\phthree}\q\elllbl.}\gend
  \qquad
   \dgt{\G_2} = \msg {\pkone} {\pr} \dgt{\elllbl. \msg {\ps} \pr \elllbl. \msg {\pr} {\pkthree}\elllbl.}\gend\\
   \dgt{\G_3} = \msg {\pt} {\pu} \dgt{\elllbl. \msg {\pu} {\plone} \elllbl. \msg {\pltwo}\pu\elllbl.}\gend
\end{array}
$}\\
\end{small}
We identify $(\phtwo,\pkone)$, $(\phthree,\plone)$,
and $(\pkthree,\pltwo)$ as gateways;
we choose the first one and compose.\\
\begin{small}
\centerline{$
\begin{array}{l}
 \dgt{\G_1}^{\phtwo\leftrightarrow\pkone} \dgt{\G_2} = \msg {\p} {\phtwo} \dgt{
  \elllbl.\msg{\phtwo}{\pkone}\elllbl.\msg {\pkone} {\pr}\elllbl.
\msg {\p} \q \elllbl. \msg {\phthree}\q \elllbl.\msg {\ps} \pr \elllbl. \msg {\pr} {\pkthree}\elllbl.\gend}
\end{array}
$}
\end{small}
The next steps of composition should happen between $\phthree$ and $\plone$,
and between $\pkthree$ and $\pltwo$.
Applying again gateway-composition does not work:
after a first step (e.g., using $\phthree$ and $\plone$) we would
be left with a single global type, i.e.,
no two types left to compose
(e.g., using $\pkthree$ and $\pltwo$). While we
could try 
simultaneous composition for more
than one pair of gateways at once, the simple
example in \cite{BLTDezani}, Section $11$, shows how this extension is
unsound and could lead to deadlocked systems.
Another 
na\"ive attempt to improve on gateway
composition is the following:
we force $\dgt{\G_1}$ and $\dgt{\G_2}$
to have the same participant for communicating with $\dgt{\G_3}$
(
relaxing the condition from \cite{BLTDezani},
where participants in two composing global types must be distinct):\\
\begin{small}
\centerline{$
\begin{array}{c}
   \dgt{\G'_1} = \msg {\p} {\ph} \dgt{\elllbl. \msg {\pl} \p\elllbl.}\gend
   \qquad 
   \dgt{\G'_2} = \msg {\pk} {\q} \dgt{\elllbl. \msg {\q} \pl \elllbl.}\gend\\
   \dgt{\G'_3} = \msg {\pt} {\pr} \dgt{\elllbl. \msg {\pr} \pt \elllbl.}\gend
\end{array}
$}\\
\end{small}
The communication is supposed to happen between $\ph$ and $\pk$,
and between $\pl$ and $\pt$. If we allow non-distinct participants
in distinct global types, however, it is not immediate
to achieve a well-defined gateways compositionality.
For example, the order in which we compose plays a key role:
while $(\dgt{\G_2}^{\pk\leftrightarrow\ph}\dgt{\G_1})^{\pl\leftrightarrow\pt}\dgt{\G_3}$
is well defined,
$(\dgt{\G_1}^{\ph\leftrightarrow\pk}\dgt{\G_2})^{\pl\leftrightarrow\pt}\dgt{\G_3}$
is not. Also, associativity would not hold:
we cannot compose $\dgt{\G_3}$ with any
of $\dgt{\G_1}$ and $\dgt{\G_2}$ and then finish
composing with the remaining one.

We specify the system above in terms of hybrid types:
in place of gateways,
we use local constructs for inter-component
communication (highlighted),
and we give the type $\gG$ for compatibility.
\\
\begin{small}
\centerline{$
\begin{array}{c}
  \hTone = \hla{\hsnd {\p} {\pr}} \dht{\elllbl. \hmsg {\p} \q \elllbl. \hlb{\hrcv {\pu} \q}}\dht{\elllbl.\hend}
  \qquad 
  \hTtwo = \hla{\hrcv {\p} {\pr}} \dht{\elllbl. \hmsg {\ps} \pr \elllbl. \hlc{\hsnd {\pr} {\pu}}}\dht{\elllbl.\hend}\\
  \hTthree = \hmsg {\pt} {\pu} \dht{\elllbl. \hlb{\hsnd {\pu} {\q}} \elllbl. \hlc{\hrcv \pr {\pu}}}\dht{\elllbl.\hend}
  \qquad \gG = \hla{\hmsg {\p} {\pr}} \dht{\elllbl. \hlb{\hmsg {\pu} \q} \elllbl. \hlc{\hmsg {\pr} \pu}}\dht{\elllbl.\hend}
\end{array}
$}
\end{small}
For all components, compatibility \ref{eq:compatibility} holds
and our theory can compose the three protocols 
into a well-formed global type
(Corollary \ref{cor:dist-spec}),  
thus overcoming the \emph{binary} limitation of gateways.


\citet{smdg:facs21}, develop
a binary compatibility
relation that
partially improves on
\cite{BLTDezani}: 
they can compose
more than two 
types,
but put severe restrictions on the syntax.
Global types in \cite{smdg:facs21}
are inductive, but
with \emph{no recursion and no parallel construct},
thus making our work (and traditional MPST in general)
strictly more expressive.
The key role of parallel composition
for an inductive syntax
is discussed in \cite{GHH2021}
(Example 13 and following paragraph)
and in \S\ref{sec:parallel}.
Recursion is paramount for expressiveness in MPST
\cite{Honda2008Multiparty,HYC2016,CDPY2015,verygentle}:
it allows typing processes with loops
and it is
omnipresent in real-world protocols
(e.g., see \S\ref{subsec:oauth}
and \S\ref{subsec:oauth2}).
As for semantics,
recursion allows for infinite executions and is
a key element in the correspondence between local
types and communicating finite-state machines
\cite{DenielouYoshida2013}, on which
practical implementations
of the popular Scribble protocol language
are based
\cite{10.1007/978-3-642-19056-8_4,YHNN13,
HY2016,FeatherweightScribble,nuscr}.
Also, recursion (in combination with branching)
is among the most delicate aspects of MPST
(see, e.g., Observation 3 in \cite{GHH2021}
or Definition \ref{def:guarded-a}
and the following well-formedness lemmas
in Appendix \ref{subsec:appendix1}\inApp).
E.g., the recursive types
$\hTone=\dht{\mu\hX.}\hmsg\p\q\dht{\lblFmt{\ell_1}.\hsnd\p\pr\lblFmt{\ell}.\hX}$ and
$\hTtwo=\dht{\mu\hX.}\hrcv\p\pr\dht{\lblFmt{\ell}.\hmsg\ps\pr\lblFmt{\ell_2}.\hX}$,
are composable both with gateways \cite{BLTDezani}
and with our theory
($\gG=\hrec \hX {\hmsg\p\pr\lblFmt{\ell}.\hX}$
for compatibility), but they cannot be expressed 
in \cite{smdg:facs21}.

In summary, our theory allows composing \emph{two or more}
subprotocols 
into a well-formed global type, while
retaining \emph{full expressiveness} of MPST, thus improving
on the state of the art \cite{BLTDezani, smdg:facs21}.
Furthermore, previous work
focuses on dual input/output matching,
which makes it inapplicable to more expressive MPST systems,
where instead, to the best of our knowledge,
our novel compatibility \ref{eq:compatibility} is the first notion
to succeed (see \S\ref{subsec:del}, Example \ref{ex:comp-io}).
%
We observe that, as it happens
for the extension of
binary session types to multiparty session types,
when relying on projection instead of duality,
we commit to the specification of one more global
object (what we called $\gG$), but, in return,
we obtain full multiparty compatibility.




\section{Future Work}
\label{sec:future}

The first envisioned application for
our compositionality theory
is its integration with
practical protocol design
languages, such as
Zooid \cite{zooid} or Scribble
\cite{10.1007/978-3-642-19056-8_4,YHNN13,nuscr}.
In what follows we
briefly describe
how this integration
can be realised.

In the past ten years,
Scribble has been employed as
the protocol language
of multiple toolchains,
supporting different
programming languages and
integrating a variety of
expressive features
\cite{10.1007/978-3-642-19056-8_4,YHNN13,FeatherweightScribble,nuscr,HY2017,SDHY2017,NHYA2018,CHJNY19,MFYZ21,ZFHNY2020,gheri_et_al:LIPIcs.ECOOP.2022.8}.
Independently of the
specific implementation,
the Scribble toolchain is generally
designed as follows.
\begin{enumerate}
\item The designer specifies the communication
protocol (a global type $\G$) in Scribble.
\item $\G$ is projected onto local types
$\dlt{\lT_1},\dots,\dlt{\lT_n}$---or,
equivalently, their representation
as CFSMs \cite{DenielouYoshida2013}.
\item From local types, APIs for
the distributed implementation
of all participants
are generated
(following the approach of \citet{HY2016}).
\end{enumerate}
Through API implementation,
the communication
for the multiparty
system is MPST certified
and does not get stuck
(semantic guarantees hold).
Thanks to its semantics-preserving
features and its backwards compatibility with
existing MPST systems,
our compositionality framework can be
integrated with Scribble toolchains,
with minimal effort. First,
we will extend
the Scribble protocol design language
to the syntax of hybrid types
(Definition \ref{def:hybrid-types})---of which
global types are 
a particular case---with
constructs $\hsnd\p\q$ and $\hrcv\p\q$.
Then, we will implement the localiser
(Definition \ref{def:loc}) and extend
projection to Definition \ref{def:proj};
here compatibility with previous
implementations is guaranteed by the
considerations in Remark \ref{rem:proj}:
Definition \ref{def:proj} generalises
traditional MPST projection
(Definition  \ref{def:proj-old}).
Finally, we will implement checks
for $\isglobal\gG$ and compatibility
\ref{eq:compatibility}
($\projb {\gG} {E} = \loc{}\hTE$).
With these simple changes,
we 
will
endow Scribble
with compositionality, thus enabling
distributed protocol specification.
``Compositional Scribble''
will look as follows.
\begin{itemize}
\item[$(1^{\text{c}})$] Multiple designers give $\hTone,\dots,\hTN$
in a distributed fashion and a chief designer specifies
$\gG$, as \emph{hybrid types}.
\item[$(2^{\text{c}})$] Internal checks are performed for
$\isglobal\gG$ and compatibility \ref{eq:compatibility}.
\item[$(3^{\text{c}})$] Each $\hTi$ is projected onto local types
$\dlt{\lT^i_1},\dots,\dlt{\lT^i_{n_i}}$---or,
equivalently, their representation
as CFSMs \cite{DenielouYoshida2013}.
\item[$(4^{\text{c}})$] From local types, APIs for
the distributed implementation
of all participants
are generated.
\end{itemize}
We observe that, thanks to our theory, the
checks in $(2^{\text{c}})$ are
enough to guarantee the existence of a well-formed global
type for the whole system,
without the need for explicitly building back such type
(see also Corollary \ref{cor:dist-spec}
and Remark \ref{rem:app}).
Moreover, since projection is preserved
(Theorem \ref{thm:proj-comp} and Corollary \ref{cor:dist-spec}),
local types are the same as in traditional MPST:
not only do semantic guarantees still hold,
but so does the correspondence between local
types and CFSMs in \cite{DenielouYoshida2013}.
%
%

With respect to traditional Scribble,
nothing changes for the user, apart from
the added functionality of distributed
protocol specification: the safe distributed
implementation of all participants
is still enabled, but, now, also protocols
can be specified in a distributed fashion,
in terms of their
components (as hybrid types $\hTone,\dots,\hTN$).


Beyond its integration with existing
protocol design languages, we
plan to build on this work in
different directions.
In Section \ref{subsec:del},
we have shown the potential
of our approach to compositionality, by adapting it
to an MPST formalism that extends
the traditional syntax of global types
to include the advanced features
of delegation and explicit connections.
Similarly, beyond our
core compositionality for
global types,
we envision
future applications
to the wide
variety of MPST systems
(e.g., featuring fault tolerance \cite{10.1145/3485501},
timed specification \cite{BYY2014},
refinements \cite{ZFHNY2020},
cost awareness \cite{10.1145/3428223}, or
exception handling \cite{lagaillardie_et_al:LIPIcs.ECOOP.2022.4})
and, orthogonally, future extensions that 
add flexibility to compositionality
itself (e.g., by factoring in
renaming mechanisms for participants \cite{10.1145/3547638}).
Another promising
perspective is the application of our 
techniques beyond MPST, to other
protocol-design formalisms
based on projection, e.g.,
\emph{choreography automata} \cite{BLT-CA,gheri_et_al:LIPIcs.ECOOP.2022.8} or
\emph{choreographic programming} \cite{chor-prog}.



\section{Conclusion}
\label{sec:conclusion}


We have developed a
``compositionality-through-projection''
technique
that allows the
distributed specification
of MPST protocols,
in terms of \emph{hybrid types}
(Definition \ref{def:hybrid-types}).
Our work neatly
improves on the state of the art,
by allowing for composition
of \emph{more than two protocols},
while retaining the \emph{full
expressiveness} of global types.
Our results
(Corollaries \ref{cor:dist-spec} and \ref{cor:dist-spec2})
guarantee correctness and make
our theory compatible
with existing MPST systems (\emph{semantics preservation}). 
Our \emph{novel compatibility} relation
(Equation \ref{eq:compatibility}), based on
generalised projection
and localiser
(Definitions \ref{def:proj} and \ref{def:loc}),
overcomes the limitations of dual input/output matching
and it is general enough
to capture extensions beyond traditional MPST
(e.g., to delegation and explicit connections
\S\ref{subsec:del}).




\begin{acks}
  We thank
  Franco Barbanera, Mariangiola Dezani-Ciancaglini,
  Francisco Ferreira, and Franco Raimondi for the
  in-depth conversations and useful comments on
  the preliminary versions of the paper.
  This work is
  supported by UKRI/EPSRC, references:
  EP/T006544/2, EP/K011715/1,
  EP/K034413/1, EP/L00058X/1, EP/N027833/2,
  EP/N028201/1, EP/T014709/2, EP/V000462/1,
  EP/X015955/1, and NCSC/EPSRC VeTSS,
  and
  EU HORIZON EUROPE
  Research and Innovation Programme,
  grant agreement 101093006 (TaRDIS).
\end{acks}

\bibliography{fullversion}

@InProceedings{Bettini2008Global,
  author    = {Bettini, Lorenzo
               and Coppo, Mario
               and D'Antoni, Loris
               and De Luca, Marco
               and Dezani-Ciancaglini, Mariangiola
               and Yoshida, Nobuko},
  editor    = {van Breugel, Franck and Chechik, Marsha},
  title     = {Global Progress in Dynamically Interleaved Multiparty Sessions},
  booktitle = {CONCUR 2008 - Concurrency Theory},
  year      = {2008},
  publisher = {Springer},
  address   = {Berlin, Heidelberg},
  pages     = {418--433},
  doi = {https://doi.org/10.1007/978-3-540-85361-9_33},
  isbn      = {978-3-540-85361-9}
}

@inproceedings{CDPY2015,
  volume    = 9104,
  pages     = "146--178",
  publisher = "Springer",
  year      = 2015,
  author    = "Mario Coppo and Mariangiola Dezani-Ciancaglini and Luca Padovani and Nobuko Yoshida",
  title     = {{A Gentle Introduction to Multiparty Asynchronous Session Types}},
  booktitle = "15th International School on Formal Methods for the Design of Computer, Communication and Software Systems: Multicore Programming",
  series    = "LNCS",
  doi = {10.1007/978-3-319-18941-3_4}
}

@InProceedings{Honda:1998,
  author={Kohei Honda and Vasco T. Vasconcelos and Makoto Kubo},
  editor={Hankin, Chris},
  title={Language primitives and type discipline for structured communication-based programming},
  booktitle={Programming Languages and Systems},
  year={1998},
  doi = {10.1007/BFb0053567},
  publisher={Springer Berlin Heidelberg},
  address={Berlin, Heidelberg},
  pages={122--138},
}

@inproceedings{Honda2008Multiparty,
  author    = {Honda, Kohei and Yoshida, Nobuko and Carbone, Marco},
  title     = {Multiparty Asynchronous Session Types},
  booktitle = {Proc. of 35th Symp. on Princ. of Prog. Lang.},
  series    = {POPL '08},
  year      = {2008},
  location  = {San Francisco, California, USA},
  pages     = {273--284},
  numpages  = {12},
  doi = {10.1145/1328897.1328472},
  publisher = {ACM},
  address   = {New York, NY, USA}
}

@string{acm = {ACM Press}}

@inproceedings{YHNN13,
  volume    = 8358,
  series    = "Lecture Notes in Computer Science",
  author    = "Nobuko Yoshida and
               Raymond Hu and
               Rumyana Neykova and
               Nicholas Ng",
  publisher = "Springer",
  editor    = {Mart{\'{\i}}n Abadi and
               Alberto Lluch{-}Lafuente},
  booktitle = {Trustworthy Global Computing - 8th International Symposium, {TGC}
               2013, Buenos Aires, Argentina},
  pages     = "22--41",
  title     = "The Scribble Protocol Language",
  year      = "2013",
  doi = "https://doi.org/10.1007/978-3-319-05119-2_3",
}

@article{HYC2016,
  url       = "http://doi.acm.org/10.1145/2827695",
  doi       = "10.1145/2827695",
  volume    = 63,
  pages     = "9:1--9:67",
  author    = "Kohei Honda and
               Nobuko Yoshida and
               Marco Carbone",
  title     = "Multiparty Asynchronous Session Types",
  journal   = {J. {ACM}},
  year      = 2016,
  bibsource = "dblp computer science bibliography, http://dblp.org",
  timestamp = "Fri, 09 Dec 2016 13:15:44 +0100",
  number    = 1,
  biburl    = "http://dblp.uni-trier.de/rec/bib/journals/jacm/HondaYC16",
}

@inproceedings{Scalas:2019,
 author = {Alceste Scalas and Nobuko Yoshida and Elias Benussi},
 title = {Verifying Message-passing Programs with Dependent Behavioural Types},
 booktitle = {Proceedings of the 40th ACM SIGPLAN Conference on Programming Language Design and Implementation},
 series = {PLDI 2019},
 year = {2019},
 isbn = {978-1-4503-6712-7},
 location = {Phoenix, AZ, USA},
 pages = {502--516},
 numpages = {15},
 doi = {10.1145/3314221.3322484},
 acmid = {3322484},
 publisher = {ACM},
 address = {New York, NY, USA},
}

@InProceedings{HU07TYPE-SAFE,
author="Hu, Raymond
and Yoshida, Nobuko
and Honda, Kohei",
editor="Vitek, Jan",
title="Session-Based Distributed Programming in Java",
booktitle="ECOOP 2008 -- Object-Oriented Programming",
year="2008",
publisher="Springer Berlin Heidelberg",
address="Berlin, Heidelberg",
doi = "https://doi.org/10.1007/978-3-540-70592-5_22",
pages="516--541"
}

@inproceedings{HY2017,
  author = {Raymond Hu and Nobuko Yoshida},
  title = {Explicit Connection Actions in Multiparty Session Types},
  booktitle = {FASE},
  series = {LNCS},
  volume = {10202},
  pages = {116--133},
  doi = "https://doi.org/10.1007/978-3-662-54494-5_7",
  year = 2017
}

@inproceedings{SDHY2017,
  author = {Alceste Scalas and Ornela Dardha and Raymond Hu and Nobuko Yoshida},
  title = {{A Linear Decomposition of Multiparty Sessions for Safe Distributed Programming}},
  booktitle = {{ECOOP}},
  optseries = {LIPIcs},
  optvolume = {74},
  optpages = {24:1--24:31},
  optpublisher = {Schloss Dagstuhl},
  doi = "10.4230/LIPIcs.ECOOP.2017.24",
  year = 2017
}

@inproceedings{HY2016,
  title     = {Hybrid Session Verification Through Endpoint {API} Generation},
  year      = 2016,
  timestamp = "Thu, 25 May 2017 00:41:44 +0200",
  volume    = 9633,
  editor    = "Perdita Stevens and
               Andrzej Wasowski",
  author    = "Raymond Hu and
               Nobuko Yoshida",
  booktitle = {Fundamental Approaches to Software Engineering - 19th International
               Conference, {FASE} 2016,Eindhoven, The Netherlands},
  series    = "Lecture Notes in Computer Science",
  pages     = "401--418",
  publisher = "Springer",
  doi       = "10.1007/978-3-662-49665-7_24",
}

@article{DBLP:journals/fmsd/DemangeonHHNY15,
  author    = {Romain Demangeon and
               Kohei Honda and
               Raymond Hu and
               Rumyana Neykova and
               Nobuko Yoshida},
  title     = {Practical interruptible conversations: distributed dynamic verification with multiparty session types and {P}ython},
  journal   = {FMSD},
  volume    = {46},
  number    = {3},
  pages     = {197--225},
  year      = {2015},
  doi       = {10.1007/s10703-014-0218-8},
  url       = {http://dx.doi.org/10.1007/s10703-014-0218-8},
  timestamp = {Tue, 08 Sep 2015 14:34:40 +0200},
  biburl    = {http://dblp.uni-trier.de/rec/bib/journals/fmsd/DemangeonHHNY15},
  bibsource = {dblp computer science bibliography, http://dblp.org}
}

@inproceedings{BYY2014,
  author = {Laura Bocchi and Weizhen Yang and Nobuko Yoshida},
  title = {{Timed Multiparty Session Types}},
  booktitle = {25th International Conference on Concurrency Theory},
  series = {LNCS},
  volume = {8704},
  pages = {419--434},
  doi = {10.1007/978-3-662-44584-6_29},
  publisher = {Springer},
  year = 2014
}

@article{CHJNY19,
 author = {Castro, David and Hu, Raymond and Jongmans, Sung-Shik and Ng, Nicholas and Yoshida, Nobuko},
 title = {Distributed Programming Using Role-parametric Session Types in Go: Statically-typed Endpoint APIs for Dynamically-instantiated Communication Structures},
 journal = {Proc. ACM Program. Lang.},
 issue_date = {January 2019},
 volume = {3},
 number = {POPL},
 month = jan,
 year = {2019},
 issn = {2475-1421},
 pages = {29:1--29:30},
 articleno = {29},
 numpages = {30},
 url = {http://doi.acm.org/10.1145/3290342},
 doi = {10.1145/3290342},
 acmid = {3290342},
 publisher = {ACM},
 address = {New York, NY, USA},
 keywords = {Go, distributed programming, indexed roles, multiparty session types},
}

@inproceedings{NHYA2018,
  author = {Rumyana Neykova and Raymond Hu and Nobuko Yoshida and Fahd Abdeljallal},
  title = {{A Session Type Provider: Compile-time API Generation for Distributed Protocols with Interaction Refinements in F\#}},
  booktitle = {27th International Conference on Compiler Construction},
  pages = {128--138},
  publisher = {ACM},
  doi = "10.1145/3178372.3179495",
  year = 2018
}

@InProceedings{DenielouYoshida2013,
author="Deni{\'e}lou, Pierre-Malo
and Yoshida, Nobuko",
editor="Fomin, Fedor V.
and Freivalds, R{\={u}}si{\c{n}}{\v{s}}
and Kwiatkowska, Marta
and Peleg, David",
title="Multiparty Compatibility in Communicating Automata: Characterisation and Synthesis of Global Session Types",
booktitle="Automata, Languages, and Programming",
year="2013",
doi = {10.1007/978-3-642-39212-2_18},
publisher="Springer Berlin Heidelberg",
address="Berlin, Heidelberg",
pages="174--186",
abstract="Multiparty session types are a type system that can ensure the safety and liveness of distributed peers via the global specification of their interactions. To construct a global specification from a set of distributed uncontrolled behaviours, this paper explores the problem of fully characterising multiparty session types in terms of communicating automata. We equip global and local session types with labelled transition systems (LTSs) that faithfully represent asynchronous communications through unbounded buffered channels. Using the equivalence between the two LTSs, we identify a class of communicating automata that exactly correspond to the projected local types. We exhibit an algorithm to synthesise a global type from a collection of communicating automata. The key property of our findings is the notion of multiparty compatibility which non-trivially extends the duality condition for binary session types.",
isbn="978-3-642-39212-2"
}

@inproceedings{SY2019,
  author = {Alceste Scalas and Nobuko Yoshida},
  title = {{Less Is More: Multiparty Session Types Revisited}},
  booktitle = {46th ACM SIGPLAN Symposium on Principles of Programming Languages},
  volume = {3},
  pages = {30:1--30:29},
  doi = {10.1145/3290343},
  publisher = {ACM},
  year = 2019
}

@Inbook{FeatherweightScribble,
  author="Neykova, Rumyana
  and Yoshida, Nobuko",
  editor="Boreale, Michele
  and Corradini, Flavio
  and Loreti, Michele
  and Pugliese, Rosario",
  title="Featherweight Scribble",
  bookTitle="Models, Languages, and Tools for Concurrent and Distributed Programming: Essays Dedicated to Rocco De Nicola on the Occasion of His 65th Birthday",
  series = {LNCS},
  volume = {11665},
  year="2019",
  publisher="Springer",
  address="Cham",
  pages="236--259",
  abstract="This paper gives a formal definition of the protocol specification language Scribble. In collaboration with industry, Scribble has been developed as an engineering incarnation of the formal multiparty session types. In its ten years of development, Scribble has been applied and extended in manyfold ways as to verify and ensure correctness of concurrent and distributed systems, e.g. type checking, runtime monitoring, code generation, and synthesis. This paper introduces a core version of Scribble, Featherweight Scribble. We define the semantics of Scribble by translation to communicating automata and show a behavioural-preserving encoding of Scribble protocols to multiparty session type.",
  isbn="978-3-030-21485-2",
  doi="10.1007/978-3-030-21485-2_14",
}

@inproceedings{LTY2015,
  author = {Julien Lange and Emilio Tuosto and Nobuko Yoshida},
  title = {{From communicating machines to graphical choreographies}},
  booktitle = {42nd ACM SIGPLAN-SIGACT Symposium on Principles of Programming Languages},
  pages = {221--232},
  doi = {10.1145/2676726.2676964},
  publisher = {ACM},
  year = 2015
}

@InProceedings{10.1007/978-3-642-19056-8_4,
author="Honda, Kohei
and Mukhamedov, Aybek
and Brown, Gary
and Chen, Tzu-Chun
and Yoshida, Nobuko",
editor="Natarajan, Raja
and Ojo, Adegboyega",
title="Scribbling Interactions with a Formal Foundation",
booktitle="Distributed Computing and Internet Technology",
year="2011",
doi = {10.1007/978-3-642-19056-8_4},
publisher="Springer Berlin Heidelberg",
address="Berlin, Heidelberg",
pages="55--75",
abstract="In this paper we discuss our ongoing endeavour to apply notations and algorithms based on the $\pi$-calculus and its theories for the development of large-scale distributed systems. The execution of a large-scale distributed system consists of many structured conversations (or sessions) whose protocols can be clearly and accurately specified using a theory of types for the $\pi$-calculus, called session types. The proposed methodology promotes a formally founded, and highly structured, development framework for modelling and building distributed applications, from high-level models to design and implementation to static checking to runtime validation. At the centre of this methodology is a formal description language for representing protocols for interactions, called Scribble. We illustrate the usage and theoretical basis of this language through use cases from different application domains.",
isbn="978-3-642-19056-8"
}

@article{ZFHNY2020,
author = {Zhou, Fangyi and Ferreira, Francisco and Hu, Raymond and Neykova, Rumyana and Yoshida, Nobuko},
title = {Statically Verified Refinements for Multiparty Protocols},
year = {2020},
issue_date = {November 2020},
publisher = {Association for Computing Machinery},
address = {New York, NY, USA},
volume = {4},
number = {OOPSLA},
url = {https://doi.org/10.1145/3428216},
doi = {10.1145/3428216},
journal = {Proc. ACM Program. Lang.},
month = {nov},
articleno = {148},
numpages = {30},
keywords = {Multiparty Session Types (MPST), Refinement Types, Code Generation, Distributed Programming, F*}
}

@InProceedings{nested,
author="Demangeon, Romain
and Honda, Kohei",
editor="Koutny, Maciej
and Ulidowski, Irek",
title="Nested Protocols in Session Types",
booktitle="CONCUR 2012 -- Concurrency Theory",
year="2012",
publisher="Springer Berlin Heidelberg",
address="Berlin, Heidelberg",
pages="272--286",
doi = "https://doi.org/10.1007/978-3-642-32940-1_20",
isbn="978-3-642-32940-1"
}

@inproceedings{aspects,
author = {Tabareau, Nicolas and S\"{u}dholt, Mario and Tanter, \'{E}ric},
title = {Aspectual Session Types},
year = {2014},
isbn = {9781450327725},
publisher = {Association for Computing Machinery},
address = {New York, NY, USA},
url = {https://doi.org/10.1145/2577080.2577085},
doi = {10.1145/2577080.2577085},
booktitle = {Proceedings of the 13th International Conference on Modularity},
pages = {193–204},
numpages = {12},
keywords = {aspect-oriented programming, session types},
location = {Lugano, Switzerland},
series = {MODULARITY '14}
}

@InProceedings{BLT-isola2020,
author="Barbanera, Franco
and Lanese, Ivan
and Tuosto, Emilio",
editor="Margaria, Tiziana
and Steffen, Bernhard",
title="Composing Communicating Systems, Synchronously",
booktitle="Leveraging Applications of Formal Methods, Verification and Validation: Verification Principles",
year="2020",
publisher="Springer International Publishing",
address="Cham",
pages="39--59",
doi = "https://doi.org/10.1007/978-3-030-61362-4_3",
isbn="978-3-030-61362-4"
}

@InProceedings{BLT-CA,
author="Barbanera, Franco
and Lanese, Ivan
and Tuosto, Emilio",
editor="Bliudze, Simon
and Bocchi, Laura",
title="Choreography Automata",
booktitle="Coordination Models and Languages",
year="2020",
publisher="Springer International Publishing",
address="Cham",
pages="86--106",
doi = "https://doi.org/10.1007/978-3-030-50029-0_6",
isbn="978-3-030-50029-0"
}

@article{BLTDezani,
title = {Composition and decomposition of multiparty sessions},
journal = {Journal of Logical and Algebraic Methods in Programming},
volume = {119},
pages = {100620},
year = {2021},
issn = {2352-2208},
doi = {https://doi.org/10.1016/j.jlamp.2020.100620},
url = {https://www.sciencedirect.com/science/article/pii/S235222082030105X},
author = {Franco Barbanera and Mariangiola Dezani-Ciancaglini and Ivan Lanese and Emilio Tuosto}
}

@inproceedings{gateways1,
  author    = {Franco Barbanera and
               Ugo de'Liguoro and
               Rolf Hennicker},
  editor    = {Massimo Bartoletti and
               Sophia Knight},
  title     = {Global Types for Open Systems},
  booktitle = {Proceedings 11th Interaction and Concurrency Experience, {ICE} 2018,
               Madrid, Spain, June 20-21, 2018},
  series    = {{EPTCS}},
  volume    = {279},
  pages     = {4--20},
  year      = {2018},
  url       = {https://doi.org/10.4204/EPTCS.279.4},
  doi       = {10.4204/EPTCS.279.4},
  timestamp = {Mon, 03 Dec 2018 16:41:49 +0100},
  biburl    = {https://dblp.org/rec/journals/corr/abs-1810-02468.bib},
  bibsource = {dblp computer science bibliography, https://dblp.org}
}

@article{gateways2,
title = {Connecting open systems of communicating finite state machines},
journal = {Journal of Logical and Algebraic Methods in Programming},
volume = {109},
pages = {100476},
year = {2019},
issn = {2352-2208},
doi = {https://doi.org/10.1016/j.jlamp.2019.07.004},
url = {https://www.sciencedirect.com/science/article/pii/S2352220818301627},
author = {Franco Barbanera and Ugo de'Liguoro and Rolf Hennicker},
keywords = {Communicating finite state machine, Communicating system, Composition of open systems, Communication properties, Global type with interface roles}
}

@inproceedings{zooid,
author = {Castro-Perez, David and Ferreira, Francisco and Gheri, Lorenzo and Yoshida, Nobuko},
title = {Zooid: A DSL for Certified Multiparty Computation: From Mechanised Metatheory to Certified Multiparty Processes},
year = {2021},
isbn = {9781450383912},
publisher = {Association for Computing Machinery},
address = {New York, NY, USA},
url = {https://doi.org/10.1145/3453483.3454041},
doi = {10.1145/3453483.3454041},
booktitle = {Proceedings of the 42nd ACM SIGPLAN International Conference on Programming Language Design and Implementation},
pages = {237–251},
numpages = {15},
keywords = {Coq, liveness, protocol compliance, deadlock freedom, multiparty session types, mechanisation, concurrent processes},
location = {Virtual, Canada},
series = {PLDI 2021}
}

@InProceedings{CFMPITP:2021,
  author =	{Cruz-Filipe, Lu{\'\i}s and Montesi, Fabrizio and Peressotti, Marco},
  title =	{{Formalising a Turing-Complete Choreographic Language in Coq}},
  booktitle =	{12th International Conference on Interactive Theorem Proving (ITP 2021)},
  pages =	{15:1--15:18},
  series =	{Leibniz International Proceedings in Informatics (LIPIcs)},
  ISBN =	{978-3-95977-188-7},
  ISSN =	{1868-8969},
  year =	{2021},
  volume =	{193},
  editor =	{Cohen, Liron and Kaliszyk, Cezary},
  publisher =	{Schloss Dagstuhl -- Leibniz-Zentrum f{\"u}r Informatik},
  address =	{Dagstuhl, Germany},
  URL =		{https://drops.dagstuhl.de/opus/volltexte/2021/13910},
  URN =		{urn:nbn:de:0030-drops-139109},
  doi =		{10.4230/LIPIcs.ITP.2021.15},
  annote =	{Keywords: Choreographic Programming, Formalisation, Turing Completeness}
}

@InProceedings{AOP,
author="Kiczales, Gregor
and Lamping, John
and Mendhekar, Anurag
and Maeda, Chris
and Lopes, Cristina
and Loingtier, Jean-Marc
and Irwin, John",
editor="Ak{\c{s}}it, Mehmet
and Matsuoka, Satoshi",
title="Aspect-oriented programming",
booktitle="ECOOP'97 --- Object-Oriented Programming",
year="1997",
publisher="Springer Berlin Heidelberg",
address="Berlin, Heidelberg",
pages="220--242",
doi = "https://doi.org/10.1007/BFb0053381",
isbn="978-3-540-69127-3"
}

@misc{CMTV:2018,
  doi = {10.48550/ARXIV.1801.08107},
  url = {https://arxiv.org/abs/1801.08107},
  author = {Carbone, Marco and Montesi, Fabrizio and Vieira, Hugo Torres},
  keywords = {Programming Languages (cs.PL), FOS: Computer and information sciences, FOS: Computer and information sciences},
  title = {Choreographies for Reactive Programming},
  publisher = {arXiv},
  year = {2018},
  copyright = {arXiv.org perpetual, non-exclusive license}
}

@inproceedings{SGTV:2020,
  author    = {Zorica Savanovic and
               Letterio Galletta and
               Hugo Torres Vieira},
  editor    = {Julien Lange and
               Anastasia Mavridou and
               Larisa Safina and
               Alceste Scalas},
  title     = {A type language for message passing component-based systems},
  booktitle = {Proceedings 13th Interaction and Concurrency Experience, {ICE} 2020,
               Online, 19 June 2020},
  series    = {{EPTCS}},
  volume    = {324},
  pages     = {3--24},
  year      = {2020},
  url       = {https://doi.org/10.4204/EPTCS.324.3},
  doi       = {10.4204/EPTCS.324.3},
  timestamp = {Tue, 29 Dec 2020 18:21:24 +0100},
  biburl    = {https://dblp.org/rec/journals/corr/abs-2009-07989.bib},
  bibsource = {dblp computer science bibliography, https://dblp.org}
}

@InProceedings{compchor,
author="Montesi, Fabrizio
and Yoshida, Nobuko",
editor="D'Argenio, Pedro R.
and Melgratti, Hern{\'a}n",
title="Compositional Choreographies",
booktitle="CONCUR 2013 -- Concurrency Theory",
year="2013",
publisher="Springer Berlin Heidelberg",
address="Berlin, Heidelberg",
pages="425--439",
doi="https://doi.org/10.1007/978-3-642-40184-8_30",
isbn="978-3-642-40184-8"
}

@misc{oauth2,
	series =	{Request for Comments},
	number =	6749,
	howpublished =	{RFC 6749},
	publisher =	{RFC Editor},
	doi =		{10.17487/RFC6749},
	url =		{https://rfc-editor.org/rfc/rfc6749.txt},
        author =	{Dick Hardt},
	title =		{{The OAuth 2.0 Authorization Framework}},
	pagetotal =	76,
	year =		2012,
	month =		oct,
	abstract =	{The OAuth 2.0 authorization framework enables a third-party application to obtain limited access to an HTTP service, either on behalf of a resource owner by orchestrating an approval interaction between the resource owner and the HTTP service, or by allowing the third-party application to obtain access on its own behalf. This specification replaces and obsoletes the OAuth 1.0 protocol described in RFC 5849. {[}STANDARDS-TRACK{]}},
}

@InProceedings{horne:2020,
  author =	{Ross Horne},
  title =	{{Session Subtyping and Multiparty Compatibility Using Circular Sequents}},
  booktitle =	{31st International Conference on Concurrency Theory (CONCUR 2020)},
  pages =	{12:1--12:22},
  series =	{Leibniz International Proceedings in Informatics (LIPIcs)},
  ISBN =	{978-3-95977-160-3},
  ISSN =	{1868-8969},
  year =	{2020},
  volume =	{171},
  editor =	{Igor Konnov and Laura Kov{\'a}cs},
  publisher =	{Schloss Dagstuhl--Leibniz-Zentrum f{\"u}r Informatik},
  address =	{Dagstuhl, Germany},
  URL =		{https://drops.dagstuhl.de/opus/volltexte/2020/12824},
  URN =		{urn:nbn:de:0030-drops-128245},
  doi =		{10.4230/LIPIcs.CONCUR.2020.12},
  annote =	{Keywords: session types, subtyping, compatibility, linear logic, deadlock freedom}
}

@InProceedings{verygentle,
author="Yoshida, Nobuko
and Gheri, Lorenzo",
editor="Hung, Dang Van
and D{\textasciiacute}Souza, Meenakshi",
title="A Very Gentle Introduction to Multiparty Session Types",
booktitle="Distributed Computing and Internet Technology",
year="2020",
publisher="Springer International Publishing",
address="Cham",
pages="73--93",
doi="https://doi.org/10.1007/978-3-030-36987-3_5",
isbn="978-3-030-36987-3"
}

@InProceedings{nuscr,
author="Yoshida, Nobuko
and Zhou, Fangyi
and Ferreira, Francisco",
editor="Bampis, Evripidis
and Pagourtzis, Aris",
title="Communicating Finite State Machines and an Extensible Toolchain for Multiparty Session Types",
booktitle="Fundamentals of Computation Theory",
year="2021",
publisher="Springer International Publishing",
address="Cham",
pages="18--35",
doi="https://doi.org/10.1007/978-3-030-86593-1_2",
isbn="978-3-030-86593-1"
}

@phdthesis{chor-prog,
title = "Choreographic Programming",
author = "Fabrizio Montesi",
year = "2013",
language = "English",
url = "https://www.fabriziomontesi.com/files/choreographic_programming.pdf",
isbn = "978-87-7949-299-8"
}

@InProceedings{LangeYoshidaCAV19,
author="Lange, Julien
and Yoshida, Nobuko",
editor="Dillig, Isil
and Tasiran, Serdar",
title="Verifying Asynchronous Interactions via Communicating Session Automata",
booktitle="Computer Aided Verification",
year="2019",
publisher="Springer International Publishing",
address="Cham",
pages="97--117",
doi = "https://doi.org/10.1007/978-3-030-25540-4_6",
isbn="978-3-030-25540-4"
}

@INPROCEEDINGS{GHH2021,
  author={Glabbeek, Rob van and Höfner, Peter and Horne, Ross},
  booktitle={2021 36th Annual ACM/IEEE Symposium on Logic in Computer Science (LICS)},
  title={Assuming Just Enough Fairness to make Session Types Complete for Lock-freedom},
  year={2021},
  volume={},
  number={},
  pages={1-13},
  doi={10.1109/LICS52264.2021.9470531}}

@inproceedings{smdg:facs21,
Author = {Claude Stolze and Marino Miculan and Di Gianantonio, Pietro},
Booktitle = {FACS 2021 Conference Proceedings},
Publisher = {Springer},
Series = LNCS,
Title = {Composable Partial Multiparty Session Types},
Volume = 13077,
pdf = "http://users.dimi.uniud.it/~marino.miculan/Papers/FACS21.pdf",
doi = "10.1007/978-3-030-90636-8_3",
Year = 2021
}

@article{zooidfull,
  author    = {David Castro{-}Perez and
               Francisco Ferreira and
               Lorenzo Gheri and
               Nobuko Yoshida},
  title     = {Zooid: a {DSL} for Certified Multiparty Computation},
  journal   = {CoRR},
  volume    = {abs/2103.10269},
  year      = {2021},
  url       = {https://arxiv.org/abs/2103.10269},
  eprinttype = {arXiv},
  eprint    = {2103.10269},
  timestamp = {Wed, 24 Mar 2021 15:50:40 +0100},
  biburl    = {https://dblp.org/rec/journals/corr/abs-2103-10269.bib},
  bibsource = {dblp computer science bibliography, https://dblp.org}
}

@article{CASTELLANI2020128,
title = {Global types with internal delegation},
journal = {Theoretical Computer Science},
volume = {807},
pages = {128-153},
year = {2020},
note = {In memory of Maurice Nivat, a founding father of Theoretical Computer Science - Part II},
issn = {0304-3975},
doi = {https://doi.org/10.1016/j.tcs.2019.09.027},
url = {https://www.sciencedirect.com/science/article/pii/S030439751930578X},
author = {Ilaria Castellani and Mariangiola Dezani-Ciancaglini and Paola Giannini and Ross Horne},
keywords = {Communication-centric systems, Process calculi, Multiparty session types}
}

@book{del-book,
title = {Object-oriented concurrent programming},
author = {Yonezawa, A and Tokoro, M},
publisher={The {MIT} {Press}},
place = {United States},
year = {1986},
month = {1},
isbn = "978-0262240260"
}

@article{10.1145/83880.84528,
author = {Agha, Gul},
title = {Concurrent Object-Oriented Programming},
year = {1990},
issue_date = {Sept. 1990},
publisher = {Association for Computing Machinery},
address = {New York, NY, USA},
volume = {33},
number = {9},
issn = {0001-0782},
url = {https://doi.org/10.1145/83880.84528},
doi = {10.1145/83880.84528},
journal = {Commun. ACM},
month = {sep},
pages = {125–141},
numpages = {17}
}

@InProceedings{imai_et_al:LIPIcs:2020:13166,
  author =	{Keigo Imai and Rumyana Neykova and Nobuko Yoshida and Shoji Yuen},
  title =	{{Multiparty Session Programming With Global Protocol Combinators}},
  booktitle =	{34th European Conference on Object-Oriented Programming (ECOOP 2020)},
  pages =	{9:1--9:30},
  series =	{Leibniz International Proceedings in Informatics (LIPIcs)},
  ISBN =	{978-3-95977-154-2},
  ISSN =	{1868-8969},
  year =	{2020},
  volume =	{166},
  editor =	{Robert Hirschfeld and Tobias Pape},
  publisher =	{Schloss Dagstuhl--Leibniz-Zentrum f{\"u}r Informatik},
  address =	{Dagstuhl, Germany},
  URL =		{https://drops.dagstuhl.de/opus/volltexte/2020/13166},
  URN =		{urn:nbn:de:0030-drops-131662},
  doi =		{10.4230/LIPIcs.ECOOP.2020.9},
  annote =	{Keywords: Multiparty Session Types, Communication Protocol, Concurrent and Distributed Programming, OCaml}
}

@InProceedings{scalas_et_al:LIPIcs:2016:6115,
  author =	{Alceste Scalas and Nobuko Yoshida},
  title =	{{Lightweight Session Programming in Scala}},
  booktitle =	{30th European Conference on Object-Oriented Programming (ECOOP 2016)},
  pages =	{21:1--21:28},
  series =	{Leibniz International Proceedings in Informatics (LIPIcs)},
  ISBN =	{978-3-95977-014-9},
  ISSN =	{1868-8969},
  year =	{2016},
  volume =	{56},
  editor =	{Shriram Krishnamurthi and Benjamin S. Lerner},
  publisher =	{Schloss Dagstuhl--Leibniz-Zentrum fuer Informatik},
  address =	{Dagstuhl, Germany},
  URL =		{http://drops.dagstuhl.de/opus/volltexte/2016/6115},
  URN =		{urn:nbn:de:0030-drops-61156},
  doi =		{10.4230/LIPIcs.ECOOP.2016.21},
  annote =	{Keywords: session types, Scala, concurrency}
}

@article{DBLP:journals/acta/CastellaniDG19,
  author    = {Ilaria Castellani and
               Mariangiola Dezani{-}Ciancaglini and
               Paola Giannini},
  title     = {Reversible sessions with flexible choices},
  journal   = {Acta Informatica},
  volume    = {56},
  number    = {7-8},
  pages     = {553--583},
  year      = {2019},
  url       = {https://doi.org/10.1007/s00236-019-00332-y},
  doi       = {10.1007/s00236-019-00332-y},
  timestamp = {Thu, 14 Oct 2021 01:00:00 +0200},
  biburl    = {https://dblp.org/rec/journals/acta/CastellaniDG19.bib},
  bibsource = {dblp computer science bibliography, https://dblp.org}
}

@InProceedings{harvey_et_al:LIPIcs.ECOOP.2021.10,
  author =	{Harvey, Paul and Fowler, Simon and Dardha, Ornela and Gay, Simon J.},
  title =	{{Multiparty Session Types for Safe Runtime Adaptation in an Actor Language}},
  booktitle =	{35th European Conference on Object-Oriented Programming (ECOOP 2021)},
  pages =	{10:1--10:30},
  series =	{Leibniz International Proceedings in Informatics (LIPIcs)},
  ISBN =	{978-3-95977-190-0},
  ISSN =	{1868-8969},
  year =	{2021},
  volume =	{194},
  editor =	{M{\o}ller, Anders and Sridharan, Manu},
  publisher =	{Schloss Dagstuhl -- Leibniz-Zentrum f{\"u}r Informatik},
  address =	{Dagstuhl, Germany},
  URL =		{https://drops.dagstuhl.de/opus/volltexte/2021/14053},
  URN =		{urn:nbn:de:0030-drops-140539},
  doi =		{10.4230/LIPIcs.ECOOP.2021.10},
  annote =	{Keywords: Concurrency, session types, adaptation}
}

@InProceedings{gheri_et_al:LIPIcs.ECOOP.2022.8,
  author =	{Gheri, Lorenzo and Lanese, Ivan and Sayers, Neil and Tuosto, Emilio and Yoshida, Nobuko},
  title =	{{Design-By-Contract for Flexible Multiparty Session Protocols}},
  booktitle =	{36th European Conference on Object-Oriented Programming (ECOOP 2022)},
  pages =	{8:1--8:28},
  series =	{Leibniz International Proceedings in Informatics (LIPIcs)},
  ISBN =	{978-3-95977-225-9},
  ISSN =	{1868-8969},
  year =	{2022},
  volume =	{222},
  editor =	{Ali, Karim and Vitek, Jan},
  publisher =	{Schloss Dagstuhl -- Leibniz-Zentrum f{\"u}r Informatik},
  address =	{Dagstuhl, Germany},
  URL =		{https://drops.dagstuhl.de/opus/volltexte/2022/16236},
  URN =		{urn:nbn:de:0030-drops-162367},
  doi =		{10.4230/LIPIcs.ECOOP.2022.8},
  annote =	{Keywords: Choreography automata, design by contract, deadlock freedom, Communicating Finite State Machines, TypeScript programming}
}

@misc{kerberos,
  author = {MIT},
  title = {Kerberos: The Network Authentication Protocol},
  year ={2022},
  howpublished = {\url{https://web.mit.edu/kerberos/}},
  note = {Accessed: 20/10/2022}
}

@misc{ethereum,
  author = {Ethereum},
  title = {Introduction to Smart Contracts},
  year ={2022},
  howpublished = {\url{https://ethereum.org/en/developers/docs/smart-contracts/}},
  note = {Accessed: 20/10/2022}
}

@misc{bpmn,
  author = {OMG},
  title = {Business Process Model and Notation},
  year ={2022},
  howpublished = {\url{https://www.bpmn.org/}},
  note = {Accessed: 20/10/2022}
}

@InProceedings{cledou_et_al:LIPIcs.ECOOP.2022.27,
  author =	{Cledou, Guillermina and Edixhoven, Luc and Jongmans, Sung-Shik and Proen\c{c}a, Jos\'{e}},
  title =	{{API Generation for Multiparty Session Types, Revisited and Revised Using Scala 3}},
  booktitle =	{36th European Conference on Object-Oriented Programming (ECOOP 2022)},
  pages =	{27:1--27:28},
  series =	{Leibniz International Proceedings in Informatics (LIPIcs)},
  ISBN =	{978-3-95977-225-9},
  ISSN =	{1868-8969},
  year =	{2022},
  volume =	{222},
  editor =	{Ali, Karim and Vitek, Jan},
  publisher =	{Schloss Dagstuhl -- Leibniz-Zentrum f{\"u}r Informatik},
  address =	{Dagstuhl, Germany},
  URL =		{https://drops.dagstuhl.de/opus/volltexte/2022/16255},
  URN =		{urn:nbn:de:0030-drops-162559},
  doi =		{10.4230/LIPIcs.ECOOP.2022.27},
  annote =	{Keywords: Concurrency, pomsets (partially ordered multisets), match types, Scala 3}
}

@article{10.1145/3485501,
author = {Viering, Malte and Hu, Raymond and Eugster, Patrick and Ziarek, Lukasz},
title = {A Multiparty Session Typing Discipline for Fault-Tolerant Event-Driven Distributed Programming},
year = {2021},
issue_date = {October 2021},
publisher = {Association for Computing Machinery},
address = {New York, NY, USA},
volume = {5},
number = {OOPSLA},
url = {https://doi.org/10.1145/3485501},
doi = {10.1145/3485501},
journal = {Proc. ACM Program. Lang.},
month = {oct},
articleno = {124},
numpages = {30},
keywords = {fault-tolerant distributed programming, event-driven programming, session types, type systems, asynchronous multiparty session types}
}

@article{10.1145/3428223,
author = {Castro-Perez, David and Yoshida, Nobuko},
title = {CAMP: Cost-Aware Multiparty Session Protocols},
year = {2020},
issue_date = {November 2020},
publisher = {Association for Computing Machinery},
address = {New York, NY, USA},
volume = {4},
number = {OOPSLA},
url = {https://doi.org/10.1145/3428223},
doi = {10.1145/3428223},
journal = {Proc. ACM Program. Lang.},
month = {nov},
articleno = {155},
numpages = {30},
keywords = {session types, cost models, message optimisations, parallel programming}
}

@InProceedings{lagaillardie_et_al:LIPIcs.ECOOP.2022.4,
  author =	{Lagaillardie, Nicolas and Neykova, Rumyana and Yoshida, Nobuko},
  title =	{{Stay Safe Under Panic: Affine Rust Programming with Multiparty Session Types}},
  booktitle =	{36th European Conference on Object-Oriented Programming (ECOOP 2022)},
  pages =	{4:1--4:29},
  series =	{Leibniz International Proceedings in Informatics (LIPIcs)},
  ISBN =	{978-3-95977-225-9},
  ISSN =	{1868-8969},
  year =	{2022},
  volume =	{222},
  editor =	{Ali, Karim and Vitek, Jan},
  publisher =	{Schloss Dagstuhl -- Leibniz-Zentrum f{\"u}r Informatik},
  address =	{Dagstuhl, Germany},
  URL =		{https://drops.dagstuhl.de/opus/volltexte/2022/16232},
  URN =		{urn:nbn:de:0030-drops-162324},
  doi =		{10.4230/LIPIcs.ECOOP.2022.4},
  annote =	{Keywords: Rust language, affine multiparty session types, failures, cancellation}
}

@InProceedings{10.1007/978-3-642-00590-9_21,
author="Caires, Lu{\'i}s
and Vieira, Hugo Torres",
editor="Castagna, Giuseppe",
title="Conversation Types",
booktitle="Programming Languages and Systems",
year="2009",
publisher="Springer Berlin Heidelberg",
address="Berlin, Heidelberg",
pages="285--300",
doi = "https://doi.org/10.1007/978-3-642-00590-9_21",
isbn="978-3-642-00590-9"
}

@InProceedings{10.1007/978-3-662-43376-8_10,
author="Padovani, Luca
and Vasconcelos, Vasco Thudichum
and Vieira, Hugo Torres",
editor="K{\"u}hn, Eva
and Pugliese, Rosario",
title="Typing Liveness in Multiparty Communicating Systems",
booktitle="Coordination Models and Languages",
year="2014",
publisher="Springer Berlin Heidelberg",
address="Berlin, Heidelberg",
pages="147--162",
doi="https://doi.org/10.1007/978-3-662-43376-8_10",
isbn="978-3-662-43376-8"
}

@InProceedings{10.1007/978-3-642-32940-1_17,
author="Lange, Julien
and Tuosto, Emilio",
editor="Koutny, Maciej
and Ulidowski, Irek",
title="Synthesising Choreographies from Local Session Types",
booktitle="CONCUR 2012 -- Concurrency Theory",
year="2012",
publisher="Springer Berlin Heidelberg",
address="Berlin, Heidelberg",
pages="225--239",
doi = "https://doi.org/10.1007/978-3-642-32940-1_17",
isbn="978-3-642-32940-1"
}

@article{10.1145/3547638,
author = {Jacobs, Jules and Balzer, Stephanie and Krebbers, Robbert},
title = {Multiparty GV: Functional Multiparty Session Types with Certified Deadlock Freedom},
year = {2022},
issue_date = {August 2022},
publisher = {Association for Computing Machinery},
address = {New York, NY, USA},
volume = {6},
number = {ICFP},
url = {https://doi.org/10.1145/3547638},
doi = {10.1145/3547638},
journal = {Proc. ACM Program. Lang.},
month = {aug},
articleno = {107},
numpages = {30},
keywords = {Session types, deadlock freedom, message-passing concurrency}
}

@inproceedings{10.1145/2429069.2429101,
author = {Carbone, Marco and Montesi, Fabrizio},
title = {Deadlock-Freedom-by-Design: Multiparty Asynchronous Global Programming},
year = {2013},
isbn = {9781450318327},
publisher = {Association for Computing Machinery},
address = {New York, NY, USA},
url = {https://doi.org/10.1145/2429069.2429101},
doi = {10.1145/2429069.2429101},
booktitle = {Proceedings of the 40th Annual ACM SIGPLAN-SIGACT Symposium on Principles of Programming Languages},
pages = {263–274},
numpages = {12},
keywords = {choreography, types, sessions, concurrency},
location = {Rome, Italy},
series = {POPL '13}
}

@inproceedings{MFYZ21,
author = {Miu, Anson and Ferreira, Francisco and Yoshida, Nobuko and Zhou, Fangyi},
title = {Communication-Safe Web Programming in TypeScript with Routed Multiparty Session Types},
year = {2021},
isbn = {9781450383257},
publisher = {Association for Computing Machinery},
address = {New York, NY, USA},
url = {https://doi.org/10.1145/3446804.3446854},
doi = {10.1145/3446804.3446854},
booktitle = {Proceedings of the 30th ACM SIGPLAN International Conference on Compiler Construction},
pages = {94–106},
numpages = {13},
keywords = {session types, WebSocket, TypeScript, web programming, deadlock freedom, API generation},
location = {Virtual, Republic of Korea},
series = {CC 2021}
}

@inproceedings{self,
    author = {Ungar, David and Smith, Randall B.},
    title = {Self: The Power of Simplicity},
    year = {1987},
    isbn = {0897912470},
    publisher = {Association for Computing Machinery},
    address = {New York, NY, USA},
    url = {https://doi.org/10.1145/38765.38828},
    doi = {10.1145/38765.38828},
    pages = {227–242},
    numpages = {16},
    location = {Orlando, Florida, USA},
    series = {OOPSLA '87}
    }

@misc{fullversion,
  doi = {10.48550/ARXIV.2302.01979},
  url = {https://arxiv.org/abs/2302.01979},
  author = {Gheri, Lorenzo and Yoshida, Nobuko},
  keywords = {Programming Languages (cs.PL), FOS: Computer and information sciences, FOS: Computer and information sciences},
  title = {Hybrid Multiparty Session Types -- Full Version},
  publisher = {arXiv},
  year = {2023},
  copyright = {Creative Commons Attribution Share Alike 4.0 International}
}

\appendix
\onecolumn
\newpage
\section{Communication Protocol for the Company, from Overview}\label{appendix-fig}

In Figure \ref{fig:diag-whole} the communication protocol for the whole company from \S\ref{sec:overview}.

\label{sec:appendix-fig}
\begin{figure}
  \begin{center}
    \begin{tikzpicture}[font=\footnotesize]

\node[draw,
  minimum width=0.8cm,
  minimum height=0.4cm,
  rounded corners] (d) at (0,0)
     {$\dir$};

\node[draw,
  minimum width=0.8cm,
  minimum height=0.4cm,
  rounded corners] (ad) at (1.5,0)
     {$\ad$};

\node[draw,
  minimum width=0.8cm,
  minimum height=0.4cm,
  rounded corners] (s) at (3,0)
     {$\sales$};

\node[draw,
  minimum width=0.8cm,
  minimum height=0.4cm,
  rounded corners] (w) at (4.5,0)
     {$\web$};

\node[draw,
  minimum width=0.8cm,
  minimum height=0.4cm,
  rounded corners] (f1) at (6,0)
     {$\fone$};

\node[draw,
  minimum width=0.8cm,
  minimum height=0.4cm,
  rounded corners] (f2) at (7.5,0)
     {$\ftwo$};

\draw[-](d) edge (0,-5.3);
\draw[-](ad) edge (1.5,-5.3);
\draw[-](s) edge (3,-5.3);
\draw[-](w) edge (4.5,-5.3);
\draw[-](f1) edge (6,-5.3);
\draw[-](f2) edge (7.5,-5.3);
\draw[-](0,-5.7) edge (0,-8.4);
\draw[-](1.5,-5.7) edge (1.5,-8.4);
\draw[-](3,-5.7) edge (3,-8.4);
\draw[-](4.5,-5.7) edge (4.5,-8.4);
\draw[-](6,-5.7) edge (6,-8.4);
\draw[-](7.5,-5.7) edge (7.5,-8.4);
\draw[-{Stealth}] (0,-1) --node[above]{$\product$} (1.5,-1);
\draw[-{Triangle[open]},thick,cbred] (0,-1.5) --node[above]{$\product$} (3,-1.5);
\draw[-{Triangle[open]},thick,cbred] (0,-2) --node[above]{$\product$} (6,-2);
\draw[-{Stealth}] (6,-2.5) --node[above]{$\product$} (7.5,-2.5);
\draw[-{Stealth}] (7.5,-3) --node[above]{$\price$} (6,-3);
\draw[-{Triangle[open]},thick,cbred] (6,-3.5) --node[above]{$\ok$} (0,-3.5);
\draw[-{Stealth}] (0,-4) --node[above]{$\go$} (1.5,-4);
\draw[-{Triangle[open]},thick,cbred] (6,-4.5) --node[above]{$\price$} (3,-4.5);
\draw[-{Stealth}] (3,-5) --node[above]{$\publish$} (4.5,-5);
\draw[-{Stealth}] (7.5,-6) --node[above]{$\wait$} (6,-6);
\draw[-{Triangle[open]},thick,cbred] (6,-6.5) --node[above]{$\wait$} (0,-6.5);
\draw[-{Stealth}] (0,-7) --node[above]{$\wait$} (1.5,-7);
\draw[-{Triangle[open]},thick,cbred] (6,-7.5) --node[above]{$\wait$} (3,-7.5);
\draw[-{Stealth}] (3,-8) --node[above]{$\wait$} (4.5,-8);

\draw[decorate,decoration={brace},thick] (7.6,-3) -- (7.6,-6);

\draw[-,thick] (-0.4,-8.2) -- (-0.1,-8.2);
\draw[-,thick] (-0.4,-8.2) -- (-0.4,-2.8);
\draw[->,thick] (-0.4,-2.8) -- (-0.1,-2.8);
\draw[-,dashed,gray,thin] (0,-2.8) -- (7.5,-2.8);
\draw[-,dashed,gray,thin] (0,-8.2) -- (7.5,-8.2);

\end{tikzpicture}
  \end{center}
  \caption{Whole Communication in the Company}
  \label{fig:diag-whole}
 \end{figure}

\section{Definitions and Proofs}\label{sec:appendix}

\subsection{Compositionality: Building Back a Subprotocol into a Global One}\label{subsec:appendix1}

\begin{definition}[Guardedness for Hybrid Types]
  \label{def:guarded-a}%
  We define \emph{guardedness} as a predicate on hybrid types, with the following inductive rules:
  \[
    \begin{array}{c}
      \dfrac{}{\guarded\ \hend}\quad
      \dfrac{}{\guarded\ \hX}\quad
      \dfrac{\code{not\_pure\_rec}\;\hX\;\hT \quad \guarded\ \hT}{\guarded\ (\hrec \hX \hT)}\quad  \dfrac{\guarded\;\hTone \quad \guarded\;\hTtwo}{\guarded\ (\hpar\hTone\hTtwo)}
      \\[3mm]
      \dfrac{\forall i\in I. \guarded\ {\dht{\hT_i}}}{\guarded\ (\hsend \p\q \ell {\tS} {\hT})}\quad
      \dfrac{\forall i\in I. \guarded\ {\dht{\hT_i}}}{\guarded\ (\hrecv \p\q \ell {\tS} {\hT})}\quad
      \dfrac{\forall i\in I. \guarded\ {\dht{\hT_i}}}{\guarded\ (\hmsgi \p\q \ell {\tS} {\hT})}
    \end{array}\vspace*{3mm}
  \]
  where $\code{not\_pure\_rec}\;\hX\;\hT$ means that $\hT$ is different
  from $\hrec {\dht{Y_1}} {\dots\hrec {\dht{Y_n}} \hX}$ and also
  $\hT\neq\hX$.
\end{definition}

The above definition is adapted from the Coq-mechanised MPST metatheory in \cite{zooid} (Definitions A.2 and A.10 in the appendix of the extended version of the paper \cite{zooidfull}).

The next six lemmas are preliminary results, adapted from classic
multiparty session type theories. They are
needed for the base cases of induction
in the build-back theorem (Theorem \ref{thm:comp1-a}). For all the
lemmas below, the proofs go by structural induction.

\begin{lemma}[Projection and Closed Types]\label{lem:proj-clo-a}
  $\projb \hT E$ is closed iff $\hT$ is closed.
\end{lemma}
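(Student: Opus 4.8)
The plan is to derive the biconditional from a single stronger invariant, namely that projection never changes the set of free variables. Concretely, I would first establish the auxiliary claim that, whenever $\projb{\hT}{E}$ is defined, $\fv{\projb{\hT}{E}} = \fv{\hT}$. Since a hybrid type is closed exactly when its set of free variables is empty, both directions of the lemma fall out at once: if $\hT$ is closed then $\fv{\projb{\hT}{E}} = \fv{\hT} = \emptyset$, so $\projb{\hT}{E}$ is closed; and conversely if $\projb{\hT}{E}$ is closed then $\fv{\hT} = \fv{\projb{\hT}{E}} = \emptyset$.

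The invariant is proved by structural induction on $\hT$, following the clauses of Definition~\ref{def:proj}. The cases $\rulename{proj-end}$ and $\rulename{proj-var}$ are immediate, as projection leaves $\hend$ and $\hX$ untouched. For $\rulename{proj-rec}$ there are two sub-cases. In the guarded sub-case $\projb{\hrec{\hX}{\hT}}{E} = \hrec{\hX}{(\projb{\hT}{E})}$, and the induction hypothesis gives $\fv{\projb{\hT}{E}} = \fv{\hT}$; removing the bound variable $\hX$ from both sides then yields $\fv{\hrec{\hX}{(\projb{\hT}{E})}} = \fv{\hrec{\hX}{\hT}}$. In the collapse sub-case the rule fires only when $\hrec{\hX}{\hT}$ is closed, so both the source and its projection $\hend$ have empty free-variable sets. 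For $\rulename{proj-msg}$, $\rulename{proj-send}$, and $\rulename{proj-recv}$, whenever at least one endpoint lies in $E$ the prefix is retained and the projected type's free variables are the union over $i\in I$ of $\fv{\projb{\hTi}{E}}$, which by the branch-wise induction hypothesis equals the union of $\fv{\hTi}$, i.e. the free variables of the source.

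Two cases carry the actual content. The first is $\rulename{proj-par}$, where projection discards one component outright and returns $\projb{\hTi}{E}$: here I would invoke the standing well-formedness requirement that, in $\hpar{\hTone}{\hTtwo}$, both $\hTone$ and $\hTtwo$ are closed, so that $\fv{\hpar{\hTone}{\hTtwo}} = \emptyset$ and, by the induction hypothesis, $\fv{\projb{\hTi}{E}} = \fv{\hTi} = \emptyset$; the invariant then holds with both sides empty. The second, and the step I expect to be the main obstacle, is the merging branch of the message/send/receive clauses (both endpoints outside $E$), where the projected continuations are combined with $\mrg$. This needs a separate auxiliary fact, itself proved by induction on the merge computation: whenever $\hTone \mrg \hTtwo$ is defined, $\fv{\hTone \mrg \hTtwo} = \fv{\hTone}\cup\fv{\hTtwo}$. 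The identity clause $\hT \mrg \hT = \hT$ and the constructor-delving clauses are routine; the delicate point is the label-union clause for receives, where one checks that the free variables contributed by the three branch families ($k\in I\cap J$, $k\in I\setminus J$, $k\in J\setminus I$) recombine, using $I=(I\cap J)\cup(I\setminus J)$ and $J=(I\cap J)\cup(J\setminus I)$, exactly to the union of the free variables of the two receive types being merged. With this fact in hand, the merging branch gives $\fv{\merge_{i\in I}(\projb{\hTi}{E})} = \bigcup_{i\in I}\fv{\projb{\hTi}{E}} = \bigcup_{i\in I}\fv{\hTi}$, which is precisely the free-variable set of the original message, closing the induction and hence the lemma.
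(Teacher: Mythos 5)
Your proof is correct and takes essentially the same route as the paper: the paper dispatches this lemma (together with its neighbouring preliminary lemmas) with the single remark that ``the proofs go by structural induction,'' and your strengthened invariant $\fv{\projb \hT E} = \fv{\hT}$, together with the auxiliary fact that $\mrg$ preserves free variables, is exactly the generalisation such an induction needs, since the naive ``closed iff closed'' statement is not inductive at the $\hrec \hX \hT$ case. Your handling of the two delicate clauses (\rulename{proj-par} via the standing closedness requirement on parallel components, and the merging branch via the label-union bookkeeping) is sound, so nothing is missing.
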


\begin{lemma}[Localiser and Closed Types]\label{lem:loc-clo-a}
  $\loc{} \hT$ is closed iff $\hT$ is closed.
\end{lemma}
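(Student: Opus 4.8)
The plan is to prove the sharper equality $\fv{\loc{}\hT} = \fv{\hT}$ (whenever $\loc{}\hT$ is defined) by structural induction on $\hT$, and then read off the lemma: since a hybrid type is closed exactly when its set of free variables is empty, $\loc{}\hT$ is closed iff $\fv{\loc{}\hT}=\emptyset$ iff $\fv{\hT}=\emptyset$ iff $\hT$ is closed. Establishing the equality in one shot is cleaner than arguing the two implications of the biconditional separately, and it is exactly the invariant maintained along the recursive clauses of $\loc{}{}$.

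First I would handle the clauses where the localiser preserves the top constructor. The base cases $\rulename{loc-end}$ and $\rulename{loc-var}$ are immediate, as both sides coincide syntactically. For $\rulename{loc-par}$ we localise the two components, so $\fv{\loc{}(\hpar\hTone\hTtwo)}=\fv{\loc{}\hTone}\cup\fv{\loc{}\hTtwo}$, which the induction hypothesis turns into $\fv{\hTone}\cup\fv{\hTtwo}=\fv{\hpar\hTone\hTtwo}$; the clauses $\rulename{loc-send}$ and $\rulename{loc-recv}$ are the same, localising each branch continuation $\hTi$ and using that participants, labels, and sorts contribute no type variables. The clause $\rulename{loc-rec}$ splits into its two defined subcases: when $\hrec\hX{(\loc{}\hT)}$ is guarded we get $\fv{\loc{}\hrec\hX\hT}=\fv{\loc{}\hT}\setminus\{\hX\}=\fv{\hT}\setminus\{\hX\}=\fv{\hrec\hX\hT}$ by the induction hypothesis, and in the else subcase the side condition makes $\hrec\hX\hT$ closed, so both $\fv{\hrec\hX\hT}$ and $\fv{\loc{}\hrec\hX\hT}=\fv{\hend}$ are empty.

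The only clause that does not preserve structure, and the main obstacle, is $\rulename{loc-msg}$, where $\loc{}(\hmsgi\p\q\ell{S}{\hT})=\lmerge_{i\in I}\loc{}\hTi$: the global prefix $\nhmsg\p\q$ is discarded (harmlessly, as it carries no type variables) and the localised continuations are combined with $\lmrg$. To finish this case I would first prove an auxiliary additivity lemma for merging: whenever $\hTone\lmrg\hTtwo$ is defined, $\fv{\hTone\lmrg\hTtwo}=\fv{\hTone}\cup\fv{\hTtwo}$. This goes by a separate induction following the defining clauses of $\lmrg$ — the reflexive clause $\hT\lmrg\hT=\hT$ is trivial, the clauses that delve inside a shared constructor reduce to the induction hypothesis on the merged continuations, and the branch-combining clause recombines $\bigcup_{k\in I\cap J}\fv{\hTk\lmrg\hTb_k}\cup\bigcup_{k\in I\setminus J}\fv{\hTk}\cup\bigcup_{k\in J\setminus I}\fv{\hTb_k}$ into exactly $\bigcup_{k\in I}\fv{\hTk}\cup\bigcup_{k\in J}\fv{\hTb_k}$. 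Iterating additivity over $i\in I$ gives $\fv{\lmerge_{i\in I}\loc{}\hTi}=\bigcup_{i\in I}\fv{\loc{}\hTi}$, and the outer induction hypothesis then yields $\bigcup_{i\in I}\fv{\hTi}=\fv{\hmsgi\p\q\ell{S}{\hT}}$, closing the case. The whole argument mirrors that of Lemma~\ref{lem:proj-clo-a} for projection, with $\lmrg$ in place of $\mrg$; the only genuine work is the additivity lemma for the merge operator.
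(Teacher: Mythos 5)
Your proof is correct and follows the same route as the paper, which simply states that the lemma is proved by structural induction on $\hT$. Your strengthening of the invariant to the free-variable equality $\fv{\loc{}\hT}=\fv{\hT}$, together with the additivity of $\lmrg$ on free variables, is exactly the detail needed to make that induction go through (closedness alone does not induct across the $\rulename{loc-rec}$ binder or the $\rulename{loc-msg}$ merge), so you have merely filled in what the paper leaves implicit.
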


\begin{lemma}[Projection onto Internal Participants]\label{lem:proj-fix-a}
If $\hpart \hT \subseteq E$ then $\projb \hT E = \hT$.
\end{lemma}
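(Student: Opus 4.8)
The plan is to prove Lemma~\ref{lem:proj-fix-a} by structural induction on $\hT$, reading off each clause of Definition~\ref{def:proj} and using Definition~\ref{def:participants} to check that the hypothesis $\hpart\hT\subseteq E$ always selects the branch of the projection that rebuilds the very same constructor. The base cases $\hT=\hend$ and $\hT=\hX$ are immediate from \rulename{proj-end} and \rulename{proj-var}. For $\hT=\hrec\hX{\hTb}$ we have $\hpart{\hTb}=\hpart\hT\subseteq E$, so the induction hypothesis gives $\projb{\hTb}E=\hTb$; since $\hT$ is guarded by our standing assumption, $\hrec\hX{(\projb{\hTb}E)}=\hrec\hX{\hTb}$ is guarded as well, and so \rulename{proj-rec} returns it unchanged.

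The genuinely informative cases are the message and send/receive constructors. For $\hT=\hmsgi\p\q\ell\tS{\hTb}$, Definition~\ref{def:participants} gives $\{\p,\q\}\subseteq\hpart\hT\subseteq E$, so \rulename{proj-msg} falls into its first branch and returns $\hmsg\p\q{\dht{\{\lblFmt{\ell_i}(\tS).(\projb{\hTb_i}E)\}_{i\in I}}}$; each continuation $\projb{\hTb_i}E$ equals $\hTb_i$ by the induction hypothesis, hence $\projb\hT E=\hT$. The send case $\hT=\hsend\p\q\ell\tS{\hTb}$ (and, dually, the receive case) is where the first point of care appears: from $\hpart\hT\subseteq E$ we obtain the \emph{internal} subject $\p\in E$, but to land in the identity branch of \rulename{proj-send} rather than the merge branch we also need the \emph{external} recipient $\q\notin E$. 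This is exactly where I rely on the standing discipline that $E$ is a set of internal participants (Remark~\ref{rem:proj}): together with well-formedness $\hpart\hT\cap\hepart\hT=\emptyset$, and since $\q\in\hepart\hT$, the component structure guarantees $\q\notin E$. Under this, \rulename{proj-send} reproduces $\hsnd\p\q{\dht{\{\dots\}}}$ and the induction hypothesis closes the case.

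The main obstacle I expect is the parallel constructor $\hT=\hpar{\hT_1}{\hT_2}$. Here $\hpart\hT=\hpart{\hT_1}\cup\hpart{\hT_2}\subseteq E$, yet \rulename{proj-par} retains only a \emph{single} branch: it returns $\projb{\hT_i}E$ provided $E\cap\hpart{\hT_j}=\emptyset$ for the other index $j$. If both branches contribute internal participants to $E$, that side condition never fires and projection is undefined, so the identity $\projb\hT E=\hT$ cannot hold in full generality. The crux of the argument is therefore to reconcile this with the hypothesis, and I expect the resolution to come from the intended regime of the lemma: $E$ is the internal-participant set of a \emph{single} component, so that the two closed, participant-disjoint parallel branches cannot both place internal participants inside $E$ — in practice the base cases of build-back (Theorem~\ref{thm:comp1}) apply the lemma to sequential subprotocols, where the issue does not arise. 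I would either restrict attention to such types or make the per-component disjointness explicit in the statement; settling this cleanly is the one step that requires more than a mechanical reading of Definition~\ref{def:proj}.
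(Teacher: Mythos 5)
Your overall route is the same as the paper's: the paper proves this lemma (together with its neighbouring preliminaries) with the single remark that ``the proofs go by structural induction,'' and your treatment of the $\hend$, $\hX$, recursion, and global-message cases is exactly that mechanical argument. The problems are in the two cases you yourself single out, where the justifications you supply do not hold. For send/receive, you argue that well-formedness $\hpart\hT\cap\hepart\hT=\emptyset$, plus the fact that $\q\in\hepart\hT$, ``guarantees $\q\notin E$.'' It does not: that condition constrains $\hT$ only, and nothing in the hypothesis $\hpart\hT\subseteq E$ stops $E$ from containing external participants of $\hT$. The case is rescued differently: if $\p\in E$ and also $\q\in E$, then \emph{neither} clause of \rulename{proj-send} applies, so $\projb{\hT}{E}$ is undefined and the equality is vacuous under the partial-function reading; alternatively one adds the hypothesis $\hepart\hT\cap E=\emptyset$, which is precisely condition (b) of Theorem \ref{thm:comp1}, where the lemma is applied. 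Either repair works; the one you give does not.

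For the parallel case your diagnosis is right but your patch is wrong. You claim that in the intended regime the two branches of $\hpar{\hTone}{\hTtwo}$ ``cannot both place internal participants inside $E$.'' The paper itself refutes this: $\Gpar$ in \S\ref{sec:parallel} is a well-formed type all of whose participants are internal and whose two parallel branches each contain participants, so taking $E=\hpart\Gpar$ gives an instance of the lemma in which \rulename{proj-par} never fires; likewise a single component may legitimately be a parallel type, and the build-back itself manufactures parallel compositions (first clause of the recursion case in Definition \ref{def:bb1}). There is also a second failure mode you do not mention: if one branch is participant-free, e.g.\ $\hT=\hpar{\hTone}{\hend}$ with $E=\hpart\hTone$, then \rulename{proj-par} \emph{does} fire and yields $\projb{\hT}{E}=\hTone$, which is defined but not syntactically equal to $\hT$. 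So for parallel types the lemma is false as literally stated, not merely undefined, and no side condition on $E$ repairs it; one must either read the conclusion as ``whenever defined, equal up to erasure of participant-free parallel branches,'' or restrict $\hT$ to exclude such cases. Since your resolution rests on a claim that the paper's own examples contradict, this case remains a genuine gap in your proof --- though, to be fair, it is a gap that the paper's one-line proof glosses over as well.
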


\begin{lemma}[Projection onto Disjoint Participants $(1)$]\label{lem:proj-end-a}
If $\hpart \hT \cap E'=\emptyset$ and $\hT$ is closed, then $\projb \hT {E'} = \hend$.
\end{lemma}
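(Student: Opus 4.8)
The plan is to prove the statement by structural induction on $\hT$, exactly as for the companion Lemma~\ref{lem:proj-fix-a}, following the clause analysis of Definition~\ref{def:proj}. Since $\hT$ is closed, the bare-variable case $\hT=\hX$ cannot occur at the top level, and $\hT=\hend$ is immediate from $\rulename{proj-end}$. The engine of the whole argument is the observation that, whenever a communication prefix appears, its \emph{internal} subject belongs to $\hpart\hT$ and therefore lies outside $E'$; this forces the projection to discard the prefix rather than keep it.

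Concretely, for the message, send, and receive cases I would argue uniformly. In $\hmsgi\p\q\ell\tS\hT$ both $\p,\q$ are internal, and in $\hsend\p\q\ell\tS\hT$ (resp.\ $\hrecv\p\q\ell\tS\hT$) the internal subject is $\p$ (resp.\ $\q$); in every case this subject lies in $\hpart\hT$, hence not in $E'$, so the only applicable clause of Definition~\ref{def:proj} is the merging clause $\merge_{i\in I}(\projb{\hTi}{E'})$. Crucially, each continuation $\hTi$ of a closed message/send/receive type is again closed (these constructors bind no variables) and satisfies $\hpart{\hTi}\subseteq\hpart\hT$, so $\hpart{\hTi}\cap E'=\emptyset$. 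The induction hypothesis then gives $\projb{\hTi}{E'}=\hend$ for every $i$, and since $\hend\mrg\hend=\hend$ the whole merge collapses to $\hend$. The parallel case $\hpar{\hTone}{\hTtwo}$ is analogous: closedness forces $\hTone,\hTtwo$ closed, both have participant sets disjoint from $E'$, so $\rulename{proj-par}$ selects one component and the induction hypothesis returns $\hend$.

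The recursion case $\hT=\hrec\hX{\dht{H_0}}$ is the one real obstacle, because the body $\dht{H_0}$ is only closed up to the bound variable $\hX$, so the induction hypothesis does not apply to it directly. Here the plan is to combine the erasure phenomenon above with the guardedness predicate of Definition~\ref{def:guarded-a}. Since every communication prefix of $\dht{H_0}$ has its subject outside $E'$, projecting erases all of them, so $\projb{\dht{H_0}}{E'}$ is built only from $\hend$, $\hX$, recursion, and parallel, containing no message, send, or receive construct. If $\hX$ genuinely occurs, all of its guards are erased and the folded type $\hrec\hX{(\projb{\dht{H_0}}{E'})}$ reduces to a pure recursion on $\hX$, which is \emph{not} guarded; the second clause of $\rulename{proj-rec}$ then fires (using that $\hT$ itself is closed) and returns $\hend$.

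The delicate point inside this case is the degenerate subcase in which $\hX$ does not occur in $\dht{H_0}$: then $\dht{H_0}$ is closed, the induction hypothesis yields $\projb{\dht{H_0}}{E'}=\hend$, and one must argue that $\hrec\hX\hend$ is identified with $\hend$ (either because vacuous recursion binders are excluded from well-formed hybrid types, or because recursive types are taken up to unfolding, so that the $\rulename{proj-rec}$ result normalises to $\hend$). I expect this recursion/guardedness interaction, rather than any of the communication cases, to be where the proof requires the most care; the remaining cases are routine merges of $\hend$ driven by the induction hypothesis.
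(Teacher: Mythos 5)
Your proof is correct and takes essentially the same approach as the paper: the paper disposes of this lemma (together with its five companions) with the single phrase ``the proofs go by structural induction,'' and your clause-by-clause analysis --- erasure of every prefix because its internal subject avoids $E'$, closedness of continuations, collapse of merges of $\hend$, and the guardedness argument forcing the second clause of \rulename{proj-rec} when $\hX$ occurs --- is exactly the natural instantiation of that induction. The degenerate subcase you flag is a genuine rough edge of the paper's own definitions rather than a gap in your argument: under Definition~\ref{def:guarded-a} the type $\hrec\hX\hend$ \emph{is} guarded, so for a vacuous binder \rulename{proj-rec} returns $\hrec\hX\hend$ rather than $\hend$, and the paper neither excludes vacuous binders nor identifies $\hrec\hX\hend$ with $\hend$; your explicit caveat is therefore more careful than the source.
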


\begin{lemma}[Projection onto Disjoint Participants $(2)$]\label{lem:proj-var-a}
If $\hpart \hT \cap E'=\emptyset$ and $\hT$ is \emph{not} closed, then $\projb \hT {E'} = \dht{\mu Y_1.\dots.\mu Y_n.}\hX$ (if such projection exists).
\end{lemma}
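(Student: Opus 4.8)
The plan is to argue by structural induction on $\hT$, following the clauses of Definition \ref{def:proj} (as the excerpt announces for all these auxiliary lemmas), and to exploit the standing well-formedness assumptions together with the ``if such projection exists'' clause to discharge every side condition. First I would dispatch the degenerate cases. If $\hT = \hend$ the hypothesis that $\hT$ is not closed fails, so the case is vacuous; likewise $\hT = \hpar \hTone \hTtwo$ is vacuous, since well-formedness of the parallel construct forces $\hTone$ and $\hTtwo$ (hence $\hT$) to be closed. The base case is $\hT = \hX$: here $\hpart \hX = \emptyset$ is trivially disjoint from $E'$, and \rulename{proj-var} gives $\projb {\hX} {E'} = \hX$, which is the claimed form with $n = 0$.

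Next come the communication cases $\hsend \p\q \ell \tS \hT$, $\hrecv \p\q \ell \tS \hT$, and $\hmsgi \p\q \ell \tS \hT$. Since $\hpart \hT \cap E' = \emptyset$, the subject(s) of the prefix lie outside $E'$. Combined with the fact that projection onto a set containing an external participant is undefined (Remark \ref{rem:proj}), the assumption that the projection exists forces us into the ``all participants outside $E$'' clause of \rulename{proj-send}/\rulename{proj-recv}/\rulename{proj-msg}, so $\projb \hT {E'} = \merge_{i\in I}(\projb {\hT_i}{E'})$. Each branch $\hT_i$ is either closed, in which case $\projb {\hT_i}{E'} = \hend$ by Lemma \ref{lem:proj-end-a}, or not closed, in which case the induction hypothesis yields a type of the form $\mu Y_1.\cdots.\mu Y_n.\hX$. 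Because $\hT$ is not closed, at least one branch is not closed; and since $\mrg$ returns its common value only on syntactically identical operands at the level of $\mu$-prefixes and variable leaves (and is undefined across differing head constructors), the mere existence of $\merge_{i}$ forces \emph{all} branches to be non-closed and to project to one and the same $\mu Y_1.\cdots.\mu Y_n.\hX$; the merge then returns exactly this type.

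Finally, the recursion case $\hT = \hrec \hY \hT_0$. If $\hT$ is not closed then so is $\hT_0$, and $\hpart \hT_0 = \hpart \hT$ is disjoint from $E'$, so the induction hypothesis gives $\projb {\hT_0}{E'} = \mu Y_1.\cdots.\mu Y_n.\hX$. Now \rulename{proj-rec} returns $\hrec \hY {(\projb{\hT_0}{E'})}$ provided this is guarded, and otherwise returns $\hend$ only when $\hT$ is closed. Since $\hT$ is not closed, the existence of the projection forces the guarded branch, so $\projb \hT {E'} = \hrec \hY {(\mu Y_1.\cdots.\mu Y_n.\hX)}$, which is again of the claimed shape (a block of $\mu$-binders ending in the free variable $\hX$).

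I expect the main obstacle to be precisely this interplay between guardedness and the ``projection exists'' hypothesis, together with the companion merge analysis: one must argue carefully that when the projection is defined a non-closed $\hT$ cannot produce a non-guarded (pure) recursion, and that a mergeable family of branches cannot mix closed summands ($\hend$) with open ones. Both points follow by inspecting $\mrg$ and the guardedness predicate (Definition \ref{def:guarded-a}), but stating them cleanly, rather than by an exhaustive case split, is where the care is needed.
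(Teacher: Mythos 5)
Your proof is correct and takes the same route as the paper: the paper only remarks that this lemma (like its companions) ``goes by structural induction,'' and your case analysis---vacuous cases for $\hend$ and the (closed-by-well-formedness) parallel construct, the merge argument ruling out a mix of $\hend$ and open branches in the communication cases, and the guardedness/existence interplay in the recursion case---is exactly the induction the paper leaves implicit.
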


\begin{lemma}[Projection is Guarded]\label{lem:proj-guar-a}
If $\projb \hT {E}$ is guarded (if it is defined and $\hT$ is guarded).
\end{lemma}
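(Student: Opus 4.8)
The plan is to prove the statement by structural induction on the hybrid type $\hT$ (with the set $E$ fixed, since every recursive call of projection keeps the same $E$), reading off the clauses of Definition \ref{def:proj} and checking in each case that the output satisfies the guardedness rules of Definition \ref{def:guarded-a}. Throughout I assume $\projb \hT E$ is defined, so that every recursive call of projection invoked below is defined, and $\hT$ is guarded, so that every structural subterm of $\hT$ is guarded. The base cases \rulename{proj-end} and \rulename{proj-var} are immediate: $\projb \hend E = \hend$ and $\projb \hX E = \hX$ are guarded by the axioms $\guarded\ \hend$ and $\guarded\ \hX$.

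The recursion case is handled essentially for free by the design of projection. In \rulename{proj-rec}, the projection of $\hrec \hX \hT$ is defined to be $\hrec \hX {(\projb \hT E)}$ \emph{only when this very type is guarded}, and to fall back to $\hend$ (when the type is closed) otherwise. Hence, whenever $\projb{(\hrec \hX \hT)}{E}$ is defined, it is by construction either a guarded recursive type or $\hend$; both are guarded, and no appeal to the induction hypothesis is even required. The parallel case is equally direct: \rulename{proj-par} returns $\projb{\hTone}{E}$ or $\projb{\hTtwo}{E}$, and since $\hpar \hTone \hTtwo$ guarded forces both $\hTone$ and $\hTtwo$ guarded (Definition \ref{def:guarded-a}), the induction hypothesis yields guardedness of the selected projection.

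For the message, send, and receive constructs I would split on the subcases of \rulename{proj-send}, \rulename{proj-recv}, and \rulename{proj-msg}. In the \emph{structure-preserving} subcases, where at least the relevant participant lies in $E$ so that projection returns a $\hsnd \p \q {}$, $\hrcv \p \q {}$, or $\hmsg \p \q {}$ construct with the same branches, guardedness of the result reduces, by the matching rule of Definition \ref{def:guarded-a}, to guardedness of each continuation $\projb{\hTi}{E}$; these follow from the induction hypothesis, since each branch $\hTi$ is a guarded subterm of $\hT$ whose projection is defined. The remaining \emph{merging} subcases, where both participants lie outside $E$, return $\merge_{i\in I}(\projb{\hTi}{E})$; by the induction hypothesis each $\projb{\hTi}{E}$ is guarded, so all that is left is to show that $\mrg$ preserves guardedness.

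The merging subcases are the main obstacle, and I would isolate the needed fact as an auxiliary sub-lemma: if $\hTone$ and $\hTtwo$ are guarded and $\hTone \mrg \hTtwo$ is defined, then $\hTone \mrg \hTtwo$ is guarded (and the claim then lifts to the fold $\merge$). This I would prove by induction following the recursive clauses of $\mrg$ in Definition \ref{def:proj}. The key point is that $\mrg$ is \emph{structure preserving}: on a common outermost constructor it keeps that constructor, unions the branch label-sets, and recurses into matching continuations, so it never exposes a bare recursion variable under no prefix. The delicate subcase is recursion, $\hrec \hX \hTone \mrg \hrec \hX \hTtwo = \hrec \hX {(\hTone \mrg \hTtwo)}$, where the $\code{not\_pure\_rec}$ side condition must be re-established; this holds because the merged body retains the non-recursive outermost shape of the (guarded, hence non-pure-recursive) operands. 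With this sub-lemma in hand, the merging subcases close and the induction is complete.
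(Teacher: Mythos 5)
Your proof is correct and follows essentially the same route as the paper: the paper dispatches this lemma (together with its five neighbours) with the single remark that the proofs go by structural induction, which is exactly your skeleton. Your elaboration---noting that \rulename{proj-rec} yields a guarded result by construction, and isolating the auxiliary fact that $\mrg$ preserves guardedness (re-establishing $\code{not\_pure\_rec}$ in the recursion case because a merge can only be a pure recursion tower if both operands are)---correctly fills in the details the paper leaves implicit.
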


The following definitions and lemmas introduce objects and results
needed for the main algorithm and ultimately
for proving Theorem \ref{thm:comp1-a}.
They guarantee the good behaviour of our compositionality algorithm,
with respect to full merge (\cite{DenielouYoshida2013}
and Definition \ref{def:proj} in this paper) for projection
and its ``dual'' for the localiser (Definition \ref{def:loc} in this paper).

The \emph{binary unmerge for localiser} $\ulb$ takes a type $\ghT$ (first argument) and ``unmerges'' it, according to the structure of two other types $\hTLone$ and $\hTLtwo$ (second and third arguments); thus $\ulb$ returns a pair of types, the unmerged components of $\ghT$. The non-binary version of the \emph{unmerge for localiser} $\ul$ is defined by iterating the binary one, thus allowing for unmerging into $n$ types. The role of this function is clarified by Lemma \ref{lem:loc-bb}.

To correctly define such function we need an auxiliary function that handles the case in which a projection merge is equal to a localiser one: $\merge_{i\in I} \ghTi = \lmerge_{j\in J} \dht{\hTL_j}$. In this case we want to obtain, for all $i\in I,j\in J$, $\dht{\hTL^i_j}$ such that $\merge_i \dht{\hTL^i_j} =  \dht{\hTL_j}$ and $\lmerge_j \dht{\hTL^i_j} = \ghTi$.

\begin{definition}[Localiser Unmerge for Projection Merge]\label{def:unmlp}
  The partial function \emph{binary localiser-unmerge for projection merge} $\ulpba \ghTone \ghTtwo \hTLone \hTLtwo$ is recursively defined by the following equations, whenever the recursive call is defined (we use $\pi_1$ and $\pi_2$ to indicate standard pair projections):
  \begin{small}
  \[
  \begin{array}{l}
    \ulpba \hend \hend \hend \hend = ((\hend, \hend), (\hend,\hend))\\
    \ulpba \hX \hX \hX \hX = ((\hX, \hX), (\hX,\hX))\\
    \ulpba {\hrec \hX \ghTone}  {\hrec \hX \ghTtwo} {\hrec \hX \hTLone} {\hrec \hX \hTLtwo} = ((\hrec\hX\hToneone,\hrec\hX\hTonetwo),(\hrec\hX\hTtwotwo,\hrec\hX\hTtwotwo))\\
    \qquad \text{where}\ \dht{\hT^n_m}=\pi_n(\pi_m(\ulpba {\ghTone}  {\ghTtwo} {\hTLone} {\hTLtwo})))\ \text{for $n,m=1,2$}\\
    \ulpba {(\hsend \p \q \ell \tS \ghTone)}
           {(\hsend \p \q \ell \tS \ghTtwo)}
           {(\hsendj \p \q \ell \tS \hTLone)}
           {(\hsendk \p \q \ell \tS \hTLtwo)} = \\
    \qquad ((\hsendj \p \q \ell \tS \hToneone,
            \hsendk \p \q \ell \tS \hTonetwo),
            (\hsendj \p \q \ell \tS \hTtwoone,
            \hsendk \p \q \ell \tS \hTtwotwo))\\
    \qquad \text{where $J\cup K = I$ and, for $n,m=1,2$,}\ \dht{{\hT^n_m}_i}=
    \begin{cases}
      \pi_n(\pi_m(\ulpba {\dht{{\ghTone}_i}} {\dht{{\ghTtwo}_i}} {\dht{{\hTLone}_i}} {\dht{{\hTLtwo}_i}}))\ \text{if $i\in J\cup K$}\\
       \pi_n(\pi_m(\ulpba {\dht{{\ghTone}_i}} {\dht{{\ghTtwo}_i}} {\dht{{\hTLone}_i}} {\dht{{\hTLone}_i}}))\ \text{if $i\in J\backslash K$}\\
       \pi_n(\pi_m(\ulpba {\dht{{\ghTone}_i}} {\dht{{\ghTtwo}_i}} {\dht{{\hTLtwo}_i}} {\dht{{\hTLtwo}_i}}))\ \text{if $i\in K\backslash J$}
    \end{cases}\\
        \ulpba {(\hrecvj \p \q \ell \tS \ghTone)}
           {(\hrecvk \p \q \ell \tS \ghTtwo)}
           {(\hrecv \p \q \ell \tS \hTLone)}
           {(\hrecv \p \q \ell \tS \hTLtwo)} = \\
    \qquad ((\hrecvj \p \q \ell \tS \hToneone,
            \hrecvj \p \q \ell \tS \hTonetwo),
            (\hrecvk \p \q \ell \tS \hTtwoone,
            \hrecvk \p \q \ell \tS \hTtwotwo))\\
    \qquad \text{where $J\cup K = I$ and, for $n,m=1,2$,}\ \dht{{\hT^n_m}_i}=
    \begin{cases}
      \pi_n(\pi_m(\ulpba {\dht{{\ghTone}_i}} {\dht{{\ghTtwo}_i}} {\dht{{\hTLone}_i}} {\dht{{\hTLtwo}_i}}))\ \text{if $i\in J\cup K$}\\
       \pi_n(\pi_m(\ulpba {\dht{{\ghTone}_i}} {\dht{{\ghTone}_i}} {\dht{{\hTLone}_i}} {\dht{{\hTLtwo}_i}}))\ \text{if $i\in J\backslash K$}\\
       \pi_n(\pi_m(\ulpba {\dht{{\ghTtwo}_i}} {\dht{{\ghTtwo}_i}} {\dht{{\hTLone}_i}} {\dht{{\hTLtwo}_i}}))\ \text{if $i\in K\backslash J$}
    \end{cases}\\
    \ulpba {(\hmsgi \p \q \ell \tS \ghTone)}
           {(\hmsgi \p \q \ell \tS \ghTtwo)}
           {(\hmsgi \p \q \ell \tS \hTLone)}
           {(\hmsgi \p \q \ell \tS \hTLtwo)}
           \\=     ((\hmsgi \p \q \ell \tS \hToneone,
            \hmsgi \p \q \ell \tS \hTonetwo),
            (\hmsgi \p \q \ell \tS \hTtwoone,
            \hmsgi \p \q \ell \tS \hTtwotwo))\\
    \qquad \text{where, for $n,m=1,2$,}\ \dht{{\hT^n_m}_i}=
    \pi_n(\pi_m(\ulpba {\dht{{\ghTone}_i}} {\dht{{\ghTtwo}_i}} {\dht{{\hTLone}_i}} {\dht{{\hTLtwo}_i}}))\\[2mm]
      \text{undefined otherwise}
  \end{array}
  \]
  \end{small}
  The partial function \emph{localiser-unmerge for projection merge} $\ulpa {L_P} {L_L}$ takes as input two lists of types instead of two pairs. It returns a tuple of lists: if $L_P= [\ghTone,\dots,\dht{\ghT_n}]$ and $L_L=[\hTLone,\dots,\dht{\hTL_m}]$ the result list is of the kind $([\dht{\hT^1_1},\dots,\dht{\hT^1_m}],\dots,[\dht{\hT^n_1},\dots,\dht{\hT^n_m}])$ ($\pi_i$ indicate the standard projection of the $i$-th component of the tuple). $\ulp$ is defined by recursion (on the sum of the lengths of $L_P$ and $L_L$) by the following equations.
  \begin{small}
  \[
  \begin{array}{l}
    \ulpa {[]} {L_L} = ([])\\
    \ulpa {[\ghT]} {L_L} = ([L_L])\\
    \ulpa  {[\ghTone, \ghTtwo]} {L_L} =
    \begin{cases}
      ([],[])\ \text{if $L_L=[]$}\\
      ([\pi_1(\pi_1(\ulpba \ghTone \ghTtwo \hTL \hTL)),\pi_1(\pi_2(\ulpba \ghTone \ghTtwo \hTL \hTL))])\ \text{if $L_L=[\hTL]$}\\
      ([\pi_1(\pi_1(\ulpba \ghTone \ghTtwo \hTLone \hTLtwo)),
        \pi_2(\pi_1(\ulpba \ghTone \ghTtwo \hTLone \hTLtwo))],\\
      \ \
       [\pi_1(\pi_2(\ulpba \ghTone \ghTtwo \hTLone \hTLtwo)),
         \pi_2(\pi_2(\ulpba \ghTone \ghTtwo \hTLone \hTLtwo))])\
      \text{if $L_L=[\hTLone,\hTLtwo]$}\\
      (\pi_1(\pi_1(U))\#L^1,\pi_1(\pi_2(U))\#L^2)\\
         \qquad\text{if $L_L=[\hTLone,\dots,\dht{\hTL_m}]$, where, for $i=1,2$}\\
         \qquad L^i=\pi_i(\ulpa {[(\pi_2(\pi_1(U))),(\pi_2(\pi_2(U)))]} {[\hTLtwo,\dots,\dht{\hTL_m}]})\ \text{and}\\
         \qquad U= \ulpba \ghTone \ghTtwo \hTLone {(\lmerge_{k=2\dots m}\dht{\hTL_k})}
    \end{cases}\\
    \ulpa  {[\ghTone, \ghTtwo,\dots,\dht{\ghT_n}]} {L_L}
    = (L^1,\pi_1(\ulpa {[\ghTtwo,\dots,\dht{\ghT_n}]} {L^2}),\dots,\pi_{n-1}(\ulpa {[\ghTtwo,\dots,\dht{\ghT_n}]} {L^2}))\\
    \qquad \text{where, for $i=1,2$,}\ L^i = \ulpa {[\ghTone,(\merge_{k=2,\dots,n}\dht{\ghT_k})]} {L_L}
  \end{array}
  \]
  \end{small}
\end{definition}

\begin{lemma}\label{lem:unmlp}
  We are given hybrid types $\ghTone,\dots,\dht{\ghT_n}$ and $\hTLone,\dots,\dht{\hTL_m}$, such that $\merge_{i=1,\dots,n} \ghTi = \lmerge_{j=1,\dots,m} \dht{\hTL_j}$. \\Then $\ \ulpa [\ghTone,\dots,\dht{\ghT_n}] [\hTLone,\dots,\dht{\hTL_m}]\ $ is defined and, if we set
  \[
  \ulpa [\ghTone,\dots,\dht{\ghT_n}] [\hTLone,\dots,\dht{\hTL_m}] = ([\dht{\hTL^1_1},\dots,\dht{\hTL^1_m}],\dots,[\dht{\hTL^n_1},\dots,\dht{\hTL^n_m}])\text{,}
  \]
  we have that:
  \begin{itemize}
  \item $\lmerge_{j=1,\dots,m} ,\dht{\hTL^i_j} = \ghTi$, for all $i=1,\dots,n$, and
  \item $\merge_{i=1,\dots,n} ,\dht{\hTL^i_j} = \dht{\hTL_j}$, for all $j=1,\dots,n$.
  \end{itemize}
\end{lemma}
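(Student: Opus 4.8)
The plan is to reduce the statement to a binary lemma about the auxiliary function $\ulpb$ and then lift it to arbitrary $n$ and $m$ by following the recursive structure of $\ulp$ in Definition \ref{def:unmlp}. The binary lemma I would isolate is: if $\ghTone \mrg \ghTtwo = \hTLone \lmrg \hTLtwo$, then $\ulpba \ghTone \ghTtwo \hTLone \hTLtwo$ is defined, and, writing its value as $((\hToneone,\hTonetwo),(\hTtwoone,\hTtwotwo))$, we have $\hToneone \lmrg \hTonetwo = \ghTone$, $\hTtwoone \lmrg \hTtwotwo = \ghTtwo$, $\hToneone \mrg \hTtwoone = \hTLone$, and $\hTonetwo \mrg \hTtwotwo = \hTLtwo$. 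This is the $2\times2$ heart of the claim: localiser-merging along a row recovers a projection-merge argument $\ghTi$, and projection-merging along a column recovers a localiser-merge argument $\dht{\hTL_j}$.

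First I would prove this binary lemma by structural induction on the four arguments simultaneously, matching the clauses of $\ulpb$. The end, variable, recursion, and message clauses are routine: the shared hypothesis $\ghTone \mrg \ghTtwo = \hTLone \lmrg \hTLtwo$ forces all four types to carry the same head constructor (both $\mrg$ and $\lmrg$ require matching heads on $\hend$, $\hX$, $\hrec$, and $\hmsg$, so the two sides of the equation can coincide only when all four agree), and each invariant then follows from the induction hypothesis applied branchwise, closing under $\mrg$ and $\lmrg$ since both delve into continuations. I would simultaneously record the definedness half: the hypothesis pins down which clause applies at each node and guarantees that every recursive call again meets a hypothesis of the same shape, so the same induction yields totality on inputs satisfying the premise.

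The hard part will be the send and receive clauses, where the label sets diverge. There $\ghTone,\ghTtwo$ share the index set $I$ while $\hTLone,\hTLtwo$ carry $J$ and $K$ with $J\cup K = I$; this equality is exactly what $\ghTone \mrg \ghTtwo = \hTLone \lmrg \hTLtwo$ forces on a send, since $\mrg$ requires equal send-indices while $\lmrg$ takes the union of send-indices. The definition splits each branch $i$ into the three regions $i\in J\cap K$, $i\in J\setminus K$, $i\in K\setminus J$, feeding different arguments to the recursive call in each region. I would check, region by region, that the four merge equations survive: for $\hToneone \lmrg \hTonetwo = \ghTone$ one reassembles the $J$-indexed and $K$-indexed sends by the union rule of $\lmrg$, matching the $J\cap K$ branches via the induction hypothesis and carrying the $J\setminus K$ and $K\setminus J$ branches unchanged, then compares branchwise with $\ghTone$; the column equation $\hToneone \mrg \hTtwoone = \hTLone$ (both $J$-indexed sends) follows from the matching-head rule of $\mrg$. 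The receive clause is the dual, with the roles of $\mrg$ and $\lmrg$ exchanged. The genuinely delicate point is the bookkeeping of which region dictates which recursive arguments, as each clause is engineered precisely so that every region produces pieces consistent with both the row and the column invariant.

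Finally I would lift to the general statement by induction on $n+m$, following the recursion of $\ulp$. The cases $n=0,1$ are immediate from the first two defining equations. For $n=2$ I would induct on $m$: the subcases $m\le 2$ are the binary lemma directly, and for $m\ge 3$ the definition performs one binary step, unmerging $\ghTone,\ghTtwo$ against $\hTLone$ and $\lmerge_{k=2,\dots,m}\dht{\hTL_k}$, then recurses on the resulting tail pieces against $[\hTLtwo,\dots,\dht{\hTL_m}]$; the invariants propagate because the binary lemma gives $\lmrg$-recombination into $\ghTone,\ghTtwo$ and $\mrg$-recombination into $\hTLone$ and $\lmerge_{k=2,\dots,m}\dht{\hTL_k}$, and associativity and commutativity of the merges let the induction hypothesis reassemble the remaining columns. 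For $n\ge 3$ the definition peels off $\ghTone$ by unmerging it against $\merge_{k=2,\dots,n}\dht{\ghT_k}$ (the $n=2$ case) and recurses on $[\ghTtwo,\dots,\dht{\ghT_n}]$; the row and column invariants follow symmetrically, using that $\merge$ is associative and commutative so that $\merge_{i=2,\dots,n}\dht{\hTL^i_j}$ recombines correctly. Throughout, definedness is inherited from the binary lemma together with the observation that every recursive premise again has the shape $\merge_i(\cdot) = \lmerge_j(\cdot)$ required by the hypothesis.
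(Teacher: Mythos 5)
Your proposal is correct and follows essentially the same route as the paper's proof: first establish the $2\times 2$ case for $\ulpb$ by structural induction (the paper inducts on $\ghTone$, which amounts to your simultaneous induction since the hypothesis forces all four heads to agree), then lift to arbitrary $n,m$ by induction on $n+m$ following the recursive equations of $\ulp$. The paper gives only this two-step outline, and your filled-in details — the four row/column invariants, the $J\cap K$ / $J\setminus K$ / $K\setminus J$ region analysis in the send/receive clauses, and the use of associativity and commutativity of the merges in the lifting step — are consistent with it.
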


\begin{proof}\ \\
  \begin{itemize}
  \item We first prove the base cases, in particular the case in which $n=2$ and $m=2$ (formally we first prove the result for $\ulpb$). This is done by structural induction on $\ghTone$, exploiting the hypothesis and the inductive nature both of $\mrg$ and $\lmrg$.
  \item We conclude by induction on $n+m$, by following the definition of $\ulp$.
  \end{itemize}
\end{proof}

We now define the unmerge for the localiser (Definition \ref{def:unml}) and ensure that it has the desired behaviour (Lemma \ref{lem:unml}).

\begin{definition}[Localiser Unmerge]\label{def:unml}
  The partial function \emph{binary unmerge for localiser} $\ulbEa \ghT \hTLone \hTLtwo$ is defined if $\hpart\hTLone\cup\\hpart\hTLtwo\subseteq E$ by the following equations, by recursion on $\hdepth\ghT$, whenever the recursive call is defined (we use $\pi_1$ and $\pi_2$ to indicate standard pair projections; we also make implicit the dependency on $E$ using the notation $\ulb$ instead of $ulbE$):
  \begin{small}
  \[
  \begin{array}{l}
    \ulba \hend \hTLone \hTLtwo = (\hend,\hend)\\
    \ulba \hX \hTLone \hTLtwo = (\hX,\hX)\\
    \ulba {(\hrec \hX \ghT)} {\hTLone} {\hTLtwo} =
    \begin{cases}
      (\hrec \hX {\pi_1(\ulba \ghT \hTLoneb \hTLtwob)}, \hrec \hX {\pi_2(\ulba \ghT \hTLoneb \hTLtwob)})
      \\ \qquad \text{if $\hTLone=\hrec \hX \hTLoneb$ and $\hTLtwo=\hrec \hX \hTLtwob$}\\
      (\hrec \hX \ghT,\hrec \hX \ghT)\ \text{if $\hTLone=\hTLtwo=\hend$}
      \end{cases}\\
    \ulba {(\hpar \ghTone \ghTtwo)} \hTLone \hTLtwo =
    \begin{cases}
      (\hpar {(\pi_1(\ulba \ghTone \hTLone \hTLtwo))} \ghTtwo, \hpar {(\pi_2(\ulba \ghTone \hTLone \hTLtwo))} \ghTtwo)\ \\
      \qquad \text{if $E\cap\hpart\ghTtwo=\emptyset$}\\
      (\hpar {\ghTone} {(\pi_1(\ulba \ghTtwo \hTLone \hTLtwo))}, \hpar \ghTone {(\pi_2(\ulba \ghTtwo \hTLone \hTLtwo))})\
      \\ \qquad \text{if $E\cap\hpart\ghTone=\emptyset$}\\
      \text{undefined otherwise}
      \end{cases}
    \\
    \ulba {(\hsend \p \q \ell \tS \ghT)} {\hTLone} {\hTLtwo} =
    \\ \qquad
    \begin{cases}
      \hspace*{-1.7mm}\begin{array}{l}
        (\hsendj \p \q \ell \tS {\ghT^1},\hsendk \p \q \ell \tS {\ghT^2})\\
        \qquad \text{with $\dht{\ghT^1_j}=\ghTj$ for $j\in J\backslash K$, $\dht{\ghT^1_j}= \pi_1(\ulba \ghTj \hTLone \hTLtwo)$ for $j\in J\cap K$,}
    \\
        \qquad \text{$\dht{\ghT^2_k}=\ghTk$ for $k\in K\backslash J$, $\dht{\ghT^2_k}= \pi_2(\ulba \ghTk \hTLone \hTLtwo)$ for $k\in J\cap K$}
        \\ \qquad \text{if $\hTLone = \hsendj \p \q {\ell} \tS \hTLone$, $\hTLtwo= \hsendk \p \q {\ell} \tS \hTLtwo$ and $I=J\cup K$}
      \end{array}\\
      \hspace*{-1.7mm}\begin{array}{l}
        (\hsnd \p \q \dht{\{\lblFmt{\ell_i} (\tS_i). \pi_1(\ulba \ghTi {\dht{\hTLone_i}} {\dht{\hTLtwo_i}})\}_{i \in I}},\hsnd \p \q \dht{\{\lblFmt{\ell_i} (\tS_i). \pi_2(\ulba \ghTi {\dht{\hTLone_i}} {\dht{\hTLtwo_i}})\}_{i \in I}})
        \\ \qquad \text{with $\ulpa {[\projb{\dht{\ghT_i}}E]_{i\in I}} {[\hTLone,\hTLtwo]} = ([\dht{\hTLone_i},\dht{\hTLtwo_i}])_{i\in I}$, in any other case }
      \end{array}\\
    \end{cases}\\
    \ulba {(\hrecv \p \q \ell \tS \ghT)} {\hTLone} {\hTLtwo} =
    \\ \qquad
    \begin{cases}
       \hspace*{-1.7mm}\begin{array}{l}
        (\hrecv \p \q \ell \tS {\ghT^1},\hrecv \p \q \ell \tS {\ghT^2})\\
        \qquad \text{with $\dht{\ghT^1_i}= \pi_1(\ulba \ghTi \hTLone \hTLtwo)$ and $\dht{\ghT^2_i}= \pi_2(\ulba \ghTk \hTLone \hTLtwo)$ for $i\in I$,}
        \\ \qquad \text{if $\hTLone = \hrecv \p \q {\ell} \tS \hTLone$, $\hTLtwo= \hrecv \p \q {\ell} \tS \hTLtwo$}
      \end{array}\\
      \hspace*{-1.7mm}\begin{array}{l}
        (\hrcv \p \q \dht{\{\lblFmt{\ell_i} (\tS_i). \pi_1(\ulba \ghTi {\dht{\hTLone_i}} {\dht{\hTLtwo_i}})\}_{i \in I}},\hrcv \p \q \dht{\{\lblFmt{\ell_i} (\tS_i). \pi_2(\ulba \ghTi {\dht{\hTLone_i}} {\dht{\hTLtwo_i}})\}_{i \in I}})
        \\ \qquad \text{with $\ulpa {[\projb{\dht{\ghT_i}}E]_{i\in I}} {[\hTLone,\hTLtwo]} = ([\dht{\hTLone_i},\dht{\hTLtwo_i}])_{i\in I}$, in any other case }
      \end{array}
    \end{cases}\\
    \ulba {(\hmsgi \p \q \ell \tS \ghT)} {\hTLone} {\hTLtwo} =\\
    \qquad\begin{cases}
        \hspace*{-1.7mm}\begin{array}{l}
        (\hmsgj \p \q \ell \tS {\ghT^1},\hmsgk \p \q \ell \tS {\ghT^2})\\
        \qquad \text{with $\dht{\ghT^1_j}=\ghTj$ for $j\in J\backslash K$, $\dht{\ghT^1_j}= \pi_1(\ulba \ghTj \hTLone \hTLtwo)$ for $j\in J\cap K$,}
    \\
        \qquad \text{$\dht{\ghT^2_k}=\ghTk$ for $k\in K\backslash J$, $\dht{\ghT^2_k}= \pi_2(\ulba \ghTk \hTLone \hTLtwo)$ for $k\in J\cap K$}
        \\ \qquad \text{if $\hTLone = \hsendj \p \q {\ell} \tS \hTLone$, $\hTLtwo= \hsendk \p \q {\ell} \tS \hTLtwo$, and $I=J\cup K$,}
        \\ \qquad \text{or if $\hTLone = \hrecvj \p \q {\ell} \tS \hTLone$, $\hTLtwo= \hrecvk \p \q {\ell} \tS \hTLtwo$, and $I=J=K$,}
        \\ \qquad \text{or if $\hTLone = \hmsgj \p \q {\ell} \tS \hTLone$, $\hTLtwo= \hmsgk \p \q {\ell} \tS \hTLtwo$,}
        \\ \qquad \qquad \qquad \qquad \qquad \qquad  \qquad \qquad \qquad \qquad \qquad  \qquad \qquad \text{and $I=J=K$}
        \end{array}\\
      \hspace*{-1.7mm}\begin{array}{l}
        (\hmsg \p \q \dht{\{\lblFmt{\ell_i} (\tS_i). \pi_1(\ulba \ghTi {\dht{\hTLone_i}} {\dht{\hTLtwo_i}})\}_{i \in I}},
        \\ \qquad \hmsg \p \q \dht{\{\lblFmt{\ell_i} (\tS_i). \pi_2(\ulba \ghTi {\dht{\hTLone_i}} {\dht{\hTLtwo_i}})\}_{i \in I}})
        \\ \qquad \text{with $\ulpa {[\projb{\dht{\ghT_i}}E]_{i\in I}} {[\hTLone,\hTLtwo]} = ([\dht{\hTLone_i},\dht{\hTLtwo_i}])_{i\in I}$, in any other case }
      \end{array}
    \end{cases}
  \end{array}
  \]
  \end{small}
  Given a set of participants $E$, the partial function \emph{unmerge for localiser} $\ula {\ghT} {\dht{L}}$ (again we drop the dependency on $E$ in the notation) is defined if $\bigcup_{\hTL\in \sf{set}(L)}\hpart\hTL$ by the following equations, by recursion on the list of hybrid types $\dht{L}$, whenever the recursive call is defined:
  \begin{small}
  \[
  \begin{array}{l}
    \ula {\ghT} {[]} = \text{undefined}\\
    \ula {\ghT} {[\hTL]} = [\ghT]\\
    \ula {\ghT} {[\hTLone,\hTLtwo]} = [\pi_1(\ulba {\ghT} \hTLone \hTLtwo),\pi_2(\ulba {\ghT} \hTLone \hTLtwo)]\\
    \ula {\ghT} {(\hTLone\#(\hTLtwo\#\dht{L}))} =  \\
    \qquad\pi_1(\ulba {\ghT} \hTLone {(\lmerge({\sf set} (\hTLtwo\#\dht{L}))}) \#
    \ula {\pi_2(\ulba {\ghT} \hTLone {(\lmerge({\sf set} (\hTLtwo\#\dht{L}))})} {(\hTLtwo\#\dht{L})}
  \end{array}
  \]
  \end{small}
  We observe that ${\sf {length}}\;(\ula \ghT {\dht{L}})={\sf{length}}\;\dht{L}$ and that $\hdepth{\dht{\ghT_t}}=\hdepth{\ghT}$, where $\dht{\ghT_t}$ is the $t$-th element of the list $\ula \ghT {[\hTLone,\hTLtwo,\dots,\dht{\hTL_n}]}$ (when defined).
\end{definition}

\begin{lemma}[Unmerge for Localiser]\label{lem:loc-bb}\label{lem:unml}
  We fix a set of participants $E$, and we are given hybrid types $\ghT$ and $\dht{\hTL_t}$, $t=1,\dots,n$, such that:
  \begin{itemize}
  \item[(a)] $\hpart {\dht{\hTL_t}} \subseteq E$, for all $t=n$,
  \item[(b)] $\hepart  {\dht{\hTL_t}} \cap E = \emptyset$,
  \item[(c)] $\projb {\ghT} {E} = \lmerge_{t=1,\dots,n} {\dht{\hTL_t}}$;
  \end{itemize}
  then, for all $t=1,\dots,n$, we call $\dht{\ghT_t}$ the $t$-th element of the list $\ula \ghT {[\hTLone,\hTLtwo,\dots,\dht{\hTL_n}]}$ and we have
  \begin{enumerate}
  \item $\projb {\dht{\ghT_t}} E = {\dht{\hTL_t}}$ and
  \item $\merge_{t=1,\dots,n} (\projb {\dht{\ghT_t}} {E'}) = \projb \ghT {E'}$, for all $E'\cap E = \emptyset$.
  \end{enumerate}
\end{lemma}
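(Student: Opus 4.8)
The plan is to prove the statement in two stages, mirroring the two-layered definition of the operator: first the binary version built on $\ulb$ (the case $n=2$), and then the general version $\ul$ by induction on the length of the list $\dht{L}=[\hTLone,\dots,\dht{\hTL_n}]$. For the binary case I would establish, for $\ghT,\hTLone,\hTLtwo$ satisfying (a)--(c) with $n=2$ (so that $\projb{\ghT}{E}=\hTLone\lmrg\hTLtwo$), the two specialised conclusions $\projb{\pi_m(\ulba{\ghT}{\hTLone}{\hTLtwo})}{E}=\dht{\hTL_m}$ for $m=1,2$, and $\projb{\pi_1(\ulba{\ghT}{\hTLone}{\hTLtwo})}{E'}\mrg\projb{\pi_2(\ulba{\ghT}{\hTLone}{\hTLtwo})}{E'}=\projb{\ghT}{E'}$ for every $E'$ with $E'\cap E=\emptyset$. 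The argument goes by induction on $\hdepth{\ghT}$, following the defining equations of $\ulb$ clause by clause. The end, variable, recursion and parallel clauses are immediate or reduce to the inductive hypothesis once the guardedness and closedness side conditions are discharged via Lemmas \ref{lem:proj-clo-a}, \ref{lem:loc-clo-a} and \ref{lem:proj-guar-a}.

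The genuinely interesting dichotomy appears at the send, receive and message clauses, where the unmerge must reconcile the projection merge $\mrg$ with the localiser merge $\lmrg$. When the top localiser constructs of $\hTLone$ and $\hTLtwo$ line up with the construct produced by projecting the top of $\ghT$ onto $E$ (the structural branch, e.g. $I=J\cup K$ for sends), I simply descend into the continuations by the inductive hypothesis and reassemble. Otherwise I am in the ``any other case'' branch: here projection onto $E$ has collapsed the top of $\ghT$ and merged its internal branches, so that $\merge_{i\in I}(\projb{\dht{\ghT_i}}{E})=\hTLone\lmrg\hTLtwo$. On this equality I invoke Lemma \ref{lem:unmlp} with $\ulpa{[\projb{\dht{\ghT_i}}{E}]_{i\in I}}{[\hTLone,\hTLtwo]}$, obtaining for each internal branch $i$ a pair $(\dht{\hTLone_i},\dht{\hTLtwo_i})$ with $\dht{\hTLone_i}\lmrg\dht{\hTLtwo_i}=\projb{\dht{\ghT_i}}{E}$ and with $\merge_{i\in I}\dht{\hTLone_i}=\hTLone$, $\merge_{i\in I}\dht{\hTLtwo_i}=\hTLtwo$. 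Applying the inductive hypothesis to each $\dht{\ghT_i}$ (of strictly smaller depth) against $(\dht{\hTLone_i},\dht{\hTLtwo_i})$, and then reassembling, yields both binary conclusions: the first from $\merge_{i\in I}\dht{\hTLone_i}=\hTLone$ and the second from distributing $\mrg$ through the retained top construct and collapsing each continuation by the inductive hypothesis.

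For the general case I would argue by induction on $n$, following the recursive equation of $\ul$. The base cases $n=1,2$ are immediate from the binary result. For the step, $\ul$ peels off the first element by computing $\ulba{\ghT}{\hTLone}{M}$ with $M=\lmerge({\sf set}(\hTLtwo\#\cdots\#\dht{\hTL_n}))$, and recurses on $\pi_2(\ulba{\ghT}{\hTLone}{M})$ against the shortened list. Since $\projb{\ghT}{E}=\hTLone\lmrg M$, the binary result gives $\projb{\pi_1(\ulba{\ghT}{\hTLone}{M})}{E}=\hTLone$ and $\projb{\pi_2(\ulba{\ghT}{\hTLone}{M})}{E}=M=\lmerge_{t\ge 2}\dht{\hTL_t}$; the latter is exactly hypothesis (c) for applying the inductive hypothesis to $\pi_2(\ulba{\ghT}{\hTLone}{M})$ and $[\hTLtwo,\dots,\dht{\hTL_n}]$, which delivers conclusion (1) for $t\ge 2$, while (1) for $t=1$ is the binary result. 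For conclusion (2), I would split $\merge_{t=1,\dots,n}(\projb{\dht{\ghT_t}}{E'})=\projb{\dht{\ghT_1}}{E'}\mrg\merge_{t\ge 2}(\projb{\dht{\ghT_t}}{E'})$, rewrite the second summand as $\projb{\pi_2(\ulba{\ghT}{\hTLone}{M})}{E'}$ by the inductive hypothesis, and collapse to $\projb{\ghT}{E'}$ via the binary result, using commutativity and associativity of $\mrg$ throughout. Hypotheses (a) and (b) for the continuations and the shortened lists are preserved along the recursion directly from the defining equations of $\hpart{}$ and $\hepart{}$.

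The main obstacle is precisely the ``any other case'' branch handled by $\ulp$: it is where a choice made internally to the component (a global message of $\ghT$ whose endpoints fall outside $E$) appears, after projection onto $E$, as a single merged external branch, forcing $\ulp$ to redistribute the two localiser summands consistently across all internal branches $i\in I$. The delicate bookkeeping is to verify at each such application that the precondition $\merge_{i\in I}(\projb{\dht{\ghT_i}}{E})=\hTLone\lmrg\hTLtwo$ of Lemma \ref{lem:unmlp} indeed follows from $\projb{\ghT}{E}=\hTLone\lmrg\hTLtwo$ together with the relevant clause of Definition \ref{def:proj}, and that guardedness and closedness are maintained through the descent, which is where Lemmas \ref{lem:proj-fix-a}, \ref{lem:proj-end-a} and \ref{lem:proj-var-a} are needed.
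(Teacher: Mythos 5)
Your proposal is correct and follows essentially the same route as the paper's proof: establish the binary case by induction on depth with a case analysis that handles the aligned (structural) branches by the inductive hypothesis and the collapsed branches via Lemma \ref{lem:unmlp}, then lift to arbitrary $n$ by induction on the list, peeling off the head against the $\lmrg$ of the tail exactly as in the definition of $\ul$. The only inessential difference is the induction measure in the binary case (you use $\hdepth{\ghT}$ alone, the paper uses $\hdepth{\ghT}+\max(\hdepth{\hTLone},\hdepth{\hTLtwo})$); both work, since every recursive call of $\ulb$ descends into a continuation of its first argument.
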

\begin{proof}
  Let us first prove the lemma for $n=2$. In particular, hypothesis $(c)$ gives us $\projb {\ghT} {E} = \hTLone \lmrg \hTLtwo$. We need to prove the thesis for $\ghTone = \pi_1(\ulba \ghT \hTLone \hTLtwo)$ and $\ghTtwo = \pi_2(\ulba \ghT \hTLone \hTLtwo)$, namely that $\projb \ghTone E = \hTLone$, $\projb \ghTtwo E = \hTLtwo$, and $\projb \ghTone E' \mrg \projb \ghTtwo E' = \projb \ghT {E'}$, for all $E'\cap E = \emptyset$. We proceed by induction on $\hdepth \ghT + \sf{max}(\hdepth \hTLone, \hdepth \hTLtwo)$.

  Base cases $\ghT=\hend$, and $\ghT=\hX$ are trivial. Let us consider $\ghT=\hrec \hX \ghTb$; if $\projb \ghT E = \hend$ then $\hTLone = \hTLtwo = \hend$ and the thesis follows by definition. If $\projb \ghT E = \hrec \hX {(\projb \ghTb E)}$ then $\hTLone=\hrec\hX\hTLoneb$ and $\hTLtwo=\hrec\hX\hTLtwob$ and the thesis follows by applying induction hypothesis to $\ghTb$, $\hTLoneb$, and $\hTLtwob$.

  For case $\ghT = \hpar {\dht{\ghTb_1}} {\dht{\ghTb_2}}$, we can assume without loss of generality $\projb{\ghT}E = \projb {\dht{\ghTb_1}}E$, hence $\ulba {(\hpar \ghTone \ghTtwo)} \hTLone \hTLtwo = (\hpar {(\pi_1(\ulba {\dht{\ghTb_1}} \hTLone \hTLtwo))} {\dht{\ghTb_2}}, \hpar {(\pi_2(\ulba {\dht{\ghTb_1}} \hTLone \hTLtwo))} {\dht{\ghTb_2}})$. We conclude by induction hypothesis.

  We are left with the message cases $\hsend \p \q \ell \tS \ghT$, $\hrecv \p \q \ell \tS \ghT$, and $\hmsgi \p \q \ell \tS \ghT$. These are all analogous and we will see in detail only $\ghT=\hmsgi \p \q \ell \tS \ghT$, which is the most general as it can project on every message construct: $\hmsg \p \q\dots$, $\hsnd\p\q\dots$, and $\hrcv\p\q\dots$.

  \ \\{[$\{\p,\q\}\subseteq E$]} then $\projb\ghT E=\hmsg \p \q \dht{\{\lblFmt{\ell}(\tS).(\projb\ghTi E)\}_{i\in I}}$ and, since $\lmrg$ cannot output a global message $\hmsg\p\q\dots$, by hypothesis $(c)$ the thesis holds vacuously.

  \ \\{[$\p\in E$ and $\q\notin E$]} We have (hypothesis $(c)$) $\projb \ghT E = \hTLone \lmrg \hTLtwo$, namely
  \[
  \hsnd \p \q \dht{\{\lblFmt{\ell}(\tS).(\projb\ghTi E)\}_{i\in I}}=\hsnd \p \q \dht{\{\lblFmt{\ell}(\tS).\hTLone_j\}_{j\in J}}\ \lmrg\ \hsnd \p \q \dht{\{\lblFmt{\ell}(\tS).\hTLtwo_k\}_{k\in K}}\ \text{with $J\cup K=I$.}
  \]
We observe that
  \begin{itemize}
  \item if $i\in J\backslash K$ then $\projb \ghTi E = \dht{\hTLone_i}$,
  \item if $i\in K\backslash J$ then $\projb \ghTi E = \dht{\hTLtwo_i}$, and
  \item if $i\in J\cap K$ then $\projb \ghTi E = \dht{\hTLone_i} \lmrg \dht{\hTLtwo_i}$.
  \end{itemize}
 Also, by construction, $\ulba \ghT \hTLone \hTLtwo = (\hsendj \p \q \ell \tS {\ghT^1},\hsendk \p \q \ell \tS {\ghT^2})$, with $\dht{\ghT^1_j}=\ghTj$ for $j\in J\backslash K$, $\dht{\ghT^1_j}= \pi_1(\ulba \ghTj {\dht{\hTLone_j}} {\dht{\hTLtwo_j}})$ for $j\in J\cap K$, $\dht{\ghT^2_k}=\ghTk$ for $k\in K\backslash J$, $\dht{\ghT^2_k}= \pi_2(\ulba \ghTk {\dht{\hTLone_k}} {\dht{\hTLtwo_k}})$ for $k\in J\cap K$. We use the notation $\ghTone = \pi_1(\ulba \ghT \hTLone \hTLtwo)= \hsendj \p \q \ell \tS {\ghT^1}$ and $\ghTtwo = \pi_2(\ulba \ghT \hTLone \hTLtwo)= \hsendk \p \q \ell \tS {\ghT^2}$.

    It is immediate, by construction and induction hypothesis, to prove that $\projb \ghTone E = \hTLone$ and that $\projb \ghTtwo E = \hTLtwo$. Let us consider $E'$, such that $E'\cap E =\emptyset$ and $\q\notin E'$, namely $\projb \ghTone {E'} = \merge_{j\in J}\projb{\dht{\ghT^1_j}}{E'}$ and $\projb \ghTtwo {E'}=\merge_{k\in K}\projb{\dht{\ghT^2_k}}{E'}$; then
  \[
  (\projb \ghTone {E'})\mrg(\projb \ghTtwo {E'})=(\merge_{j\in J\backslash K}\projb{\dht{\ghT^1_j}}{E'})\mrg(\merge_{k\in K\backslash J}\projb{\dht{\ghT^2_k}}{E'})\mrg(\merge_{i\in K\cap J}(\projb{\dht{\ghT^1_i}}{E'}\mrg\projb{\dht{\ghT^2_i}}{E'}))
  \]
  by associativity and commutativity of merge. We conclude by construction of $\dht{\ghT^1_j}$ and $\dht{\ghT^2_k}$ and applying induction hypothesis. Now, if $E'$ is such that $E'\cap E =\emptyset$ and $\q\in E'$, we have  $\projb \ghTone {E'} = \hrcv \p \q \dht{\{\lblFmt{\ell}(\tS).(\projb{\ghTone_j} {E'})\}_{j\in J}}$ and $\projb \ghTtwo {E'}= \hrcv \p \q \dht{\{\lblFmt{\ell}(\tS).(\projb{\ghTtwo_k} {E'})\}_{k\in K}}$. We conclude by definition of $\mrg$ (in particular its behaviour with respect to receive constructs) and induction hypothesis.

 \ \\{[$\p\notin E$ and $\q\in E$]}
We have (hypothesis $(c)$) $\projb \ghT E = \hTLone \lmrg \hTLtwo$, namely
  \[
  \hrcv \p \q \dht{\{\lblFmt{\ell}(\tS).(\projb\ghTi E)\}_{i\in I}}=\hrcv \p \q \dht{\{\lblFmt{\ell}(\tS).\hTLone_i\}_{i\in I}}\ \lmrg\ \hrcv \p \q \dht{\{\lblFmt{\ell}(\tS).\hTLtwo_i\}_{i\in I}}\text{.}
  \]
  We observe that $\projb \ghTi E = \dht{\hTLone_i} \lmrg \dht{\hTLtwo_i}$, and also that, by construction,\\ $\ulba \ghT \hTLone \hTLtwo = (\hrecv \p \q \ell \tS {\ghT^1},\hrecv \p \q \ell \tS {\ghT^2})$, with $\dht{\ghT^1_i}= \pi_1(\ulba \ghTi {\dht{\hTLone_i}} {\dht{\hTLtwo_i}})$ and $\dht{\ghT^2_i}= \pi_2(\ulba \ghTi {\dht{\hTLone_i}} {\dht{\hTLtwo_i}})$. We use the notation \\$\ghTone = \pi_1(\ulba \ghT \hTLone \hTLtwo)= \hrecv \p \q \ell \tS {\ghT^1}$ and \\$\ghTtwo = \pi_2(\ulba \ghT \hTLone \hTLtwo)= \hrecv \p \q \ell \tS {\ghT^2}$.

    It is immediate, by construction and induction hypothesis, to prove that $\projb \ghTone E = \hTLone$ and that $\projb \ghTtwo E = \hTLtwo$. Let us consider $E'$, such that $E'\cap E =\emptyset$ and $\p\notin E'$, namely $\projb \ghTone {E'} = \merge_{i\in I}\projb{\dht{\ghT^1_i}}{E'}$ and $\projb \ghTtwo {E'}=\merge_{i\in I}\projb{\dht{\ghT^2_i}}{E'}$; then
  \[
  (\projb \ghTone {E'})\mrg(\projb \ghTtwo {E'})=\merge_{i\in I}(\projb{\dht{\ghT^1_i}}{E'}\mrg\projb{\dht{\ghT^2_i}}{E'})
  \]
  by associativity and commutativity of merge. We conclude by construction of $\dht{\ghT^1_i}$ and $\dht{\ghT^2_i}$ and applying induction hypothesis. Now, if $E'$ is such that $E'\cap E =\emptyset$ and $\p\in E'$, we have  $\projb \ghTone {E'} = \hsnd \p \q \dht{\{\lblFmt{\ell}(\tS).(\projb{\hTLone_i} {E'})\}_{i\in I}}$ and $\projb \ghTtwo {E'}= \hsnd \p \q \dht{\{\lblFmt{\ell}(\tS).(\projb{\hTLtwo_i} {E'})\}_{i\in I}}$. We conclude by definition of $\mrg$ (in particular its behaviour with respect to send constructs) and induction hypothesis.

  \ \\{[$\{\p,\q\}\cap E=\emptyset$]} We have $\projb \ghT E = \projb {\hmsgi \p \q \ell \tS \ghT} E= \merge_{i\in I}(\projb\ghTi E)=\hTLone\lmrg\hTLtwo$. In this case
  \begin{small}
    \[
    \begin{array}{l}
\ulba {\ghT} {\hTLone} {\hTLtwo} =\\
(\hmsg \p \q \dht{\{\lblFmt{\ell_i} (\tS_i). \pi_1(\ulba \ghTi {\dht{\hTLone_i}} {\dht{\hTLtwo_i}})\}_{i \in I}},\hmsg \p \q \dht{\{\lblFmt{\ell_i} (\tS_i).\pi_2(\ulba \ghTi {\dht{\hTLone_i}} {\dht{\hTLtwo_i}})\}_{i \in I}})=\\
(\ghTone,\ghTtwo)
\end{array}
\]
\end{small}
where $\ulpa {[\projb{\dht{\ghT_i}}E]_{i\in I}} {[\hTLone,\hTLtwo]} = ([\dht{\hTLone_i},\dht{\hTLtwo_i}])_{i\in I}$; we set $\dht{\ghTone_i}=\pi_1(\ulba \ghTi {\dht{\hTLone_i}} {\dht{\hTLtwo_i}})$ and $\dht{\ghTtwo_i}=\pi_2(\ulba \ghTi {\dht{\hTLone_i}} {\dht{\hTLtwo_i}})$. Thanks to Lemma \ref{lem:unmlp}, we have that $\dht{\hTLone_i}\lmrg\dht{\hTLtwo_i}=\projb{\ghTi}E$, $\merge_{i\in I}\dht{\hTLone_i}=\hTLone$, and $\merge_{i\in I}\dht{\hTLtwo_i}=\hTLtwo$. We can then apply induction hypothesis to $\ghTi$ and have that $\projb{\dht{\ghTone_i}}E=\dht{\hTLone_i}$, $\projb{\dht{\ghTtwo_i}}E=\dht{\hTLtwo_i}$, and $\projb{\dht{\ghTone_i}} {E'} \mrg \projb{\dht{\ghTtwo_i}}{E'}=\projb \ghTi {E'}$, for all $i\in I$ and for $E'\cap E=\emptyset$. Let us consider now $\projb \ghT E$ this is equal to:
\[
\projb \ghTone E = \merge_{i\in I}(\projb{\dht{\ghTone_i}} E)= \merge_{i\in I}\dht{\hTLone_i}=\hTLone
\]
thanks to what obtained from Lemma \ref{lem:unmlp} (see above). Analogously for $\projb\ghTone E=\hTLtwo$. If $\{\p,\q\}\cap E'\neq\emptyset$ we have that the projections of $\ghTone$ and $\ghTtwo$ have the same message prefix $\hmsg\p\q$, $\hsnd\p\q$, or $\hrcv\p\q$; are one analogous to the other. Let us consider the first one:
\[
\projb\ghTone{E'}\mrg\projb\ghTtwo{E'}= \hmsg\p\q\dht{\{\lblFmt{\ell_i} (\tS_i).(\projb{\dht{\ghTone_i}}{E'}\mrg\projb{\dht{\ghTtwo_i}}{E'}}\dht{)\}_{i \in I}}
\]
We conclude thanks to what obtained from induction hypothesis (see above). If $\{\p,\q\}\cap E'=\emptyset$ then
\[
\projb\ghTone{E'}\mrg\projb\ghTtwo{E'}= \left(\merge_{i\in I}\projb{\dht{\ghTone_i}}{E'}\right)\mrg\left(\merge_{i\in I}\projb{\dht{\ghTtwo_i}}{E'}\right)
\]
and we conclude thanks to what obtained from induction hypothesis (see above) and by associativity and commutativity of merge.

Now, to have the full lemma for $t=1,\dots,n$, for a generic $n$, it is sufficient to proceed by induction on $n$: we observe that the bas cases are obvious and the binary case is above. We are left with $n\geq 3$, namely:
\[
 \ula {\ghT} {[\hTLone,\hTLtwo,\dots,\dht{\hTL_n}]} = \ghTone\# \ula {\ghTb} {[\hTLtwo,\dots,\dht{\hTL_n}]}
\]
with $\ghTone= \pi_1(\ulba {\ghT} \hTLone {(\lmerge_{t=2,\dots,n} \dht{\hTL_t})})$
and $\ghTb=\pi_2(\ulba {\ghT} \hTLone {(\lmerge_{t=2,\dots,n} \dht{\hTL_t})})$. Since we have proved the theorem for the binary case and since $\ghT=\hTLone \mrg (\lmerge_{t=2,\dots,n} \dht{\hTL_t})$, we have that $\projb\ghTone E = \hTLone$, $\projb\ghTb E=(\lmerge_{t=2,\dots,n} \dht{\hTL_t})$, and $\projb\ghTone {E'}\mrg\projb\ghTb {E'}=\projb\ghT{E'}$. In particular $\projb\ghTb E=(\lmerge_{t=2,\dots,n} \dht{\hTL_t})$ allows us to apply induction hypothesis and we obtain, for\\ $\pi_2(\ulba {\ghT} \hTLone {(\lmerge_{t=2,\dots,n} \dht{\hTL_t})})=[\ghTtwo,\dots,\dht{\ghT_n}]$, that $\projb{\dht{\ghT_t}}E=\dht{\hTL_t}$ and that $\merge_{t=2,\dots,n}\projb{\dht{\ghT_t}}{E'}=\projb\ghTb {E'}$ for $E'\cap E=\emptyset$, and thus we conclude.

\end{proof}

The \emph{binary unmerge for projection} $\upb$ takes a type $\hTL$ (first argument) and ``unmerges'' it, according to the structure of two other types $\ghTone$ and $\ghTtwo$ (second and third arguments); thus $\upb$ returns a pair of types, the unmerged components of $\ghT$. The non-binary version of the \emph{unmerge for projection} $\up$ is defined by iterating the binary one, thus allowing for unmerging into $n$ types. The role of this function is clarified by Lemma \ref{lem:proj-bb}.

To correctly define such function we need an auxiliary function that handles the case in which a localiser merge is equal to a projection one: $\lmerge_{i\in I} \dht{\hTL_i} = \merge_{j\in J} \dht{\ghT_j}$. In this case we want to obtain, for all $i\in I,j\in J$, $\dht{\ghT^i_j}$ such that $\lmerge_i \dht{\ghT^i_j} =  \dht{\ghT_j}$ and $\merge_j \dht{\ghT^i_j} = \dht{\hTL_i}$.

\begin{definition}[Projection Unmerge for Localiser Merge]\label{def:unmpl}
  The partial function \emph{binary projection-unmerge for localiser merge} $\uplba \ghTone \ghTtwo \hTLone \hTLtwo$ is recursively defined by the following equations, whenever the recursive call is defined (we use $\pi_1$ and $\pi_2$ to indicate standard pair projections):
  \begin{small}
  \[
  \begin{array}{l}
    \uplba \hend \hend \hend \hend = ((\hend, \hend), (\hend,\hend))\\
    \uplba \hX \hX \hX \hX = ((\hX, \hX), (\hX,\hX))\\
    \uplba {\hrec \hX \hTLone} {\hrec \hX \hTLtwo} {\hrec \hX \ghTone}  {\hrec \hX \ghTtwo}  = ((\hrec\hX\hToneone,\hrec\hX\hTonetwo),(\hrec\hX\hTtwotwo,\hrec\hX\hTtwotwo))\\
    \qquad \text{where}\ \dht{\hT^n_m}=\pi_n(\pi_m(\uplba {\hTLone} {\hTLtwo} {\ghTone}  {\ghTtwo})))\ \text{for $n,m=1,2$}\\
    \uplba {(\hsendj \p \q \ell \tS \hTLone)}
           {(\hsendk \p \q \ell \tS \hTLtwo)}
           {(\hsend \p \q \ell \tS \ghTone)}
           {(\hsend \p \q \ell \tS \ghTtwo)} = \\
    \qquad ((\hsendj \p \q \ell \tS \hToneone,
            \hsendj \p \q \ell \tS \hTonetwo),
            (\hsendk \p \q \ell \tS \hTtwoone,
            \hsendk \p \q \ell \tS \hTtwotwo))\\
    \qquad \text{where $J\cup K = I$ and, for $n,m=1,2$,}\ \dht{{\hT^n_m}_i}=
    \begin{cases}
      \pi_n(\pi_m(\uplba {\dht{{\hTLone}_i}} {\dht{{\hTLtwo}_i}} {\dht{{\ghTone}_i}} {\dht{{\ghTtwo}_i}}))\ \text{if $i\in J\cup K$}\\
       \pi_n(\pi_m(\uplba {\dht{{\hTLone}_i}} {\dht{{\hTLone}_i}} {\dht{{\ghTone}_i}} {\dht{{\ghTtwo}_i}}))\ \text{if $i\in J\backslash K$}\\
       \pi_n(\pi_m(\uplba {\dht{{\hTLtwo}_i}} {\dht{{\hTLtwo}_i}} {\dht{{\ghTone}_i}} {\dht{{\ghTtwo}_i}}))\ \text{if $i\in K\backslash J$}
    \end{cases}\\
    \uplba {(\hrecv \p \q \ell \tS \hTLone)}
           {(\hrecv \p \q \ell \tS \hTLtwo)}
           {(\hrecvj \p \q \ell \tS \ghTone)}
           {(\hrecvk \p \q \ell \tS \ghTtwo)} = \\
    \qquad ((\hrecvj \p \q \ell \tS \hToneone,
            \hrecvk \p \q \ell \tS \hTonetwo),
            (\hrecvj \p \q \ell \tS \hTtwoone,
            \hrecvk \p \q \ell \tS \hTtwotwo))\\
    \qquad \text{where $J\cup K = I$ and, for $n,m=1,2$,}\ \dht{{\hT^n_m}_i}=
    \begin{cases}
      \pi_n(\pi_m(\uplba {\dht{{\hTLone}_i}} {\dht{{\hTLtwo}_i}}{\dht{{\ghTone}_i}} {\dht{{\ghTtwo}_i}}))\ \text{if $i\in J\cup K$}\\
       \pi_n(\pi_m(\uplba {\dht{{\hTLone}_i}} {\dht{{\hTLtwo}_i}} {\dht{{\ghTone}_i}} {\dht{{\ghTone}_i}} ))\ \text{if $i\in J\backslash K$}\\
       \pi_n(\pi_m(\uplba  {\dht{{\hTLone}_i}} {\dht{{\hTLtwo}_i} {\dht{{\ghTone}_i}} {\dht{{\ghTone}_i}}}))\ \text{if $i\in K\backslash J$}
    \end{cases}\\
    \uplba {(\hmsgi \p \q \ell \tS \hTLone)}
           {(\hmsgi \p \q \ell \tS \hTLtwo)}
           {(\hmsgi \p \q \ell \tS \ghTone)}
           {(\hmsgi \p \q \ell \tS \ghTtwo)}\\ =
     ((\hmsgi \p \q \ell \tS \hToneone,
            \hmsgi \p \q \ell \tS \hTonetwo),
            (\hmsgi \p \q \ell \tS \hTtwoone,
            \hmsgi \p \q \ell \tS \hTtwotwo))\\
    \qquad \text{where, for $n,m=1,2$,}\ \dht{{\hT^n_m}_i}=
    \pi_n(\pi_m(\uplba {\dht{{\hTLone}_i}} {\dht{{\hTLtwo}_i}} {\dht{{\ghTone}_i}} {\dht{{\ghTtwo}_i}}))\\[2mm]
      \text{undefined otherwise}
  \end{array}
  \]
  \end{small}
  The partial function \emph{projection-unmerge for localiser merge} $\upla {L_L} {L_P}$ takes as input two lists of types instead of two pairs. It returns a tuple of lists: if $L_L= [\hTLone,\dots,\dht{\hTL_n}]$ and $L_P=[\ghTone,\dots,\dht{\ghT_m}]$ the result list is of the kind $([\dht{\hT^1_1},\dots,\dht{\hT^1_m}],\dots,[\dht{\hT^n_1},\dots,\dht{\hT^n_m}])$ ($\pi_i$ indicate the standard projection of the $i$-th component of the tuple). $\upl$ is defined by recursion (on the sum of the lengths of $L_L$ and $L_P$) by the following equations.
  \begin{small}
  \[
  \begin{array}{l}
    \upla {[]} {L_P} = ([])\\
    \upla {[\hTL]} {L_P} = ([L_P])\\
    \upla  {[\hTLone, \hTLtwo]} {L_P} =
    \begin{cases}
      ([],[])\ \text{if $L_P=[]$}\\
      ([\pi_1(\pi_1(\uplba \hTLone \hTLtwo \ghT \ghT)),\pi_1(\pi_2(\uplba \hTLone \hTLtwo \ghT \ghT))])\ \text{if $L_P=[\ghT]$}\\
      ([\pi_1(\pi_1(\uplba \hTLone \hTLtwo \ghTone \ghTtwo)),
        \pi_2(\pi_1(\uplba \hTLone \hTLtwo \ghTone \ghTtwo))],\\
      \ \
       [\pi_1(\pi_2(\uplba \hTLone \hTLtwo \ghTone \ghTtwo)),
        \pi_2(\pi_2(\uplba \hTLone \hTLtwo \ghTone \ghTtwo))])\
      \text{if $L_P=[\ghTone,\ghTtwo]$}\\
      (\pi_1(\pi_1(U))\#L^1,\pi_1(\pi_2(U))\#L^2)\\
         \qquad\text{if $L_P=[\ghTone,\dots,\dht{\ghT_m}]$, where, for $i=1,2$}\\
         \qquad L^i=\pi_i(\upla {[(\pi_2(\pi_1(U))),(\pi_2(\pi_2(U)))]} {[\ghTtwo,\dots,\dht{\ghT_m}]})\ \text{and}\\
         \qquad U= \uplba \hTLone \hTLtwo \ghTone {(\lmerge_{k=2\dots m}\dht{\ghT_k})}
    \end{cases}\\
    \upla  {[\hTLone, \hTLtwo,\dots,\dht{\hTL_n}]} {L_P}
    = (L^1,\pi_1(\upla {[\hTLtwo,\dots,\dht{\hTL_n}]} {L^2}),\dots,\pi_{n-1}(\upla {[\hTLtwo,\dots,\dht{\hTL_n}]} {L^2}))\\
    \qquad \text{where, for $i=1,2$,}\ L^i = \upla {[\hTLone,(\merge_{k=2,\dots,n}\dht{\hTL_k})]} {L_L}
  \end{array}
  \]
  \end{small}
\end{definition}

\begin{lemma}\label{lem:unmpl}
  We are given hybrid types $\hTLone,\dots,\dht{\hTL_n}$ and $\ghTone,\dots,\dht{\ghT_m}$ such that $\lmerge_{i=1,\dots,n} \dht{\hTL_i} = \merge_{j=1,\dots,m} \dht{\ghT_j}$.\\

  Then $\upla [\hTLone,\dots,\dht{\hTL_n}] [\ghTone,\dots,\dht{\ghT_m}]$ is defined and, if we set
  \[
  \upla [\hTLone,\dots,\dht{\hTL_n}] [\ghTone,\dots,\dht{\ghT_m}] = ([\dht{\ghT^1_1},\dots,\dht{\ghT^1_m}],\dots,[\dht{\ghT^n_1},\dots,\dht{\ghT^n_m}])\text{,}
  \]
  we have that:
  \begin{itemize}
  \item $\merge_{j=1,\dots,m} ,\dht{\ghT^i_j} = \dht{\hTL_i}$, for all $i=1,\dots,n$, and
  \item $\lmerge_{i=1,\dots,n} ,\dht{\ghT^i_j} = \dht{\ghT_j}$, for all $j=1,\dots,n$.
  \end{itemize}
\end{lemma}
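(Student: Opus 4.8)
The plan is to follow the proof of Lemma~\ref{lem:unmlp} almost verbatim, since $\upl$ (Definition~\ref{def:unmpl}) is defined symmetrically to $\ulp$, with the two merge operators $\mrg$ and $\lmrg$ playing swapped roles. First I would reduce the statement to the binary instance $n=m=2$, i.e.\ I would prove the corresponding claim for the auxiliary function $\uplb$: whenever $\hTLone \lmrg \hTLtwo = \ghTone \mrg \ghTtwo$, the four entries $\dht{\ghT^i_j}$ obtained from $\uplba \hTLone \hTLtwo \ghTone \ghTtwo$ satisfy $\dht{\ghT^i_1} \mrg \dht{\ghT^i_2} = \dht{\hTL_i}$ for $i=1,2$ and $\dht{\ghT^1_j} \lmrg \dht{\ghT^2_j} = \dht{\ghT_j}$ for $j=1,2$. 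Definedness would be established along the way, exactly as in Lemma~\ref{lem:unmlp}.

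For this binary case I would argue by structural induction on the common type $\hTLone \lmrg \hTLtwo = \ghTone \mrg \ghTtwo$; this is legitimate because every recursive call of $\uplb$ is made on structurally smaller arguments. The cases $\hend$ and $\hX$ are immediate from the defining equations, and the recursion case $\hrec \hX \cdot$ reduces to its continuation by the induction hypothesis, the recursion prefix being shared by all four arguments. Since neither $\mrg$ nor $\lmrg$ ever yields a parallel construct, the equation $\hTLone \lmrg \hTLtwo = \ghTone \mrg \ghTtwo$ forces the remaining tops to be message-shaped, so the real content lies in the three message cases. In the send case, $\lmrg$ may have combined branches carrying different labels, so the common label set splits as $I = J \cup K$ recording the branches originally present in $\hTLone$ and $\hTLtwo$; the unmerge must then keep $\dht{\ghT_i}$ untouched on the asymmetric indices $J\setminus K$ and $K\setminus J$ and recurse through $\uplb$ only on $J\cap K$. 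Dually, in the receive case it is $\mrg$, acting on $\ghTone \mrg \ghTtwo$, that unions receive labels, so the same bookkeeping is carried out along the other axis; the global-message case combines both. For each, I would read off the four target equations directly from the clauses of $\mrg$, $\lmrg$ and $\uplb$, invoking associativity and commutativity of the merges together with the induction hypothesis on the per-branch recursive calls.

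With the binary case in hand, I would conclude by induction on $n+m$, following the recursive equations of $\upl$. The empty and singleton sublists are immediate, the $n=m=2$ instance is the binary case, and the recursive equations peel off the first entry of one list, fold the remaining entries into a single type via the matching merge operator, apply the binary step, and recurse on strictly shorter lists. Concretely, from $\hTLone \lmrg (\lmerge_{i\ge 2} \dht{\hTL_i}) = \merge_j \dht{\ghT_j}$ the binary step produces the row for $\hTLone$ together with a residual whose localiser-merge is $\lmerge_{i\ge 2} \dht{\hTL_i}$; applying the induction hypothesis to $[\hTLtwo,\dots,\dht{\hTL_n}]$ against this residual yields the remaining rows, and both conclusions of the lemma propagate by the associativity and commutativity of the two merges.

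The hard part will be the binary message cases, specifically maintaining the index sets $J$, $K$, $J\cap K$, $J\setminus K$ and $K\setminus J$ consistently for the two dual merges at once. The subtlety is that $\mrg$ and $\lmrg$ permit branches to differ on opposite constructors (receives versus sends), so each grid entry $\dht{\ghT^i_j}$ must simultaneously be the correct $\mrg$-summand of $\dht{\hTL_i}$ and the correct $\lmrg$-summand of $\dht{\ghT_j}$; checking that the definition of $\uplb$ realises both at once, rather than one at the expense of the other, is where the reasoning is least mechanical. I expect the most economical route is to develop Lemmas~\ref{lem:unmlp} and~\ref{lem:unmpl} in tandem, since they share exactly this branch-splitting analysis with the roles of the two merges exchanged.
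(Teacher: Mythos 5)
Your proposal is correct and follows essentially the same route as the paper's (very terse) proof: first establish the binary case for $\uplb$ by structural induction, exploiting the hypothesis and the inductive definitions of $\mrg$ and $\lmrg$, then conclude by induction on $n+m$ following the recursive equations of $\upl$. Your additional detail on the index-set bookkeeping ($J$, $K$, $J\cap K$, and the asymmetric differences) and your observation that this lemma is the exact dual of Lemma~\ref{lem:unmlp} with the roles of the two merges exchanged match how the paper organises the pair of lemmas.
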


\begin{proof}\ \\
  \begin{itemize}
  \item We first prove the base cases, in particular the case in which $n=2$ and $m=2$ (formally we first prove the result for $\uplb$). This is done by structural induction on $\hTLone$, exploiting the hypothesis and the inductive nature both of $\mrg$ and $\lmrg$.
  \item We conclude by induction on $n+m$, by following the definition of $\upl$.
  \end{itemize}
\end{proof}

We now define the unmerge for the projection (Definition \ref{def:unml}) and ensure that it has the desired behaviour (Lemma \ref{lem:unmp}).

\begin{definition}[Unmerge for Projection]\label{def:unmp}
  The partial function \emph{binary unmerge for projection} $\upba \hTE \ghTPone \ghTPtwo$ is defined by the following equations, by recursion on $\hdepth\hTE$, whenever the recursive call is defined (we use $\pi_1$ and $\pi_2$ to indicate standard pair projections):
  \begin{small}
  \[
  \begin{array}{l}
    \upba \hend \ghTPone \ghTPtwo = (\hend,\hend)\\
    \upba \hX \ghTPone \ghTPtwo = (\hX,\hX)\\
    \upba {(\hrec \hX \hTE)} \ghTPone \ghTPtwo =
    \begin{cases}
    (\hrec \hX {\pi_1(\upba {\hTE}  {\dht{\ghTPone'}}  {\dht{\ghTPtwo'}})},
      \hrec \hX {\pi_2(\upba {\hTE}  {\dht{\ghTPone'}}  {\dht{\ghTPtwo'}})})
      \\ \qquad \text{if $\ghTPone=\hrec \hX {\dht{\ghTPone'}}$ and $\ghTPtwo=\hrec \hX {\dht{\ghTPtwo'}}$}
      \\
      (\hrec \hX \hTE,\hrec \hX \hTE)  \text{if $\ghTPone=\ghTPtwo=\hend$}
      \end{cases}\\
    \upba {(\hpar \ghTPone \ghTPtwo)} \hTEone \hTEtwo = \text{undefined}
    \\
    \upba {(\hsend \p \q \ell \tS \hTE)} {\ghTPone} {\ghTPtwo} =
    \\ \qquad
    \begin{cases}
       \hspace*{-1.7mm}\begin{array}{l}
         (\hsendj \p \q \ell \tS {\hTE^1},\hsendk \p \q \ell \tS {\hTE^2})\\
         \qquad \text{with $\dht{\hTE^1_i}= \pi_1(\upba {\dht{\hTE_i}} \ghTPone \ghTPtwo)$ and $\dht{\hTE^2_k}= \pi_2(\upba {\dht{\hTE_i}} \ghTPone \ghTPtwo)$,}
         \\ \qquad \text{for $i\in I$, if $\ghTPone = \hsend \p \q {\ell} \tS \ghTPone$, $\ghTPtwo= \hsend \p \q {\ell} \tS \hTEtwo$}
       \end{array}\\
    \end{cases}\\
    \\
    \upba {(\hrecv \p \q \ell \tS \hTE)} {\ghTPone} {\ghTPtwo} =
    \\ \qquad
    \begin{cases}
      \hspace*{-1.7mm}\begin{array}{l}
        (\hrecvj \p \q \ell \tS {\hTE^1},\hrecvk \p \q \ell \tS {\hTE^2})\\
        \qquad \text{with $\dht{\hTE^1_j}=\dht{\hTE_j}$ for $j\in J\backslash K$, $\dht{\hTE^1_j}= \pi_1(\upba {\dht{\hTE_j}} \ghTPone \ghTPtwo)$ for $j\in J\cap K$,}
    \\
        \qquad \text{$\dht{\hTE^2_k}=\dht{\hTE_k}$ for $k\in K\backslash J$, $\dht{\hTE^2_k}= \pi_2(\upba {\dht{\hTE_k}} \ghTPone \ghTPtwo)$ for $k\in J\cap K$}
        \\ \qquad \text{if $\ghTPone = \hrecvj \p \q {\ell} \tS \ghTPone$, $\ghTPtwo= \hrecvk \p \q {\ell} \tS \ghTPtwo$ and $I=J\cup K$}
      \end{array}\\
    \end{cases}\\
    \upba {(\hmsgi \p \q \ell \tS \hTE)} {\ghTPone} {\ghTPtwo} =
    \\ \qquad
    \begin{cases}
       \hspace*{-1.7mm}\begin{array}{l}
        (\hmsg \p \q \dht{\{\lblFmt{\ell_i} (\tS_i). \pi_1(\upba {\dht{\hTE_i}} {\dht{\ghTPone_i}} {\dht{\ghTPtwo_i}})\}_{i \in I}},\\
        \qquad\hmsg \p \q \dht{\{\lblFmt{\ell_i} (\tS_i). \pi_2(\upba {\dht{\hTE_i}} {\dht{\ghTPone_i}} {\dht{\ghTPtwo_i}})\}_{i \in I}})
        \\ \qquad \text{with $\uplba {[\loc{}{\dht{\hTE_i}}]_{i\in I}} {[\ghTPone,\ghTPtwo]} = ([\dht{\ghTPone_i},\dht{\ghTPtwo_i}])_{i\in I}$}
      \end{array}
    \end{cases}
  \end{array}
  \]
  \end{small}
  The partial function \emph{unmerge for projection} $\upa {\hTE} {\dht{L}}$ is defined by the following equations, by recursion on the list of hybrid types $\dht{L}$, whenever the recursive call is defined:
  \begin{small}
  \[
  \begin{array}{l}
    \upa {\hTE} {[]} = \text{undefined}\\
    \upa {\hTE} {[\ghTP]} = [\hTE]\\
    \upa {\hTE} {[\ghTPone,\ghTPtwo]} = [\pi_1(\upba {\hTE} \ghTPone \ghTPtwo),\pi_2(\upba {\hTE} \ghTPone \ghTPtwo)]\\
    \upa {\hTE} {(\ghTPone\#(\ghTPtwo\#\dht{L}))} =\\ \qquad
    \pi_1(\upba {\hTE} \ghTPone {(\merge({\sf set} (\ghTPtwo\#\dht{L}))}) \#
    \upa {\pi_2(\upba {\hTE} \ghTPone {(\lmerge({\sf set} (\ghTPtwo\#\dht{L}))})} {(\ghTPtwo\#\dht{L})}
  \end{array}
  \]
  \end{small}
\end{definition}

\begin{lemma}[Unmerge for Projection]\label{lem:proj-bb} \label{lem:unmp}
  We fix a set of participants $E$, and we are given hybrid types $\dht{\ghTP_t}$, $t=1,\dots,n$, and $\hTE$ such that:
  \begin{itemize}
  \item[(a)] $\hpart {\dht{\hTE}} \subseteq E$, for all $t=n$,
  \item[(b)] $\hepart  {\dht{\hTE}} \cap E = \emptyset$,
  \item[(c)] $\loc{} {\hTE} = \merge_{t=1,\dots,n} {\dht{\ghTP_t}}$;
  \end{itemize}
  then, for all $t=1,\dots,n$, we call $\dht{\hTE_t}$ the $t$-th element of the list $ \upEa {\hTE} {[\ghTPone,\dots,\dht{\ghTP_n}]}$ and we have
  \begin{enumerate}
  \item $\loc{} {\dht{\hTE_t}}  =  \dht{\ghTP_t}$, for all $t=1,\dots,n$ and
  \item $\merge_{t\in T} {\dht{\hTE_t}} = \hTE$.
  \end{enumerate}
\end{lemma}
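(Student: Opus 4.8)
The plan is to prove Lemma~\ref{lem:unmp} as the exact dual of Lemma~\ref{lem:unml}, interchanging the roles of projection and localiser and of the two merge operators $\mrg$ and $\lmrg$. As in that proof, I would proceed in two stages: first establish the binary case (the statement for $\upb$, i.e.\ $n=2$), and then lift it to arbitrary $n$ by a second induction on the length of the list, following the recursive definition of $\up$ in Definition~\ref{def:unmp}. The binary case I would prove by induction on $\hdepth{\hTE}$, the measure on which $\upb$ itself recurses, with a case analysis on the top constructor of $\hTE$.

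For the binary case I write hypothesis~(c) as $\loc{}\hTE = \ghTPone \mrg \ghTPtwo$. The base cases $\hTE=\hend$ and $\hTE=\hX$ are immediate, since both $\loc{}$ and $\mrg$ force $\ghTPone=\ghTPtwo=\hTE$ and $\upb$ returns the diagonal pair. For $\hTE=\hrec\hX{\hTE}$ I would unfold, distinguishing whether the $\ghTP$'s are recursive or $\hend$ exactly as in Definition~\ref{def:unmp}, and close by the induction hypothesis. The send case $\hTE=\hsend\p\q\ell\tS{\hTE}$ uses that $\loc{}$ preserves sends and that $\mrg$ merges two send constructs \emph{strictly} (identical label set, continuations merged pointwise), so hypothesis~(c) decomposes branchwise and the induction hypothesis applies to each continuation $\hTEi$; the receive case is symmetric but uses the \emph{lax} receive-merge of $\mrg$, splitting the label set $I=J\cup K$ as in the matching clause of Definition~\ref{def:unmp}. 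The parallel case does not genuinely arise: $\loc{}(\hpar{\hTE}{\hTE})$ is a top-level parallel, which a two-fold $\mrg$ can produce only when $\ghTPone=\ghTPtwo$ (by reflexivity of merge), a degenerate situation subsumed by the singleton-list clause of $\up$; $\upb$ leaves it undefined, so there is nothing to check.

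The main obstacle is the global-message case $\hTE=\hmsgi\p\q\ell\tS{\hTE}$. Here the localiser rule \rulename{loc-msg} collapses the message and merges the continuations with the \emph{localiser} merge, so hypothesis~(c) reads $\lmerge_{i\in I}\loc{}\hTEi = \ghTPone \mrg \ghTPtwo$: an identity between a $\lmrg$-merge on the left and a $\mrg$-merge on the right. This is precisely the hypothesis of Lemma~\ref{lem:unmpl}, which I would invoke to obtain, for each branch $i\in I$, types $A_i,B_i$ with $A_i\mrg B_i=\loc{}\hTEi$ and $\lmerge_{i\in I}A_i=\ghTPone$, $\lmerge_{i\in I}B_i=\ghTPtwo$. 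The first equality re-establishes hypothesis~(c) for the continuation $\hTEi$ against the pair $(A_i,B_i)$, so the induction hypothesis yields $\upb$-pieces whose localisers are $A_i$ and $B_i$ and whose $\mrg$-merge is $\hTEi$. Reassembling under the common prefix $\hmsg\p\q$, conclusion~(1) follows since $\loc{}$ skips the message and re-merges with $\lmrg$, giving $\lmerge_i A_i=\ghTPone$ (resp.\ $\lmerge_i B_i=\ghTPtwo$), while conclusion~(2) follows because the branchwise $\mrg$-merges collapse back to $\hmsg\p\q$ with continuations $\hTEi$, i.e.\ to $\hTE$. Checking that the recursive calls satisfy the side-conditions of Lemma~\ref{lem:unmpl} (definedness, together with the participant hypotheses~(a) and~(b)) is the delicate bookkeeping here, exactly mirroring the $\{\p,\q\}\cap E=\emptyset$ message case in the proof of Lemma~\ref{lem:unml}.

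Finally, I would obtain the general statement by induction on $n$. The cases $n=0,1$ are read directly off the definition of $\up$, and $n=2$ is the binary case above. For $n\geq 3$ the definition peels the first element: reassociating~(c) as $\loc{}\hTE=\ghTPone\mrg(\merge_{k\geq2}\ghTP_k)$, the binary case splits $\hTE$ into $A=\pi_1(\upba\hTE\ghTPone{(\merge_{k\geq2}\ghTP_k)})$ and $B=\pi_2(\cdots)$ with $\loc{}A=\ghTPone$, $\loc{}B=\merge_{k\geq2}\ghTP_k$, and $A\mrg B=\hTE$. Since $\loc{}B=\merge_{k\geq2}\ghTP_k$, the induction hypothesis applied to $B$ and $[\ghTPtwo,\dots,\ghTP_n]$ produces pieces $\hTEtwo,\dots$ with the prescribed localisers and with their $\mrg$-merge equal to $B$; taking $A$ as the first piece, all localisers match the $\ghTP_t$ and the total merge is $A\mrg B=\hTE$, as required.
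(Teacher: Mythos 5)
Your proposal is correct and follows essentially the same route as the paper: the paper's proof is a two-line sketch stating that one first proves the binary case by induction on $\hdepth{\hTE}$ (invoking Lemma~\ref{lem:unmpl} in the global-message case, where the localiser merge meets the projection merge), and then lifts to arbitrary $n$ by induction on the list, everything being dual to the proof of Lemma~\ref{lem:unml} with projection and localiser swapped. Your write-up is a faithful (and more detailed) elaboration of exactly that argument, including the correct strict/lax treatment of send versus receive merges and the peeling-off step for $n\geq 3$.
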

\begin{proof}
  As for Lemma \ref{lem:unml}, the proof is pretty technical: we first prove the binary case $n=2$---which goes by induction on the $\hdepth\hTE$ and needs lemma \ref{lem:unmpl}---and then we generalise it to any $n$ by induction. The proof is totally analogous to the proof of Lemma \ref{lem:unml}, with the difference that localiser and projection ``play inverted roles''.
\end{proof}

We give an algorithm to explicitly build ``back'' a hybrid type, from two compatible types $\ghT$ and $\hTE$. We first describe the algorithm as a (partial) recursive function in the next Definition \ref{def:bb1-a}, then, with Theorem \ref{thm:comp1-a}, we prove that such functions has the desired behaviour.

\begin{definition}[Build-Back of a Single Component]\label{def:bb1-a}
  Given a set of participants $E$, the \emph{build-back of a single component} is the partial recursive function $\bbeoa \ghT \hTE$ defined below, by recursion on $\hdepth{\ghT}+\hdepth{\hTE}$. (In what follows we indicate by $\pi_t$ the function that given a list returns its $t$-th element, when it exists.)
  \begin{small}
  \[
  \begin{array}{l}
    \text{If $\hpart \hTE\nsubseteq E$, $\bbeoa \ghT \hTE$ is undefined, otherwise it is defined by the following equations:}\\
    \bbeoa \hend \hTE = \hTE\\
    \bbeoa \hX \hTE = \hTE\\
    \bbeoa {(\hrec \hX \ghTb)} {\hTE} =
    \begin{cases}
      \hpar {(\hrec \hX \ghTb)} {\hTE}\ \text{if $\hpart \ghTb \cap E = \emptyset$ and both $\hrec \hX \ghTb$ and $\hTE$ are closed}\\
      \hmsg \ps \pr \dht{\{\lblFmt{\ell}(\tS).(\bbeoa {(\hrec \hX \ghTb)} {\hTEi})\}_{i\in I}}
      \\ \qquad\text{if $\hpart \ghTb \cap E = \emptyset$, $\hTE=\hmsgi\ps\pr\ell\tS\hTE$,}
      \\ \qquad \text{and one of $\hrec \hX \ghTb$ and $\hTE$ is not closed}\\
      \hrec \hX {(\bbeoa \ghTb \hTEb)}\ \text{if $\hpart \ghTb \cap E \neq \emptyset$ and $\hTE = \hrec \hX \hTEb$}
    \end{cases}\\
    \bbeoa {(\hpar \ghTone \ghTtwo)} {\hTE} =
    \begin{cases}
      \hpar {(\bbeoa \ghTone \hTE)} \ghTtwo\ \text{if $\hpart \ghTtwo \cap E = \emptyset$}\\
      \hpar \ghTone {(\bbeoa \ghTtwo \hTE)}\ \text{if  $\hpart \ghTtwo \cap E \neq \emptyset$ and $\hpart \ghTone \cap E = \emptyset$}
    \end{cases}\\
    \bbeoa {\hsend \p \q \ell \tS \ghT} {\hTE} =
    \\ \qquad
    \begin{cases}
      \hsnd \p \q \dht{\{\lblFmt{\ell_i} (\dte{\tS_i}).(\bbeoa \ghTi \hTEi)\}_{i\in I}}\ \text{if $\hTE = \hsend \p \q \ell \tS \hTE$}\\
      \hmsg \ps \pr \dht{\{\lblFmt{m_j} (\dte{\tS'_j}).(\bbeoa {{\sf UL}_j} \hTEj)\}_{j\in J}}\
      \\ \qquad \text{with ${\sf UL}_j = \pi_j(\ula \ghT {[\loc{} {\hTEj}]_{j\in J}})$, if $\hTE = \hmsgj \ps \pr m {\tS'} \hTE$ and $\p\in E$}\\
      \hsnd \p \q \dht{\{\lblFmt{l_i} (\dte{\tS_i}).(\bbeoa \ghTi {{\sf UP}_i})\}_{i\in I}}\
      \\ \qquad \text{with ${\sf UP}_i = \pi_i(\upa \hTE {[\projb {\ghTi} E]_{i\in I}})$, if $\p\notin E$}
    \end{cases}\\
        \bbeoa {\hrecv \p \q \ell \tS \ghT} {\hTE} =
        \\ \qquad
        \begin{cases}
      \hrcv \p \q \dht{\{\lblFmt{\ell_i} (\dte{\tS_i}).(\bbeoa \ghTi \hTEi)\}_{i\in I}}\ \text{if $\hTE = \hrecv \p \q \ell \tS \hTE$}\\
      \hmsg \ps \pr \dht{\{\lblFmt{m_j} (\dte{\tS'_j}).(\bbeoa {{\sf UL}_j} \hTEj)\}_{j\in J}}\
      \\ \qquad \text{with ${\sf UL}_j = \pi_j(\ula \ghT {[\loc{} {\hTEj}]_{j\in J}})$, if $\hTE = \hmsgj \ps \pr m {\tS'} \hTE$ and $\q\in E$}\\
      \hrcv \p \q \dht{\{\lblFmt{l_i} (\dte{\tS_i}).(\bbeoa \ghTi {{\sf UP}_i})\}_{i\in I}}\
      \\ \qquad \text{with ${\sf UP}_i = \pi_i(\upa \hTE {[\projb {\ghTi} E]_{i\in I}})$, if $\q\notin E$}
    \end{cases}\\
        \bbeoa {\hmsgi \p \q \ell \tS \ghT} {\hTE} =
        \\ \qquad
    \begin{cases}
      \hmsg \p \q \dht{\{\lblFmt{\ell_i} (\dte{\tS_i}).(\bbeoa \ghTi \hTEi)\}_{i\in I}}\ \text{if $\hTE = \hsend \p \q \ell \tS \hTE$ or $\hTE = \hrecv \p \q \ell \tS \hTE$}\\
      \hmsg \ps \pr \dht{\{\lblFmt{m_j} (\dte{\tS'_j}).(\bbeoa {{\sf UL}_j} \hTEj)\}_{j\in J}}\
      \\ \qquad \text{with ${\sf UL}_j = \pi_j(\ula \ghT {[\loc{} {\hTEj}]_{j\in J}})$,}
      \\ \qquad \text{if $\hTE = \hmsgj \ps \pr m {\tS'} \hTE$ and $\p\in E$ or $\q\in E$}\\
      \hmsg \p \q \dht{\{\lblFmt{l_i} (\dte{\tS_i}).(\bbeoa \ghTi {{\sf UP}_i})\}_{i\in I}}\
      \\ \qquad \text{with ${\sf UP}_i = \pi_i(\upa \hTE {[\projb {\ghTi} E]_{i\in I}})$, if $\{\p,\q\}\cap E = \emptyset$}
    \end{cases}\\
    \text{and undefined if none of the above applies.}
  \end{array}
  \]
  \end{small}
\end{definition}
We observe that, in the above definition, for $\ghT=\hmsg \p \q \dots$, we have left undefined the case in which $\{\p,\q\}\subseteq E$. That is because $E$ is the set of participants for a certain component and the messages $\hmsg\p\q$ are for internal communication within this component. Namely such messages belong to $\hTE$, not to $\ghT$, which instead is the type for inter-component communication.

Finally, we prove the technical result that supports our whole theory. This theorem allows us to compose two hybrid types---$\ghT$, disciplining the inter-component communication of the system, and $\hTE$, taking care of interactions within a single component---into one that retains all the information of the two, in terms of projections. The proof of this theorem relies on the just-proved lemmas and goes by a combination of induction on the depth of both given types $\ghT$ and $\hTE$, and of case analysis on these.

\begin{theorem}[Building Back a Single Component]\label{thm:comp1-a} \ \\
  We fix a set of participants $E$, and we are given hybrid types $\ghT$ and $\hTE$, such that:
  \begin{itemize}
  \item[(a)] $\hpart \hTE \subseteq E$,
  \item[(b)] $\hepart \hTE \cap E = \emptyset$, and
  \item[(c)] $\projb {\ghT} {E} = \loc{} \hTE$;
  \end{itemize}
  we set $\hT=\bbeoa \ghT \hTE$ and we have:
  \begin{enumerate}
  \item $\projb {\hT} {E} = \hTE$ and
  \item for all $E'$, such that $E'\cap E = \emptyset$, $\projb \hT {E'} = \projb \ghT {E'}$.
  \end{enumerate}
  Moreover if \isglobal{\ghT} then \isglobal{\hT}.
\end{theorem}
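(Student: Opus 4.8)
The plan is to prove both numbered conclusions together with the final globality claim by strong induction on $\hdepth{\ghT}+\hdepth{\hTE}$, with a case analysis mirroring the defining equations of $\bbeoa{(\ghT)}{(\hTE)}$ in Definition \ref{def:bb1-a}. The base cases $\ghT=\hend$ and $\ghT=\hX$ set $\hT=\hTE$: conclusion $(1)$, $\projb\hT{E}=\hTE$, follows from Lemma \ref{lem:proj-fix-a} using hypothesis $(a)$, while conclusion $(2)$ is read off the compatibility hypothesis $(c)$, which forces $\loc{}\hTE$ to be $\hend$ (resp.\ $\hX$); hence $\hTE$ carries no surviving local construct and, by Lemmas \ref{lem:proj-end-a} and \ref{lem:proj-var-a} together with $\hpart\hTE\cap E'=\emptyset$, projects onto any disjoint $E'$ to exactly $\projb\ghT{E'}$. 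Closedness is tracked throughout via Lemmas \ref{lem:proj-clo-a} and \ref{lem:loc-clo-a}, and guardedness of the rebuilt type via Lemma \ref{lem:proj-guar-a}.

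For the inductive \emph{matching} cases, where the heads of $\ghT$ and $\hTE$ correspond (e.g.\ $\ghT=\hmsgi\p\q\ell\tS\ghT$ over $\hTE=\hsend\p\q\ell\tS\hTE$, or the recursion and parallel clauses), the strategy is to unfold the definitions of projection and localiser on hypothesis $(c)$ to obtain, branchwise, the continuation compatibility $\projb{\ghTi}{E}=\loc{}\hTEi$; this licenses the induction hypothesis on each $\bbeoa{(\ghTi)}{(\hTEi)}$, the depth measure strictly decreasing. Conclusion $(1)$ then follows by projecting the rebuilt head onto $E$, which reconstructs the original $\hTE$-construct (a global message projects back to the matching send or receive when exactly one of $\p,\q$ lies in $E$), and conclusion $(2)$ by projecting onto a disjoint $E'$, which reproduces the $\ghT$-construct; both reduce to the induction hypothesis on continuations. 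For the parallel clauses one additionally uses the well-formedness disjointness of participants to decide, via \rulename{proj-par}, which factor a given $E$ or $E'$ selects.

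The genuinely hard cases are the \emph{mismatched} ones, where $\ghT$ offers a local send/receive while $\hTE$ begins with an internal global message $\hmsg\ps\pr$, and dually. Here the branching structures of $\ghT$ and $\hTE$ need not agree, because merging ($\mrg$) may have collapsed branches in $\projb\ghT{E}$ and localiser-merging ($\lmrg$) in $\loc{}\hTE$; this is exactly why the build-back invokes the unmerge operators. When $\hTE$'s head is the internal message I will appeal to Lemma \ref{lem:unml}: its conclusion $(1)$, $\projb{{\sf UL}_j}{E}=\loc{}\hTEj$, supplies the compatibility needed to apply the induction hypothesis to $\bbeoa{({\sf UL}_j)}{(\hTEj)}$, while its conclusion $(2)$, $\merge_j\projb{{\sf UL}_j}{E'}=\projb\ghT{E'}$, is precisely what is required to re-merge branches when proving conclusion $(2)$ for the rebuilt $\hmsg\ps\pr$. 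The symmetric situation uses Lemma \ref{lem:unmp} with the roles of projection and localiser exchanged. Crucially, the unmerge preserves the depth of the first argument ($\hdepth{{\sf UL}_j}=\hdepth{\ghT}$, noted after Definition \ref{def:unml}), so the induction measure still decreases through $\hTE$.

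I expect the main obstacle to be deriving the continuation-level compatibility $(c)$ in the clauses that mix recursion or parallel composition in $\ghT$ with a message in $\hTE$, for instance the second recursion clause, where $\hpart\ghTb\cap E=\emptyset$ forces $\projb{(\hrec\hX\ghTb)}{E}$ to be a degenerate type ($\hend$ or $\mu\dots\hX$) and one must argue that an $\lmrg$ of the localised continuations can equal such a type only if each continuation already equals it, so that the induction hypothesis applies uniformly across branches. Establishing these algebraic facts about $\mrg$ and $\lmrg$ on degenerate types, and threading the closedness and guardedness side-conditions correctly through the recursion clauses, is the delicate bookkeeping. The globality claim rides along the same induction: when $\isglobal{\ghT}$ the local clauses for $\ghT$ never fire, every rebuilt head is a global message, and the recursive arguments (the $\ghTi$, and the unmerged ${\sf UL}_j$, which merely restructure branches of the already-global $\ghT$) stay global, so the induction hypothesis yields globality of $\hT$.
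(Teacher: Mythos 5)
Your proposal follows essentially the same route as the paper's proof: induction on $\hdepth{\ghT}+\hdepth{\hTE}$ with case analysis on $\ghT$ mirroring the build-back equations, the preliminary closedness/guardedness/degenerate-projection lemmas for the base and recursion cases, and Lemmas \ref{lem:unml} and \ref{lem:unmp} supplying exactly the branchwise compatibility and re-merging identities needed in the mismatched cases (including the depth-preservation observation that keeps the measure decreasing). The obstacle you flag—that an $\lmrg$ of localised continuations can equal a degenerate type only if each continuation does—is precisely the argument the paper uses in its $\hX$ and recursion cases, so your plan is sound and matches the paper's proof.
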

\begin{proof}
  By induction on $d=\hdepth{\ghT}+\hdepth{\hTE}$. The case $d=0$ is trivial; let us then have $d=n+1$ and assume the theorem true for all $\ghTb,\hTb$ as in hypothesis, for which $\hdepth{\ghTb}+\hdepth{\hTb}\leq n$. We proceed by case analysis on $\ghT$. We call $\hT=\bbeoa \ghT \hTE$.

  \ \\{[$\ghT = \hend$]}
  By definition $\hT = \hTE$; we observe that $\mathbf{1.}$ projection onto a set containing all the internal participants is the identity function (Lemma \ref{lem:proj-fix-a}), and $\mathbf{2.}$ projection of a closed type ($\hTE$ is closed since its localisation is, Lemma \ref{lem:loc-clo-a}), onto a set of roles disjoint from the internal participants, is $\hend$ (Lemma \ref{lem:proj-end-a}).

  \ \\{[$\ghT = \hX$]} We observe that, since $\loc{}\hTE=\hX$, either $\hTE=\hX$, for which our proof is trivial, or $\hTE=\hmsgi\ps\pr\ell\tS\hTE$, in which case we observe that $\hX=\lmerge_{i\in I} \loc{}\hTEi = \loc{}\hTEi$. By induction on $\hdepth\hTE$ we prove our result.

  \ \\{[$\ghT = \hrec \hX \ghTb$]} All cases go by induction hypothesis, but the second case, where $\ghT$ and $\hTE$ share some participant, but they are not both closed. In fact, both are not for condition $(c)$. Thanks to condition $(c)$, Lemma \ref{lem:proj-end-a}, and a similar argument to the one used in the [$\ghT = \hX$] case above, we conclude by induction hypothesis.

  \ \\{[$\ghT = \hpar \ghTone \ghTtwo$]} We observe that $\projb \ghT E = \projb \ghTone E$, (or $\projb \ghT E = \projb \ghTtwo E$) without loss of generality. We conclude by induction hypothesis.

   \ \\{[$\ghT = \hsend \p \q \ell \tS \ghT$]}
   \begin{itemize}
   \item[$(a)$] If $\p\in E$ and $\hTE = \hsend \p \q \ell \tS \hTE$ we conclude by construction of $\hT=\bbeoa \ghT \hTE$ and induction hypothesis.
   \item[$(b)$] If $\p\in E$ and $\hTE = \hmsgj \pr \ps {m} {\tS'} \hTE $, then $\hsnd \p \q \dht{\{\lblFmt{\ell_i}(\tS_i).(\projb \ghTi E)\}} = \lmerge_{j\in J} \hTEj$. We can therefore apply Lemma \ref{lem:loc-bb} and obtain that for ${\sf UL}_j = \pi_j(\ula \ghT {[\loc{} {\hTEj}]_{j\in J}})$ we have $\loc {} \hTEj = \projb {{\sf UL}_j} E$ (for all $j\in J$) and $\merge_{j\in J}\projb{\dht{{\sf UL}_j}}{E'}= \projb \ghT {E'}$ (for all $E'\cap E =\emptyset$). By construction (namely $\bbeoa \ghT \hTE = \hsnd \ps \pr \dht{\{\lblFmt{m_j} (\dte{\tS'_j}).(\bbeoa {{\sf UL}_j} \hTEj)\}_{j\in J}}$) and induction hypothesis we conclude.

   \item[$(c)$] If $\p\notin E$, we have that $\loc {} \hTE = \merge_{i\in I} (\projb \ghTi E)$. We apply Lemma \ref{lem:proj-bb} and, for ${\sf UP}_i = \pi_i(\upa \hTE {[\projb {\ghTi} E]_{i\in I}})$, we have $\loc {} {{\sf UP}_i} = (\projb \ghTi E)$ and $\merge_{i\in I} {{\sf UP}_i} = \hTE$. By construction (namely $\bbeoa \ghT \hTE = \hsnd \p\q \dht{\{\lblFmt{\ell_i} (\dte{\tS_i}).(\bbeoa \ghTi{{\sf UP}_i})\}_{i\in I}}$) and induction hypothesis we conclude.

   \end{itemize}

   \ \\{[$\ghT = \hrecv \p \q  \ell \tS \ghT$]} and {[$\ghT = \hmsgi \p \q \ell \tS \ghT$]} These are analogous to to the above ``send'' case: in essence, they follow the different cases of the Definition \ref{def:bb1} of $\bbeo$ which are similar for all three different kind of messages.

\end{proof}

\subsection{Projection for Hybrid Types and Set Inclusion}
\label{subsec:appendix2}



The following lemma is needed for proving preservation of projection (Theorem \ref{thm:proj-comp-a} below), which is essential for achieving distributed specification for communication protocols.

\begin{lemma}[Projection Distributes over Merge]
  \label{lem:mer-proj-step}
  Given $\hT = \merge_{t= 1,\dots, n} \hTt$, then $\projb \hT E = \merge_{n=1,\dots N} {\projb \hTi E}$
\end{lemma}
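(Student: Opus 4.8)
The plan is to reduce the $n$-ary statement to the binary one and then prove the binary case by induction following the defining clauses of the merge operator $\mrg$ from Definition \ref{def:proj}. Concretely, it suffices to establish
\[
\projb{(\hTone \mrg \hTtwo)}{E} \;=\; \projb{\hTone}{E} \mrg \projb{\hTtwo}{E},
\]
whenever the left-hand side is defined. The general case $\projb{(\merge_{t=1,\dots,n}\hTt)}{E} = \merge_{t=1,\dots,n}\projb{\hTt}{E}$ then follows by induction on $n$, writing the iterated merge as a sequence of binary merges and using the associativity and commutativity of $\mrg$ (the same structural properties invoked throughout Appendix \ref{subsec:appendix1}).

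I would prove the binary equation by induction on $\hdepth{\hTone}+\hdepth{\hTtwo}$, with a case analysis on which clause of $\mrg$ produced $\hTone \mrg \hTtwo$. The reflexive clause $\hTone = \hTtwo = \hT$ is immediate, since both sides collapse to $\projb{\hT}{E}$ using idempotence $\projb{\hT}{E}\mrg\projb{\hT}{E}=\projb{\hT}{E}$. The $\hend$ and $\hX$ cases are trivial, and the $\hrec \hX \hT$ and parallel cases follow from the induction hypothesis once guardedness and closedness of the projected bodies are checked via Lemmas \ref{lem:proj-guar-a} and \ref{lem:proj-clo-a} (these matter because $\rulename{proj-rec}$ may collapse a recursion to $\hend$). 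For two message, send, or receive prefixes that share the same $\p,\q$ and index set $I$ (the only situation in which $\mrg$ is defined on matching prefixes), whenever $\p\in E$ or $\q\in E$ projection preserves the prefix, so distributivity reduces to applying the induction hypothesis branch-by-branch.

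The delicate cases are those in which projection \emph{erases} a prefix. When $\p,\q\notin E$, projection of a message construct returns $\merge_{i\in I}\projb{\hTi}{E}$, so the claim turns into a statement about reorganising a nested family of merges, which I would settle purely by associativity, commutativity and idempotence of $\mrg$, combined with the induction hypothesis on the continuations. The genuinely new phenomenon is merging two receive constructs with \emph{different} label sets $I$ and $J$: here $\mrg$ forms the union of branches, merging continuations only over $I\cap J$. Projecting with $\p\notin E$, $\q\in E$ keeps a receive whose branch set is $I\cup J$, and one checks branchwise (induction hypothesis on the common labels of $I\cap J$) that the result equals $\projb{\hTone}{E}\mrg\projb{\hTtwo}{E}$; projecting with $\p,\q\notin E$ must combine the union-of-branches behaviour with the prefix-erasure simultaneously.

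The main obstacle is precisely this last reconciliation: showing that the single merge $\merge_{k\in I\cup J}(\cdots)$ obtained by \emph{first} merging and \emph{then} projecting equals the merge of the two separately projected families. This is where idempotence (to absorb the overlapping branches indexed by $I\cap J$) together with the commutativity and associativity of $\mrg$ are indispensable, and it is the only place where the argument is more than a routine propagation of the induction hypothesis through the defining clauses of $\projb{\_}{E}$.
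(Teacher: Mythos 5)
Your proposal is correct and follows essentially the same route as the paper's proof: reduce to the binary case $\projb{(\hTone \mrg \hTtwo)}{E} = \projb{\hTone}{E} \mrg \projb{\hTtwo}{E}$, prove it by induction with a case analysis on the merge clauses (the paper inducts structurally on the merged type $\hT$, which amounts to the same case split as your induction on $\hdepth{\hTone}+\hdepth{\hTtwo}$), handle the receive case with differing label sets $I$, $J$ by splitting branches over $J\backslash K$, $K\backslash J$, $J\cap K$, and settle the prefix-erasing cases ($\p,\q\notin E$) via associativity and commutativity of $\mrg$; the $n$-ary statement then follows by induction on $n$ exactly as you describe.
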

\begin{proof}

  Let us first prove the lemma for $n=2$: $\hT=\hTone\mrg\hTtwo$. We proceed by induction on $\hT$. Cases $\hT=\hend$ and $\hT=\hX$ are trivial.

  Let us consider $\hT=\hrec \hX \hTb$, then $\hTone = \hrec\hX{\dht{\hTone'}}$ and $\hTtwo = \hrec\hX{\dht{\hTtwo'}}$, with $\hTb = \dht{\hTone'}\mrg\dht{\hTtwo'}$. If $\hT$ is closed and $\hrec \hX {\projb \hTb E}$ is not guarded, than neither are $\hrec \hX {\projb {\dht{\hTone'}} E}$ and $\hrec \hX {\projb {\dht{\hTtwo'}} E}$ (by reasoning on the participants of the result of $\mrg$ and the well-formedness of initial types), thus we conclude. In the other case, the thesis follows by induction hypothesis.

  If $\hT=\hpar {\dht{\hT_a}}{\dht{\hT_b}}$ the thesis follow simply by applying induction hypothesis.\\ \

  Let us consider the send case $\hT =\hsend \p \q \ell \tS \hT$, and hence $\hTone=\hsend \p \q {\ell} \tS \hTone$ and $\hTtwo=\hsend \p \q {\ell} \tS \hTtwo$. If $\p\in E$ then we conclude by definition of merge and induction hypothesis. If $\p\notin E$, then
  \[
  \begin{array}{l}
    \projb \hT E = \merge_{i\in I} \projb \hTi E =
     (\merge_{i\in I} \projb {(\dht{\hTone_i} \mrg \dht{\hTtwo_i})} E) \overset{IH}{=}\\
    \overset{IH}{=}  (\merge_{i\in I} ((\projb{\dht{\hTone_i}} E \mrg \projb{\dht{\hTtwo_i}} E))) =\\
    = (\merge_{i\in I} \projb {\dht{\hTone_i}} E) \mrg  (\merge_{i\in I} \projb {\dht{\hTtwo_i}} E) = \projb \hTone E \mrg \projb \hTtwo E
    \end{array}
  \]
  We have marked with $\overset{IH}{=}$ where we have exploited the induction hypothesis; all the other equalities follow by definitions or by associativity and commutativity of merge.\\ \

If $\hT =\hrecv \p \q \ell \tS \hT$, we have $\hTone=\hrecvj \p \q {\ell} \tS \hTone$ and $\hTtwo=\hrecvk \p \q {\ell} \tS \hTtwo$.  Since, $\hT = \hTone \mrg \hTtwo$, we observe that:
  \begin{itemize}
  \item if $i\in J\backslash K$ then $\hTi = \hTonei$,
  \item if $i\in K\backslash J$ then $\hTi = \hTtwoi$, and
  \item if $i\in J\cap K$ then $\hTi = \hTonei \mrg \hTtwoi$.
  \end{itemize}
  If $\q\in E$ we conclude by applying the definition of projection to $\hT =\hrecv \p \q \ell \tS \hT$ and induction hypothesis when $i\in J\cap K$. If $\q\notin E$) we have
  \[
  \begin{array}{l}
    \projb \hT E = \merge_{i\in I} \projb \hTi E =\\
    = (\merge_{i\in J\backslash K} \projb \hTi E) \mrg  (\merge_{i\in K\backslash J} \projb\hTi E) \mrg  (\merge_{i\in J\cap K} \projb \hTi E) =\\
    = (\merge_{j\in J\backslash K} \projb \hTonej E) \mrg  (\merge_{k\in K\backslash J} \projb \hTtwok E) \mrg  (\merge_{i\in J\cap K} \projb {(\hTonei \mrg \hTtwoi)} E) \overset{IH}{=}\\
    \overset{IH}{=}  (\merge_{j\in J\backslash K} \projb \hTonej E) \mrg  (\merge_{k\in K\backslash J} \projb \hTtwok E) \mrg  (\merge_{i\in J\cap K} ((\projb\hTonei E \mrg \projb\hTtwoi E))) =\\
    = (\merge_{j\in J} \projb \hTonej E) \mrg  (\merge_{k\in K} \projb \hTtwok E) = \projb \hTone E \mrg \projb \hTtwo E
    \end{array}
  \]
  As above induction hypothesis, definition, and associativity and communtativity of merge.\\ \

  The case $\hT =\hmsgi \p \q \ell \tS \hT$ has four cases:
  \begin{itemize}
  \item $\{\p,\q\}\subseteq E$, analogous to the send case above $\hsnd\p\q\dots$, when $\p in E$.
  \item $\p\in E$ and $\q\notin E$, analogous to the send case above $\hsnd\p\q\dots$, when $\p in E$.
  \item $\p\notin E$ and $\q\in E$, analogous to the receive case above $\hrcv\p\q\dots$, when $\q in E$.
  \item  $\{\p,\q\}\cap E=\emptyset$, analogous to the send case above $\hsnd\p\q\dots$, when $\p \notin E$.
  \end{itemize}
  \ \\ \

  Now, to have the full lemma for $t=1,\dots,n$, for a generic $n$, it is sufficient to proceed by induction on $n$: we observe that $\hT = \merge_{t=1,\dots,n} \hTt = (\merge_{t=1,\dots,n-1} \hTt) \mrg\hTn$. Now, thanks to the case $n = 2$ above, we have $\projb \hT E = \projb {( (\merge_{t=1,\dots,n-1} \hTt) \mrg\hTn)} E = \projb {(\merge_{t=1,\dots,n-1} \hTt)} E \mrg \projb {\hTn} E$. Since, by induction hypothesis $\projb {(\merge_{t=1,\dots,n-1} \hTt)} E = \merge_{t=1,\dots,n-1} \projb\hTt E$, we conclude.

\end{proof}

The following theorem proves that generalised projection on hybrid types is sound with respect to set inclusion. This is the conclusive step of our theory: after composing subprotocols into a general global type, thanks to the theorem below, we are guaranteed that, by projecting a hybrid type for some subprotocol onto one of its participant, we obtain \emph{the same local type} as we would if we projected directly from the global type that we have obtained by composition, and that disciplines the whole communication.

\begin{theorem}[Projection Composes over Set Inclusion]
  \label{thm:proj-comp-a}
   Given $\hT$, $E_1$ and $E_2$, with $E_2\subseteq E_1$, assuming $ \projb {(\projb \hT {E_1})} {E_2}$ is defined,
   \[
   \projb {(\projb \hT {E_1})} {E_2} = \projb \hT {E_2}
   \]
\end{theorem}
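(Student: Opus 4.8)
The plan is to prove the statement by structural induction on the hybrid type $\hT$, following the clause structure of Definition \ref{def:proj}; throughout I assume $E_2\subseteq E_1$ and that every projection mentioned is defined. The base cases $\hT=\hend$ and $\hT=\hX$ are immediate, since by \rulename{proj-end} and \rulename{proj-var} both $\projb \hT {E_1}$ and the subsequent projection leave the type unchanged, so each side equals $\hT$. For the parallel case $\hT=\hpar {\hT_1}{\hT_2}$, the key point is that the side condition of \rulename{proj-par} is monotone: if $E_1\cap\hpart{\hT_j}=\emptyset$ then, since $E_2\subseteq E_1$, also $E_2\cap\hpart{\hT_j}=\emptyset$. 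Hence projection onto $E_1$ and onto $E_2$ select the \emph{same} component $\hT_i$, and the equation follows by the induction hypothesis applied to the subterm $\hT_i$.

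The three message constructs $\hsend \p \q \ell \tS \hT$, $\hrecv \p \q \ell \tS \hT$, and $\hmsgi \p\q \ell \tS \hT$ constitute the routine bulk of the argument. Here I would case-split on the membership of the subject(s) in $E_1$, and then on their membership in $E_2$. When projection onto $E_1$ retains a prefix (a $\hmsg\p\q$, $\hsnd\p\q$, or $\hrcv\p\q$ construct), projecting that prefix further onto $E_2\subseteq E_1$ produces exactly the construct prescribed by direct projection onto $E_2$ — for instance a global message kept by $\p,\q\in E_1$ degenerates to a send, a receive, or a merge precisely when $\p$ or $\q$ leaves $E_2$ — and the continuations agree by the induction hypothesis. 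When instead the $E_1$-projection already discards the prefix (both subjects outside $E_1$, yielding $\merge_{i\in I}(\projb {\hT_i}{E_1})$), I would commute the outer $E_2$-projection through the merge via Lemma \ref{lem:mer-proj-step}, obtaining $\merge_{i\in I}\projb{(\projb{\hT_i}{E_1})}{E_2}$, and then apply the induction hypothesis branchwise to recover $\merge_{i\in I}\projb{\hT_i}{E_2}=\projb \hT {E_2}$. Each of the four sub-cases of \rulename{proj-msg} reduces to one of these two patterns, and the send/receive cases are the corresponding restrictions.

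The main obstacle is the recursion case $\hT=\hrec \hX \hTb$, because \rulename{proj-rec} is itself a guardedness-governed case split (Definition \ref{def:guarded-a}). In the guarded branch, $\projb {\hrec \hX \hTb}{E_1}=\hrec \hX {(\projb \hTb {E_1})}$; projecting again onto $E_2$ and invoking the induction hypothesis on $\hTb$ rewrites the body to $\projb \hTb {E_2}$, so the bodies of the one-step and two-step projections become literally equal, their guardedness coincides, and the two sides agree. The delicate point is the collapsing branch: if $\hrec \hX {(\projb \hTb {E_1})}$ is unguarded and closed it projects to $\hend$, and I must separately show $\projb {\hrec \hX \hTb}{E_2}=\hend$. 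Unguardedness forces $\projb \hTb {E_1}$ to have the pure-recursion shape $\hmu Y_1.\dots.\hmu Y_n.\hX$; since \rulename{proj-var} and \rulename{proj-rec} preserve variables and recursion headers, projecting it further onto $E_2$ keeps that shape, whence by the induction hypothesis $\projb \hTb {E_2}$ is also unguarded, and closedness yields $\projb {\hrec \hX \hTb}{E_2}=\hend$, as required.

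Establishing this monotonicity — that shrinking the projection set cannot \emph{re-create} a guard for $\hX$ — is the one genuinely non-mechanical step; everything else is bookkeeping driven by the induction hypothesis and by Lemma \ref{lem:mer-proj-step} for the merge-producing clauses.
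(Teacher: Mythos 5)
Your proposal is correct and follows essentially the same route as the paper's own proof: structural induction on $\hT$, immediate base cases, monotonicity of the \rulename{proj-par} side condition, membership case-splits on the message constructs with Lemma \ref{lem:mer-proj-step} commuting the outer projection through merges, and a guardedness/closedness analysis for \rulename{proj-rec}. The only difference is cosmetic and confined to the recursion case: the paper proves that unguardedness of $\hrec \hX {(\projb \hTb {E_1})}$ forces unguardedness of $\hrec \hX {(\projb \hTb {E_2})}$ by a separate structural induction, whereas you obtain it by reusing the main induction hypothesis together with the observation that unguardedness forces the pure-recursion shape $\hmu Y_1.\dots.\hmu Y_n.\hX$ (via Lemma \ref{lem:proj-guar-a}) and that this shape is preserved by further projection, with closedness transferring by Lemma \ref{lem:proj-clo-a} --- both arguments are sound.
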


\begin{proof}[Proof of Theorem \ref{thm:proj-comp}]
  By induction on $\hT$.\\

  \noindent [$\hT = \hend$] and [$\hT = \hX$] are immediate.\\

  \noindent [$\hT = \hrec X {\hTb}$]  We observe that, if $\hrec \hX {\projb \hTb {E_1}} $ is not guarded, then neither is $\hrec \hX {\projb \hTb {E_2}} $, since $E_2 \subseteq E_1$ (this result is obtained simply by structural induction over $\hT$). Also, we know that the property of being closed is preserved by projection. In case  $\hrec \hX {\projb \hTb {E_1}} $ is guarded, but $\hrec \hX {\projb \hTb {E_2}} $ is not, then (since $\projb \hTb {E_2}$ is guarded, by Lemma \ref{lem:proj-guar-a}) we have that $\hpart \hTb \cap E_2 = \emptyset$ and hence also  $\hpart{(\projb \hTb {E_1})}\cap E_2=\emptyset$ ($\hpart{(\projb \hT E)}\subseteq\hpart\hT$; by structural induction on $\hT$). We can then conclude by following the defining rule for projection, in the recursion case ($\rulename{proj-rec}$ in Definition \ref{def:proj}), and by applying induction hypothesis and Lemma \ref{lem:proj-end-a} (depending on the cases).\\

  \noindent [$\hT = \hpar \hTone \hTtwo$] Since projections are defined, w.l.o.g., we can assume $E_2 \subseteq E_1\subseteq \hpart \hTone$. We conclude by direct application of induction hypothesis on $\hTone$.\\

  \noindent [$\hT = \hsend \ps \pext \ell S \hT$] We observe that $\pext\notin E_1$.

  \textbf{(a)} If $\ps\notin E_1$, then $\projb {(\projb \hT {E_1})} {E_2} = \projb {\merge_{i\in I} {(\projb \hTi {E_1}})} {E_2}$; by applying Lemma \ref{lem:mer-proj-step} and induction hypothesis we have $\projb {(\projb \hT {E_1})} {E_2} = \merge_{i\in I} (\projb \hTi {E_2})$ which is exactly what we need (since  $\ps\notin E_1$ entails $\ps\notin E_2$).

  \textbf{(b)} If $\ps\in E_1\backslash E_2$, then $\projb \hT {E_1} = \hsnd \ps \pext \dht{\{\lblFmt{\ell_i}(S_i).(} \projb \hTi {E_1} \dht{)\}_{i\in I}}$. By induction hypothesis, we then have $\projb {(\projb \hT {E_1})} {E_2} = \merge_{i\in I}  \projb {(\projb \hTi {E_1})} {E_2}  = \merge_{i\in I} \projb \hTi {E_2}$ and we conclude.

  \textbf{(c)} If $\ps\in E_2$, no merge happens and we conclude by simply unfolding the definition of projection and applying induction hypothesis.\\

  \noindent [$\hT = \hrecv \pext \pr \ell S \hT$] Analogous to the previous ``send'' case.\\

  \noindent [$\hT = \hmsgi \ps \pr \ell S \hT$]

  \textbf{(a)} If $\ps,\pr\notin E_1$, then $\projb {(\projb \hT {E_1})} {E_2} = \projb {\merge_{i\in I} (\projb \hTi {E_1})} {E_2}$ and $\projb \hT {E_1}= \merge_{i\in I} \projb \hTi {E_2}$. We conclude analogously to \textbf{(a)} in case [$\hT = \hsend \ps \pext \ell S \hT$] above, by applying Lemma \ref{lem:mer-proj-step} and induction hypothesis.

  \textbf{(b)} If 1. $\ps,\pr\in E_1\setminus E_2$, 2. $\ps\in E_1 \setminus E_2$ and $\pr\notin E_1$, or 3. $\pr\in E_1 \setminus E_2$ and $\ps\notin E_1$, then we have, respectively:
  \begin{enumerate}
  \item $\projb \hT {E_1} = \hmsg \ps \pr \dht{\{\lblFmt{\ell_i}(S_i).(} \projb \hTi {E_1} \dht{)\}_{i\in I}}$,
  \item $\projb \hT {E_1} = \hsnd \ps \pr \dht{\{\lblFmt{\ell_i}(S_i).(} \projb \hTi {E_1} \dht{)\}_{i\in I}}$, or
  \item  $\projb \hT {E_1} = \hrcv \ps \pr \dht{\{\lblFmt{\ell_i}(S_i).(} \projb \hTi {E_1} \dht{)\}_{i\in I}}$.
  \end{enumerate}
   In all cases, we conclude analogously to \textbf{(b)} in case [$\hT = \hsend \ps \pext \ell S \hT$] above.

  \textbf{(c)} If 1. $\ps\in E_2$ and $\pr\notin E_1$, or 2. $\pr\in E_2$ and $\ps\notin E_1$, then we have, respectively:
  \begin{enumerate}
  \item $\projb \hT {E_1} = \hsnd \ps \pr \dht{\{\lblFmt{\ell_i}(S_i).(} \projb \hTi {E_1} \dht{)\}_{i\in I}}$, or
  \item  $\projb \hT {E_1} = \hrcv \ps \pr \dht{\{\lblFmt{\ell_i}(S_i).(} \projb \hTi {E_1} \dht{)\}_{i\in I}}$.
  \end{enumerate}
  In both cases, we conclude analogously to \textbf{(c)} in case [$\hT = \hsend \ps \pext \ell S \hT$] above.

    \textbf{(d)} If 1. $\ps\in E_2$ and $\pr\in E_1\setminus E_2$, or 2. $\pr\in E_2$ and $\ps\in E_1\setminus E_2$, then we have, in both cases, $\projb \hT {E_1} = \hmsg \ps \pr \dht{\{\lblFmt{\ell_i}(S_i).(} \projb \hTi {E_1} \dht{)\}_{i\in I}}$. Then, projecting on $E_2$, respectively:
  \begin{enumerate}
  \item $\projb {(\projb \hT {E_1})} {E_2} = \hsnd \ps \pr \dht{\{\lblFmt{\ell_i}(S_i).(} \projb {(\projb \hTi {E_1})} {E_2} \dht{)\}_{i\in I}}$, or
  \item  $\projb {(\projb \hT {E_1})} {E_2} = \hrcv \ps \pr \dht{\{\lblFmt{\ell_i}(S_i).(} \projb {(\projb \hTi {E_1})} {E_2} \dht{)\}_{i\in I}}$.
  \end{enumerate}
  In both cases, we conclude by induction hypothesis.

  \textbf{(e)} If $\ps,\pr\in E_2$ then we have: $\projb \hT {E_1} = \hmsg \ps \pr \dht{\{\lblFmt{\ell_i}(S_i).(} \projb \hTi {E_1} \dht{)\}_{i\in I}}$.\\
  Then,  $\projb {(\projb \hT {E_1})} {E_2} =\allowbreak \hmsg \ps \pr \dht{\{\lblFmt{\ell_i}(S_i).(} \projb {(\projb \hTi {E_1})} {E_2} \dht{)\}_{i\in I}}$. We conclude by induction hypothesis.

\end{proof}

\subsection{Projection and Localiser for Hybrid Types with Delegation}
\label{sec:del-proj-a}

We define here \emph{projection} and \emph{localiser}
for our extension of hybrid types to delegation and
explicit connections, from \S\ref{subsec:del}.

\begin{definition}[Projection for Hybrid Types with Delegation]
\label{def:proj-del-a}
The projection on the set of participants $E$
for hybrid types with delegation, is a partial
operator, recursively
defined by the clauses in Figure \ref{fig:proj-del-a} (whenever the recursive call is defined).

 \emph{Merging} ($\mrg$) is defined as a partial commutative operator
 over two hybrid types such that: for all $\hT$, $\hT \mrg \hT = \hT$, it delves inductively inside all constructors (see Definition \ref{def:proj}), and \\[1mm]
\begin{small}
\centerline{$
\begin{array}{lll}
  (\dht{\bigwedge_{i\in I}\roleFmt{\p}_i\dqusi\roleFmt{\q}_i(\lblFmt{\ell_i});\hTi
  })
  \mrg (\dht{\drs \p \q(\lblFmt{\ell});\hT})&= &(\dht{\bigwedge_{i\in I}\roleFmt{\p}_i\dqusi\roleFmt{\q}_i(\lblFmt{\ell_i});\hTi})
  \vee (\dht{\drs \p \q(\lblFmt{\ell});\hT})\\
  &&\qquad\text{if the resulting intersection is non-ambiguous}\\
   (\dht{\bigwedge_{i\in I}\roleFmt{\p}_i\dqusi\roleFmt{\q}_i(\lblFmt{\ell_i});\hTi
  })
  \mrg (\dht{\drs \p \q(\lblFmt{\ell});\hT})&=&(\dht{\bigwedge_{i\in I}\roleFmt{\p}_i\dqusi\roleFmt{\q}_i(\lblFmt{\ell_i});(\hTi\mrg\hT)})\\
  &&\qquad\text{if $\p=\p_i$, $\q=\q_i$, and $\ell=\ell_i$, for some $i\in I$}\\
    (\dht{\bigwedge_{i\in I}\roleFmt{\p}_i?^{e}\roleFmt{\q}_i(\lblFmt{\ell_i});\hTi
  })\mrg\hend&= &(\dht{\bigwedge_{i\in I}\roleFmt{\p}_i?^{e}\roleFmt{\q}_i(\lblFmt{\ell_i});\hTi
  })
\end{array}
$}
\end{small}
\end{definition}

We use an inductive version of merge,
which is more restrictive than (and hence compatible with)
the equirecursive merge in \cite{CASTELLANI2020128},
Definition $4.8$ and following discussion. 
An input intersection $\hint \ell \hT$ is \emph{non-ambiguous}
if all inputs are distinct and if $\q\neq\pr$,
whenever the intersection combines
a simple input $\dr\p\q$ and connecting input $\dre\ps\pr$.




The partial function
$\projd \_ \de E$ (defined by the rules below)
generalises the delegation projection functions
$\projdone \_ \p \q$ and $\projdtwo\_\p\q$ from \cite{CASTELLANI2020128}
(Figure $6$); as in that paper, this function
is of a sequential nature: it is not defined
for branching, recursion, and parallel constructs.
In Figure \ref{fig:del-proj-del-a},
we provide the full definition.
To obtain well-delegated hybrid types
(see Definition $4.11$ in \cite{CASTELLANI2020128})
and correctly nest delegation,
we have added the set argument $\de$, that keeps track
of the delegation pairs. We ask that,
if $\{(\p,\q),(\ps,\pr)\}\subseteq\de$,
then $\{(\p,\q)\}\cap\{(\ps,\pr)\}=\emptyset$.
We can use one funtion $\projd {}{}{}$,
for generalising both $\projdone{}{}{}$ and $\projdtwo{}{}{}$,
since we rely on the set $E$, namely:
$\projdone \_ \p \q$ is generalised by $\projd \_ {\pqe}E$,
when $\p\in E$ and $\q \notin E$, while
$\projdtwo \_ \p \q$ is generalised by $\projd \_ {\pqe}E$
when $\q\in E$ and $\p\notin E$
In Figure \ref{fig:del-proj-del-a},
we use the following notation:
\begin{itemize}
\item $\projb \ms E$, for the projection extended to prefixes $\ms$, in the obvious way;
\item $\projd \hT \emptyset E$, for $\projb \hT E$;
\item $\disjpair \p \q \de$, when the following holds: for all $(\ps,\pr)\in\de$,
$\{\p,\q\}\cap\{\ps,\pr\}=\emptyset$.
\end{itemize}

The intuition behind these rules is
the following:
whenever a forward delegation is
encountered while projecting, the control is passed to
$\projd{}{}{}$ (e.g.,
$\projb{\hfdel\p\q\hT}E = \hafdel\p\q{\projd\hT{\{(p,q)\}}E}$
if $\p\in E$), which takes care
of projecting the sequence (single branch) in $\hT$ of
elementary interactions (messages or
well-nested delegations, see \cite{CASTELLANI2020128},
Definition 4.11), until a dual backward delegation
$\hbdel\q\p\hTb$ is encountered and the control is
passed back to $\projb{}{}$, e.g.,
$\projd{\hbdel\q\p\hTb}{\{(p,q)\}}E=\hpbdel\q\p{\projb\hTb E}$.

\begin{definition}[Localiser for Hybrid Types with Delegation]
\label{def:loc-del-a}
  The localiser for hybrid types with delegation,
  is a partial operator, 
  recursively defined by the clauses in
  Figure \ref{fig:loc-del-a} (whenever the recursive call is defined).

\emph{Merging for the localiser} ($\mrg$) is defined as a partial
commutative operator
 over two hybrid types such that:
 for all $\hT$, $\hT \lmrg \hT = \hT$,
 it delves inductively inside all constructors
 (see Definition \ref{def:loc}), and\\[1mm]
\begin{small}
\centerline{$
\begin{array}{lll}
  (\dht{\boxplus_{i\in I}\roleFmt{\p}_i\dexs\roleFmt{\q}_i(\lblFmt{\ell_i});\hTi
  })
  \lmrg (\dht{\drs \p \q(\lblFmt{\ell});\hT})&= &(\dht{\boxplus_{i\in I}\roleFmt{\p}_i\dexs\roleFmt{\q}_i(\lblFmt{\ell_i});\hTi})
  \boxplus (\dht{\dss \p \q(\lblFmt{\ell});\hT})\\
   (\dht{\boxplus_{i\in I}\roleFmt{\p}_i\dexs\roleFmt{\q}_i(\lblFmt{\ell_i});\hTi
  })
  \lmrg (\dht{\dss \p \q(\lblFmt{\ell});\hT})&=&(\dht{\boxplus_{i\in I}\roleFmt{\p}_i\dexs\roleFmt{\q}_i(\lblFmt{\ell_i});(\hTi\lmrg\hT)})\\
  &&\qquad\text{if $\p=\p_i$, $\q=\q_i$, and $\ell=\ell_i$, for some $i\in I$}\\
    (\dht{\boxplus_{i\in I}\roleFmt{\p}_i!^{e}\roleFmt{\q}_i(\lblFmt{\ell_i});\hTi
  })\lmrg\hend&= &(\dht{\boxplus_{i\in I}\roleFmt{\p}_i!^{e}\roleFmt{\q}_i(\lblFmt{\ell_i});\hTi
  })
\end{array}
$}
\end{small}
\end{definition}

\begin{figure}
\begin{small}
\[
\begin{array}{l}
\text{Equations for
$\projb\hend E$, $\projb\hX E$, $\projb{\hrec \hX \hT} E$, and
$\projb {\hpar \hTone \hTtwo} E$ are the same
as in Definition \ref{def:proj}.
}
\\
\begin{array}{ll}
\rulename{proj-$\alpha$} &\\
\projb {(\dht{\dms \p \q (\lblFmt{\ell});\hT})} E =
\begin{cases}
\dht{\dms \p \q (\lblFmt{\ell});(\projb\hT E)}\ \text{if $\{\p,\q\}\subseteq E$}\\
\dht{\dss \p \q (\lblFmt{\ell});(\projb\hT E)}\ \text{if $\p\in E$ and $\q\notin E$}\\
\dht{\drs \p \q (\lblFmt{\ell});(\projb\hT E)}\ \text{if $\p\notin E$ and $\q\in E$}\\
\projb \hT E\ \text{if $\{\p,\q\}\cap E=\emptyset$}
\end{cases}
&
\begin{array}{l}
\text{We use the following notation:}\\
\quad\dtos\in\{\dht{\to},\dht{\to^{e}}\}\text{,}\\
\quad\dexs\in\{\dht{!},\dht{!^{e}}\}\text{, and}\\
\quad\dqus\in\{\dht{?},\dht{?^{e}}\text{.}\}
\end{array}
\end{array}\\
\begin{array}{ll}
\rulename{proj-$\pi^{\textsf{out}}$}&\rulename{proj-$\pi^{\textsf{in}}$}
\\
\projb {(\dht{\dss \p \q (\lblFmt{\ell});\hT})} E =
\begin{cases}
\dht{\dss \p \q (\lblFmt{\ell});(\projb\hT E)}\\
\qquad\text{if $\p\in E$ and $\q\notin E$}\\
\projb \hT E\ \text{if $\{\p,\q\}\cap E=\emptyset$}
\end{cases}
&
\projb {(\dht{\drs \p \q (\lblFmt{\ell});\hT})} E =
\begin{cases}
\dht{\drs \p \q (\lblFmt{\ell});(\projb\hT E)}\\
\qquad\text{if $\p\notin E$ and $\q\in E$}\\
\projb \hT E\ \text{if $\{\p,\q\}\cap E=\emptyset$}
\end{cases}

\end{array}
\\
\begin{array}{ll}
  \rulename{proj-int} &  \rulename{proj-choice}\\
        \projb{(\hint \ell \hT)} {E}  = 
          \merge_{i\in I} (\projb {(\dht{\pi^{\textsf{in}}_i(\lblFmt{\ell_i});\hT_i})} {E}) 
        &
         \projb{(\hchoicep \ell \hT)} {E}  =
         \merge_{i\in I} (\projb {(\dht{\ms^p_i(\lblFmt{\ell_i});\hT_i})} {E})
         \end{array}\\
  \begin{array}{ll}
  \rulename{proj-fd}&\rulename{proj-bd}\\
  \projb{(\hfdel \p \q \hT)} E =
    \begin{cases}
      \hfdel \p \q {(\projb\hT E)}\ \text{if $\{\p,\q\}\subseteq E$}\\
      \hafdel\p \q {(\projd \hT {\{(\p,\q)\}} E)}\ \text{if $\p\in E$ and $\q\notin E$}\\
      \hpfdel\p \q {(\projd \hT {\{(\p,\q)\}} E)}\ \text{if $\p\notin E$ and $\q\in E$}\\
      \projb{\hT}E\ \text{if $\{\p,\q\}\cap E=\emptyset$}
    \end{cases}
    & \projb{(\hbdel \q \p \hT)} E =
    \begin{cases}
    \hbdel \q \p {(\projb\hT E)}\\ \quad \text{if $\{\p,\q\}\subseteq E$}\\
    \projb{\hT}E\\ \quad \text{if $\{\p,\q\}\cap E=\emptyset$}
    \end{cases}
  \end{array}\\
  \begin{array}{ll}
  \rulename{proj-afd}&\rulename{proj-pbd}\\
  \projb{(\hafdel \p \q \hT)}E=
  \begin{cases}
  \hafdel \p \q {(\projb \hT E)}) \\
  \qquad\text{if $\p\in E$ and $\q\notin E$}\\
  \projb \hT E \ \text{if $\{\p,\q\}\cap E=\emptyset$}
  \end{cases}
  &
  \projb{(\hpbdel \q \p \hT)}E=
  \begin{cases}
  \hpbdel \q \p {(\projb \hT E)}) \\
  \qquad\text{if $\p\in E$ and $\q\notin E$}\\
  \projb \hT E \ \text{if $\{\p,\q\}\cap E=\emptyset$}
  \end{cases}\\
  \rulename{proj-pfd}&\rulename{proj-abd}\\
  \projb{(\hpfdel \p \q \hT)}E=
  \begin{cases}
  \hpfdel \p \q {(\projb \hT E)}) \\
  \qquad\text{if $\p\notin E$ and $\q\in E$}\\
  \projb \hT E \ \text{if $\{\p,\q\}\cap E=\emptyset$}
  \end{cases}
  &
  \projb{(\habdel \q \p \hT)}E=
  \begin{cases}
  \habdel \q \p {(\projb \hT E)}) \\
  \qquad\text{if $\p\notin E$ and $\q\in E$}\\
  \projb \hT E \ \text{if $\{\p,\q\}\cap E=\emptyset$}
  \end{cases}
  \end{array}
\end{array}
\]
\end{small}
\caption{Recursive Rules for Projection for Hybrid Types with Delegation}
\label{fig:proj-del-a}
\end{figure}

\begin{figure}
\begin{small}
\[
\begin{array}{l}
\text{If $\p\in E$, $\q\notin E$, and $\pqp\in\de$,}
\ \ \projd {\dht{\ms(\lblFmt{\ell});\hT}}{\de} E =
\begin{cases}
\projd {\hT} \de E\ \text{if $\hpart{\ms}\cap (\{\q\}\cup E)\subseteq \{\p\}$}\\
\dht{(\projb\ms {E\setminus\{\p\}})(\lblFmt{\ell});(\projd\hT \de E)}\ \text{if $\hpart{\ms}\subseteq E\setminus\{\p\}$}
\end{cases}\\
\begin{array}{l}
\text{If $\p\notin E$, $\q\in E$, and $\pqp\in\de$,}\\
\qquad\qquad\qquad \projd {(\dht{\dms \ps \pr(\lblFmt{\ell});\hT})}\de E =
\\ \
\end{array}
\begin{cases}
\dht{\dms \q\pr(\lblFmt{\ell});(\projd {\hT} \de E)}\ \text{if $\p=\ps$, $\q\neq\pr$, and $\pr\in E$}\\
\dht{\dms \ps \q(\lblFmt{\ell});(\projd{\hT} \de E)}\ \text{if $\p=\pr$, $\q\neq\ps$, and $\ps\in E$}\\
\dht{\dss \q\pr(\lblFmt{\ell});(\projd {\hT} \de E)}\ \text{if $\p=\ps$, $\q\neq\pr$, and $\pr\notin E$}\\
\dht{\drs \ps \q(\lblFmt{\ell});(\projd{\hT} \de E)}\ \text{if $\p=\pr$, $\q\neq\ps$, and $\ps\notin E$}\\
\projd {\hT} \de E\ \text{if $\{\ps,\pr\}\cap E=\emptyset$}\\
\dht{(\projb {\dms \ps \pr} {E\setminus\{\q\}})(\lblFmt{\ell});(\projd\hT \de E)}\
\text{if $\{\ps,\pr\}\cap E\neq\emptyset$ and $\q\notin\{\ps,\pr\}$}
\end{cases}\\
\begin{array}{l}
\text{If $\p\notin E$, $\q\in E$, and $\pqp\in\de$,}\\
\qquad\qquad\qquad\qquad \projd {(\dht{\dss \ps \pr(\lblFmt{\ell});\hT})}\de E =
\\ \
\end{array}
\begin{cases}
\dht{\dss \q\pr(\lblFmt{\ell});(\projd {\hT} \de E)}\ \text{if $\p=\ps$, $\q\neq\pr$, and $\pr\notin E$}\\
\projd {\hT} \de E\ \text{if $\{\ps,\pr\}\cap E=\emptyset$}\\
\dht{\dss \ps \pr (\lblFmt{\ell});(\projd\hT \de E)}\ \text{if $\ps\in E$, $\pr\notin E$, and $\q\notin\{\ps,\pr\}$}
\end{cases}\\
\begin{array}{l}
\text{If $\p\notin E$, $\q\in E$, and $\pqp\in\de$,}\\
\qquad\qquad\qquad\qquad \projd{(\dht{\drs \ps \pr(\lblFmt{\ell});\hT})}\de E =
\\ \
\end{array}
\begin{cases}
\dht{\drs \ps\q(\lblFmt{\ell});(\projd {\hT} \de E)}\ \text{if $\p=\pr$, $\q\neq\ps$, and $\ps\notin E$}\\
\projd {\hT} \de E\ \text{if $\{\ps,\pr\}\cap E=\emptyset$}\\
\dht{\drs \ps \pr (\lblFmt{\ell});(\projd\hT \de E)}\ \text{if $\ps\notin E$, $\pr\in E$, and $\q\notin\{\ps,\pr\}$}
\end{cases}\\
\text{If $\disjpair \p\q\de$,}\ \ \projd {(\hfdel \p\q\hT)} \de E =
\begin{cases}
\hfdel \p \q {(\projd {\hT} \de E)} \ \text{if $\{\p,\q\}\subseteq E$}\\
\hafdel \p\q  {(\projd {\hT} {\pqe\cup\de} E)}\ \text{if $\p\in E$ and $\q\notin E$}\\
\hpfdel \p\q  {(\projd {\hT} {\pqe\cup\de} E)}\ \text{if $\p\notin E$ and $\q\in E$}\\
\projd {\hT} \de E\ \text{if $\{\p,\q\}\cap E=\emptyset$}
\end{cases}\\
\projd {(\hbdel \q\p\hT)} \de E =
\begin{cases}
\hbdel \p \q {(\projd \hT\de E)}\ \text{if $\disjpair \p\q\de$ and $\{\p,\q\}\subseteq E$}\\
\hpbdel \q \p {(\projd \hT {\de\setminus\pqe} E)}\ \text{if $\p \in E$, $\q\notin E$, and $\pqp\in\de$}\\
\habdel \q \p {(\projd \hT {\de\setminus\pqe} E)}\ \text{if $\p \notin E$, $\q\in E$, and $\pqp\in\de$}\\
\projd {\hT} \de E\ \text{if $\disjpair \p\q\de$ and $\{\p,\q\}\cap E=\emptyset$}
\end{cases}
\\
\text{If $\disjpair \p\q\de$,}\ \ \projd {(\hafdel \p\q\hT)} \de E =
\begin{cases}
\hafdel\p \q (\projd \hT\de E)\ \text{if $\p\in E$ and $\q\notin E$}\\
\projd \hT\de E \ \text{if $\{\p,\q\}\cap E=\emptyset$}\\
\end{cases}\\
\text{If $\disjpair \p\q\de$,}\ \ \projd {(\hpbdel \q\p\hT)} \de E =
\begin{cases}
\hpbdel\q \p (\projd \hT\de E)\ \text{if $\p\in E$ and $\q\notin E$}\\
\projd \hT\de E \ \text{if $\{\p,\q\}\cap E=\emptyset$}\\
\end{cases}\\
\text{If $\disjpair \p\q\de$,}\ \ \projd {(\hpfdel \p\q\hT)} \de E =
\begin{cases}
\hpfdel\p \q (\projd \hT\de E)\ \text{if $\p\notin E$ and $\q\in E$}\\
\projd \hT\de E \ \text{if $\{\p,\q\}\cap E=\emptyset$}\\
\end{cases}\\
\text{If $\disjpair \p\q\de$,}\ \ \projd {(\habdel \q\p\hT)} \de E =
\begin{cases}
\habdel\q \p (\projd \hT\de E)\ \text{if $\p\notin E$ and $\q\in E$}\\
\projd \hT\de E \ \text{if $\{\p,\q\}\cap E=\emptyset$}\\
\end{cases}
\end{array}
\]
\end{small}
\caption{Delegation Projection for Hybrid Types with Delegation}
\label{fig:del-proj-del-a}
\end{figure}

\begin{figure}
\begin{small}
\[
\begin{array}{l}
\text{Equations for
$\loc{}\hend $, $\loc{}\hX $, $\loc{}{\hrec \hX \hT} $, and
$\loc{} {\hpar \hTone \hTtwo} $ are the same
as in Definition \ref{def:loc}.
}\\
\begin{array}{lll}
\rulename{loc-msg}&\rulename{loc-send}&\rulename{loc-fd}\\
\loc{}{\dht{\da(\lblFmt{\ell});\hT}}=\loc{}\hT&
\loc{}{\dht{\dpo(\lblFmt{\ell});\hT}}=\dpo(\lblFmt{\ell});(\loc{}\hT)&
\loc{}{\hfdel\p\q\hT}=\loc{}\hT\\
\rulename{loc-bd}&\rulename{loc-afd}&\rulename{loc-pbd}\\
\loc{}{\hbdel\q\p\hT}=\loc{}\hT&
\loc{}{\hafdel\p\q\hT}=\hafdel\p\q{(\loc{}\hT)}&
\loc{}{\hpbdel\q\p\hT}=\hpbdel\q\p{(\loc{}\hT)}
\end{array}\\
\begin{array}{ll}
\rulename{loc-pfd}&\rulename{loc-abd}\\
\loc{}{\hpfdel\p\q\hT}=\hpfdel\p\q{(\loc{}\hT)}&
\loc{}{\habdel\q\p\hT}=\habdel\q\p{(\loc{}\hT)}\\
\rulename{loc-int}&\rulename{loc-choicep}\\
\loc{}{\hint\ell\hT}=\dht{\bigwedge_{i\in I}\pi^{\textsf{in}}_i(\lblFmt{\ell_i});
(\loc{}\hTi)}
&
\loc{}{\hchoicep\ell\hT}=\dht{\lmerge_{i\in I}(\ms^{\p}_i(\lblFmt{\ell_i});
(\loc{}\hTi))}
\end{array}
\end{array}
\]
\end{small}
\caption{Recursive Rules for the Localiser for Hybrid Types with Delegation}
\label{fig:loc-del-a}
\end{figure}









\end{document}